\newif\iflong
\patchcmd{\ALG@doentity}{\item[]\nointerlineskip}{}{}{}
\setlist[enumerate]{topsep=2pt}
\setlist[itemize]{topsep=2pt}
\setlist{noitemsep}
\newcommand*\circled[1]{\tikz[baseline=(char.base)]{
            \node[shape=circle,fill, color=black,draw,inner sep=.5pt] (char) {\small
              {\bf \textcolor{white}{#1}}};}}
\algrenewcommand\algorithmicindent{1em}
\newcommand{\System}{\textsc{UniStore}\xspace}
\newcommand{\RedBlue}{\textsc{RedBlue}\xspace}
\newcommand{\CureFT}{\textsc{CureFT}\xspace}
\newcommand{\Strong}{\textsc{Strong}\xspace}
\newcommand{\Uniform}{\textsc{Uniform}\xspace}
\newcommand{\Causal}{\textsc{Causal}\xspace}
\newcommand{\SpaceHandler}{\vspace{5pt}}
\definecolor{Background}{gray}{0.85}
\newcommand{\BackColor}{Background}
\newcommand{\StrongColor}{Black}
\newcommand{\CausalTxColor}{NavyBlue}
\newcommand{\StrongTxColor}{BrickRed}
\newcommand{\algline}[2]{line~\ref{#1}:\ref{#2}}
\newcommand{\alglines}[3]{lines~\ref{#1}:\ref{#2}-\ref{#3}}
\newcommand{\algtwolines}[3]{lines~\ref{#1}:\ref{#2} and \ref{#1}:\ref{#3}}
\newcommand\xleftrightarrow[2][]{%
  \ext@arrow 9999{\longleftrightarrowfill@}{#1}{#2}}
\newcommand\longleftrightarrowfill@{%
  \arrowfill@\leftarrow\relbar\rightarrow}
\newcommand{\DC}{\mathcal{D}}
\newcommand{\Partitions}{\mathcal{P}}
\newcommand{\COMMIT}{\textsc{commit}}
\newcommand{\BLUEPREPARED}{\textsc{prepare\_ack}}
\newcommand{\CERTIFY}{\textsc{certify}}
\newcommand{\newtxid}{\mathtt{generate\_tid}}
\newcommand{\partitionof}{\mathtt{partition}}
\newcommand{\MAKEUNIFORM}{\textsc{uniform\_barrier}}
\newcommand{\ATTACH}{\textsc{attach}}
\newcommand{\STARTTX}{\textsc{start\_tx}}
\newcommand{\UPDATETX}{\textsc{do\_op}}
\newcommand{\COMMITBLUE}{\textsc{commit\_causal}}
\newcommand{\EXECOP}{\textsc{get\_version}}
\newcommand{\OPRET}{\textsc{version}}
\newcommand{\BLUEPREPARE}{\textsc{prepare}}
\newcommand{\PROPAGATELOCAL}{\textsc{propagate\_local\_txs}}
\newcommand{\PROPAGATEREMOTE}{\textsc{forward\_remote\_txs}}
\newcommand{\REPLICATE}{\textsc{replicate}}
\newcommand{\BHEARTBEAT}{\textsc{heartbeat}}
\newcommand{\RHEARTBEAT}{\textsc{heartbeat\_strong}}
\newcommand{\UPDRCV}{\textsc{knownvec\_global}}
\newcommand{\UPDCSS}{\textsc{knownvec\_local}}
\newcommand{\UPDUSS}{\textsc{stablevec}}
\newcommand{\CASTVC}{\textsc{broadcast\_vecs}}
\newcommand{\COMMITRED}{\textsc{commit\_strong}}
\newcommand{\DELIVERUPD}{\textsc{deliver\_updates}}
\newcommand{\partition}{\ensuremath{p}}
\newcommand{\clock}{{\sf clock}}
\newcommand{\past}{{\sf pastVec}}
\newcommand{\red}{\textsl{strong}}
\newcommand{\replicavectorclock}{{\sf knownVec}}
\newcommand{\areplicavectorclock}{\ensuremath{\mathit{knownVec}}}
\newcommand{\avecsnapshottime}{\ensuremath{\mathit{snapVec}}}
\newcommand{\vecsnapshottime}{\ensuremath{\mathsf{snapVec}}}
\newcommand{\cstate}{\ensuremath{\mathit{state}}}
\newcommand{\effval}[2]{\ensuremath{{\sf retval}(#1, #2)}}
\newcommand{\apply}{\ensuremath{{\sf apply}}}
\newcommand{\op}{\ensuremath{\mathit{op}}}
\newcommand{\ts}{\ensuremath{\mathit{ts}}}
\newcommand{\rvc}{{\sf globalMatrix}}
\newcommand{\pmc}{{\sf localMatrix}}
\newcommand{\gmc}{{\sf stableMatrix}}
\newcommand{\commitvector}{\ensuremath{\mathit{commitVec}}}
\newcommand{\stablesnapshot}{{\sf stableVec}}
\newcommand{\astablesnapshot}{\ensuremath{\mathit{stableVec}}}
\newcommand{\uniformsnapshot}{{\sf uniformVec}}
\newcommand{\tx}{\ensuremath{\mathit{tid}}}
\newcommand{\preparedblue}{{\sf preparedCausal}}
\newcommand{\committedset}{{\sf committedCausal}}
\newcommand{\store}{{\sf opLog}}
\newcommand{\aws}{\ensuremath{\mathit{wbuff}}}
\newcommand{\readset}{{\sf rset}}
\newcommand{\writeset}{{\sf wbuff}}
\newcommand{\atxs}{\ensuremath{\mathit{txs}}}
\newcommand{\avc}{\ensuremath{V}}
\newcommand{\areplica}{l}
\newcommand{\asecondreplica}{n}
\newcommand{\athirddc}{h}
\let\ensuremathorig=\ensuremath
\renewcommand{\ensuremath}[1]{\ensuremathorig{#1}\xspace}
\renewcommand{\emptyset}{\varnothing}
\renewcommand{\implies}{\ensuremath{\Rightarrow}}
\renewcommand{\iff}{\Leftrightarrow}
\newtheoremstyle{mytheorem}
  {5pt} %
  {5pt} %
  {} %
  {} %
  {\bfseries} %
  {.} %
  {.4em} %
  {} %
\theoremstyle{mytheorem}
\newtheorem{definition}{\sc Definition}
\newtheorem{property}{\sc Property}
\newcommand{\mynote}[2][fromWhom?]{%
    {\color{red}\small%
      $\blacktriangleright$%
      \fbox{\bfseries\sffamily\scriptsize#1}%
      \textsf{\em #2}%
      $\blacktriangleleft$%
}}}
\newcommand{\mynote}[2][fromWhom?]{}}
\newcommand{\liveness}{Eventual Visibility}
\def\section{\@startsection {section}{1}{\z@}
  {-3ex plus-1ex minus -.2ex}{2ex plus.2ex}{\reset@font\large\bf}}
\renewcommand\subsection{\@startsection{subsection}{2}{\z@}%
  {-.75\baselineskip \@plus -2\p@ \@minus -.2\p@}%
  {.3\baselineskip}%
  {\reset@font\bf}}
\renewcommand\paragraph{\@startsection{subparagraph}{5}{\z@}%
  {-.5\baselineskip \@plus -1\p@ \@minus -.2\p@}%
  {-3.5\p@}%
  {\normalsize\bfseries}}
\newcommand{\tr}[2]{\iflong{}\S#1\else{}\cite[\S#2]{ext}\fi}
\newcommand{\nappfull}{A}
\newcommand{\nappconsistency}{B}
\newcommand{\napptcs}{C}
\newcommand{\nappproof}{D}
\theoremstyle{definition}
\newtheorem{apptheorem}{Theorem}
\newtheorem{applemma}[apptheorem]{Lemma}
\newtheorem{appdefinition}[apptheorem]{Definition}
\newtheorem{appassumption}{\textsc{Assumption}}
\newcommand{\set}[1]{\{#1\}}
\DeclareMathOperator*{\argmin}{argmin}
\newcommand{\hStatex}{\vspace{4pt}}
\newcommand{\causalentry}{\textsl{causal}}
\newcommand{\strongentry}{\textsl{strong}}
\newcommand{\IfThenElse}[3]{
  \State \algorithmicif\ #1\ \algorithmicthen\ #2\ \algorithmicelse\ #3}
\newcommand{\redcolor}[1]{\textcolor{red}{#1}}
\newcommand{\bluecolor}[1]{\textcolor{blue}{#1}}
\newcommand{\purple}[1]{\textcolor{purple}{#1}}
\newcommand{\lccolor}[1]{\bluecolor{#1}}
\newcommand{\tscolor}[1]{\purple{#1}}
\newcommand{\strongcolor}[1]{\redcolor{#1}}
\newcommand{\unistore}{\textsc{UniStore}}
\newcommand{\code}[2]{#1:#2}
\newcommand{\pastVC}{{\sf pastVec}}
\newcommand{\ok}{\mathrm{ok}}
\newcommand{\var}{{\bf var}\;}
\newcommand{\pre}{{\bf pre:}\;}
\newcommand{\rpc}{{\bf remote\; call}\;}
\newcommand{\at}{\;{\bf at}\;}
\newcommand{\send}{{\bf send}\;}
\newcommand{\sendto}{\;{\bf to}\;}
\newcommand{\from}{\;{\bf from}\;}
\newcommand{\notkw}{{\bf not}\;}
\newcommand{\timeout}{{\bf timeout}\;}
\newcommand{\wait}{{\bf wait}\;}
\newcommand{\asyncwait}{{\bf async wait}\;}
\newcommand{\until}{{\bf until}\;}
\newcommand{\receive}{{\bf receive}\;}
\newcommand{\received}{{\bf received}\;}
\newcommand{\upcall}{{\bf upcall}\;}
\newcommand{\prop}[1]{\textbf{\textsc{Property #1}}}
\newcommand{\N}{\mathbb{N}}
\newcommand{\D}{\mathcal{D}}
\newcommand{\C}{\mathcal{C}}
\newcommand{\OP}{O}
\newcommand{\cdrange}{c} %
\renewcommand{\P}{\mathcal{P}}
\newcommand{\opfunc}{\textsl{op}}
\newcommand{\p}{\mathit{p}}
\newcommand{\decvar}{\mathit{dec}}
\newcommand{\clientdc}{{\sf d}}
\newcommand{\cldc}{\textsl{cur\_dc}}
\newcommand{\clientcoord}{{\sf p}}
\newcommand{\starttx}{\textsc{start\_tx}}
\newcommand{\start}{\textsc{start}}
\renewcommand{\read}{\textsc{read}}
\newcommand{\updateproc}{\textsc{update}}
\newcommand{\commit}{\textsc{commit}}
\newcommand{\commitcausaltx}{\textsc{commit\_causal\_tx}}
\newcommand{\commitstrongtx}{\textsc{commit\_strong\_tx}}
\newcommand{\abort}{\textsc{abort}}
\newcommand{\commitcausal}{\textsc{commit\_causal}}
\newcommand{\commitstrong}{\textsc{commit\_strong}}
\newcommand{\vc}{V}
\newcommand{\vcvar}{\mathit{vc}}
\newcommand{\generatetid}{\texttt{generate\_tid}}
\newcommand{\partitionofproc}{\mathtt{partition}}
\newcommand{\doop}{\textsc{do\_op}}
\newcommand{\doread}{\textsc{do\_read}}
\newcommand{\doupdate}{\textsc{do\_update}}
\newcommand{\find}{\texttt{find}}
\newcommand{\readkey}{\textsc{get\_version}}
\newcommand{\getversion}{\textsc{get\_version}}
\newcommand{\versionproc}{\textsc{version}}
\newcommand{\prepare}{\textsc{prepare}}
\newcommand{\prepareack}{\textsc{prepare\_ack}}
\newcommand{\heartbeat}{\textsc{heartbeat}}
\newcommand{\replicate}{\textsc{replicate}}
\newcommand{\propagate}{\textsc{propagate\_local\_txs}}
\newcommand{\forward}{\textsc{forward\_remote\_txs}}
\newcommand{\bcast}{\textsc{broadcast\_vecs}}
\newcommand{\knownvclocal}{\textsc{knownvec\_local}}
\newcommand{\stablevcproc}{\textsc{stablevec}}
\newcommand{\knownvcglobal}{\textsc{knownvec\_global}}
\renewcommand{\log}{\textsl{log}}
\newcommand{\oplog}{{\sf opLog}}
\newcommand{\ep}{{\sf ep}}
\newcommand{\events}{\textsl{events}}
\newcommand{\ws}{\textsl{ws}}
\newcommand{\rs}{\textsl{rs}}
\newcommand{\rsvar}{\mathit{rs}}
\newcommand{\key}{\textsl{key}}
\newcommand{\client}{\textsl{client}}
\newcommand{\clients}{\mathbb{C}}
\newcommand{\dc}{\textsl{dc}}
\newcommand{\coord}{\textsl{coord}}
\newcommand{\startoftx}{\textsl{st}}
\newcommand{\commitoftx}{\textsl{ct}}
\newcommand{\partitionsfunc}{\textsl{partitions}}
\newcommand{\snapshotVC}{\textsl{snapshotVec}}
\newcommand{\snapshotproc}{\texttt{snapshot}}
\newcommand{\commitVC}{\textsl{commitVec}}
\newcommand{\certfunc}{F}
\newcommand{\fdec}{\certfunc_{\rm dec}}
\newcommand{\fvec}{\certfunc_{\rm vec}}
\newcommand{\flc}{\certfunc_{\rm lc}}
\newcommand{\ud}{\textsl{ud}} %
\newcommand{\W}{W}
\newcommand{\R}{R}
\newcommand{\payload}{W}
\newcommand{\txs}{T}
\newcommand{\txsvar}{\mathit{txs}}
\newcommand{\txsincertify}{T_{c}}
\newcommand{\causaltxs}{\txs_{\causalentry}}
\newcommand{\strongtxs}{\txs_{\strongentry}}
\newcommand{\allstrongtxs}{\txs_{\textsl{all-strong}}}
\newcommand{\conflict}{\bowtie}
\newcommand{\intcertify}{{\sf certify}}
\newcommand{\intdecide}{{\sf decide}}
\newcommand{\intdeliver}{{\sf deliver}}
\newcommand{\act}{{\sf act}}
\newcommand{\actvar}{\mathit{act}}
\newcommand{\lccertify}{C}
\newcommand{\Decision}{\mathbb{D}}
\newcommand{\Vector}{\mathbb{V}}
\newcommand{\Key}{\mathit{Key}}
\newcommand{\Val}{\mathit{Val}}
\newcommand{\certify}{\textsc{certify}}
\newcommand{\preparestrong}{\textsc{prepare\_strong}}
\newcommand{\heartbeatstrong}{\textsc{heartbeat\_strong}}
\newcommand{\alreadydecided}{\textsc{already\_decided}}
\newcommand{\acceptack}{\textsc{accept\_ack}}
\newcommand{\replicas}{\textsc{replicas}}
\newcommand{\leaderproc}{\textsc{leader}}
\newcommand{\leaderof}{{\sf leader}}
\newcommand{\decision}{\textsc{decision}}
\newcommand{\decisionvar}{\mathit{decision}}
\newcommand{\learndecision}{\textsc{learn\_decision}}
\newcommand{\follower}{\textsc{follower}}
\newcommand{\certification}{\textsc{certification\_check}}
\newcommand{\deliver}{\textsc{deliver}}
\newcommand{\deliverupdates}{\textsc{deliver\_updates}}
\newcommand{\newleader}{\textsc{new\_leader}}
\newcommand{\newleaderack}{\textsc{new\_leader\_ack}}
\newcommand{\newstate}{\textsc{new\_state}}
\newcommand{\newstateack}{\textsc{new\_state\_ack}}
\newcommand{\retry}{\textsc{retry}}
\newcommand{\votevar}{\mathit{vote}}
\newcommand{\reqIdVar}{\mathit{rid}}
\newcommand{\generateReqId}{\texttt{generate\_req\_id}}
\newcommand{\trustedVar}{{\sf trusted}}
\newcommand{\recover}{\textsc{recover}}
\newcommand{\nack}{\textsc{nack}}
\newcommand{\normalMode}{\textsc{normal}}
\newcommand{\recovering}{\textsc{recovering}}
\newcommand{\restoring}{\textsc{restoring}}
\newcommand{\doNotWaitFor}{{\sf doNotWaitFor}}
\newcommand{\unknownTx}{\textsc{unknown\_tx}}
\newcommand{\unknowntxAck}{\textsc{unknown\_tx\_ack}}
\newcommand{\unknowntx}{\textsc{unknown}}
\newcommand{\callerMode}{\mathit{callerMode}}
\newcommand{\senderMode}{\mathit{senderMode}}
\newcommand{\dvar}{\mathit{d}}
\newcommand{\mvar}{\mathit{m}}
\newcommand{\pvar}{\mathit{p}}
\newcommand{\cl}{\mathit{cl}}
\newcommand{\snapvc}{\mathit{snapVec}}
\newcommand{\commitvc}{\mathit{commitVec}}
\newcommand{\knownvc}{\mathit{knownVec}}
\newcommand{\stablevc}{\mathit{stableVec}}
\newcommand{\preparedstrongvar}{\mathit{preparedStrong}}
\newcommand{\decidedstrongvar}{\mathit{decidedStrong}}
\newcommand{\knownVC}{{\sf knownVec}}
\newcommand{\stableVC}{{\sf stableVec}}
\newcommand{\snapVC}{{\sf snapVec}}
\newcommand{\uniformVC}{{\sf uniformVec}}
\newcommand{\localmatrix}{{\sf localMatrix}}
\newcommand{\globalmatrix}{{\sf globalMatrix}}
\newcommand{\stablematrix}{{\sf stableMatrix}}
\newcommand{\realtime}{\tau}
\newcommand{\txfunc}{\textsl{tx}}
\newcommand{\tidselector}{\textsl{tid}}
\newcommand{\tidvar}{\mathit{tid}}
\newcommand{\tids}{{\sf TID}}
\newcommand{\tvar}{\mathit{t}}
\newcommand{\ctid}{{\sf tid}} %
\newcommand{\rset}{{\sf rset}}
\newcommand{\rsetvar}{\mathit{rset}}
\newcommand{\wbuff}{{\sf wbuff}}
\newcommand{\wbuffvar}{\mathit{wbuff}}
\newcommand{\clockVar}{{\sf clock}}
\newcommand{\statusVar}{{\sf status}}
\newcommand{\committedVar}{{\sf committed}}
\newcommand{\committedcausal}{{\sf committedCausal}}
\newcommand{\preparedcausal}{{\sf preparedCausal}}
\newcommand{\preparedstrong}{{\sf preparedStrong}}
\newcommand{\decidedstrong}{{\sf decidedStrong}}
\newcommand{\accept}{\textsc{accept}}
\newcommand{\ballotVar}{{\sf ballot}}
\newcommand{\ballotvar}{\mathit{b}}
\newcommand{\cballot}{{\sf cballot}}
\newcommand{\cballotvar}{\mathit{cballot}}
\newcommand{\lastdeliveredVar}{{\sf lastDelivered}}
\newcommand{\Fence}{Q}
\newcommand{\fencerange}{q}
\newcommand{\fence}{\textsc{cl\_uniform\_barrier}}
\newcommand{\uniformbarrier}{\textsc{uniform\_barrier}}
\newcommand{\Attach}{A}
\newcommand{\attachrange}{a}
\newcommand{\clattach}{\textsc{cl\_attach}}
\newcommand{\attach}{\textsc{attach}}
\newcommand{\maxPrep}{\mathit{maxPrep}}
\newcommand{\maxDec}{\mathit{maxDec}}
\newcommand{\intread}{R_{\textsc{int}}}
\newcommand{\extread}{R_{\textsc{ext}}}
\newcommand{\rocommit}{C_{\textsc{ro}}}
\newcommand{\updatecommit}{C_{\textsc{rw}}}
\newcommand{\tsfunc}{\textsl{ts}}
\newcommand{\tsvar}{\mathit{ts}}
\newcommand{\lc}{{\sf lc}}
\newcommand{\lcvar}{\mathit{lc}}
\newcommand{\lclock}{\textsl{lclock}}
\newcommand{\lcorder}{\textsl{lc}}
\newcommand{\timestamp}{\textsl{ts}}
\newcommand{\eok}{\textsl{eo}_{k}}
\newcommand{\po}{\textsl{po}}
\newcommand{\so}{\textsl{so}}
\newcommand{\vis}{\textsl{vis}}
\newcommand{\ar}{\textsl{ar}}
\newcommand{\rel}[1]{\xrightarrow{#1}} %
\newcommand{\relation}{\mathcal{R}}
\newcommand{\txevents}{s}
\newcommand{\txnevents}{Y}
\newcommand{\rval}{\textsl{rval}}
\newcommand{\uval}{\textsl{uval}}
\newcommand{\retval}{\textsc{RVal}}
\newcommand{\intretval}{\textsc{IntRVal}}
\newcommand{\extretval}{\textsc{ExtRVal}}
\newcommand{\cc}{\textsc{CausalConsistency}}
\newcommand{\cv}{\textsc{CausalVisibility}}
\newcommand{\ca}{\textsc{CausalArbitration}}
\newcommand{\ev}{\textsc{EventualVisibility}}
\newcommand{\por}{\textsc{PoR}}
\newcommand{\conflictaxiom}{\textsc{ConflictOrdering}}
\begin{document}

\title{\System: A fault-tolerant marriage of causal and strong consistency}

\author{
{\rm Manuel Bravo} \qquad
{\rm Alexey Gotsman} \qquad
{\rm Borja de R\'egil} \qquad
\\[2pt]
IMDEA Software Institute
\and
\and
{\rm Hengfeng Wei~\thanks{Also with the State Key Laboratory for Novel Software Technology, Software Institute.}}\\[2pt]
Nanjing University
}

\maketitle

\begin{abstract}
Modern online services rely on data stores that replicate their data across
geographically distributed data centers. %
Providing strong consistency in such data stores results in high latencies and
makes the system vulnerable to network partitions.
The alternative of relaxing consistency violates crucial correctness properties.
A compromise is to allow multiple consistency levels to
coexist in the data store. In this paper we present \System, the first fault-tolerant and
scalable data store that combines causal and strong consistency.
The key challenge we address in \System is to maintain liveness despite data
center failures: this could be compromised if a strong transaction takes a
dependency on a causal transaction that is later lost because of a failure.
\System ensures that such situations do not arise while paying the cost of
durability for causal transactions only when necessary.
We evaluate \System on Amazon EC2 using both microbenchmarks and a
sample application. Our results show that
\System effectively and scalably combines causal and strong
consistency.

\end{abstract}

\section{Introduction}
\label{sec:intro}

Many of today's Internet services rely on geo-distributed data stores, which
replicate data in different geographical locations. This improves user
experience by allowing accesses to the closest site and ensures
disaster-tolerance. However, geo-distribution also makes it more challenging to
keep the data consistent. The classical approach is to make replication
transparent to clients by providing strong consistency models, such as
linearizability~\cite{linearizability} or
serializability~\cite{wv}. The downside is that this approach requires
synchronization between data centers in the critical path. This significantly
increases latency~\cite{pacelc} and makes the system unavailable during network
partitionings~\cite{cap}. Thus, even though several commercial geo-distributed
systems follow this approach~\cite{spanner,cockroachdb,yugabytedb,faunadb,giza},
the associated cost has prevented it from being adopted more widely.

An alternative approach is to relax synchronization: the data store executes an
operation at a single data center, without any communication with others, and
propagates updates to other data centers in the
background~\cite{bayou,dynamo}. This minimizes the latency and makes the system
{\em highly available}, i.e., operational even during network partitionings. But
on the downside, the systems following this approach provide weaker consistency
models: e.g., eventual consistency~\cite{bayou,vogels} or {\em causal
  consistency}~\cite{causal-memory}. The latter is particularly appealing: it
guarantees that clients see updates in an order that respects the potential
causality between them.
For example, assume that in a banking application Alice deposits \$100 into
Bob's account ($u_1$) and then posts a notification about it into Bob's inbox
($u_2$). Under causal consistency, if Bob sees the notification ($u_3$), and
then checks his account balance ($u_4$), he will see the deposit. This is not
guaranteed under eventual consistency, which does not respect causality
relationships, such as those between $u_1$ and $u_2$. In some settings, causal
consistency has been shown to be the strongest model that
allows availability during network partitionings~\cite{hagit-cc,alvisi-cc}. It
has been a subject of active research in recent years, with scalable
implementations~\cite{cure,cops,occult} and some industrial
deployments~\cite{MongoDB,redis}.

However, even causal consistency is often too weak to preserve critical
application invariants. For example, consider a banking application that
disallows overdrafts and thus maintains an invariant that an account balance is
always non-negative. Assume that the %
balance of an account stored at two replicas is $100$, and clients concurrently
issue two ${\tt withdraw}(100)$ operations ($u_5$ and $u_6$) at different
replicas. Since causal consistency executes operations without synchronization,
both withdrawals will succeed, and once the replicas exchange the updates, the
balance will go negative. To ensure integrity invariants in examples such as
this, the programmer has to introduce synchronization between replicas, and,
since synchronization is expensive, it pays off to do this sparingly. To this
end, several research~\cite{red-blue,valter,pileus,por} and
commercial~\cite{cosmosdb,documentdb,google,dynamodb,cassandra} data stores
allow the programmer to choose whether to execute a particular operation under
weak or strong consistency. For example, to preserve the integrity invariant in
our banking application, only withdrawals need to use strong consistency, and
hence, synchronize; deposits may use weaker consistency and proceed without
synchronization.

Given the benefits of causal consistency, it is particularly appealing to marry
it with strong consistency in a geo-distributed data store. But like real-life
marriages, to be successful this one needs to hold together both in good times
and in bad -- when data centers fail due to catastrophic events or power
outages. Unfortunately, none of the existing data stores meant for
geo-replication combine causal and
strong consistency while providing such fault tolerance~\cite{valter,red-blue,por}.
In this paper we present \System{} -- the first fault-tolerant and scalable data
store that combines causal and strong consistency. More precisely, \System
implements a transactional variant of {\em Partial Order-Restrictions
  consistency (PoR consistency)}~\cite{por,cise-popl16}. This guarantees
transactional causal consistency by default~\cite{cure} and allows the
programmer to additionally specify which pairs of transactions \emph{conflict},
i.e., have to synchronize. For instance, to preserve the integrity invariant in
our previous example, the programmer should declare that withdrawals from the
same account conflict. Then one of the withdrawals $u_5$ and $u_6$ will observe
the other and will fail.

The key challenge we have to address in \System is to maintain liveness despite
data center failures. Just adding a Paxos-based commit protocol for strong
transactions~\cite{discpaper,spanner,mdcc} to an existing causally consistent
protocol does not yield a fault-tolerant data store.
In such a data store, a committed strong transaction $t_2$ may never become
visible to clients if a causal transaction $t_1$ on which it depends is lost due
to a failure of its origin data center. This compromises the liveness of the
system, because no transaction $t_3$ conflicting with $t_2$ can commit
from now on: according to the PoR model, one of the transactions $t_2$ and
$t_3$ has to observe the other, but $t_2$ will never be visible and
$t_2$ did not observe $t_3$.

\System addresses this problem by ensuring that, before a strong transaction
commits, all its causal dependencies are \emph{uniform}, i.e., will eventually
become visible at all correct data centers. This adapts the classical notion of
uniformity in distributed computing to causal
consistency~\cite{cachin-book}. \System does so without defeating the benefits
of causal consistency. Causal transactions remain
highly available at the cost of increasing the latency of strong
transactions: a strong transaction may have to wait for some of its
dependencies to become uniform before committing. To minimize this cost,
\System executes causal transactions on a snapshot that is slightly in the
past, such that a strong transaction will mostly depend on causal
transactions that are already uniform before committing.
Furthermore, \System reuses the mechanism for
tracking uniformity to let clients make causal transaction durable on
demand and to enable consistent client migration.

In addition to being fault tolerant, \System scales horizontally, i.e., with the
number of machines in each data center;
this also goes beyond previous proposals~\cite{valter,red-blue,por}. To this
end, \System builds on Cure~\cite{cure} -- a scalable implementation of
transactional causal consistency. Our protocol extends Cure with a novel
mechanism that distributes the task of tracking the set of uniform transactions
among the machines of a data center.
We also add the ability for data centers to
forward transactions they receive from others, so that a transaction can
propagate through the system even if its origin data center fails.
Finally, we carefully integrate an existing fault-tolerant atomic commit for
strong transactions~\cite{discpaper} into the protocol for causal consistency.

We have rigorously proved the correctness of the \System protocol
(\S\ref{sec:correctness} and \tr{\ref{section:correctness-proof}}{\nappproof}).
We have also evaluated it on Amazon EC2 using both microbenchmarks and a more
realistic RUBiS benchmark. Our evaluation demonstrates that \System scalably
combines causal and strong transactions, with the latter not affecting the
performance of the former. Under the RUBiS mix workload, causal transactions
exhibit a low latency (1.2ms on average), and the overall average latency is
3.7$\times$ lower than that of a strongly consistent system.

\section{System Model}
\label{sec:sysmodel}

We consider a geo-distributed system consisting of a set of data centers
$\DC=\{1, \dots, D\}$ that manage a large set of data items. A data item is
uniquely identified by its \emph{key}. For scalability, the key space is split
into a set of logical partitions $\Partitions = \{1, \ldots, N\}$.  Each data
center stores replicas of all partitions, scattered among its servers.
We let $\partition^m_d$ be the replica of partition $m$ at data center $d$, and
we refer to replicas of the same partition as \emph{sibling} replicas.
As is standard, we assume that $D = 2f+1$ and at most $f$ data centers may fail.
We call a data center that does not fail {\em correct}. If a data center fails,
all partition replicas it stores become unavailable. For simplicity, we do not
consider the failures of individual replicas within a data center: these can be
masked using standard state-machine replication protocols executing within a
data center~\cite{smr,paxos}.

Replicas have physical clocks, which are loosely synchronized by a
protocol such as NTP. The correctness of \System does not depend on the
precision of clock synchronization, but large %
drifts may negatively impact its performance.
Any two replicas are connected by a reliable FIFO channel, so that 
messages between correct data centers are guaranteed to be %
delivered.
As is standard, to implement strong consistency we require the network to be
{\em eventually synchronous}, so that message delays between sibling replicas in correct
data centers are eventually bounded by some constant~\cite{psync}.

\section{Consistency Model}
\label{sec:consistency}

A client interacts with \System by executing a stream of {\em transactions} at
the data center it is connected to. A transaction consists of a sequence of
operations, each on a single data item, and can be {\em interactive}: the data
items it accesses are not known a priori. A transaction that modifies at least
one data item is an {\em update} transaction; otherwise it is {\em read-only}.

A {\em consistency model} defines a contract between the data store and its
clients that specifies which values the data store is allowed to return in
response to client operations. \System implements a transactional variant of
{\em Partial Order-Restrictions consistency (PoR
  consistency)}~\cite{por,cise-popl16}, which we now define informally; we give
a formal definition in~\tr{\ref{section:spec}}{\nappconsistency}.
The PoR model enables the programmer to classify transactions as either {\em causal} or
{\em strong}. Causal transactions satisfy transactional {\em causal
  consistency}, which guarantees that clients see transactions in an order that
respects the potential causality between them~\cite{causal-memory,cure}.
However, clients can observe causally independent transactions in arbitrary
order. Strong transactions give the programmer more control over their
visibility. To this end, the programmer provides a symmetric {\em conflict
  relation} $\bowtie$ on operations
that is lifted to strong transactions as follows: two transactions conflict
if they perform conflicting operations on the same data item. Then the PoR model
guarantees that,
out of two conflicting strong transactions, one has to observe the
other.

More precisely, a transaction $t_1$ precedes a transaction $t_2$ in the {\em
  session order} if they are executed by the same client and $t_1$ is executed
before $t_2$. A set of transactions $T$ committed by the data store satisfies
PoR
consistency if there exists a {\em causal order}
relation $\prec$ on $T$ such that the following properties hold:
\begin{description}[noitemsep,topsep=3pt,parsep=3pt,leftmargin=10pt]
\item[Causality Preservation.] The relation $\prec$ is transitive, irreflexive,
  and includes the session order.
\item[Return Value Consistency.] Consider an operation $u$ on a data item $k$ in
  a transaction $t \in T$. The return value of $u$ can be computed from the
  state of $k$ obtained as follows: first execute
  all operations on $k$ by transactions preceding $t$ in $\prec$ in an order
  consistent with $\prec$; then execute
  all operations on $k$ that precede $u$ in $t$.
\item[Conflict Ordering.] For any distinct strong transactions $t_1, t_2 \in T$,
  if $t_1\bowtie t_2$, then either $t_1\prec t_2$ or $t_2\prec t_1$.
\item[Eventual Visibility.] A transaction $t \in T$ that is either strong or
  originates at a correct data center eventually becomes visible at all correct
  data centers: from some point on, $t$ precedes in $\prec$ all transactions
  issued at correct data centers.
\end{description}
If all transactions are causal, then the above definition specializes to
transactional causal consistency~\cite{cure,framework-concur15}. If all
transactions are strong and all pairs of operations conflict, then we obtain
(non-strict) serializability.

When $t_1 \prec t_2$, we say that $t_1$ is a {\em causal dependency} of $t_2$.
Return Value Consistency ensures that all operations in a transaction $t$
execute on a snapshot consisting of its causal dependencies (as well as prior
operations by $t$). Transactions are atomic, so that either all of their
operations are included into the snapshot or none at all. The transitivity of
$\prec$, mandated by Causality Preservation, ensures that the snapshot a
transaction executes on is \emph{causally consistent}: if a transaction $t_1$ is
included into the snapshot, then so is any other transaction $t_2$ on which
$t_1$ depends (i.e., $t_2\prec t_1$). The inclusion of the session order into
$\prec$, also mandated by Causality Preservation, ensures session guarantees
such as {\em read your writes}~\cite{session}. The consistency model disallows
the causality violation anomaly from \S\ref{sec:intro}. Indeed, since $\prec$
includes the session order, we have $u_1 \prec u_2$ and $u_3 \prec
u_4$. Moreover, Bob sees Alice's message, and by Return Value Consistency this
can only happen if $u_2 \prec u_3$. Then since $\prec$ is transitive,
$u_1 \prec u_4$, and by Return Value Consistency, Bob has to see Alice's
deposit.

Causal consistency nevertheless allows the overdraft anomaly from
\S\ref{sec:intro}: the withdrawals $u_5$ and $u_6$ may not be related by
$\prec$, and thus may both execute on the balance $100$ and succeed. The
Conflict Ordering property can be used to disallow this anomaly by declaring
that ${\tt withdraw}$ operations on the same account conflict and labeling
transactions containing these as strong.
Then one of the withdrawal transactions will be guaranteed to causally precede
the other. The latter will be executed on the account balance $0$ and will fail.

Finally, Eventual Visibility ensures that strong transactions and those causal
ones that originate at correct data centers are durable, i.e., will eventually
propagate through the system.

To facilitate the use of causal transactions, \System includes {\em replicated
  data types} (aka CRDTs), which implement policies for merging concurrent
updates to the same data item~\cite{crdts}.
Each data item in the store is associated with a type (e.g., counter, set),
which is backed by a CRDT implementation managing updates to it. For example,
the programmer can use a counter CRDT to represent an account balance. Then if
two clients concurrently deposit $100$ and $200$ into an empty account using
causal transactions, eventually the balance at all replicas will be $300$. Using
ordinary writes here would yield $100$ or $200$, depending on the order in which
the writes are applied. More generally, CRDTs ensure
that two replicas receiving the same set of updates are in the same state,
regardless of the receipt order. Together with Eventual Visibility, this implies
the expected guarantee of eventual consistency~\cite{bayou}. Due to space
constraints, we omit details about the use of CRDTs from our protocol
descriptions.

\section{Key Design Decisions in \System}
\label{sec:overview}

\begin{figure*}[t]
\begin{tabular}{@{}c c@{}}
\begin{minipage}[t]{5.7cm}
\includegraphics[width=\textwidth]{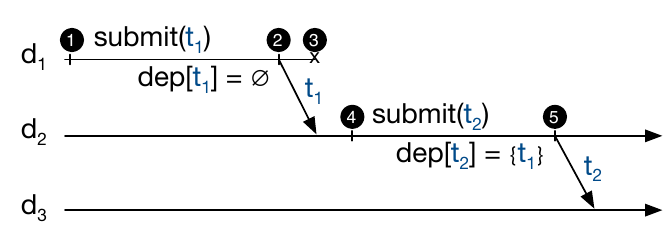}
\caption{Why \System may need to forward remote causal transactions.}
\label{fig:execution-causal}
\end{minipage}
&
\begin{minipage}[t]{11,5cm}
\includegraphics[width=\textwidth]{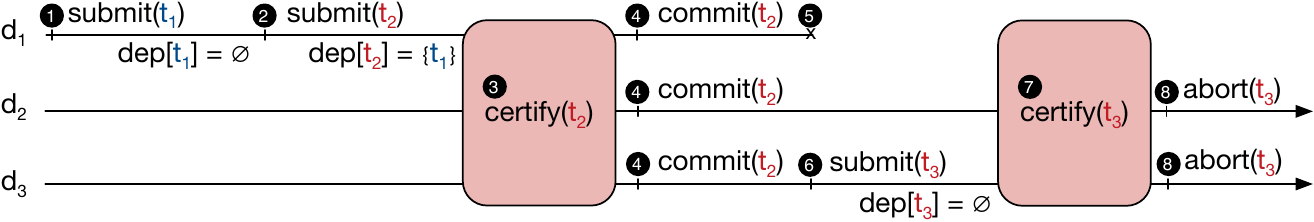}
\caption{Why \System needs to ensure that the dependencies of a strong
  transaction are uniform before committing it.}
\label{fig:execution-strong}
\end{minipage}
\end{tabular}
\end{figure*}

\paragraph{Baseline causal consistency.} A causal transaction in \System
first executes at a single data center on a causally consistent snapshot.
After this it immediately commits, and its updates are replicated to all other
data centers in the background. This minimizes the latency of causal
transactions and makes them highly available, i.e., they can be executed even
when the network connections between data centers fail.

As is common in causally consistent data stores~\cite{cops,gentlerain,cure}, to
ensure that causal transactions execute on consistent snapshots, a data center
exposes a remote transaction to clients only after exposing all its
dependencies. Then to satisfy the \liveness{} property under failures, a data
center receiving a remote causal transaction may need to forward it to other
data centers, as in reliable
broadcast~\cite{isis-reliable} and anti-entropy protocols for replica reconciliation~\cite{anti-entropy}.

Figure~\ref{fig:execution-causal}
depicts a scenario that demonstrates how \liveness{} could be
violated in the absence of this mechanism.
Let {\color{\CausalTxColor}$t_1$} be a causal %
transaction submitted at a data center $d_1$ (event \circled{1}). Assume that
$d_1$ replicates {\color{\CausalTxColor}$t_1$} to a correct data center $d_2$
(event \circled{2}) and then fails (event \circled{3}), so that
{\color{\CausalTxColor}$t_1$} does not get replicated anywhere else.
Let {\color{\CausalTxColor}$t_2$} be a %
transaction submitted at $d_2$ after {\color{\CausalTxColor}$t_1$} becomes
visible there, so that {\color{\CausalTxColor}$t_2$} depends on
{\color{\CausalTxColor}$t_1$} (event \circled{4}). Transaction
{\color{\CausalTxColor}$t_2$} will eventually be replicated to all correct data
centers (event \circled{5}). But it will never be exposed at any of them,
because its dependency {\color{\CausalTxColor}$t_1$} is missing. If data
centers can forward remote causal transactions, then $d_2$ can
eventually replicate {\color{\CausalTxColor}$t_1$} to all correct data centers,
preventing this problem.

\paragraph{On-demand strong consistency.}
\System uses optimistic concurrency control for strong transactions: they are
first executed speculatively and the results are then {\em certified} to
determine whether the transaction can commit, or must abort due to a conflict
with a concurrent strong transaction~\cite{wv}.
Certifying a strong transaction requires synchronization between the replicas of
partitions it accessed, located in different data centers.
\System implements this using an existing
fault-tolerant protocol that combines two-phase commit and
Paxos~\cite{discpaper} while minimizing commit latency.
However, just using such a protocol is not enough to make the overall system
fault tolerant: for this, before a strong transaction commits, all its causal
dependencies must be uniform in the following sense.
\begin{definition}
  A transaction is {\em uniform} if both the transaction and its causal
  dependencies are guaranteed to be eventually replicated at all correct data
  centers.
\label{def:uniform}
\end{definition}
This adapts the classical notion of uniformity in distributed computing to
causal consistency~\cite{cachin-book}. \System considers a transaction to be
uniform once it is visible at $f+1$ data centers, because at least one of these
must be correct, and
data centers can forward causal transactions to others.

The following scenario, depicted in Figure~\ref{fig:execution-strong},
demonstrates why committing a strong transaction before its dependencies become
uniform can compromise the liveness of the system.
Assume that a causal transaction {\color{\CausalTxColor}$t_1$} and a strong
transaction {\color{\StrongTxColor}$t_2$} are submitted at a data center $d_1$
in such a way that {\color{\CausalTxColor}$t_1$} becomes a dependency of
{\color{\StrongTxColor}$t_2$} (events \circled{1} and \circled{2}). Assume also
that {\color{\StrongTxColor}$t_2$} is certified, committed and delivered to all
relevant replicas (events \circled{3} and \circled{4}) before
{\color{\CausalTxColor}$t_1$} is replicated to any data center, and thus before
it is uniform. Now if %
$d_1$ fails before replicating
{\color{\CausalTxColor}$t_1$} (event \circled{5}), no remote data center will be
able to expose {\color{\StrongTxColor}$t_2$}, because its dependency
{\color{\CausalTxColor}$t_1$} is missing. This violates the \liveness{} property,
and even worse, %
no strong transaction conflicting with {\color{\StrongTxColor}$t_2$} can 
commit from now on. For instance, let {\color{\StrongTxColor}$t_3$} be such a
transaction, submitted at $d_3$ (event \circled{6}). Because $d_3$ cannot expose
{\color{\StrongTxColor}$t_2$}, transaction {\color{\StrongTxColor}$t_3$}
executes on a snapshot excluding {\color{\StrongTxColor}$t_2$}. Hence,
{\color{\StrongTxColor}$t_3$} will abort during certification (events
\circled{7} and \circled{8}): committing it would violate the Conflict Ordering
property, since transactions {\color{\StrongTxColor}$t_2$} and
{\color{\StrongTxColor}$t_3$} conflict, but neither of them is visible to the
other. Ensuring that {\color{\CausalTxColor}$t_1$} is uniform before committing
{\color{\StrongTxColor}$t_2$} prevents this problem, because it guarantees that
{\color{\CausalTxColor}$t_1$} will eventually be replicated at $d_3$. After this
{\color{\StrongTxColor}$t_2$} will be exposed to conflicting transactions at
this data center, which will allow them to commit.

\paragraph{Minimizing the latency of strong transactions.} Ensuring that
all the causal dependencies of a strong transaction are uniform before
committing it may significantly increase its latency, since this requires
additional communication between data centers. \System mitigates this problem by
executing causal transactions on a snapshot that is slightly in the past, which
is allowed by causal consistency. Namely, \System makes a remote causal
  transaction visible to the clients only after it is uniform.
This minimizes the latency of a strong transaction, since to commit it
only needs to wait for causal transactions originating at the local data center
to become uniform. We cannot delay the visibility of the latter transactions due
to the need to guarantee {\em read your writes} to local clients.

\paragraph{On-demand durability of causal transactions.} Client applications
interacting with the external world require hard durability guarantees:
e.g., a banking application has to ensure that a withdrawal is durably recorded
before authorizing the operation. \System guarantees that, once a strong
transaction commits, the transaction and its dependencies are durable. However,
\System returns from a causal transaction before it is replicated, and thus the
transaction may be lost if its origin data center fails. Ensuring the durability
of every single causal transaction would require synchronization between data
centers on its critical path, defeating the benefits of causal consistency.
Instead, \System reuses the mechanism for tracking uniformity to let the clients
pay the cost of durability only when necessary. Even though \System replicates
causal transactions asynchronously, it allows clients to execute a \emph{uniform
  barrier}, which ensures that the transactions they have observed so far are
uniform, and thus durable.

\paragraph{Client migration.} Clients may need to migrate between
data centers, e.g., because of roaming or for load balancing. \System also uses
the uniformity mechanism to preserve session guarantees during migration.
A client wishing to migrate from its local data center $d$ to another data
center $i$ first invokes a uniform barrier at $d$. This guarantees that the
transactions the client has observed or issued at $d$ are durable and will
eventually become visible at $i$, even if $d$ fails. The client then makes an
{\em attach} call at the destination data center $i$ that waits until $i$ stores
all the above transactions. After this, the client can operate at $i$ knowing
that the state of the data center is consistent with the client's previous
actions. 

Currently \System does not support consistent client migration in response
to a data center failure: if the data center a client is connected to fails, the
client will have to restart its session when connecting to a different data
center. 
As shown in~\cite{swiftcloud}, this limitation can be lifted without defeating
the benefits of causal consistency. We leave integrating the corresponding
mechanisms into \System for future work.

\section{Fault-Tolerant Causal Consistency Protocol}

We first describe the \System protocol for the case when all transactions are
causal.  We give its pseudocode in Algorithms~\ref{alg:txncoord1} and
\ref{alg:clock}; for now the reader should ignore highlighted lines, which are
needed for strong transactions. For simplicity, we assume that each handler in
the algorithms executes atomically (although our implementation is
parallelized). We reference pseudocode lines using the format
algorithm\#:line\#.

\subsection{Metadata}
\label{sec:metadata}

Most metadata in our protocol are represented by vectors with an entry per each
data center, where each entry stores a scalar timestamp. However, different
pieces of metadata use the vectors in different ways, which we now describe.

\paragraph{Tracking causality.}
The first use of the vectors is as vector
clocks~\cite{vectorclocks1,vectorclocks2}, to track causality between
transactions. Given vectors $V_1$ and $V_2$, we write $V_1 < V_2$ if each entry
of $V_1$ is no greater than the corresponding entry of $V_2$, and at least one
is strictly smaller. Each update transaction is tagged with a \emph{commit
  vector} $\commitvector$. The order on these vectors is consistent with the
causal order $\prec$ from \S\ref{sec:consistency}: if $\commitvector_1$ and
$\commitvector_2$ are the commit vectors of two update transactions $t_1$ and
$t_2$ such that $t_1 \prec t_2$, then $\commitvector_1 < \commitvector_2$. For a
transaction originating at a data center $d$ with a commit vector $\commitvector$,
we call $\commitvector[d]$ its {\em local timestamp}.

Each replica $\partition^m_d$ maintains a
log $\store[k]$ of update operations performed on each data item $k$ stored at
the replica. Each log entry stores, together with the operation, the commit
vector of the transaction that performed it. This allows reconstructing
different versions of a data item from its log.

\paragraph{Representing causally consistent snapshots.}
The protocol also uses a vector to represent a snapshot of the data store on
which a transaction operates: a snapshot vector $V$ represents all transactions
with a commit vector $\le V$. This snapshot is causally consistent. Indeed,
consider a transaction $t_1$ included into it, i.e., $\commitvector_1 \leq
V$. Since any causal dependency $t_0$ of $t_1$ is such that
$\commitvector_0 < \commitvector_1$, we have $\commitvector_0 < V$, so that
$t_0$ is also included into the snapshot. A client also maintains a vector
$\past$
that represents its \emph{causal past}: a causally consistent snapshot including
the update transactions the client has previously observed.

\paragraph{Tracking what is replicated where.}  
Each replica $\partition^m_d$ maintains three vectors that are used to compute
which transactions are uniform. These respectively track the sets of
transactions replicated at $\partition^m_d$, the local data center $d$, and
$f+1$ data centers. Each of these vectors $V$ represents the set of update
transactions originating at a data center $i$ with a local timestamp $\le
V[i]$. Note that this set may not form a causally consistent snapshot. The first
vector maintained by $\partition^m_d$ is $\replicavectorclock$. For each data
center $i$, it defines the prefix of update transactions originating at $i$ (in
the order of local timestamps) that $\partition^m_d$ knows about.
\begin{property}
\label{prop:knownvc}
For each data center $i$, the replica $\partition^m_d$ stores the updates to
partition $m$ by all transactions originating at $i$ with local timestamps
$\leq \replicavectorclock[i]$.
\end{property}
Our protocol ensures that $\replicavectorclock[d]\leq \clock$ at any replica in
data center $d$. The vector $\replicavectorclock$ at $\partition^m_d$ records
whether the updates to partition $m$ by a given transaction are stored at this
replica. In contrast, the next vector $\stablesnapshot$ records whether the
updates to {\em all} partitions by a transaction are stored at the local data
center $d$.
\begin{property}
\label{prop:stablevc}
For each data center $i$, the data center $d$ stores the updates by all
transactions originating at $i$ with local timestamps $\leq \stablesnapshot[i]$.
More precisely, we are guaranteed that $\replicavectorclock[i]$ at any replica of $d$ ${}\ge{}$
$\stablesnapshot[i]$ at any $\partition^m_d$.
\end{property}
Finally, the last vector $\uniformsnapshot$ defines the set of update
transactions that $\partition^m_d$ knows to have been replicated at
$f+1$ data centers, including $d$.
\begin{property}
\label{prop:uniformity}
Consider $\uniformsnapshot[i]$ at $\partition^m_d$. All update
transactions originating at $i$ with local timestamps $\leq$
$\uniformsnapshot[i]$ are replicated at $f+1$ datacenters including
$d$. More precisely: $\replicavectorclock[i]$
at any replica of these data centers $\geq \uniformsnapshot[i]$ at $\partition^m_d$.
\end{property}

When $\uniformsnapshot$ is reinterpreted as a causally consistent snapshot, it
defines transactions that $\partition^m_d$ knows to be uniform according to
Definition~\ref{def:uniform}:
\begin{property}
\label{prop:uniformsnapshot}
Consider $\uniformsnapshot$ at $\partition^m_d$. All update transactions with
commit vectors $\leq \uniformsnapshot$ are uniform.
\end{property}
\noindent {\em Proof sketch.}
Consider a transaction $t_1$ that originates at a data center $i$ with a commit
vector $\commitvector_1\leq \uniformsnapshot$ at $\partition^m_d$. In
particular, $\commitvector_1[i]\leq\uniformsnapshot[i]$, and by
Property~\ref{prop:uniformity}, $t_1$ is replicated at $f+1$ data centers. We
assume at most $f$ failures. Then the transaction forwarding mechanism
of our protocol 
(\S\ref{sec:overview}) guarantees that $t_1$
will eventually be replicated at all correct data centers.
Consider now any
causal dependency $t_2$ of $t_1$ with a commit vector $\commitvector_2$. Since
commit vectors are consistent with causality,
$\commitvector_2 <\commitvector_1\leq\uniformsnapshot$. Then as above, we can
again establish that $t_2$ will be replicated at all correct data centers, as
required by Definition~\ref{def:uniform}.\qed

\subsection{Causal Transaction Execution}
\label{sec:bluetxs}

\paragraph{Starting a transaction.}
A client can submit a transaction to any replica in its local data center by
calling $\STARTTX(\avc)$, where $\avc$ is the client's causal past $\past$
(\algline{alg:txncoord1}{line:starttx:call}, for brevity, we omit the pseudocode
of the client). A replica $\partition^m_d$ receiving such a request acts as the
transaction {\em coordinator}. It generates a unique transaction identifier
$\tx$, computes a snapshot $\vecsnapshottime[\tx]$ on which the transaction will
execute, and returns $\tx$ to the client (we explain
\alglines{alg:txncoord1}{line:starttx:start}{line:starttx:end} and similar ones
later). The snapshot is computed by combining uniform transactions from
$\uniformsnapshot$ (\algline{alg:txncoord1}{alg:coord:starttx:init}) with the
transactions from the client's causal past originating at $d$
(\algline{alg:txncoord1}{alg:coord:starttx:end}). The former is crucial to
minimize the latency of strong transactions (\S\ref{sec:overview}), while the
latter ensures \emph{read your writes}.

\paragraph{Transaction execution.}  The client proceeds to execute the
transaction $\tx$ by issuing a sequence of operations at its coordinator via
\UPDATETX{} (\algline{alg:txncoord1}{alg:coord:execop}). When the coordinator
receives an operation $\op$ on a data item $k$, it sends a $\EXECOP$ message
with the transaction's snapshot $\vecsnapshottime[\tx]$ to the local replica
responsible for $k$
(\algline{alg:txncoord1}{alg:coord:sendop}). Upon receiving the message
(\algline{alg:txnpartition}{line:updateuniformop:receive}), the replica first
ensures that it is as up-to-date as required by the snapshot
(\algline{alg:txnpartition}{line:waitexecute}). It then computes the latest
version of $k$ within the snapshot by applying the operations from $\store[k]$
by all transactions
with commit vectors $\leq \vecsnapshottime[\tx]$. The result is sent to the
coordinator in a $\OPRET$ message. After receiving it
(\algline{alg:txncoord1}{line:coord:receive-ret}), the coordinator further
applies the operations on $k$ previously executed by the transaction, which are
stored in a buffer $\writeset[\tx]$; this ensures {\em read your writes} within
the transaction. If the operation is an update, the coordinator then appends
it to $\writeset[\tx]$. Finally, the coordinator executes the desired operation
$\op$ and forwards its return value to the client.

\begin{algorithm}
  \renewcommand{\SpaceHandler}{\vspace{5pt}}
  \begin{algorithmic}[1]
    \small
    \Function{\STARTTX}{\avc}\label{line:starttx:call}
      \For{$i \in \DC \setminus \{d\}$}\label{line:starttx:start}
        \State $\uniformsnapshot[i] \gets
          \max\{\avc[i]$, $\uniformsnapshot[i]\}$\label{line:starttx:end}
      \EndFor
       \State \textbf{var} $\tx \gets \newtxid$()
      \State $\vecsnapshottime[\tx] \gets \uniformsnapshot$\label{alg:coord:starttx:init}
      \State $\vecsnapshottime[\tx][d] \gets
        \max\{\avc[d]$, $\uniformsnapshot[d]\}$ \label{alg:coord:starttx:end}
      \State \colorbox{\BackColor}{{\color{\StrongColor}
        $\vecsnapshottime[\tx][\red] \,{\gets}\,
        \max\{\avc[\red]$, $\stablesnapshot[\red]\}$}}\label{alg:coord:starttx:end-red}
      \State \Return \tx
    \EndFunction

    \SpaceHandler

    \Function{\UPDATETX}{\tx, $k$, \op}\label{alg:coord:execop}
      \State {\bf var} $\areplica \gets \partitionof$($k$)
      \State {\bf send}
        \EXECOP($\vecsnapshottime[\tx]$, $k$)
        \textbf{to} $\partition^\areplica_d$\label{alg:coord:sendop}
      \State \textbf{wait receive} \OPRET(\cstate)
      \textbf{from} $\partition^\areplica_d$\label{line:coord:receive-ret}
      \ForAll{$\langle k, \op' \rangle \in \writeset[\tx][\areplica]$}
        $\cstate \gets \apply(\op', \cstate)$\!\!
      \EndFor \label{line:formversion:end}
      \State \colorbox{\BackColor}{{\color{\StrongColor} $\readset[\tx] \gets \readset[\tx] \cup
        \{ \langle k, \op \rangle\}$}}\label{line:readset}
      \If{$\op$ is an update}
        \State $\writeset[\tx][\areplica] \gets
          \writeset[\tx][\areplica] \cdot \langle k, \op \rangle$
      \EndIf
      \State \Return $\effval{\op}{\cstate}$
    \EndFunction

   \SpaceHandler

       \WhenRcv[\EXECOP(\avecsnapshottime, $k$)]
    \textbf{from} \partition \label{line:updateuniformop:receive}
      \For{$i \in \DC \setminus \{d\}$}\label{line:updateuniformop:start}
        \State $\uniformsnapshot[i] \gets \max\{\avecsnapshottime[i]$,
          $\uniformsnapshot[i]\}$\label{line:updateuniformop:end}
      \EndFor
      \State {\bf wait until}
        $\replicavectorclock[d] \geq \avecsnapshottime[d] \wedge {}$\label{line:waitexecute}
          \Statex \hspace{1.69cm}\colorbox{\BackColor}{{\color{\StrongColor} $\replicavectorclock[\red]
        \geq \avecsnapshottime[\red]$}}
      \State \textbf{var} $\cstate \gets \bot$
      \ForAll{\label{line:formversion:start}
        $\langle \op', \commitvector\rangle {\in} \store[k].\hspace{1pt}
        \commitvector {\leq} \avecsnapshottime$}
        \State $\cstate \gets \apply(\op',\cstate)$
      \EndFor
      \State
        \textbf{send}
        \OPRET(\cstate)
        \textbf{to} \partition
    \EndWhenRcv

   \SpaceHandler
    
    \Function{\COMMITBLUE}{\tx} \label{alg:line:coordcommit}
       \State \textbf{var} $L \gets \{ \areplica \mid \writeset[\tx][\areplica] \neq
      \emptyset\}$
      \If{$L = \emptyset$}
        \Return $\vecsnapshottime[\tx]$\label{alg:line:commitro}
      \EndIf
      \State {\bf send}
      \BLUEPREPARE(\tx, $\writeset[\tx][\areplica]$, $\vecsnapshottime[\tx]$)
      \textbf{to} $\partition^\areplica_d,\, \areplica \in L$\label{alg:line:sendprepare}
      \State {\bf var} $\commitvector \gets \vecsnapshottime[\tx]$\label{alg:line:commitvectorany}
      \ForAll{$\areplica \in L$} \label{alg:line:rcvprepare}
        \State \textbf{wait receive} \BLUEPREPARED(\tx, \ts) \textbf{from} $\partition^\areplica_d$
        \State $\commitvector[d] \gets \max \{\commitvector[d]$, $\ts\}$\label{alg:line:commitvectorlocal}
      \EndFor
      \State {\bf send}
      \COMMIT(\tx, $\commitvector$)
      \textbf{to} $\partition^\areplica_d,\, \areplica \in L$\label{alg:line:sendcommit}
      \State \Return $\commitvector$\label{alg:line:returncommit}
    \EndFunction

    \SpaceHandler

        \WhenRcv[\BLUEPREPARE(\tx, \aws, $\avecsnapshottime$)] \textbf{from} \partition
    \label{line:updateuniformprepare:receive}
      \For{$i \in \DC \setminus \{d\}$}\label{line:updateuniformprepare:start}
        \State $\uniformsnapshot[i] \gets \max\{\avecsnapshottime[i]$,
          $\uniformsnapshot[i]\}$\label{line:updateuniformprepare:end}
      \EndFor
      \State \textbf{var} $\ts \gets \clock$ \label{alg:line:PrepTime}
      \State $\preparedblue \gets \preparedblue \cup
        \{ \langle \tx, \aws, \ts\rangle \}$
      \State \textbf{send} \BLUEPREPARED(\tx, \ts) \textbf{to} \partition
    \EndWhenRcv

    \SpaceHandler

    \WhenRcv[\COMMIT($\tx$, $\commitvector$)]
    \label{alg:line:commit:receive}
      \State \textbf{wait until} $\clock \geq \commitvector[d]$ \label{alg:line:commitwait}
      \State $ \langle \tx, \aws, \_\rangle \gets$
        $\mathtt{find}(\tx, \preparedblue)$
      \State $\preparedblue \gets \preparedblue \setminus \{ \langle \tx,
        \_, \_\rangle \}$
      \ForAll{$\langle k, \op \rangle \in \aws$}
        \State $\store[k] \gets \store[k] \cdot
          \langle \op, \commitvector \rangle$
      \EndFor
      \State $\committedset[d] \gets \committedset[d] \cup {}$
        \Statex\hspace{4.5cm}$\{\langle \tx, \aws,
        \commitvector\rangle\}$
    \EndWhenRcv

    \SpaceHandler

    \Function{\MAKEUNIFORM}{\avc}\label{line:barrier}
      \State \textbf{wait until} $\uniformsnapshot[d] \ge \avc[d]$ \label{line:waituniform}
    \EndFunction

    \SpaceHandler

    \Function{\ATTACH}{\avc}\label{line:attach}
    \State \textbf{wait until} $\forall i \in \DC \setminus \{d\}.\, \uniformsnapshot[i] \ge \avc[i]$\label{line:waitattach}
    \EndFunction
  \end{algorithmic}
  \caption{Transaction execution at $\partition^m_d$.}
  \label{alg:txncoord1}
  \label{alg:txnpartition}
\end{algorithm}

\paragraph{Commit.}
A client commits a causal transaction by calling $\COMMITBLUE$
(\algline{alg:txncoord1}{alg:line:coordcommit}). This returns immediately if the
transaction is read-only, since it already read a consistent snapshot
(\algline{alg:txncoord1}{alg:line:commitro}).  To commit an update transaction,
\System uses a variant of two-phase commit protocol (recall that for simplicity
we only consider whole-data center failures, not those of individual replicas,
\S\ref{sec:sysmodel}).
The coordinator first sends a $\BLUEPREPARE$ message to the replicas in the
local data center storing the data items updated by the transaction
(\algline{alg:txncoord1}{alg:line:sendprepare}). The message to each replica
contains the part of the write buffer relevant to that replica. When a replica
receives the message
(\algline{alg:txncoord1}{line:updateuniformprepare:receive}), it computes the
transaction's \emph{prepare time} $\ts$ from its local clock and adds the
transaction to $\preparedblue$, which stores the set of causal transactions that
are prepared to commit at the replica. The replica then returns $\ts$ to the
coordinator in a $\BLUEPREPARED$ message.

When the coordinator receives replies from all replicas updated by the
transaction, it computes the transaction's commit vector $\commitvector$: it
sets the local timestamp $\commitvector[d]$ to the maximum among the prepare
times proposed by the replicas
(\algline{alg:txncoord1}{alg:line:commitvectorlocal}), and it copies the other
entries of $\commitvector$ from the snapshot vector $\vecsnapshottime[\tx]$
(\algline{alg:txncoord1}{alg:line:commitvectorany}). The latter reflects the
fact that the transactions in the snapshot become causal dependencies of $\tx$.

After computing $\commitvector$, the coordinator sends it in a $\COMMIT$ message
to the relevant replicas at the local data center
(\algline{alg:txncoord1}{alg:line:sendcommit}) and returns it to the client
(\algline{alg:txncoord1}{alg:line:returncommit}). The client then sets its
causal past $\past$ to the commit vector. When a replica receives the $\COMMIT$
message (\algline{alg:txnpartition}{alg:line:commit:receive}), it removes the
transaction from $\preparedblue$, adds the transaction's updates to $\store$,
and adds the transaction to a set $\committedset[d]$, which stores transactions
waiting to be replicated to sibling replicas at other data centers.

\subsection{Transaction Replication}
\label{sec:replication}

Each replica $\partition^m_d$ periodically replicates locally committed update
transactions to sibling replicas in other data centers by executing
$\PROPAGATELOCAL$ (\algline{alg:replication}{line:replicatelocal}). Transactions
are replicated in the order of their local timestamps. The prefix of
transactions that is ready to be replicated is determined by
$\replicavectorclock[d]$: according to Property~\ref{prop:knownvc},
$\partition^m_d$ stores updates to $m$ by all transactions originating at $d$
with local timestamps $\leq \replicavectorclock[d]$. Thus, the replica first
updates $\replicavectorclock[d]$ while preserving Property~\ref{prop:knownvc}.

There are two cases of this update. If the replica does not have any prepared
transactions ($\preparedblue=\emptyset$), it sets $\replicavectorclock[d]$ to
the current value of the $\clock$
(\algline{alg:replication}{line:updateknownvc-start}). This preserves
Property~\ref{prop:knownvc} because in this case a new transaction originating
at $d$ and updating $m$ will get a prepare time at $m$ higher than the current
$\clock$ (\algline{alg:txnpartition}{alg:line:PrepTime}), and thus also a higher
local timestamp (\algline{alg:txncoord1}{alg:line:commitvectorlocal}). If the
replica has some prepared transactions, then they may end up getting local
timestamps lower than the current $\clock$. In this case, the replica sets
$\replicavectorclock[d]$ to just below the smallest prepared time
(\algline{alg:replication}{line:localknownprepared}). This preserves
Property~\ref{prop:knownvc} because: {\em (i)} currently prepared transactions
will get a local timestamp no lower than their prepare time; and {\em (ii)} as
we argued above, new transactions will get a prepare time higher than the
current $\clock$ and, hence, than the smallest prepare time.

After updating $\replicavectorclock[d]$, the replica sends a $\REPLICATE$
message to its siblings with the transactions in $\committedset[d]$ such that
$\commitvector[d]\leq\replicavectorclock[d]$, and then removes them from
$\committedset[d]$. In other words, the replica sends all transactions from the
prefix determined by $\replicavectorclock[d]$ that it has not yet replicated.

\begin{algorithm}[t]
  \renewcommand{\SpaceHandler}{\vspace{6pt}}
  \begin{algorithmic}[1]
    \small
    \Function{\PROPAGATELOCAL}{$ $} \Comment{Run
      periodically}\label{line:replicatelocal}

        \If{$\preparedblue = \emptyset$}
        $\replicavectorclock[d] \gets \clock$\label{line:updateknownvc-start}
          \Else {}
          $\replicavectorclock[d] \,{\gets}\, \min\{\ts \,{\mid}\,
          \langle \_,\_, \ts\rangle\,{\in}\,\preparedblue\}{-}1$\label{line:localknownprepared}
          \EndIf
          \State {\bf var} $\atxs \gets \{\langle \_, \_, \commitvector\rangle
          \in \committedset[d] \mid $
          \Statex \hspace{3.7cm} $\commitvector[d]\leq\replicavectorclock[d]\}$
          \If{$\atxs \neq \emptyset$}
              \State {\bf send}
                $\REPLICATE(d, \atxs)$
                \textbf{to} $\partition^m_i,\, i \in \DC \setminus \{d\}$
                \State $\committedset[d] \gets \committedset[d]\setminus \atxs$
          \Else{}
          {\bf send}
              $\BHEARTBEAT(d, \replicavectorclock[d])$
              \textbf{to} $\partition^m_i,\, i \in \DC \setminus \{d\}$\!\!\!\!\label{line:updateknownprepared}
          \EndIf
    \EndFunction

    \SpaceHandler

  \WhenRcv[\REPLICATE($i$, $\atxs$)]\label{line:receivereplicate}
      \ForAll{$\langle \tx, \aws, \commitvector\rangle \,{\in}\, \atxs$ in
        $\commitvector[i]$ order}
      \If{$\commitvector[i]>\replicavectorclock[i]$}\label{line:receivepreparedprecondition}
      \ForAll{$\langle k, \op \rangle \in \aws$}
        \State $\store[k] \gets \store[k] \cdot
          \langle \op, \commitvector \rangle$
      \EndFor
      \State $\committedset[i] \gets \committedset[i] \cup {}$
        \Statex\hspace{4.6cm}$
        \{\langle \tx, \aws, \commitvector\rangle\}$
       \State $\replicavectorclock[i] \gets \commitvector[i]$
       \EndIf
       \EndFor
    \EndWhenRcv

    \SpaceHandler

    \vspace{-1pt}

    \WhenRcv[\BHEARTBEAT($i$, $\ts$)]\label{line:heartbeatreceive}
      \State \textbf{pre:} $\ts > \replicavectorclock[i]$
      \State $\replicavectorclock[i] \gets \ts$
    \EndWhenRcv

    \SpaceHandler

    \vspace{-1pt}

    \Function{\PROPAGATEREMOTE}{$i, j$}\label{line:forward}
          \State {\bf var} $\atxs \gets \{\langle \_, \_, \commitvector\rangle
            \in \committedset[j] \mid $\label{line:txstoforward}
          \Statex \hspace{3cm} $\commitvector[j]>\rvc[i][j]\}$
          \If{$\atxs \neq \emptyset$}
                  {\bf send}
                $\REPLICATE(j, \atxs)$
                \textbf{to} $\partition^m_i$
          \Else {}
            {\bf send}
              $\BHEARTBEAT(j, \replicavectorclock[j])$
              \textbf{to} $\partition^m_i$\label{line:heartbeatforward}
          \EndIf
    \EndFunction

    \SpaceHandler
    \vspace{-1pt}

    \Function{\CASTVC}{$ $} \Comment{Run periodically} \label{alg:line:bcast}
    \State {\bf send} $\UPDCSS(m, \replicavectorclock)$ 
    \textbf{to} $\partition^\areplica_d,\, \areplica \in \Partitions$ \label{alg:line:send-knownvc}
    \State {\bf send}
    $\UPDUSS(d, \stablesnapshot)$ \textbf{to} $\partition^m_i,\, i \in \DC$\label{alg:line:updategsssend}
    \State {\bf send}
    $\UPDRCV(d, \replicavectorclock)$ \textbf{to} $\partition^m_i,\, i \in \DC$\label{line:updateglobalmatrixsend}
    \EndFunction

    \SpaceHandler
    \vspace{-1pt}

    \WhenRcv[\UPDCSS($\areplica$, $\areplicavectorclock$)] \label{alg:line:updategss}
      \State $\pmc[\areplica] \gets \areplicavectorclock$\label{alg:line:localknownmatrix}
      \For{$i \in \DC$}
        $\stablesnapshot[i] \gets \min \{\pmc[\asecondreplica][i] \mid \asecondreplica \in \Partitions\}$\label{line:computestable}
      \EndFor
      \State \colorbox{\BackColor}{{\color{\StrongColor}
          $\stablesnapshot[\red]\gets \min\{\pmc[\asecondreplica][\red] \mid \asecondreplica \in \Partitions\}$}}\label{line:stablered}
    \EndWhenRcv

    \SpaceHandler
    \vspace{-1pt}

    \WhenRcv[\UPDUSS($i$, $\astablesnapshot$)]\label{line:updateuniformvc}
      \State $\gmc[i] \gets \astablesnapshot$\label{line:updatestablematrix}
      \State $G\gets$ all groups with $f+1$ replicas that include $\partition^m_d$\label{line:enumerate}
      \For{$j \in \DC$}
        \State \textbf{var} $\ts \gets \max \{ \min \{ \gmc[\athirddc][j] \mid \athirddc \in g \} \mid g \in G\}$
        \State $\uniformsnapshot[j] \gets \max \{ \uniformsnapshot[j]$, $\ts\}$
      \EndFor
    \EndWhenRcv

    \SpaceHandler
    \vspace{-1pt}

    \WhenRcv[\UPDRCV($\areplica$, $\areplicavectorclock$)]\label{line:updateglobalmatrix}
      \State $\rvc[\areplica] \gets \areplicavectorclock$
    \EndWhenRcv
  \end{algorithmic}
  \caption{Transaction replication at $\partition^m_d$.}
  \label{alg:replication}
  \label{alg:clock}
\end{algorithm}

When a replica $\partition^m_d$ receives a $\REPLICATE$ message with a set of
transactions $\atxs$ originating at a sibling replica $\partition^m_i$
(\algline{alg:replication}{line:receivereplicate}), it iterates over $\atxs$ in
$\commitvector[i]$ order. For each new transaction in $\atxs$ with commit vector
$\commitvector$, the replica adds the transaction's operations to its log and
sets $\replicavectorclock[i]=\commitvector[i]$. Since communication channels are
FIFO, $\partition^m_d$ processes all transactions from $\partition^m_i$ in their
local timestamp order. Hence, the above update to $\replicavectorclock[i]$
preserves Property~\ref{prop:knownvc}: $\partition^m_d$ stores updates
originating at $\partition^m_i$ by all transactions with
$\commitvector[i]\leq \replicavectorclock[i]$.  Finally, the replica adds the
transactions to $\committedset[i]$, which is used to implement transaction
forwarding (\S\ref{sec:overview}).  Due to the forwarding, $\partition^m_d$ may
receive the same transaction from different data centers. Thus, when processing
transactions in the $\REPLICATE$ message, it checks for duplicates
(\algline{alg:replication}{line:receivepreparedprecondition}).

\subsection{Advancing the Uniform Snapshot}
\label{sec:clockcomputation}

Replicas run a background protocol that refreshes the information about uniform
transactions.
This proceeds in two stages. First, a replica keeps track of which transactions
have been replicated at the replicas of other partitions in the same data
center. To this end, replicas in the same data center periodically exchange
$\UPDCSS$ messages with their $\replicavectorclock$ vectors, which they store in
a matrix $\pmc$
(\algtwolines{alg:clock}{alg:line:send-knownvc}{alg:line:updategss});
in our implementation this is done via a dissemination tree. This
matrix is then used to compute the vector $\stablesnapshot$, which represents
the set of transactions that have been fully replicated at the local data center
as per Property~\ref{prop:stablevc}. To ensure this, a replica computes an entry
$\stablesnapshot[i]$ as the minimum of $\replicavectorclock[i]$ it received from
the replicas of other partitions in the same data center
(\algline{alg:clock}{line:computestable}).

In the second stage of the background protocol, sibling replicas periodically
exchange $\UPDUSS$ messages with their $\stablesnapshot$ vectors, which they
store in a matrix $\gmc$
(\algtwolines{alg:clock}{alg:line:updategsssend}{line:updateuniformvc}). This
matrix is then used by a replica
to compute $\uniformsnapshot$, which characterizes the update transactions that
are replicated at $f+1$ data centers as per Property~\ref{prop:uniformity}. To
this end, a replica first enumerates all groups $G$ of $f+1$ data centers that
include its local data center (\algline{alg:clock}{line:enumerate}). For each
data center $j$ the replica performs the following computation. First, for each
group $g \in G$, it computes the minimum $j$-th entry in the stable vectors of
all data centers $\athirddc \in g$:
$\min\{\gmc[\athirddc][j]\mid \athirddc\in g\}$. By Property~\ref{prop:stablevc}
all update transactions originating at $j$ with local timestamp $\le$ the
minimum have been replicated at all data centers in $g$. The replica then sets
$\uniformsnapshot[j]$ to the maximum of the resulting values computed for all
groups $g \in G$, to cover transactions that are replicated at any such group.
According to Property~\ref{prop:uniformsnapshot}, the transactions with commit
vectors $\leq \uniformsnapshot$ are uniform, and now become visible to
transactions coordinated by $\partition^m_d$ (\S\ref{sec:bluetxs}).

Replicas also update $\uniformsnapshot$ in lines
\ref{alg:txncoord1}:\ref{line:starttx:start}-\ref{line:starttx:end},
\ref{alg:txnpartition}:\ref{line:updateuniformop:start}-\ref{line:updateuniformop:end}
and
\ref{alg:txnpartition}:\ref{line:updateuniformprepare:start}-\ref{line:updateuniformprepare:end}
by incorporating $\vecsnapshottime[i]$ for remote data centers $i$. This is
safe because a transaction executes on a snapshot that only includes uniform
remote transactions.

Finally, if a replica does not receive new transactions for a long time, it
sends the value of its $\replicavectorclock[d]$ as a heartbeat
(\algtwolines{alg:replication}{line:updateknownprepared}{line:heartbeatreceive}).
This allows advancing $\stablesnapshot$ and $\uniformsnapshot$ even under skewed
load distributions.

\subsection{Transaction Forwarding}
\label{sec:forward}

As we explained in \S\ref{sec:overview}, to guarantee that a transaction
originating at a correct data center eventually becomes exposed at all correct
data centers despite failures (\liveness), replicas may have to forward remote
update transactions. To determine which transactions to forward, each replica
keeps track of the update transactions that have been replicated at sibling
replicas in other data centers. To this end, sibling replicas periodically
exchange $\UPDRCV$ messages with their $\replicavectorclock$ vectors, which they
store in a matrix $\rvc$
(\algtwolines{alg:clock}{line:updateglobalmatrixsend}{line:updateglobalmatrix}).
Thus, $\partition^m_i$ has received all update transactions from
$\partition^m_j$ with $\commitvector[j]\leq \rvc[i][j]$.

A replica $\partition^m_d$ only forwards transactions when it suspects that a
data center $j$ may have failed before replicating all the update transactions
originating at it to a data center $i$ (this information is provided by a
separate module).
In this case, $\partition^m_d$ executes $\PROPAGATEREMOTE(i, j)$
(\algline{alg:replication}{line:forward}). The function forwards the set of 
transactions $\atxs$ received from $\partition^m_j$ that have not been
replicated at $\partition^m_i$ according to
$\rvc[i][j]$. For example, in
Figure~\ref{fig:execution-causal}, \System will eventually invoke
$\PROPAGATEREMOTE(d_1, d_3)$ at replicas in $d_2$ to forward {\color{\CausalTxColor}$t_1$}. The replica
$\partition^m_d$ sends the transactions in $\atxs$ to $\partition^m_i$ in a $\REPLICATE$
message. If there are no update transactions to forward, $\partition^m_d$ sends
a heartbeat to $\partition^m_i$ with $\replicavectorclock[j]$.

\System periodically deletes from $\committedset$ transactions that have been
replicated at every data center (omitted from the pseudocode for brevity).

\subsection{On-Demand Durability and Client Migration}
\label{sec:clientmigration}

A client may wish to ensure that the transactions it has observed so far are
durable. To this end, the client can call $\MAKEUNIFORM(\avc)$ at any replica in
its local data center $d$, where $\avc$ is the client's causal past $\past$
(\algline{alg:txncoord1}{line:barrier}). The replica returns to the client only
when all the transactions from $\past$ that originate at $d$ are uniform, and
thus durable. Then the same holds for all transactions from $\past$, because the
protocol only exposes remote transactions to clients when they are already
uniform (\S\ref{sec:bluetxs}).

A client wishing to migrate from its local data center $d$ to another data
center $i$ first calls $\MAKEUNIFORM(\avc)$ at any replica in $d$ with
$\avc = \past$, to ensure that the transactions the client has observed or
issued at $d$ will eventually become visible at $i$. The client then calls
$\ATTACH(\avc)$ at any replica in $i$ %
(\algline{alg:txncoord1}{line:attach}). The replica returns when its
$\uniformsnapshot$ includes all remote transactions from $\avc$
(\algline{alg:txncoord1}{line:waitattach}). The client can then be sure that its
transactions at $i$ will operate on snapshots including all the transactions it
has observed before.

\section{Adding Strong Transactions}
\label{sec:redtransactions}

We now describe the full \System protocol with both causal and strong
transactions. It is obtained by adding the highlighted lines to
Algorithms~\ref{alg:txncoord1}-\ref{alg:clock} and a new
Algorithm~\ref{alg:txncoord2}.

\subsection{Metadata}

The Conflict Ordering property of our consistency model requires any two
conflicting strong transactions to be related one way or another by the causal
order $\prec$ (\S\ref{sec:consistency}). To ensure this, the protocol assigns to
each strong transaction a scalar {\em strong timestamp}, analogous to those used
in optimistic concurrency control for serializability~\cite{wv}.  Several
vectors used as metadata in the causal consistency protocol
(\S\ref{sec:metadata}) are then extended with an extra $\red$ entry.

First, we extend commit vectors and those representing causally consistent
snapshots. Commit vectors are compared using the previous order $<$, but
considering all entries; as before, this order is consistent with the causal
order $\prec$. Furthermore, conflicting strong transactions are causally ordered
according to their strong timestamps.
\begin{property}
  For any conflicting strong transactions $t_1$ and $t_2$ with commit vectors
  $\commitvector_1$ and $\commitvector_2$, we have:
  $t_1 \prec t_2 \Longleftrightarrow \commitvector_1[\red] < \commitvector_2[\red]$.
\end{property}

A consistent snapshot vector $V$ now defines the set of transactions with a
commit vector $\le V$, according to the new $<$. The vectors
$\replicavectorclock$ and $\stablesnapshot$ maintained by a replica
$\partition^m_d$ are also extended with a $\red$ entry. The entries
$\replicavectorclock[\red]$ and $\stablesnapshot[\red]$ define the prefix of
strong transactions that have been replicated at $\partition^m_d$ and the local
data center $d$, respectively:
\begin{property}
\label{prop:knownvcred}
Replica $\partition^m_d$ stores the updates to $m$ by all strong
transactions with $\commitvector[\red]\leq \replicavectorclock[\red]$.
\end{property}
\begin{property}
\label{prop:stablevcred}
Data center $d$ stores the updates by all strong transactions with
$\commitvector[\red]\leq \stablesnapshot[\red]$.
\end{property}
To ensure Property~\ref{prop:stablevcred}, the $\red$ entry of $\stablesnapshot$
is updated at \algline{alg:clock}{line:stablered} similarly to its other
entries. We do not extend $\uniformsnapshot$, because our commit protocol for
strong transactions automatically guarantees their uniformity.

\subsection{Transaction Execution}
\label{sec:redexecution}

\System uses optimistic concurrency control for strong transactions, with the
same protocol for executing causal and speculatively executing strong
transactions. To this end, Algorithm~\ref{alg:txncoord1} is modified as
follows. First, the computation of the snapshot vector $\vecsnapshottime[\tx]$
is extended to compute the $\red$ entry
(\algline{alg:txncoord1}{alg:coord:starttx:end-red}), which is now taken into
account when checking that a replica state is up to date
(\algline{alg:txnpartition}{line:waitexecute}).  The $\red$ entry of the
snapshot vector is computed so as to include all strong transactions known to be
fully replicated in the local data center, as defined by
$\stablesnapshot[\red]$. To ensure {\em read your writes}, the snapshot
additionally includes strong transactions from the client's causal past, as
defined by $\avc[\red]$. Finally, the coordinator of a transaction now maintains
not only its write set, but also its read set $\readset$ that records all
operations by the transaction, including read-only ones
(\algline{alg:txncoord1}{line:readset}). The latter is used to certify strong
transactions.

After speculatively executing a strong transaction, the client tries to commit
it by calling $\COMMITRED$ at its coordinator
(\algline{alg:txncoord2}{line:commitred}). The coordinator first waits until the
snapshot on which the transaction operated becomes uniform by calling
$\MAKEUNIFORM$ (\algline{alg:txncoord2}{line:uniformred}): as we argued in
\S\ref{sec:overview}, this is crucial for liveness.
The coordinator next submits the transaction to a \emph{certification service},
which determines whether the transaction commits or aborts
(\algline{alg:txncoord2}{line:certifyred}, see \S\ref{sec:certification}).
In the former case, the service also determines its commit vector,
which the coordinator returns to the client. If the transaction commits,
the client sets its causal past $\past$ to the commit vector; otherwise, 
it re-executes the transaction. 

The certification service also notifies replicas in all data centers about
updates by strong transactions affecting them via $\DELIVERUPD$ upcalls, invoked
in an order consistent with strong timestamps of the transactions
(\algline{alg:txncoord2}{line:deliverred}). A replica receiving an upcall adds
the new operations to its log and refreshes $\replicavectorclock[\red]$ to
preserve Property~\ref{prop:knownvcred}.

Finally, a replica $\partition^m_d$ that has not seen any strong transactions
updating its partition $m$ for a long time submits a dummy strong transaction
that acts as a heartbeat (\algline{alg:txncoord2}{line:stronghb}). Similarly to
heartbeats for causal transactions, this allows coping with skewed load
distributions.

\begin{algorithm}[t]
  \begin{algorithmic}[1]
    \small
    \Function{\COMMITRED}{\tx}\label{line:commitred}
      \State $\MAKEUNIFORM(\vecsnapshottime[\tx])$\label{line:uniformred}
      \State \Return
        \CERTIFY(\tx, $\writeset[\tx]$, $\readset[\tx]$, $\vecsnapshottime[\tx]$) \label{line:certifyred}
    \EndFunction

    \SpaceHandler

    \Upon[\DELIVERUPD($W$)]\label{line:deliverred}
    \ForAll{\hspace{-1pt}$\langle \aws, \commitvector\rangle {\in} \hspace{1pt} W$\hspace{1pt}in\hspace{1pt}$\commitvector[\red]$\hspace{1pt}order}
      \ForAll{$\langle k, \op \rangle \in \aws$}
      \State $\store[k] \gets \store[k] \cdot \langle \op, \commitvector \rangle$\label{line:addred}
      \EndFor
      \State $\replicavectorclock[\red] \gets \commitvector[\red]$\label{line:setred}
      \EndFor
    \EndUpon

    \SpaceHandler

    \Function{\RHEARTBEAT}{$ $} \Comment{Run periodically}\label{line:stronghb}
      \State \Return \CERTIFY($\bot$, $\emptyset$, $\emptyset$, $\vec{0}$)
    \EndFunction

  \end{algorithmic}
  \caption{Committing strong transactions at $\partition^m_d$.}
  \label{alg:txncoord2}
\end{algorithm}

\subsection{Certification Service}
\label{sec:certification}

We implement the certification service using an existing fault-tolerant protocol
from~\cite{discpaper}, with transaction commit vectors computed using the
techniques from~\cite{multicast-dsn19}. The protocol integrates two-phase commit
across partitions accessed by the transaction and Paxos among the replicas of
each partition. It furthermore uses white-box optimizations between the two
protocols to minimize the commit latency.  The use of Paxos ensures that a
committed strong transaction is durable and its updates will eventually be
delivered at all correct data centers
(\algline{alg:txncoord2}{line:deliverred}). For each partition, a single replica
functions as the Paxos leader. The protocol is described and formally specified
elsewhere~\cite{discpaper}, and here we discuss it only briefly. Its pseudocode
and formal specification are given 
in~\tr{\ref{section:unistore-protocol}}{\nappfull}
and~\tr{\ref{section:tcs}}{\napptcs}, respectively.

The certification service accepts the read and write sets of a transaction and
its snapshot vector (\algline{alg:txncoord2}{line:certifyred}). Even though the
service is distributed, it guarantees that commit/abort decisions are computed
like in a centralized database with optimistic concurrency control -- in a total
{\em certification order}. To ensure Conflict Ordering,
the decisions are computed using a concurrency-control policy similar to that
for serializability~\cite{wv}: a transaction commits if its snapshot includes
all conflicting transactions that precede it in the certification order. The
certification service also computes a commit vector for each committed
transaction by copying its per-data center entries from the transaction's
snapshot vector and assigning a strong timestamp consistent with the
certification order.

\section{Proof of Correctness}
\label{sec:correctness}

We have rigorously proved that \System correctly implements the specification of
PoR consistency for the case when the data store manages last-writer-wins registers. The
proof uses the formal framework
from~\cite{sebastian-book,distrmm-popl,framework-concur15} and establishes
Properties~\ref{prop:knownvc}-\ref{prop:stablevcred} stated earlier. Due to
space constraints, we defer the proof
to~\tr{\ref{section:correctness-proof}}{\nappproof}.

\section{Evaluation}
\label{sec:eval}

We have implemented \System and several other protocols (listed in the
following) in the same codebase, consisting of
10.3K SLOC of Erlang. We evaluate the protocols on Amazon EC2 using
\texttt{m4.2xlarge} VMs from 5 different
regions. 
Each VM has 8 virtual cores and 32GB of RAM. 
The RTT between regions ranges from 26ms to 202ms. 
Unless otherwise stated, our experiments deploy 3 data centers, thus tolerating
a single data center failure: Virginia (US-East), California (US-West) and
Frankfurt (EU-FRA).
All Paxos leaders are located in Virginia.  By default we use 4 replica machines
per data center. Each machine stores replicas of 8
partitions, matching the number of cores. 
Clients are hosted on separate machines in each data center.
We run each experiment for at least 5 minutes, with the first and the last
minute ignored. Replicas propagate local update transactions
(\algline{alg:replication}{line:replicatelocal}) and broadcast vectors
(\algline{alg:clock}{alg:line:bcast}) every 5ms.

\subsection{Does \System combine causal and strong consistency
  effectively?}
\label{sec:performance}

We start by analyzing the performance of \System using RUBiS -- a popular
benchmark that emulates an online auction website such as eBay~\cite{por,
  red-blue}. It defines 11 read-only transactions and 5 update transactions,
e.g., selling items, bidding on items, and consulting outstanding auctions. As
in previous work~\cite{por}, to make the benchmark more challenging, we add an
extra update transaction \texttt{closeAuction} to declare the winner of an
auction. We also borrow from~\cite{por} a conflict relation between RUBiS
transactions that preserves key integrity invariants in the PoR
consistency model. This marks four transactions as strong
(\texttt{registerUser}, \texttt{storeBuyNow}, \texttt{storeBid} and
\texttt{closeAuction}) and declares three conflicts between them. For example,
\texttt{storeBid}, which places a bid on a item, conflicts with
\texttt{closeAuction} if both act on the same item: this is needed to preserve
the invariant that the winner of an auction is the highest bidder. Our RUBiS
database is configured according to the benchmark specification: it is populated
with 33,000 items for sale and 1 million users; client think times are
500ms.
We run the bidding mix workload of RUBiS with 15\% of update
transactions, which yields 10\% of strong transactions.

We compare \System with \Strong, \RedBlue and \Causal.
\Strong implements serializability~\cite{wv} as a special case of \System where
all transactions are strong and all pairs of operations conflict.
\RedBlue implements red-blue consistency~\cite{red-blue}, which like PoR,
combines causal and strong consistency. However, it declares conflicts between
all strong transactions. \RedBlue certifies strong transactions at a centralized
replicated service, with a replica at each data center. \Causal implements causal
consistency as a special case of \System where
all transactions are causal. It cannot
preserve the integrity invariants of RUBiS, but gives an upper bound on the
expected performance.

\begin{figure}[t]
\includegraphics[width=\columnwidth]{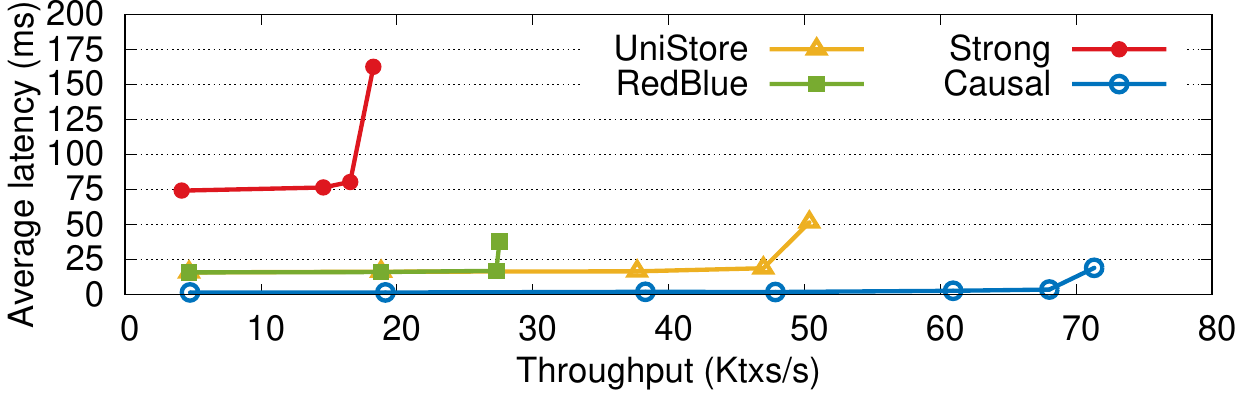}
\caption{RUBiS benchmark: throughput vs. average latency.}
\label{fig:rubis}
\end{figure}

\paragraph{Throughput and average latency.}
Figure~\ref{fig:rubis} evaluates
average transaction latency and throughput. As the figure shows, \System
exhibits a high throughput: 72\% and 183\% higher than \RedBlue and \Strong
respectively at their saturation point. This is expected, as \System implements 
the consistency model that enables the most concurrency. \Strong
classifies all transactions as strong. This impacts performance 
because executing a strong transaction is significantly more expensive
than executing a causal one. \RedBlue uses a centralized certification service
that saturates before the \System's distributed service, creating
a bottleneck. \System exhibits an average latency of 16.5ms, lower than 
80.4ms of \Strong. The latency of \RedBlue is comparable to that of
\System. This is because both systems mark the same set of
transactions as strong. Still, \RedBlue declares conflicts between all
strong transactions and thus aborts more transactions than
\System: 0.12\% vs 0.027\%. The clients whose transactions abort have to retry
them, thus increasing latency. Since the abort rate remains low in both cases, 
the difference in latency is negligible in our experiment.
We expect a more significant difference in workloads with higher contention.
Finally, in comparison
to \Causal, \System penalizes throughput by 45\%. This is the unavoidable price
to pay to preserve application-specific invariants.

\paragraph{Latency of each transaction type.}
In \System, the latency of strong transactions is dominated by
the RTT between Virginia (the leader's
region) and California (Virginia's closest data center) -- 61ms. Strong transactions
exhibit a latency of 73.9ms on average. The latency varies depending on
the client's location: from 65.4ms on average at the leader's site to 93.2ms at the
site furthest from the leader (Frankfurt). Since causal transactions do not
require coordination between data centers, they
exhibit a very low latency -- 1.2ms on average, which is comparable to
that of \Causal. This
demonstrates that \System is able to mix causal and strong
consistency effectively, as the latency of causal transactions remains
low regardless of concurrently executing strong transactions.

\subsection{How does \System scale with the number of machines?}
\label{sec:scalability}

We evaluate the
peak throughput of \System as we increase
the number of machines per each data center from 2 to
8, i.e., the number of partitions from 16 to 64. We
use a microbenchmark with 100\% of update transactions, 
where each transaction accesses three data items. We vary the ratio of
strong transactions from 0\% to 100\%
to understand their impact on scalability.

\begin{figure}[t]
\includegraphics[width=\columnwidth]{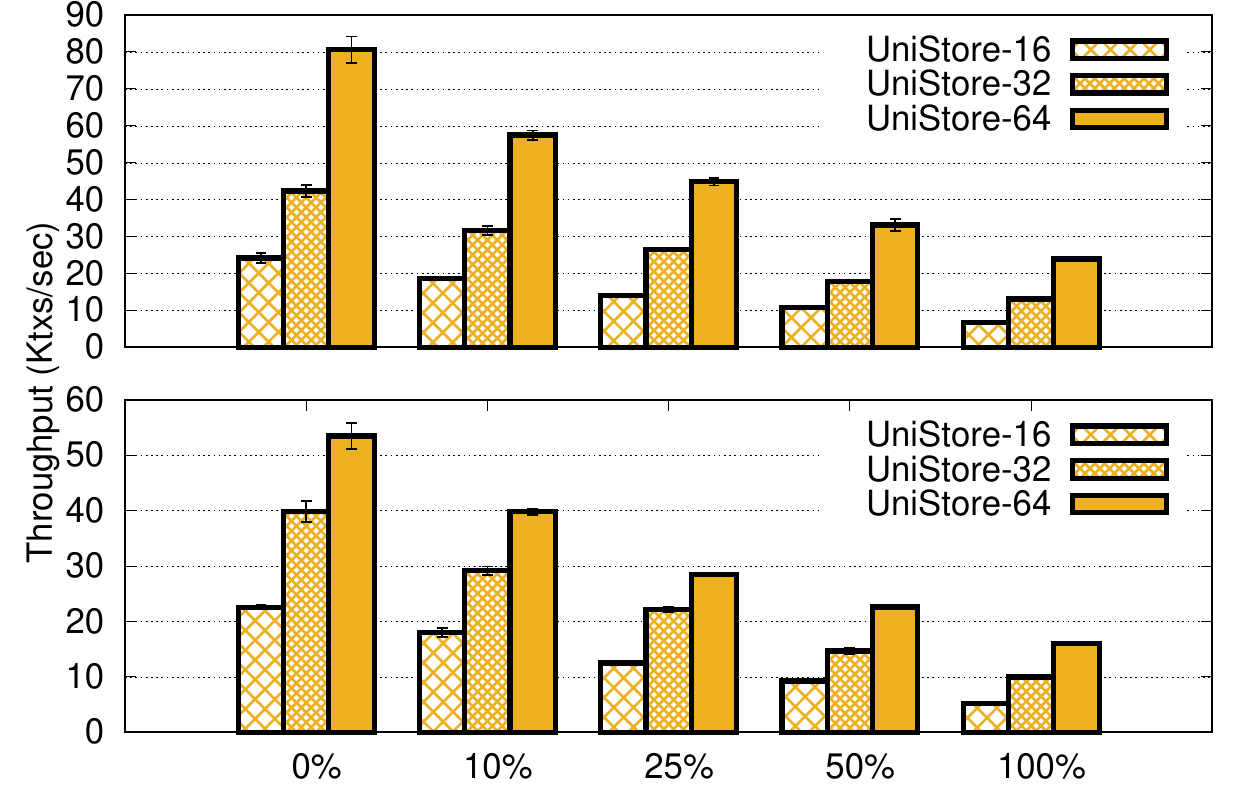}
\caption{Scalability when varying the ratio of strong transactions
  with uniform data access (top) and under contention (bottom).}
\label{fig:scalability}
\end{figure}

\paragraph{Scalability under low contention.}
For this set of experiments, the data items accessed by each
transaction are picked uniformly at random. This yields a very low contention: e.g., with 16 partitions, the
probability of two transactions accessing the same partition is
0.031. As shown by the top plot of Figure~\ref{fig:scalability}, \System is able to scale almost linearly even when the
workload includes strong transactions: a 9.76\% throughput drop compared
to the optimal scalability.  This is because, with uniform accesses, the task of committing
transactions is balanced among partitions. Thus, when the number of
partitions increases, so does the system's capacity. The
scalability is not perfect due to the cost of the background
protocol that computes $\stablesnapshot$, which grows logarithmically with the
number of partitions.
The plot also shows that strong transactions are expensive: 25.72\% of
throughput drop on average with 10\% of strong 
transactions. The performance is dominated by the
number of strong transactions that a partition can certify per
second.

\paragraph{Impact of contention.}
For this set of experiments, we set the ratio of strong transactions that access
a designated partition to 20\% to create contention. 
As shown by the bottom plot of Figure~\ref{fig:scalability}, \System is still able to
scale fairly well under contention. But, as expected,
contention has an impact on scalability: a 17.15\% throughput drop
from the optimal scalability compared to the 9.76\% throughput drop in
the experiments without contention.

\subsection{What is the cost of uniformity?}
\label{sec:cost-uniformity}
We compare \CureFT to \Uniform. 
\CureFT implements Cure~\cite{cure}, a causally consistent data store,
and makes it fault tolerant by adding transaction forwarding (\S\ref{sec:overview}).
\Uniform is a simplified version
of \System that removes
all the mechanisms related to strong transactions. \Uniform tracks uniformity and makes remote transactions visible only
when these are uniform; \CureFT does not. We use a microbenchmark with only causal
transactions and 15\% of update transactions. Each transaction
accesses three data items. 

\paragraph{Throughput penalty.}

\begin{figure}[t]
\includegraphics[width=\columnwidth]{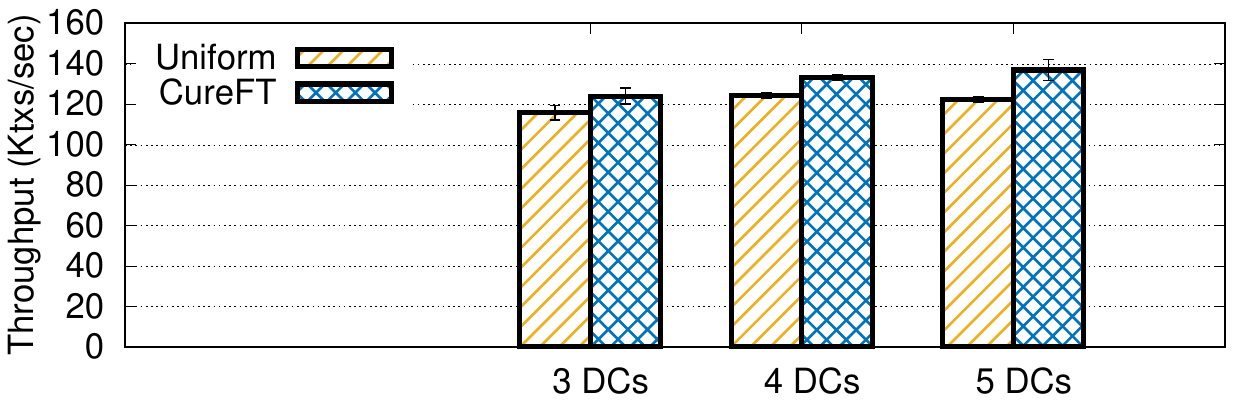}
\caption{Throughput penalty of tracking uniformity.}
\label{fig:thpenalty}
\end{figure}

Figure~\ref{fig:thpenalty} evaluates the cost of tracking
uniformity. It shows the peak throughput when the number of data centers increases from 3 to
5. We first add Ireland and then Brazil. As we do this, the throughput
remains almost constant. This is because each
data center stores replicas of all partitions and the
computational power gained when adding a data center is 
offset by the cost of replicating update transactions. As the
figure shows, the cost of tracking 
uniformity is small: a 7.97\% drop on average. The gap grows as we
increase the number of data centers: a 10.61\% drop on average with 5
data centers.
This is because, to track uniform transactions, sibling replicas exchange
messages: the more data centers, the more messages exchanged.
The penalty can be 
reduced by decreasing the frequency at which sibling replicas
exchange their $\stablesnapshot$
(\algline{alg:clock}{alg:line:updategsssend}), at the expense of an extra delay
in the visibility of remote transactions.

\begin{figure}[t]
\includegraphics[width=\columnwidth]{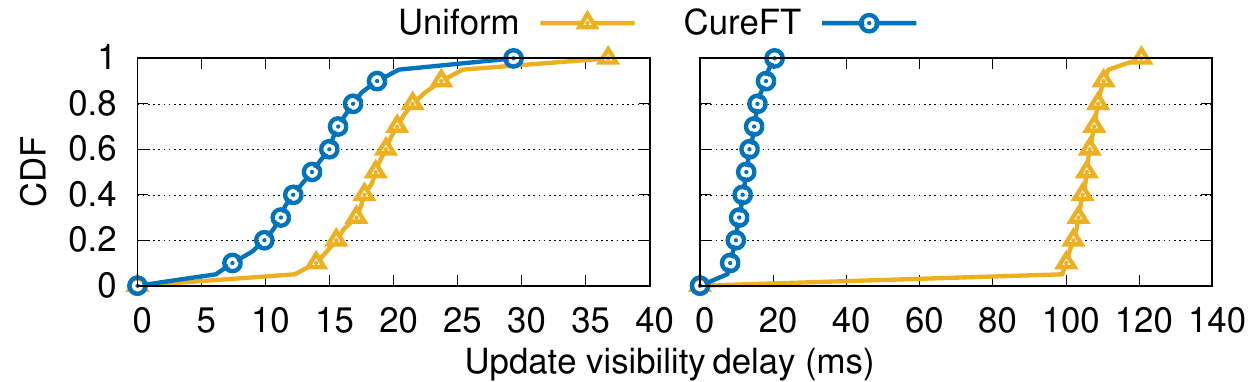}
\caption{Left: California to Brazil (best case). Right: California to
  Virginia (worst case).}
\label{fig:visibility}
\end{figure}

\paragraph{Reading from a uniform snapshot.}

Figure~\ref{fig:visibility} evaluates the delay on the visibility
of remote transactions when reading from a uniform
snapshot. We deploy four data centers: Virginia,
California, Frankfurt and Brazil. We set $f=2$ to tolerate 2 data
center failures (when $f=1$, \Uniform shows no delay). Under such a configuration, a data
center makes a transaction visible when it knows that
3 data centers store the transaction and its dependencies
(\S\ref{sec:clockcomputation}). The figure shows
the cumulative distribution of the delay before updates
from California are visible in Brazil
and Virginia. 

The extra delay at Brazil is only of 5ms at the 90$^{\rm th}$
percentile. This is the best case scenario for
\Uniform because Brazil learns that Virginia
stores a transaction originating at California only 2ms after 
receiving it. The worst case scenario for \Uniform is when the origin and the
destination data center are the closest ones. This is why the extra
delay at Virginia is 92ms at the 90$^{\rm th}$
percentile: Brazil learns that Frankfurt stores a transaction
originating at California 88ms after receiving it. Note that when clients 
communicate only via the data store, the delay is
unnoticeable. Even if clients communicate out of band, as the maximum extra
delay is less than 100ms, it is unlikely that a client will
miss an update.

\section{Related Work}
\label{sec:related}

\paragraph{Systems with multiple consistency levels.}
A number of data stores have combined weak and strong consistency,
including several commercial and academic systems that combine eventual and strong
consistency~\cite{cosmosdb,documentdb,pileus,google,dynamodb,cassandra,
tapir}.
Several academic data stores combined causal and strong consistency~\cite{lazy, red-blue, valter, por,
  walter, pileus}. Pileus~\cite{pileus} funnels all updates through a single
data center. In the fault-tolerant version of lazy
replication~\cite{lazy}, causal operations require synchronization
between replicas on its critical path. In both cases, causal
operations are not highly available, defeating the benefits of
causal consistency. Walter~\cite{walter} restricts causal operations to
a specific type and lacks fault tolerance due to the use
of two-phase commit across data centers.
The remaining works~\cite{red-blue, valter, por} support highly available causal
operations, but are not fault tolerant. First, they do not make causal
operations uniform on demand to guarantee the liveness of strong
operations. Thus, they suffer from the liveness issue we explained
in \S\ref{sec:overview} (Figure~\ref{fig:execution-strong}). In addition, these
systems do not use fault-tolerant mechanisms even for strong transactions.
They guard the use of strong transactions using mechanisms similar to locks;
if the lock holder fails before releasing it, no other data center can execute a
strong transaction requiring the same lock. This occurs even
when the service handing locks is fault-tolerant, as in~\cite{por}.
Finally, the above systems either do not include mechanisms for partitioning the
key space among different machines in a data center or include
per-data center centralized services, which
limits their scalability (\S\ref{sec:scalability}).

Some group communication systems mix causal and atomic
broadcast~\cite{isis,horus}. However, these systems do not provide mechanisms
for maintaining transactional data consistency.

Several papers have proposed tools that use formal verification technology
to ensure that consistency choices do not violate application
invariants~\cite{valter,cheng-tool,cise-tool,cise-popl16,suresh,sreeja}. Such
tools can make it easier for programmers to use our system.

\paragraph{Causal consistency implementations.}
Our subprotocol for causal consistency belongs to a family of highly scalable
protocols that avoid using any centralized components or dependency check
messages~\cite{gentlerain, cure, wren, paris, pocc}; other alternatives are
less scalable~\cite{cops, eiger, orbe, bolton, chainreaction, swiftcloud,
  saturn, eunomia, occult}. While we base our causal consistency subprotocol on
an existing one, Cure~\cite{cure}, we have extended it in nontrivial ways, by
integrating mechanisms for tracking uniformity (\S\ref{sec:clockcomputation})
and for transaction forwarding (\S\ref{sec:forward}). Some of the above
protocols~\cite{eunomia,paris} use hybrid clocks instead of real time~\cite{hlc}
to improve performance with large clock skews; this technique can also be
integrated into \System.

SwiftCloud~\cite{swiftcloud} implements
\emph{k-stability}~\cite{k-stability}, a notion similar to uniformity,
to enable client migration. SwiftCloud relies on a single per-data
center sequencer, which makes tracking k-stability easy, but the data store
less scalable. Our protocol is more sophisticated, since we distribute the
responsibility of tracking uniformity among the replicas in a data
center.

\paragraph{Paxos variants.}
Several Paxos variants~\cite{generalized,epaxos,atlas,generic} lower the latency
by allowing commutative operations to execute at replicas in arbitrary
orders. In contrast to them, \System implements PoR consistency, which allows
causal transactions to execute without any synchronization at all.

\section{Conclusion}

This paper presented \System, the first fault-tolerant and
scalable data store that combines causal and strong
consistency. \System carefully integrates state-of-the-art scalable protocols and extends
them in nontrivial ways. To maintain liveness despite data
center failures, unlike previous work, \System commits a strong transaction only when all its
causal dependencies are uniform. Our results show that
\System combines causal and strong
consistency effectively: 3.7$\times$
lower latency on average than a strongly consistent system with 1.2ms
latency on average for causal transactions.
We expect that the key ideas in \System will pave
the way for practical systems that combine causal and strong
consistency.

\paragraph{Acknowledgements.}
We thank our shepherd, Heming Cui, as well as Gregory Chockler, Vitor Enes, Luís
Rodrigues and Marc Shapiro for comments and suggestions. This work was partially
supported by an ERC Starting Grant RACCOON, the Juan de la Cierva Formaci\'on
funding scheme (FJC2018-036528-I), the CCF-Tencent Open Fund (CCF-Tencent
RAGR20200124) and the AWS Cloud Credit for Research program.

\bibliographystyle{abbrv}
\bibliography{biblio}

\iflong
\clearpage
\appendix

\renewcommand{\thealgorithm}{\thesection\arabic{algorithm}}

\section{The Full \unistore{} Protocol for LWW Registers} \label{section:unistore-protocol}

Algorithms~\ref{alg:unistore-client} -- \ref{alg:unistore-recovery} given in
this section define the full \unistore{} protocol, including parts omitted from
the main text. This version of the protocol is specialized to the case when the
data store manages last-writer-wins (LWW) registers.

Algorithm~\ref{alg:unistore-client} shows the pseudocode of clients.
We assume that each client is associated with a unique client identifier.
Each client maintains the following variables:
\begin{itemize}
  \item $\lc$: The Lamport clock at this client.
  \item $\clientdc$: The data center to which this client is currently connected.
  \item $\clientcoord$: The coordinator partition of the current ongoing transaction.
  \item $\ctid$: The identifier of the current ongoing transaction.
  \item $\pastVC$: The client's causal past.
\end{itemize}

A client interacts with \unistore{} via the following procedures:
\begin{itemize}
  \item $\tidvar \gets \Call{\start}{\null}$:
    Start a transaction and obtain an identifier $\tidvar$.
  \item $v \gets \Call{\read}{k}$:
    Invoke a read operation on key $k$ in the ongoing transaction and obtain a
    return value $v$.
  \item $\ok \gets \Call{\updateproc}{k, v}$:
    Invoke an update operation on key $k$ and value $v$
    in the ongoing transaction.
  \item $\ok \gets \Call{\commitcausaltx}{\null}$:
    Commit a causal transaction.
  \item $\decvar \gets \Call{\commitstrongtx}{\null}$:
    Try to commit a strong transaction
    and obtain a decision $\decvar \in \set{\commit, \abort}$.
  \item $\ok \gets \Call{\fence}{\null}$:
    Execute a uniform barrier.
  \item $\ok \gets \Call{\clattach}{j}$:
    Attach to data center $j$.
\end{itemize}

Algorithms~\ref{alg:unistore-coord} -- \ref{alg:unistore-strong-commit} show the
pseudocode of replicas. The code needed for strong transactions in
Algorithms~\ref{alg:unistore-coord}, \ref{alg:unistore-replica}, and
\ref{alg:unistore-clock} is highlighted in red.  Each replica $p^{m}_{d}$
maintains a set of variables as follows.
\begin{itemize}
  \item{$\clockVar$:} The current time at $p^{m}_{d}$.
  \item{$\rset$:} The read sets of transactions coordinated by $p^{m}_{d}$,
    indexed by transaction identifier $\tidvar$.
  \item{$\wbuff$:} The write buffers of transactions coordinated by $p^{m}_{d}$,
    indexed by transaction identifier $\tidvar$, partition $l$, and key $k$.
  \item{$\snapVC$:} The snapshot vectors of transactions coordinated by $p^{m}_{d}$,
    indexed by transaction identifier $\tidvar$.
  \item{$\oplog$:} The log of updates performed on keys managed by $p^{m}_{d}$,
    indexed by key $k$.
  \item{$\knownVC, \stableVC, \uniformVC$:}
    The vectors used by $p^{m}_{d}$ to track what is replicated where.
  \item{$\preparedcausal$:} The set of causal transactions
    local to $p^{m}_{d}$ that are prepared to commit.
  \item{$\committedcausal$:} For each data center $i$,
    $\committedcausal[i]$ stores transactions waiting to be
    replicated by $p^{m}_{d}$ to sibling replicas at other data centers than $i$.
  \item{$\localmatrix$:} The set of $\knownVC$ received by $p^{m}_{d}$
    from other partitions in data center $d$.
    It is used to compute $\stableVC$.
  \item{$\stablematrix$:} The set of $\stableVC$ received by $p^{m}_{d}$
    from sibling replicas.
    It is used to compute $\uniformVC$.
  \item{$\globalmatrix$:} The set of $\knownVC$ received by $p^{m}_{d}$
    from sibling replicas.
    It is used to track what has been replicated at sibling replicas.
\end{itemize}

We specialize the \unistore{} protocol to LWW registers in several ways.
First, we add code for managing Lamport clocks, highlighted in blue.
In particular, each (committed) transaction is associated with a Lamport
timestamp, equal to the value of $\lc$ at its client when the transaction completes
(lines~\code{\ref{alg:unistore-client}}{\ref{line:commitcausaltx-lc}}
and \code{\ref{alg:unistore-client}}{\ref{line:commitstrongtx-lc}}).
Lamport timestamps are totally ordered, with client identifiers used for tie-breaking
(see also Definition~\ref{def:lco}).

Second, in Algorithm~\ref{alg:unistore-coord}
we replace \doop{} by the following two procedures:

\begin{itemize}
  \item $v \gets \Call{\doread}{\tidvar, k}$:
    Execute a read operation on key $k$
    in a transaction with identifier $\tidvar$ and obtain a value $v$.
  \item $\Call{\doupdate}{\tidvar, k, v}$:
    Execute an update operation on key $k$ and value $v$
    in a transaction with identifier $\tidvar$.
\end{itemize}

Finally, in Algorithm~\ref{alg:unistore-replica}
we modify the handler of message \getversion{}:
\begin{itemize}
  \item $\Call{\readkey}{\snapvc, k}$:
    Read the latest value from key $k$
    based on the snapshot vector $\snapvc$.
    Specifically, it returns the last update to key $k$ of the transaction
    with the latest $\commitvc$ in terms of their Lamport clock order
    such that $\commitvc \leq \snapvc$
    (line~\code{\ref{alg:unistore-replica}}{\ref{line:readkey-read}}).
\end{itemize}

Algorithms~\ref{alg:unistore-certify} -- \ref{alg:unistore-recovery} contain the
implementation of the transaction certification service (see also
\S\ref{section:tcs}). The certification service uses an instance of the leader
election failure detector $\Omega_m$ for each partition $m$\footnote{Tushar
  Deepak Chandra, Vassos Hadzilacos, and Sam Toueg. The weakest failure detector
  for solving consensus. J. ACM, 1996.}. This primitive ensures that from some
point on, all correct processes nominate the same correct process as the leader.
For the case when the data store manages only registers, we assume that any two
writes to the same object conflict. Other data types and conflict relations can
be easily supported by modifying the certification check in
Algorithm~\ref{alg:unistore-certification}.

\setcounter{algorithm}{0}

\begin{algorithm*}[t]
  \caption{Client operations at client $\cl$}
  \label{alg:unistore-client}
  \begin{algorithmic}[1]
    \Function{\start}{\null}
      \label{line:function-starttx}
      \State $\clientcoord \gets$ an arbitrary replica in data center $\clientdc$
        \label{line:starttx-random-partition}
      \State $\ctid \gets \rpc \Call{\starttx}{\pastVC} \at \clientcoord$
        \label{line:starttx-call-start}
        \Comment{\tscolor{$\timestamp(\start) \gets
          \snapVC^{\clientcoord}_{\clientdc}[\ctid]$}}
        \label{line:starttx-ts}
      \State \Return $\ctid$
        \label{line:starttx-return}
    \EndFunction

    \Statex
    \Function{\read}{$k$} \label{line:function-read}
      \State $\langle v, \lccolor{\lcvar} \rangle \gets
        \rpc \Call{\doread}{\ctid, k} \at \clientcoord$
        \label{line:read-call-doread}
      \If{$\lcvar \neq \bot$}
        \label{line:read-is-external}
        \State \lccolor{$\lc \gets \max\set{\lc, \lcvar}$}
          \label{line:read-lc}
      \EndIf
      \State \Return $v$
        \label{line:read-return}
    \EndFunction

    \Statex
    \Function{\updateproc}{$k, v$} \label{line:function-update}
      \State $\rpc \Call{\doupdate}{\ctid, k, v} \at \clientcoord$
        \label{line:update-call-doupdate}
      \State \Return $\ok$
        \label{line:update-return}
    \EndFunction

    \Statex
    \Function{\commitcausaltx}{\null}
      \label{line:function-commitcausaltx}
      \State \lccolor{$\lc \gets \lc + 1$}
        \Comment{\lccolor{$\lclock(\commitcausaltx) \gets \lc$}}
        \label{line:commitcausaltx-lc}
      \State $\vcvar \gets \rpc \Call{\commitcausal}{\ctid, \lccolor{\lc}} \at \clientcoord$
        \label{line:commitcausaltx-call-commitcausal}
      \State $\pastVC \gets \vcvar$
        \label{line:commitcausaltx-pastvc}
        \Comment{\tscolor{$\timestamp(\commitcausaltx) \gets \pastVC$}}
        \label{line:commitcausaltx-ts}
      \State \Return $\ok$
        \label{line:commitcausaltx-return}
    \EndFunction

    \Statex
    \Function{\commitstrongtx}{\null}
      \label{line:function-commitstrongtx}
      \State \lccolor{$\lc \gets \lc + 1$}
        \label{line:commitstrongtx-lc-so}
      \State $\langle \decvar, \vcvar, \lccolor{\lcvar} \rangle \gets \rpc
        \Call{\commitstrong}{\ctid, \lccolor{\lc}} \at \clientcoord$
        \label{line:commitstrongtx-call-commitstrong}
      \If{$\decvar = \commit$}
        \label{line:commitstrongtx-if-commit}
        \State $\pastVC \gets \vcvar$
          \label{line:commitstrongtx-pastvc}
          \Comment{\tscolor{$\timestamp(\commitstrongtx) \gets \pastVC$}}
          \label{line:commitstrongtx-ts}
        \State \lccolor{$\lc \gets \lcvar$}
          \Comment{\lccolor{$\lclock(\commitstrongtx) \gets \lc$}}
          \label{line:commitstrongtx-lc}
      \EndIf
      \State \Return $dec$
        \label{line:commitstrongtx-return}
    \EndFunction

    \Statex
    \Function{\fence}{\null} \label{line:function-fence}
      \State \var $\p \gets$ an arbitrary replica in data center $\clientdc$
        \label{line:fence-partition}
      \State $\rpc \Call{\uniformbarrier}{\pastVC} \at \p$
        \label{line:fence-call-uniformbarrier}
        \Comment{\tscolor{$\timestamp(\fence) \gets \pastVC$}}
        \label{line:fence-ts}
      \State \lccolor{$\lc \gets \lc + 1$}
        \Comment{\lccolor{$\lclock(\fence) \gets \lc$}}
        \label{line:fence-lc}
      \State \Return $\ok$
        \label{line:fence-return}
    \EndFunction

    \Statex
    \Function{\clattach}{$j$} \label{line:function-migrate}
      \State \var $\p \gets$ an arbitrary replica in data center $j$
        \label{line:clattach-partition}
      \State $\rpc \Call{\attach}{\pastVC} \at \p$
        \label{line:clattach-call-attach}
        \Comment{\tscolor{$\timestamp(\clattach) \gets \pastVC$}}
        \label{line:clattach-ts}
      \State \lccolor{$\lc \gets \lc + 1$}
        \Comment{\lccolor{$\lclock(\clattach) \gets \lc$}}
        \label{line:clattach-lc}
      \State $\clientdc \gets j$
        \label{line:clattach-j}
      \State \Return $\ok$
        \label{line:clattach-return}
    \EndFunction
  \end{algorithmic}
\end{algorithm*}

\begin{algorithm*}[t]
  \caption{Transaction coordinator at $p^{m}_{d}$: causal commit}
  \label{alg:unistore-coord}
  \begin{algorithmic}[1]
    \Function{\starttx}{$\vc$} \label{line:function-start}
      \For{$i \in \D \setminus \set{d}$}
        \label{line:start-uniformvc-index}
        \State $\uniformVC[i] \gets \max\set{\vc[i], \uniformVC[i]}$
        \label{line:start-uniformvc}
      \EndFor

      \State {\bf var} $\tidvar \gets \generatetid()$
        \label{line:start-tid}
      \State $\snapVC[\tidvar] \gets \uniformVC$
        \label{line:start-snapvc}
      \State $\snapVC[\tidvar][d] \gets \max\set{\vc[d], \uniformVC[d]}$
        \label{line:start-snapvc-d}
      \State \strongcolor{$\snapVC[\tidvar][\strongentry] \gets
        \max\set{\vc[\strongentry], \stableVC[\strongentry]}$}
        \label{line:start-snapvc-strong}
      \State \Return $\tidvar$
        \label{line:start-return}
        \label{line:start-snapshotvc-of-t}
    \EndFunction

    \Statex
    \Function{\doread}{$\tidvar, k$}
      \label{line:function-doread}
      \State \var $l \gets \partitionofproc(k)$
        \label{line:doread-partition-of-k}
      \If{$\wbuff[\tidvar][l][k] \neq \bot$}
        \label{line:doread-from-buffer}
        \State \Return $\langle \wbuff[\tidvar][l][k], \lccolor{\bot} \rangle$
          \label{line:doread-return-from-buffer}
      \EndIf
      \State $\send \Call{\readkey}{\snapVC[\tidvar], k} \sendto p^{l}_{d}$
        \label{line:doread-from-snapshot}
      \State \wait\receive $\versionproc(v, \lccolor{\lcvar}) \from p^{l}_{d}$
      \State \strongcolor{$\rset[\tidvar] \gets \rset[\tidvar] \cup \set{k}$}
        \label{line:doread-readset}
      \State \Return $\langle v, \lccolor{\lcvar} \rangle$
        \label{line:doread-return-from-snapshot}
    \EndFunction

    \Statex
    \Function{\doupdate}{$\tidvar, k, v$}
        \label{line:function-doupdate}
      \State \var $l \gets \partitionofproc(k)$
        \label{line:doupdate-partitionof-k}
      \State $\wbuff[\tidvar][l][k] \gets v$
        \label{line:doupdate-wbuff}
      \State \strongcolor{$\rset[\tidvar] \gets \rset[\tidvar] \cup \set{k}$}
        \label{line:doupdate-readset}
    \EndFunction

    \Statex
    \Function{\commitcausal}{$\tidvar, \lccolor{\lcvar}$}
      \label{line:function-commitcausal}
      \State \var $L \gets \set{l \mid \wbuff[\tidvar][l] \neq \emptyset}$
      \If{$L = \emptyset$}
        \label{line:commitcausal-ro}
        \State \Return $\snapVC[\tidvar]$
        \label{line:commitcausal-return-ro}
        \label{line:commitcausal-commitvc-of-t-ro}
      \EndIf

      \hStatex
      \State \send $\prepare(\tidvar, \wbuff[\tidvar][l], \snapVC[\tidvar])
        \sendto p^{l}_{d}$, $l \in L$
        \label{line:commitcausal-call-prepare}
      \State \var $\commitvc \gets \snapVC[\tidvar]$
        \label{line:commitcausal-commitvc}
      \ForAll{$l \in L$}
        \State \wait\receive $\prepareack(\tidvar, \tsvar) \from p^{l}_{d}$
          \label{line:commitcausal-wait-prepareack}
        \State $\commitvc[d] \gets \max \set{\commitvc[d], \tsvar}$
          \label{line:commitcausal-commitvc-d}
      \EndFor
      \State \send $\Call{commit}{\tidvar, \commitvc, \lccolor{\lcvar}}
        \sendto p^{l}_{d}$, $l \in L$
      \label{line:commitcausal-call-commit}
      \State \Return $\commitvc$
        \label{line:commitcausal-return}
        \label{line:commitcausal-commitvc-of-t-rw}
    \EndFunction
  \end{algorithmic}
\end{algorithm*}

\begin{algorithm*}[t]
  \caption{Transaction execution at $p^m_d$}
  \label{alg:unistore-replica}
  \begin{algorithmic}[1]
    \WhenRcv[$\Call{\readkey}{\snapvc, k} \from p$]
      \label{line:function-readkey}
      \For{$i \in \D \setminus \set{d}$}
        \label{line:readkey-uniformvc-i}
        \State $\uniformVC[i] \gets \max\{\snapvc[i], \uniformVC[i]\}$
          \label{line:readkey-uniformvc}
      \EndFor
      \State {\bf wait until} $\knownVC[d] \geq \snapvc[d]
          \land \strongcolor{\knownVC[\strongentry] \ge \snapvc[\strongentry]}$
        \label{line:readkey-wait-util-knownvc}
      \hStatex
      \State $\langle v, \commitvc, \lccolor{\lcvar} \rangle \gets \snapshotproc(\oplog[k], \snapvc)$
        \label{line:readkey-read}
        \Comment{returns the last update to key $k$ by a transaction}
      \State  \Comment{
          with the highest Lamport timestamp such that $\commitvc \leq \snapvc$}
      \State \send $\versionproc(v, \lccolor{\lcvar}) \sendto p$
        \label{line:readkey-return}
    \EndWhenRcv

    \Statex
    \WhenRcv[$\Call{\prepare}{\tidvar, \wbuffvar, \snapvc} \from p$]
      \label{line:function-preparecausal}
      \For{$i \in \D \setminus \set{d}$}
        \State $\uniformVC[i] \gets \max\set{\snapvc[i], \uniformVC[i]}$
        \label{line:preparecausal-uniformvc}
      \EndFor

      \State \var $\tsvar \gets \clockVar$
        \label{line:preparecausal-ts}
      \State $\preparedcausal \gets \preparedcausal \cup
        \set{\langle \tidvar, \wbuffvar, \tsvar \rangle}$
        \label{line:preparecausal-preparedcausal}
      \State \send $\Call{\prepareack}{\tidvar, \tsvar} \sendto p$
        \label{line:preparecausal-call-preparecausalack}
    \EndWhenRcv

    \Statex
    \WhenRcv[$\Call{\commit}{\tidvar, \commitvc, \lccolor{\lcvar}}$]
      \label{line:function-commit}
      \State \wait\until $\clockVar \geq \commitvc[d]$
        \label{line:commit-wait-clock}

      \hStatex
      \State $\langle \tidvar, \wbuffvar, \_ \rangle \gets \find(\tidvar, \preparedcausal)$
        \label{line:commit-find}
      \State $\preparedcausal \gets \preparedcausal \setminus \set{\langle \tidvar, \_, \_\rangle}$
        \label{line:commit-preparedcausal}

      \ForAll{$\langle k, v \rangle \in \wbuffvar$}
        \State $\oplog[k] \gets \oplog[k] \cdot \langle v, \commitvc, \lccolor{\lcvar} \rangle$
        \label{line:commit-oplog}
      \EndFor

      \State $\committedcausal[d] \gets \committedcausal[d] \cup
        \set{\langle \tidvar, \wbuffvar, \commitvc, \lccolor{\lcvar} \rangle}$
        \label{line:commit-committedcausal}
    \EndWhenRcv

    \Statex
    \Function{\uniformbarrier}{$\vc$} \label{line:function-uniformbarrier}
      \State \wait\until $\uniformVC[d] \ge \vc[d]$
        \label{line:uniformbarrier-wait-uniformvc-d}
    \EndFunction

    \Statex
    \Function{\attach}{$\vc$}
      \label{line:function-attach}
      \State \wait\until $\forall i \in \D \setminus \set{d}.\; \uniformVC[i] \ge \vc[i]$
        \label{line:attach-wait-condition}
    \EndFunction
  \end{algorithmic}
\end{algorithm*}

\begin{algorithm*}[t]
  \caption{Transaction replication at $p^m_d$}
  \label{alg:unistore-replication}
  \begin{algorithmic}[1]
    \Function{\propagate}{\null} \Comment{Run periodically}
      \label{line:function-propagate}
      \If{$\preparedcausal = \emptyset$}
          \label{line:propagate-preparedcausal-empty}
        \State $\knownVC[d] \gets \clockVar$
          \label{line:propagate-knownvc-clock}
      \Else
        \State $\knownVC[d] \gets
          \min\set{\tsvar \mid \langle \_, \_, \tsvar \rangle \in \preparedcausal} - 1$
        \label{line:propagate-knownvc-ts}
      \EndIf

      \hStatex
      \State \var $\txsvar \gets \set{\langle \_, \_, \commitvc, \lccolor{\_}
        \rangle \in \committedcausal[d] \mid \commitvc[d] \le \knownVC[d]}$
        \label{line:propagate-txs}

      \If{$\txsvar \neq \emptyset$}
        \label{line:propagate-txs-nonempty}
        \State \send $\replicate(d, \txsvar) \sendto p^{m}_{i}$,
          $i \in \D \setminus \set{d}$
          \label{line:propagate-call-replicate}
        \State $\committedcausal[d] \gets \committedcausal[d] \setminus \txsvar$
          \label{line:propagate-committedblue}
      \Else
        \State \send $\heartbeat(d, \knownVC[d]) \sendto p^{m}_{i}$,
          $i \in \D \setminus \set{d}$
          \label{line:propagate-call-heartbeat}
      \EndIf
    \EndFunction

    \Statex
    \WhenRcv[\replicate($i, \txsvar$)]
      \label{line:function-replicate}
      \ForAll{$\langle \tidvar, \wbuffvar, \commitvc, \lccolor{\lcvar} \rangle \in \txsvar$
        in $\commitvc[i]$ order}
        \label{line:replicate-increasing-order}
        \If{$\commitvc[i] > \knownVC[i]$}
          \label{line:replicate-precondition}
          \ForAll{$\langle k, v \rangle \in \wbuffvar$}
            \State $\oplog[k] \gets \oplog[k] \cdot {\langle v, \commitvc, \lccolor{\lcvar} \rangle}$
            \label{line:replicate-oplog}
          \EndFor
          \State $\committedcausal[i] \gets \committedcausal[i] \cup
            \set{\langle \tidvar, \wbuffvar, \commitvc, \lccolor{\lcvar} \rangle}$
            \label{line:replicate-committedcausal}
          \State $\knownVC[i] \gets \commitvc[i]$
            \label{line:replicate-knownvc}
        \EndIf
      \EndFor
    \EndWhenRcv

    \Statex
    \WhenRcv[\heartbeat($i, \tsvar$)]
      \label{line:function-heartbeat}
      \State \pre $\tsvar > \knownVC[i]$
        \label{line:heartbeat-precondition}

      \hStatex
      \State $\knownVC[i] \gets \tsvar$
        \label{line:heartbeat-knownvc}
    \EndWhenRcv

    \Statex
    \Function{\forward}{$i, j$}
      \Comment{forward transactions received from data center $j \neq d$
        to data center $i \notin \set{d, j}$}
      \label{line:function-forward}
      \State \var $\txsvar \gets \set{\langle \tidvar, \_, \commitvc, \lccolor{\_} \rangle
        \in \committedcausal[j] \mid \commitvc[j] > \globalmatrix[i][j]}$
        \label{line:forward-txs}
      \If{$\txsvar \neq \emptyset$}
        \label{line:forward-txs-nonempty}
        \State \send $\replicate(j, \txsvar) \sendto p^m_i$
          \label{line:forward-call-replicate}
      \Else
        \State \send $\heartbeat(j, \knownVC[j]) \sendto p^m_i$
          \label{line:forward-call-heartbeat}
      \EndIf
    \EndFunction
  \end{algorithmic}
\end{algorithm*}
\clearpage

\begin{algorithm*}[t]
  \caption{Updating metadata at $p^m_d$}
  \label{alg:unistore-clock}
  \begin{algorithmic}[1]
    \Function{\bcast}{\null} \Comment{Run periodically}
      \label{line:function-bcast}
      \State \send $\Call{\knownvclocal}{m, \knownVC} \sendto p^{l}_{d}$, $l \in \P$
        \label{line:bcast-call-knownvclocal}
      \State \send $\Call{\stablevcproc}{d, \stableVC} \sendto p^{m}_{i}$, $i \in \D$
        \label{line:bcast-call-stablevc}
      \State \send $\Call{\knownvcglobal}{d, \knownVC} \sendto p^{m}_{i}$, $i \in \D$
        \label{line:bcast-call-knownvcglobal}
    \EndFunction

    \Statex
    \WhenRcv[\knownvclocal($l, \knownvc$)]
      \label{line:function-knownvclocal}
      \State $\localmatrix[l] \gets \knownvc$
        \label{line:knownvclocal-localknownmatrix}
      \For{$i \in \D$}
        \State $\stableVC[i] \gets \min \set{\localmatrix[n][i] \mid n \in \P}$
        \label{line:knownvclocal-stablevc-causal}
      \EndFor
      \State \strongcolor{$\stableVC[\strongentry] \gets
        \min\set{\localmatrix[n][\strongentry] \mid n \in \P}$}
        \label{line:knownvclocal-stablevc-strong}
    \EndWhenRcv

    \Statex
    \WhenRcv[\stablevcproc($i, \stablevc$)]
      \label{line:function-stablevc}
      \State $\stablematrix[i] \gets \stablevc$
        \label{line:stablevc-stablematrix}
      \State $G\gets$ all groups with $f+1$ replicas that include $p^{m}_{d}$
        \label{line:stablevc-g}
      \For{$j \in \D$}
        \State \var $\tsvar \gets \max \set{\min \set{\stablematrix[h][j] \mid h \in g}
          \mid g \in G}$
          \label{line:stablevc-ts}
        \State $\uniformVC[j] \gets \max \set{\uniformVC[j], \tsvar}$
          \label{line:stablevc-uniformvc}
      \EndFor
    \EndWhenRcv

    \Statex
    \WhenRcv[\knownvcglobal($l, \knownvc$)]
      \label{line:function-knownvcglobal}
      \State $\globalmatrix[l] \gets \knownvc$
        \label{line:knownvcglobal-globalmatrix}
    \EndWhenRcv
  \end{algorithmic}
\end{algorithm*}

\begin{algorithm*}[t]
  \caption{Committing strong transactions at $p^{m}_{d}$}
  \label{alg:unistore-strong-commit}
  \begin{algorithmic}[1]
    \Function{\commitstrong}{$\tidvar, \lccolor{\lcvar}$}
      \label{line:function-commitstrong}
      \State $\uniformbarrier(\snapVC[\tidvar])$
        \label{line:commitstrong-call-uniformbarrier}
      \State $\langle \decvar, \vcvar, \lcvar \rangle \gets
        \Call{\certify}{\normalMode,
          \tidvar, \wbuff[\tidvar], \rset[\tidvar], \snapVC[\tidvar], \lccolor{\lcvar}}$
        \label{line:commitstrong-call-certify}
      \State \Return $\langle \decvar, \vcvar, \lccolor{\lcvar} \rangle$
        \label{line:commitstrong-return}
        \label{line:commitstrong-commitvc-of-t-strong}
    \EndFunction

    \Statex
    \Upon[\deliverupdates($\txsvar$)]
      \label{line:function-deliverupdates}
      \For{$\langle \_, \wbuffvar, \commitvc, \lccolor{\lcvar} \rangle \in \txsvar$ in $\commitvc[\strongentry]$ order}
        \label{line:deliverupdates-foreach-wbuff}
          \For{$\langle k, v \rangle \in \wbuffvar$}
            \label{line:deliverupdates-foreach-kv}
            \State $\oplog[k] \gets \oplog[k] \cdot \langle v, \commitvc, \lccolor{\lcvar} \rangle$
              \label{line:deliverupdates-oplog}
          \EndFor
        \State $\knownVC[\strongentry] \gets \commitvc[\strongentry]$
          \label{line:deliverupdates-knownvc-strongentry}
      \EndFor
    \EndUpon

    \Statex
    \Function{\heartbeatstrong}{\null} \Comment{Run periodically}
      \label{line:function-heartbeatstrong}
      \State \Return $\Call{\certify}{\normalMode, \bot, \emptyset, \emptyset, \vec{0}, \lccolor{\bot}}$
        \label{line:heartbeatstrong-call-certify}
    \EndFunction
  \end{algorithmic}
\end{algorithm*}

\begin{algorithm*}[t]
  \caption{Certification service at coordinator $\pvar^{m}_{d}$}
  \label{alg:unistore-certify}
  \begin{algorithmic}[1]
    \Function{\certify}{$\callerMode, \tidvar, \wbuffvar, \rsetvar, \snapvc, \lccolor{\lcvar}$}
      \label{line:function-certify}
      \State \var $\reqIdVar \gets \generateReqId()$
      \State \var $L \gets \set{l \mid \wbuffvar[l] \neq \emptyset}
        \cup \set{\partitionofproc(k) \mid k \in \rsetvar}$
        \label{line:certify-L}
      \Repeat
        \State \send
          $\Call{\preparestrong}{\reqIdVar, \callerMode,
            \tidvar, \wbuffvar, \rsetvar, \snapvc, \lccolor{\lcvar}}
          \sendto \Omega_{l}$, $l \in L$
          \label{line:certify-call-preparestrong}
        \State \asyncwait \receive $\Call{\alreadydecided}{\tidvar, \decisionvar,
          \commitvc, \lccolor{\lcvar}}$
          \label{line:certify-wait-alreadydecided}
        \Statex \hspace{1.95cm} $\lor$ \receive $\Call{\unknowntxAck}{l, \reqIdVar, \tidvar}$
          \textbf{from a quorum from some} $l \in L$
        \Statex \hspace{1.95cm} $\lor$ \receive
          $\Call{\acceptack}{l, \ballotvar_{l}, \tidvar, \votevar_{l}, \tsvar_{l}, \lccolor{\lcvar_{l}}}$
            \textbf{from a quorum for all} $l \in L$
          \label{line:certify-wait-acceptack}
      \Until{\notkw\timeout}

      \hStatex
      \If{\received \textbf{a quorum of} $\Call{\unknowntxAck}{l, \reqIdVar, \tidvar}$}
        \State \Return $\unknowntx$
      \ElsIf{\received $\Call{\alreadydecided}{\tidvar, \decisionvar, \commitvc, \lccolor{\lcvar}}$}
        \label{line:certify-received-alreadydecided}
        \State \send $\Call{\decision}{\ballotVar, \tidvar, \decisionvar, \commitvc, \lccolor{\lcvar}}
          \sendto \Omega_{\pvar}$
        \State \Return $\langle \decisionvar, \commitvc, \lccolor{\lcvar} \rangle$
          \label{line:certify-return-alreadydecided}
      \Else
        \State $\commitvc \gets \snapvc$
          \label{line:certify-commitvc}
        \State $\commitvc[\strongentry] \gets \max \set{\tsvar_{l} \mid l \in L}$
          \label{line:certify-commitvc-strongentry}
        \IfThenElse{$\exists l \in L.\; \votevar_{l} = \abort$}
          {$\decisionvar \gets \abort$ \label{line:certify-decision-abort}}
          {$\decisionvar \gets \commit$ \label{line:certify-decision-commit}}
        \State \lccolor{$\lcvar \gets \max\set{\lcvar_{l} \mid l \in L}$}
          \label{line:certify-lc}
        \State \send $\Call{\decision}{\ballotvar_{l}, \tidvar, \decisionvar, \commitvc, \lccolor{\lcvar}}
          \sendto \Omega_{l}$, $l \in L$
          \label{line:certify-call-desision}
        \State \Return $\langle \decisionvar, \commitvc, \lccolor{\lcvar} \rangle$
          \label{line:certify-return-acceptack}
      \EndIf
    \EndFunction
  \end{algorithmic}
\end{algorithm*}

\begin{algorithm*}[t]
  \caption{Strong transaction certification at $p^{m}_{d}$}
  \label{alg:unistore-certification}
  \begin{algorithmic}[1]
    \Function{\certification}{$W, \rsetvar, \snapvc, \lccolor{\lcvar}$}
      \label{function:certification}
      \ForAll{$\langle \_, \wbuffvar', \rsetvar', \_, \commit, \_, \lccolor{\_} \rangle \in \preparedstrong$}
        \label{line:certification-preparedstrong}
        \If{$(\exists \langle k, \_ \rangle \in \wbuffvar'[m].\; k \in \rsetvar) \lor
             (\exists k \in \rsetvar'.\; \langle k, \_ \rangle \in W)$}
          \label{line:certification-conflict}
          \State \Return $\langle \abort, \lccolor{\bot} \rangle$
            \label{line:certification-preparedstrong-abort}
        \EndIf
      \EndFor

      \hStatex
      \ForAll{$\langle \_, \wbuffvar', \commit, \commitvc, \lccolor{\lcvar'} \rangle \in \decidedstrong$}
        \label{line:certification-decidedstrong}
        \If{$(\exists \langle k, \_ \rangle \in \wbuffvar'[m].\; k \in \rsetvar)
          \land \lnot (\commitvc \leq \snapvc)$}
          \State \Return $\langle \abort, \lccolor{\bot} \rangle$
            \label{line:certification-decidedstrong-abort}
        \EndIf
        \If{\lccolor{$\lcvar \le \lcvar'$}}
          \State \lccolor{$\lcvar \gets \lcvar' + 1$}
          \label{line:certification-lc}
        \EndIf
      \EndFor
      \State \Return $\langle \commit, \lccolor{\lcvar} \rangle$
        \label{line:certification-return}
    \EndFunction
  \end{algorithmic}
\end{algorithm*}

\begin{algorithm*}[t]
  \caption{Atomic transaction commit protocol at $p^{m}_{d}$}
  \label{alg:unistore-atomic-commit}
  \begin{algorithmic}[1]
    \WhenRcv[$\preparestrong(\reqIdVar, \senderMode, \tidvar, \wbuffvar, \rsetvar, \snapvc, \lccolor{\lcvar}) \from \pvar$]
      \label{line:function-preparestrong}
      \State \pre $\statusVar \in \set{\leaderproc, \restoring}$
        \label{line:preparestrong-precondition}
      \hStatex
      \If{$\exists \langle \tidvar, \_, \decisionvar, \commitvc, \lccolor{\lcvar} \rangle \in \decidedstrong$}
        \label{line:preparestrong-case-alreadydecided}
        \State \send $\Call{\alreadydecided}{\tidvar, \decisionvar, \commitvc, \lccolor{\lcvar}} \sendto p$
          \label{line:preparestrong-call-alreadydecided}
      \ElsIf{$\exists \langle \tidvar, \_, \_, \snapvc, \votevar, \tsvar, \lccolor{\lcvar} \rangle \in \preparedstrong$}
        \State \send $\Call{\accept}{\ballotVar, \tidvar, \wbuffvar, \rsetvar, \snapvc, \votevar, \tsvar, \pvar, \lccolor{\lcvar}}
          \sendto \replicas(\mvar)$
      \ElsIf{$\senderMode = \restoring$}
        \State \send $\Call{\unknownTx}{\ballotVar, \reqIdVar, \tidvar, \pvar} \sendto \replicas(\mvar)$
      \ElsIf{$\statusVar = \leaderproc$}
        \State $\wait\until \clockVar > \snapvc[\strongentry]$
          \label{line:preparestrong-clock}
        \State $\tsvar \gets \clockVar$
          \label{line:preaprestrong-ts}
        \State $\langle \votevar, \lccolor{\lcvar} \rangle \gets
          \Call{\certification}{\wbuffvar[m], \rsetvar, \snapvc, \lccolor{\lcvar}}$
          \label{line:preparestrong-call-certification}
        \State \send $\Call{\accept}{\ballotVar, \tidvar, \wbuffvar, \rsetvar, \snapvc,
          \votevar, \tsvar, \pvar, \lccolor{\lcvar}} \sendto \replicas(\mvar)$
          \label{line:preparestrong-call-accept}
      \EndIf
    \EndWhenRcv

    \Statex
    \WhenRcv[$\accept(\ballotvar, \tidvar, \wbuffvar, \rsetvar,
      \snapvc, \votevar, \tsvar, \pvar, \lccolor{\lcvar})$]
      \label{line:function-accept}
      \State \pre $\statusVar \in \set{\leaderproc, \follower, \restoring} \land \ballotVar = b$
        \label{line:accept-precondition}

      \hStatex
      \State $\preparedstrong \gets \preparedstrong \cup
        \set{\langle \tidvar, \wbuffvar, \rsetvar, \snapvc, \votevar, \tsvar, \lccolor{\lcvar} \rangle}$
        \label{line:accept-preparedstrong}
      \State \send $\Call{\acceptack}{\mvar, \ballotvar, \tidvar, \votevar, \tsvar, \lccolor{\lcvar}}
        \sendto p$
        \label{line:accept-call-acceptack}
    \EndWhenRcv

    \Statex
    \WhenRcv[$\decision(\ballotvar, \tidvar, \decisionvar, \commitvc, \lccolor{\lcvar})$]
      \label{line:function-decision}
      \State \pre $\statusVar \in \set{\leaderproc, \restoring} \land \ballotVar = b$
        \label{line:decision-precondition}

      \hStatex
      \State \wait\until $\clockVar \ge \commitvc[\strongentry]$
        \label{line:decision-wait-clock}
      \State \send $\Call{\learndecision}{\ballotvar, \tidvar, \decisionvar, \commitvc, \lccolor{\lcvar}}
        \sendto \replicas(\mvar)$
        \label{line:decision-call-desision}
    \EndWhenRcv

    \Statex
    \WhenRcv[$\learndecision(\ballotvar, \tidvar, \decisionvar, \commitvc, \lccolor{\lcvar})$]
      \label{function:learndecision}
      \State \pre $\statusVar \in \set{\leaderproc, \follower, \restoring} \land \ballotVar = b
        \land \exists \langle \tidvar, \wbuffvar, \_, \_, \_, \_, \lccolor{\_} \rangle \in \preparedstrong$
        \label{line:learndecision-precondition}

      \hStatex
      \State $\preparedstrong \gets \preparedstrong \setminus
        \set{\langle \tidvar, \_, \_, \_, \_, \_, \lccolor{\_} \rangle}$
        \label{line:decision-preparedstrong}
      \State $\decidedstrong \gets \decidedstrong \cup
        \set{\langle \tidvar, \wbuffvar, \decisionvar, \commitvc, \lccolor{\lcvar} \rangle}$
        \label{line:decision-decidedstrong}
    \EndWhenRcv

    \Statex \Upon[] \label{function:upcall} \label{line:function-upcall}
      \Statex \vspace{-0.70cm}
        \begin{align*}
          \exists &\langle \_, \_, \commit, \commitvc, \lccolor{\_} \rangle \in \decidedstrong. \\
            &\phantom{\land\;} \commitvc[\strongentry] > \lastdeliveredVar \\
            &\land (\lnot\exists \langle \_, \_, \_, \_, \commit, \tsvar, \lccolor{\_} \rangle \in \preparedstrong.\
            \lastdeliveredVar < \tsvar \le \commitvc[\strongentry]) \\
            &\land (\lnot\exists \langle \_, \_, \commit, \commitvc', \lccolor{\_} \rangle \in \decidedstrong.\
            \lastdeliveredVar < \commitvc'[\strongentry] < \commitvc[\strongentry])
        \end{align*}
        \label{line:upcall-upon}
      \Statex \vspace{-1.10cm}
      \State \pre $\statusVar = \leaderproc$
        \label{line:upcall-precondition}

      \hStatex
      \State $\send \deliver(\ballotVar, \commitvc[\strongentry]) \sendto \replicas(\mvar)$
    \EndUpon

    \Statex
    \WhenRcv[$\deliver(\ballotvar, \tsvar)$]
      \label{function:deliver}
      \State \pre $\statusVar \in \set{\leaderproc, \follower}
        \land \ballotVar = \ballotvar \land \lastdeliveredVar < \tsvar$
        \label{line:deliver-precondition}

      \State $\lastdeliveredVar \gets \tsvar$
        \label{line:deliver-lastdeliverd}
      \State \var $W \gets \set{\langle \tidvar, \wbuffvar[m], \commitvc, \lccolor{\lcvar} \rangle \mid
        \exists \langle \tidvar, \wbuffvar, \commit, \commitvc, \lccolor{\lcvar}
        \rangle \in\decidedstrong.\;$
        \Statex \hspace{6.22cm} $\commitvc[\strongentry] = \tsvar}$
        \label{deliver-W}
      \State \upcall $\Call{\deliverupdates}{W} \sendto p^{\mvar}_{\dvar}$
        \label{line:deliver-call-deliverupdates}
    \EndWhenRcv

    \Statex
    \WhenRcv[$\unknownTx(\ballotvar, \reqIdVar, \tidvar, \pvar)$]
      \State \pre $\statusVar \in \set{\leaderproc, \follower, \restoring}
        \land \ballotVar = \ballotvar$
      \State $\send \unknowntxAck(\mvar, \reqIdVar, \tidvar) \sendto \pvar$
    \EndWhenRcv

    \Statex
    \Function{\retry}{$\tidvar$}
      \Comment{Run periodically}
      \label{line:function-retry}
      \State \pre $\Call{\certify}{\_, \tidvar, \_, \_, \_, \lccolor{\_}}\ \text{is not executing}
        \land\statusVar = \leaderproc \land \exists
        \langle \tidvar, \wbuffvar, \rsvar, \snapvc, \_, \_, \lccolor{\lcvar} \rangle
        \in \preparedstrong$\!\!
        \label{line:retry-precondition}
      \State $\Call{\certify}{\normalMode, \tidvar, \wbuffvar, \rsvar, \snapvc, \lccolor{\lcvar}}$
    \EndFunction
  \end{algorithmic}
\end{algorithm*}

\begin{algorithm*}[t]
  \caption{Atomic transaction commit protocol at $\pvar^{m}_{d}$: recovery}
  \label{alg:unistore-recovery}
  \begin{algorithmic}[1]
    \Upon[$\Omega_{m} \neq \trustedVar$]
      \State $\trustedVar \gets \Omega_{m}$
      \If{$\trustedVar = \pvar^{m}_{d}$}
        \recover()
      \Else{}
        $\send \nack(\ballotVar) \sendto \trustedVar$
      \EndIf
    \EndUpon

    \Statex
    \WhenRcv[\nack($b$)]
      \State \pre $\trustedVar = \pvar^{m}_{d} \land \ballotvar > \ballotVar$
      \hStatex
      \State $\ballotVar \gets \ballotvar$
      \State \recover()
    \EndWhenRcv

    \Statex
    \Function{\recover}{\null}
      \label{line:function-recover}
      \State \send $\Call{\newleader}{\text{any ballot } \ballotvar
        \text{ such that } \ballotvar > \ballotVar
          \land \leaderof(\ballotvar) = \pvar^{\mvar}_{\dvar}}
          \sendto \replicas(\mvar)$
    \EndFunction

    \Statex
    \WhenRcv[$\newleader(\ballotvar) \from \pvar$]
      \label{line:function-newleader}
      \If{$\trustedVar = \pvar \land \ballotVar < \ballotvar$}
        \label{line:newleader-if}
        \State $\statusVar \gets \recovering$
          \label{line:newleader-status}
        \State $\ballotVar \gets \ballotvar$
          \label{line:newleader-ballot}
        \State $\doNotWaitFor \gets \emptyset$
        \State $\send \Call{\newleaderack}{\ballotVar, \cballot,
          \preparedstrong, \decidedstrong} \sendto \pvar$
          \label{line:newleader-call-newleaderack}
      \Else{}
        $\send \Call{\nack}{\ballotVar} \sendto \pvar$
      \EndIf
    \EndWhenRcv

    \Statex
    \WhenRcv[$\set{\newleaderack(\ballotvar, \cballotvar_{j},
      \preparedstrongvar_{j}, \decidedstrongvar_{j}) \mid p_{j} \in Q}
        \;\text{\bf from a quorum}\; Q$]
        \label{function:newleaderack}
      \State \pre $\statusVar = \recovering \land \ballotVar = \ballotvar$
        \label{line:newleaderack-precondition}
      \hStatex
      \State \var $J \gets$ the set of $j$ with maximal $\cballotvar_{j}$
        \label{line:newleaderack-J}
      \State $\decidedstrong \gets \bigcup\limits_{j \in J} \decidedstrongvar_{j}$
        \label{line:newleaderack-decidedstrong}
      \State $\preparedstrong \gets \set{\langle \tidvar, \_, \_, \_, \_, \_, \lccolor{\_} \rangle
        \in \bigcup\limits_{j \in J} \preparedstrongvar_{j} \mid
          \langle \tidvar, \_, \_, \_, \lccolor{\_} \rangle \notin \decidedstrong}$
        \label{line:newleaderack-preparedstrong}
      \State \var $\maxPrep \gets \max\set{\tsvar \mid
        \langle \_, \_, \_, \_, \_, \tsvar, \lccolor{\_} \rangle \in \preparedstrong}$
        \label{line:newleaderack-maxprep}
      \State \var $\maxDec \gets \max\set{\commitvc[\strongentry] \mid
        \langle \_, \_, \_, \commitvc, \lccolor{\_} \rangle \in \decidedstrong}$
        \label{line:newleaderack-maxdec}
      \State \wait\until $\clockVar \ge \max\set{\maxPrep, \maxDec}$
        \label{line:newleaderack-wait-clock}

      \hStatex
      \State $\cballot \gets \ballotvar$
        \label{line:newleaderack-cballot}
      \State \send $\newstate(\ballotVar, \preparedstrong, \decidedstrong)
        \sendto \replicas(\mvar) \setminus \set{\pvar^{\mvar}_{\dvar}}$
        \label{line:newleaderack-call-newstate}
    \EndWhenRcv

    \Statex
    \WhenRcv[\newstate($\ballotvar, \preparedstrongvar, \decidedstrongvar$) \from $\pvar$]
      \label{line:function-newstate}
      \State \pre $\statusVar = \recovering \land \ballotvar \ge \ballotVar$
        \label{line:newstate-precondition}
      \hStatex
      \State $\cballot \gets \ballotvar$
        \label{line:newstate-cballot}
      \State $\preparedstrong \gets \preparedstrongvar$
        \label{line:newstate-preparedstrong}
      \State $\decidedstrong \gets \decidedstrongvar$
        \label{line:newstate-decidedstrong}
      \State $\statusVar \gets \follower$
        \label{line:newstate-status}
      \State $\send \Call{\newstateack}{\ballotvar} \sendto \pvar$
        \label{line:newstate-call-newstateack}
    \EndWhenRcv

    \Statex
    \WhenRcv[\newstateack($\ballotvar$)] {\bf from a set of processes}
      \textbf{that together with} $\pvar^{\mvar}_{\dvar}$ \textbf{form a quorum}
      \State \pre $\statusVar = \recovering \land \ballotVar = \ballotvar$
      \hStatex
      \State $\statusVar \gets \restoring$
      \ForAll{$t = \langle \tidvar, \wbuffvar, \rsvar, \snapvc, \_, \_, \lccolor{\lcvar} \rangle
        \in \preparedstrong$}
        \If{$\Call{\certify}{\restoring, \tidvar, \wbuffvar, \rsetvar, \snapvc, \lccolor{\lcvar}} = \unknowntx$}
          \State $\doNotWaitFor \gets \doNotWaitFor \cup \set{t}$
        \EndIf
    \EndFor
    \EndWhenRcv

    \Statex
    \Upon[$\preparedstrong \subseteq \doNotWaitFor \land \statusVar = \restoring$]
      \State $\statusVar \gets \leaderproc$
      \State $\doNotWaitFor \gets \emptyset$
    \EndUpon
  \end{algorithmic}
\end{algorithm*}

\clearpage

\section{Consistency Model Specification} \label{section:spec}

\subsection{Relations} \label{ss:relations}

For a binary relation $\relation \subseteq A \times A$
and an element $a \in A$, we define
$\relation^{-1}(a) = \set{b \mid (b, a) \in \relation}$.
For a non-empty set $A$ and a total order $\relation \subseteq A \times A$,
we let $\max\limits_{\relation}(A)$ be the maximum element in $A$
according to $\relation$. Formally,
\[
  \max_{\relation}(A) = a \iff A \neq \emptyset
    \land \forall b \in A.\; a = b \lor (b, a) \in \relation.
\]
If $A$ is empty, then $\max\limits_{\relation}(A)$ is undefined.
We implicitly assume that $\max\limits_{\relation}(A)$ is defined
whenever it is used.

We call a binary relation a \emph{(strict) partial order}
if it is irreflexive and transitive.
We call it a \emph{total order} if it additionally
relates every two distinct elements one way or another.
\subsection{Operations and Events} \label{ss:operations}

Transactions in \unistore{} can start, read and write keys, and commit.
We assume that each transaction is associated with a unique transaction identifier
$\tidvar$ from a set $\tids$ (corresponding to
line~\code{\ref{alg:unistore-coord}}{\ref{line:start-return}}).
Besides, clients can issue on-demand barriers and migrate between data centers.

Let $\Key$ and $\Val$ be the set of keys and values, respectively.
We define $\OP$ as the set of all possible operations
\begin{align*}
  \OP = \;&\set{\start(\tidvar) \mid \tidvar \in \tids} \;\cup \\
    &\set{\commitcausaltx(\tidvar) \mid \tidvar \in \tids} \;\cup \\
    &\set{\commitstrongtx(\tidvar, \decvar) \mid \\
      &\quad \tidvar \in \tids, \decvar \in \set{\commit, \abort}} \;\cup \\
    &\set{\fence} \;\cup \\
    &\set{\clattach(j) \mid j \in \D} \;\cup \\
    &\set{\read(\tidvar, k, v), \updateproc(\tidvar, k, v) \mid \\
      &\quad \tidvar \in \tids, k \in \Key, v \in \Val}.
\end{align*}
We denote each invocation of such an operation by an \emph{event}
from a set $E$, usually ranged over by $e$.
A function $\opfunc : E \to \OP$ determines the operation a given event denotes.
Formally, we use the following notation to denote different types of events.
\begin{itemize}
  \item $E$: The set of all events.
  \item $S$: The set of \start{} events. That is,
    \[
      S = \set{e \in E \mid \exists \tidvar \in \tids.\;
        \opfunc(e) = \start(\tidvar)}.
    \]
  \item $R$: The set of \read{} (read) events. That is,
    \begin{align*}
      R = \set{&e \in E \mid \exists \tidvar \in \tids, k \in \Key, v \in \Val.\; \\
        &\quad \opfunc(e) = \read(\tidvar, k, v)}.
    \end{align*}
  \item $U$: The set of \updateproc{} (update) events. That is,
    \begin{align*}
      U = \set{&e \in E \mid \exists \tidvar \in \tids, k \in \Key, v \in \Val.\; \\
        &\quad \opfunc(e) = \updateproc(\tidvar, k, v)}.
    \end{align*}
  \item $C_{\causalentry}$: The set of \commitcausaltx{} events. That is,
    \begin{align*}
      C_{\causalentry} = \set{&e \in E \mid \exists \tidvar \in \tids.\; \\
        &\quad \opfunc(e) = \commitcausaltx(\tidvar)}.
    \end{align*}
  \item $C_{\strongentry}$: The set of \commitstrongtx{} events
    with decision $\decvar = \commit$. That is,
    \begin{align*}
      C_{\strongentry} &= \set{e \in E \mid \exists \tidvar \in \tids.\; \\
        &\opfunc(e) = \commitstrongtx(\tidvar, \commit)}.
    \end{align*}

  \item $C \triangleq C_{\causalentry} \uplus C_{\strongentry}$:
    The set of all commit events.
  \item $\Fence$: The set of \fence{} events. That is,
    \[
      \Fence = \set{e \in E \mid \opfunc(e) = \fence}.
    \]
  \item $\Attach$: The set of \clattach{} events. That is,
    \[
      \Attach = \set{e \in E \mid \exists j \in \D.\; \opfunc(e) = \clattach(j)}.
    \]
  \item $R_{k}$: The set of read events on key $k$. That is,
    \begin{align*}
      R_{k} = \set{&e \in E \mid \exists \tidvar \in \tids, v \in Val.\; \\
        &\quad \opfunc(e) = \read(\tidvar, k, v)}.
    \end{align*}
  \item $U_{k}$: The set of update events on key $k$. That is,
    \begin{align*}
      U_{k} = \set{&e \in E \mid \exists \tidvar \in \tids, v \in Val.\; \\
        &\quad \opfunc(e) = \updateproc(\tidvar, k, v)}.
    \end{align*}
\end{itemize}

For different types of events, we define
\begin{itemize}
  \item $\key(e)$: The key that the read or update event
    $e \in R \cup U$ accesses.
  \item $\rval(e)$: The return value of the read event $e \in R$.
  \item $\uval(e)$: The value written by the update event $e \in U$.
\end{itemize}
\subsection{Transactions} \label{ss:transactions}

\begin{appdefinition}[Transactions] \label{def:tx}
  A transaction $\tvar$ is a triple $(\tidvar, \txnevents, \po)$, where
  \begin{itemize}
    \item $\tidvar \in \tids$ is a unique transaction identifier;
    \item $\txnevents \subseteq E \setminus (\Fence \cup \Attach)$
      is a finite, non-empty set of events;
    \item $\po \subseteq \txnevents \times \txnevents$
      is the program order, which is total.
  \end{itemize}
  We only consider well-formed transactions: according to the $\po$ order,
  $\tvar$ starts with a \start{} event, then performs some number of
  \read{}/\updateproc{} events, and ends with a commit event (\commitcausaltx{}
  or \commitstrongtx).
\end{appdefinition}

In the following, we denote components of $t$ as in $t.\tidvar$.
For simplicity, we assume a dedicated \emph{initial} transaction $\tvar_{0}$
which installs initial values to all possible keys before the system launches.

We use the following notations to denote different types of transactions.
\begin{itemize}
  \item $\txs$: The set of all committed transactions.
  \item $\txs_{k}$: The set of committed transactions that update key $k$.
    We also use $C_{k}$ to denote the set of commit events of transactions
    in $\txs_{k}$.
  \item $\causaltxs$: The set of transactions that
    end with the \commitcausaltx{} events.
    We call them causal transactions.
    Causal transactions will always be committed.
  \item $\allstrongtxs$: The set of transactions that
    end with the \commitstrongtx{} events.
    We call them strong transactions.
    Strong transactions can be committed or aborted.
  \item $\strongtxs$: The set of \emph{committed} strong transactions.
\end{itemize}

We have $\txs = \causaltxs \uplus \strongtxs$.
For each transaction $\tvar \in \txs$, we define
\begin{itemize}
  \item $\tidselector(\tvar) \in \tids$:
    The transaction identifier $t.\tidvar$ of $\tvar$.
  \item $\events(\tvar) \subseteq S \cup R \cup U \cup C$:
    The set $t.\txnevents$ of events in $\tvar$.
  \item $\ws(\tvar) \subseteq \Key \times \Val$:
    The write set of $\tvar$.
    It is the set of keys with their values that $\tvar$ updates,
    which contains at most one value per key.
    Formally,
    \[
      \ws(\tvar) \triangleq \set{(\key(e), \uval(e)) \mid e \in t.\txnevents \cap U}.
    \]
  \item $\rs(\tvar) \subseteq \Key$: The read set of $\tvar$.
    It is the set of keys that $\tvar$ reads.
    Formally,
    \[
      \rs(\tvar) \triangleq \set{\key(e) \mid e \in t.\txnevents \cap R}.
    \]
  \item $\startoftx(\tvar) \in S$:
    The \start{} event of $\tvar$.
    Formally, it is the unique event in the set $t.\txnevents \cap S$.
  \item $\commitoftx(\tvar) \in C$: The commit event of $\tvar$.
    Formally, it is the unique event in the set $t.\txnevents \cap C$.
  \item $\ud(\tvar, k) \in U_{k}$: The \emph{last} update event on key $k$, if any,
    in transaction $\tvar$. Formally,
    \[
      \ud(\tvar, k) \triangleq \max\limits_{\po}(\events(\tvar) \cap U_{k}).
    \]
\end{itemize}

Besides, we define
\begin{align}
  \W(\tvar) &\triangleq \set{k \in \Key \mid \langle k, \_ \rangle \in \ws(\tvar)}, \label{eq:W-def}\\
  \R(\tvar) &\triangleq \rs(\tvar) \cup \W(\tvar).\label{eq:R-def}
\end{align}

For a read event $e$ on key $k$ in transaction $\tvar$,
if there exist update events on $k$ preceding $e$ in $\tvar$,
then $e$ is called an \emph{internal} read event.
Otherwise, $e$ is called an \emph{external} read event.
We denote the sets of internal reads and external reads
by $\intread$ and $\extread$, respectively.
That is, $R = \intread \uplus \extread$.

We also distinguish commit events
for read-only transactions from those for update transactions,
and denote their sets by $\rocommit$ and $\updatecommit$, respectively.
That is, $C = \rocommit \uplus \updatecommit$.

For notational convenience,
for an event $e \in E \setminus (\Fence \cup \Attach)$, we also define
$\txfunc(e)$ to be the transaction containing $e$ and
\begin{align*}
  \startoftx(e) &\triangleq \startoftx(\txfunc(e)), \\
  \commitoftx(e) &\triangleq \commitoftx(\txfunc(e)).
\end{align*}
\subsection{Abstract Executions} \label{ss:cm}

Clients interacts with \unistore{} by issuing transactions
and \fence{} and \attach{} events.
We use histories to record such interactions in a single computation.
Note that histories only record committed transactions.
\begin{appdefinition}[Histories] \label{def:histories}
  A \emph{history} is a tuple
  \[
    H = (X, \client, \dc, \so)
  \]
  such that
  \begin{itemize}
    \item $X \subseteq \txs \cup \Fence \cup \Attach$
      is a set of committed transactions and \fence{} and \attach{} events;
    \item $\client: X \to \clients$ is a function that returns
      \begin{itemize}
        \item the client $\client(t)$ which issues the transaction
          $\tvar \in (X \cap \txs)$,
        \item the client $\client(\fencerange)$ which issues the \fence{} event
          $\fencerange \in (X \cap \Fence)$, or
        \item the client $\client(\attachrange)$ which issues the \attach{} event
          $\attachrange \in (X \cap \Attach)$;
      \end{itemize}
    \item $\dc: X \to \D$ is a function
      that returns the original data center $\dc(\tvar)$
      of transaction $\tvar \in (X \cap \txs)$,
      $\dc(\fencerange)$ of \fence{} event $\fencerange \in (X \cap \Fence)$,
      or $\dc(\attachrange)$ of \attach{} event $\attachrange \in (X \cap \Attach)$;
    \item $\so \subseteq X \times X$ is the \emph{session order} on $X$.
      Consider $\txevents_{1}, \txevents_{2} \in X$.
      We say that $\txevents_{1}$ precedes $\txevents_{2}$
      in the session order, denoted $\txevents_{1} \rel{\so} \txevents_{2}$,
      if they are executed by the same client
      and $\txevents_{1}$ is executed before $\txevents_{2}$.
  \end{itemize}
\end{appdefinition}
In the following, we denote components of $H$ as in $H.X$
and often shorten $H.X$ by $X$ when it is clear.
Let $V_{H} \triangleq \bigcup (H.X \cap T).Y$
be the set of transactional events in history $H$.

A consistency model is specified by a set of histories.
To define this set, we extend histories with two relations,
declaratively describing how the system processes transactions
and \fence{} events.

\begin{appdefinition}[Abstract Executions] \label{def:ae}
  An \emph{abstract execution} is a triple
  \[
    A = ((X, \client, \dc, \so), \vis, \ar)
  \]
  such that
  \begin{itemize}
    \item $(X, \client, \dc, \so)$ is a history;
    \item Visibility $\vis \subseteq X \times X$ is a partial order;
    \item Arbitration $\ar \subseteq X \times X$ is a total order.
  \end{itemize}
\end{appdefinition}
For $H = (X, \client, \dc, \so)$,
we often shorten $((X, \client, \dc, \so), \vis, \ar)$ by $(H, \vis, \ar)$.
\subsection{Partial Order-Restrictions Consistency} \label{ss:por}

We aim to show that \unistore{} implements a transactional variant
of \emph{Partial Order-Restrictions consistency
(\por{} consistency)}~\cite{red-blue, por} for LWW registers.
A history $H$ of \unistore{} satisfies \por, denoted $H \models \por$,
if it can be extended to an abstract execution that satisfies several axioms,
defined in the following:
\begin{align*}
  H \models \por \iff\; &\exists \vis, \ar.\; (H, \vis, \ar) \models \\
                 &\qquad \retval \;\land \\
                 &\qquad \cc \;\land \\
                 &\qquad \conflictaxiom \;\land \\
                 &\qquad \ev.
\end{align*}
$\unistore$ satisfies $\por$, denoted $\unistore \models \por$,
if all its histories do.

Given an abstract execution $A = (H, \vis, \ar)$,
the axioms are defined as follows.
\begin{appdefinition}[$\retval$, \cite{framework-concur15}] \label{def:retval}
  The Return Value Consistency (\retval) specifies
  the return value of each read event.
  \[
    \retval \triangleq \intretval \land \extretval.
  \]
  Here $\intretval$ requires an internal read event $e$
  on key $k$ to read from the last update event on $k$ preceding $e$
  in the same transaction. Formally,
  \begin{align*}
    &\intretval \triangleq \forall e \in \intread \cap R_{k} \cap V_{H}. \\
        &\quad \rval(e) = \uval\big(\max_{\po}(\po^{-1}(e) \cap U_{k})\big).
  \end{align*}
  $\extretval$ requires an external read event $e$
  on key $k$ to read from the last update event on $k$
  in the last transaction preceding $\txfunc(e)$ in $\ar$,
  among the set of transactions visible to $\txfunc(e)$.
  Formally,
  \begin{align*}
    &\extretval \triangleq \forall e \in \extread \cap R_{k} \cap V_{H}. \\
        &\quad \rval(e) =
          \uval\Big(
                \ud\big(\max_{\ar}\big(\vis^{-1}(\txfunc(e)) \cap \txs_{k}\big), k\big)
               \Big).
  \end{align*}
\end{appdefinition}

\begin{appdefinition}[\cc, \cite{sebastian-book}] \label{def:cc}
  \begin{align*}
    \cc \triangleq\; &\cv \;\land \\
                   &\ca,
  \end{align*}
  where
  \[
    \cv \triangleq (\so\; \cup \vis)^{+} \subseteq \vis;
  \]
  \[
    \ca \triangleq \vis \subseteq \ar.
  \]
\end{appdefinition}

The Conflict Ordering property requires that
out of any two conflicting strong transactions,
one must be visible to the other.
Formally,
\begin{appdefinition}[Conflict Relation] \label{def:conflict-relation-tx}
  The conflict relation, denoted by $\conflict$, between strong transactions
  is a symmetric relation defined as follows:
  \begin{align*}
    &\forall \tvar, \tvar' \in \strongtxs.\; \tvar \conflict \tvar' \iff \\
      &\quad (\R(\tvar) \cap \W(\tvar') \neq \emptyset) \lor
             (\W(\tvar) \cap \R(\tvar') \neq \emptyset).
  \end{align*}
\end{appdefinition}

\begin{appdefinition}[\conflictaxiom] \label{def:conflictaxiom}
  \begin{align*}
    &\conflictaxiom \triangleq
      \forall \tvar_{1}, \tvar_{2} \in X \cap \strongtxs.\; \\
        &\quad \tvar_{1} \conflict \tvar_{2} \implies
          \tvar_{1} \rel{\vis} \tvar_{2} \lor \tvar_{2} \rel{\vis} \tvar_{1}.
  \end{align*}
\end{appdefinition}
The Eventual Visibility property requires that
a transaction that originates at a correct data center,
that is visible to some \fence{} events,
or that is a strong transaction
eventually becomes visible at all correct data centers.
Let $\C \subseteq \D$ be the set of correct data centers.
Formally,
\begin{appdefinition}[\ev] \label{def:ev}
  \begin{align*}
    &\ev \triangleq \forall \tvar \in X \cap \txs.\; \\
      &\quad \dc(\tvar) \in \C \lor
        (\exists \fencerange \in \Fence.\; \tvar \rel{\vis} \fencerange) \lor \tvar \in \strongtxs \\
        &\qquad \implies \left\lvert \left\{\tvar' \in \txs
          \mid \lnot (\tvar \rel{\vis} \tvar') \right\} \right\rvert < \infty.
  \end{align*}
\end{appdefinition}

\section{Transaction Certification Service Specification} \label{section:tcs}

\subsection{Interface} \label{ss:interface}

The \emph{Transaction Certification Service} (TCS)~\cite{discpaper}
is responsible for certifying strong transactions issued by transaction coordinators,
computing commit vectors and Lamport clocks for committed transactions,
and (asynchronously) delivering committed transactions to replicas.

Each strong transaction $\tvar \in \allstrongtxs$ submitted to TCS
may be associated with its read set $\rs(\tvar)$, write set $\ws(\tvar)$,
snapshot vector $\snapshotVC(\tvar)$ (Definition~\ref{def:snapshotvc}),
commit vector $\commitVC(\tvar)$ (Definition~\ref{def:commitvc}),
and Lamport clock $\lclock(\tvar)$ (Definition~\ref{def:lc-tx}).
From $\rs(\tvar)$ and $\ws(\tvar)$ we can then define 
$\W(t)$ and $\R(t)$ according to~(\ref{eq:W-def}) and~(\ref{eq:R-def}), respectively.
Note that we have $\W(t) \subseteq \R(t)$.

Transaction coordinators for strong transactions interact with TCS
using two types of \emph{actions}. Coordinators can make certification requests
(corresponding to procedure \certify{} of Algorithm~\ref{alg:unistore-certify})
of the form
\[
  \intcertify(\tidselector(\tvar), \ws(\tvar), \R(\tvar),
    \snapshotVC(\tvar), \lccertify(\tvar)),
\]
where $\tvar \in \allstrongtxs$ and $\lccertify(\tvar) \in \N$
denotes the contribution of $\client(\tvar)$ to the Lamport clock of $\tvar$.
The TCS responses are of the form
\[
  \intdecide(\tidselector(\tvar), \decvar, \vcvar, \lcvar),
\]
containing a decision $\decvar$ from $\Decision = \set{\commit, \abort}$ for $\tvar$,
a commit vector $\vcvar$ from $\Vector$ for $\tvar$ if $\decvar = \commit$,
and a Lamport clock $\lcvar$ from $\N$ for $\tvar$ if $\decvar = \commit$.
If $\decvar = \abort$, then $\vcvar$ and $\lcvar$ are irrelevant.

Besides, TCS can deliver some payload $\payload$
to a replica via \upcall \emph{actions} $\intdeliver(\payload)$
(corresponding to procedure \deliverupdates{}
of Algorithm~\ref{alg:unistore-strong-commit}).
We denote by $\intdeliver^{m}_{d}(\payload)$ the delivery of the payload $\payload$ to
a replica $p^{m}_{d}$, when the latter is relevant.
The payload $\payload$ in $\intdeliver^{m}_{d}(\payload)$ is a set of tuples of the form
$\langle \tidvar, \wbuffvar, \commitvc, \lcvar \rangle$,
each of which corresponds to the updates $\wbuffvar \subseteq \Key \times \Val$
performed at a particular partition $m$
by a particular committed strong transaction
with transaction identifier $\tidvar$, commit vector $\commitvc$,
and Lamport clock $\lcvar$.
\subsection{Certification Functions} \label{ss:cert-func}

TCS is specified using a certification function
\begin{align}
  \certfunc: 2^{\strongtxs} \times \allstrongtxs \to
    \Decision \times \Vector \times \N.
  \label{eqn:f}
\end{align}
For a strong transaction $\tvar \in \allstrongtxs$
and the set $\txsincertify \subseteq \strongtxs$ of previously committed strong transactions,
$\certfunc(\txsincertify, \tvar)$ returns not only the decision $\decvar \in \Decision$,
but also the commit vector $\vcvar \in \Vector$
and Lamport clock $\lcvar \in \N$ for $\tvar$.
We use $\fdec(\txsincertify, \tvar)$, $\fvec(\txsincertify, \tvar)$,
and $\flc(\txsincertify, \tvar)$ to select
the first, second, and third component
of $\certfunc(\txsincertify, \tvar)$, respectively.

The decision $\fdec(\txsincertify, \tvar)$ should satisfy
\begin{align}
  &\fdec(\txsincertify, \tvar) = \commit
    \iff \forall k \in \R(\tvar).\; \forall \tvar' \in \txsincertify.
    \nonumber\\
    &\quad \big(k \in \W(\tvar') \implies
      \commitVC(\tvar') \le \snapshotVC(\tvar)\big).
    \label{eqn:gcf-decision}
\end{align}

The commit vector $\fvec(\txsincertify, \tvar)$ should satisfy
\begin{align}
  &\phantom{\land\;} (\forall i \in \D.\; \fvec(\txsincertify, \tvar)[i] = \snapshotVC(\tvar)[i]) \nonumber\\
  &\land \fvec(\txsincertify, \tvar)[\strongentry] > \snapshotVC(\tvar)[\strongentry] \nonumber\\
  &\land \forall \tvar' \in \txsincertify.\; \tvar \conflict \tvar'
    \implies \fvec(\txsincertify, \tvar) \ge \commitVC(\tvar').
    \label{eqn:gcf-commitvc}
\end{align}

The Lamport clock $\flc(\txsincertify, \tvar)$ should satisfy
\begin{align}
  &\flc(\txsincertify, \tvar) \ge \lccertify(t) \;\land \nonumber\\
  &\quad \big(\forall \tvar' \in \txsincertify.\; \tvar \conflict \tvar' \implies
    \flc(\txsincertify, \tvar) > \lclock(\tvar')\big).
    \label{eqn:gcf-lc}
\end{align}
\subsection{Histories of TCS} \label{ss:histories}

TCS executions are represented by \emph{histories},
which are (possibly infinite) sequences
of $\intcertify$, $\intdecide$, and $\intdeliver$ actions.
For a TCS history $h$, we use $\act(h)$ to denote the set of actions in $h$.
For actions $\actvar, \actvar' \in \act(h)$,
we write $\actvar \prec_{h} \actvar'$
when $\actvar$ occurs before $\actvar'$ in $h$.
A strong transaction $\tvar \in \allstrongtxs$ \emph{commits} in a history $h$
if $h$ contains $\intdecide(\tidselector(\tvar), \commit, \_, \_)$.
We denote by $\committedVar(h)$ the projection of $h$ to actions
corresponding to the strong transactions that are committed in $h$.

Each history $h$ needs to meet the following requirements.
\begin{enumerate}[(R1)]
  \item \label{tcs-requirement:certify-once}
    For each strong transaction $\tvar \in \allstrongtxs$,
    there is at most one $\intcertify(\tidselector(\tvar), \_, \_, \_, \_)$
    action in $h$.
  \item \label{tcs-requirement:decide-once}
    For each action $\intdecide(\tidvar, \_, \_, \_) \in \act(h)$,
    there is exactly one $\intcertify(\tidvar, \_, \_, \_, \_)$ action in $h$
    such that
    \[
      \intcertify(\tidvar, \_, \_, \_, \_)
        \prec_{h} \intdecide(\tidvar, \_, \_, \_).
    \]
  \item \label{tcs-requirement:abort-cannot-deliver}
    For each action $\intdeliver(\payload) \in \act(h)$
    and each $\langle \tidvar, \_, \_, \_ \rangle \in \payload$,
    there is \emph{no} $\intdecide(\tidvar, \abort, \_, \_)$ action in $h$.
  \item \label{tcs-requirement:deliver-once}
    Every committed strong transaction is delivered at most once to each replica.
  \item \label{tcs-requirement:certify-before-deliver}
    For each action $\intdeliver(\payload) \in \act(h)$
    and each $\langle \tidvar, \_, \_, \_ \rangle \in \payload$,
    there is a $\intcertify(\tidvar, \_, \_, \_, \_)$ action such that
    \[
      \intcertify(\tidvar, \_, \_, \_, \_) \prec_{h} \intdeliver(\payload).
    \]
  \item \label{tcs-requirement:deliver-order}
    At each replica $p^{m}_{d}$, committed strong transactions are delivered
    in the increasing order of their strong timestamps.
    Formally, for any two distinct actions
    $\intdeliver^{m}_{d}(\payload_{1})$ and $\intdeliver^{m}_{d}(\payload_{2})$
    with payloads $\payload_{1}$ and $\payload_{2}$, respectively,
    \begin{align*}
      &\intdeliver^{m}_{d}(\payload_{1}) \prec_{h} \intdeliver^{m}_{d}(\payload_{2}) \implies \\
        &\quad \forall \langle \_, \_, \commitvc_{1}, \_ \rangle \in \payload_{1}.\;\\
        &\quad \forall \langle \_, \_, \commitvc_{2}, \_ \rangle \in \payload_{2}.\; \\
          &\qquad \commitvc_{1}[\strongentry] < \commitvc_{2}[\strongentry].
    \end{align*}
\end{enumerate}

A history is \emph{complete} if every $\intcertify$ action
in it has a matching $\intdecide$ action.
A complete history $h$ is \emph{sequential} if
it consists of consecutive pairs of $\intcertify$ and matching $\intdecide$ actions.
For a complete history $h$, a \emph{permutation} $h'$ of $h$
is a sequential history such that
\begin{itemize}
  \item $h$ and $h'$ contain the same actions, i.e., $\act(h) = \act(h')$.
  \item Transactions are certified in $h'$ according to their session order.
    \begin{align*}
      &\forall \tvar, \tvar' \in \allstrongtxs.\;
        \tvar \rel{\so} \tvar' \implies \\
          &\quad \intdecide(\tidselector(\tvar), \_, \_, \_) \prec_{h'}
            \intcertify(\tidselector(\tvar'), \_, \_, \_, \_).
    \end{align*}
\end{itemize}
\subsection{TCS Correctness: Safety and Liveness} \label{ss:tcs-correctness}

\subsubsection{Safety of TCS} \label{ss:tcs-safety}

A complete sequential history $h$ is \emph{legal} with respect to
a certification function $\certfunc$,
if its results are computed so as to satisfy
(\ref{eqn:gcf-decision}) -- (\ref{eqn:gcf-lc}) according to $\certfunc$:
\begin{align*}
  &\forall \actvar = \intdecide(\tidselector(\tvar), \decvar, \vcvar,
    \lcvar) \in \act(h). \\
    &\quad (\decvar, \vcvar, \lcvar)
      = \certfunc(\set{\tvar' \mid \\
        &\quad\quad \intdecide(\tidselector(\tvar'), \commit, \_, \_) \prec_{h} \actvar}, \tvar).
\end{align*}
A history $h$ is \emph{correct} with respect to $\certfunc$
if $h \mid \committedVar(h)$ has a legal permutation.
A TCS implementation is \emph{correct} with respect to $\certfunc$
if so are all its histories.
\subsubsection{Liveness of TCS} \label{ss:tcs-liveness}

TCS guarantees that every committed strong transaction
will eventually be delivered by every correct data center.
Formally,
\begin{align}
  &\forall \actvar = \intdecide(\tidvar, \commit, \_, \_) \in \act(h).
    \nonumber\\
    &\quad \forall m \in \partitionsfunc(\tidvar).\; \forall \cdrange \in \C.
    \nonumber\\
    &\qquad \exists \actvar' = \intdeliver^{m}_{\cdrange}(\payload) \in \act(h).\;
    \nonumber\\
      &\quad\qquad \langle \tidvar, \_, \_, \_ \rangle \in \payload
        \land \actvar \prec_{h} \actvar'.
    \label{eqn:tcs-liveness-delivery}
\end{align}
Here $\partitionsfunc(\tidvar)$ denotes the set of partitions
that a particular transaction with transaction identifier $\tidvar$ accesses.

A TCS implementation meets the liveness requirement if
every history produced by its maximal execution satisfies (\ref{eqn:tcs-liveness-delivery}).
\subsection{TCS Correctness} \label{ss:tcs-correctness-unistore}

The proof of TCS correctness is an adjustment
of the ones in~\cite{discpaper, multicast-dsn19}.

\begin{apptheorem} \label{thm:tcs-correctness}
  The TCS implementation in \unistore{}
  (Algorithms~\ref{alg:unistore-certify} -- \ref{alg:unistore-recovery})
  is correct with respect to the certification function $\certfunc$ in (\ref{eqn:f})
  and meets the liveness requirement in (\ref{eqn:tcs-liveness-delivery}).
\end{apptheorem}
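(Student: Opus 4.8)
The plan is to split Theorem~\ref{thm:tcs-correctness} into its two halves --- safety (correctness with respect to $\certfunc$) and liveness --- and in each case to reuse as much of the existing arguments for the base certification protocol of~\cite{discpaper,multicast-dsn19} as possible, isolating the genuinely new obligations introduced here: the per-data-center entries of the commit vector and the Lamport clocks threaded through the protocol. Since the skeleton of Algorithms~\ref{alg:unistore-certify}--\ref{alg:unistore-recovery} is the same 2PC-over-Paxos structure already proved correct in~\cite{discpaper}, I would phrase the whole proof as a delta: first recall the invariants those proofs give us, then separately discharge the extra conjuncts of~(\ref{eqn:gcf-commitvc}) and~(\ref{eqn:gcf-lc}).

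For safety, the first step is to establish per-partition agreement. For each partition $m$, the handlers \accept{}/\learndecision{} together with the recovery protocol of Algorithm~\ref{alg:unistore-recovery} form a Paxos instance deciding, for each strong transaction touching $m$, a vote and a prepare timestamp. I would import the standard Paxos invariant: once a tuple enters some replica's $\decidedstrong$ it is never contradicted, and every new leader reconstructs it at lines~\ref{line:newleaderack-decidedstrong}--\ref{line:newleaderack-preparedstrong} from a quorum (the $\cballot$/$\maxPrep$/$\maxDec$ bookkeeping guarantees no decided, nor still-pending higher-timestamp, transaction is lost). The second step is 2PC atomicity: the coordinator \certify{} commits iff every partition votes $\commit$ (line~\ref{line:certify-decision-commit}), sets $\commitvc[\strongentry]$ to the maximum prepare timestamp (line~\ref{line:certify-commitvc-strongentry}), and copies the per-DC entries verbatim from $\snapvc$. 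The third step defines the \emph{certification order} on committed strong transactions from their strong timestamps $\commitvc[\strongentry]$ --- these strictly increase per partition because of the wait at line~\ref{line:preparestrong-clock} --- and builds the legal permutation by listing transactions in this order; consistency with $\so$ follows because a client blocks on one strong decision before issuing the next, so session-ordered strong transactions receive increasing strong timestamps. The fourth, and central, step is the key lemma: the distributed certification check of Algorithm~\ref{alg:unistore-certification}, run against the leaders' $\preparedstrong\cup\decidedstrong$, coincides, for the transaction at each position of the permutation, with $\fdec$ applied to exactly the prefix of earlier-committed strong transactions. Here the abort-on-prepared-conflict branch (line~\ref{line:certification-conflict}) is what serialises concurrent certifications of conflicting transactions, so the later one in the order either already contains the earlier in its snapshot or aborts --- precisely matching~(\ref{eqn:gcf-decision}). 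Finally I would verify the two new conjuncts directly from the code: the first line of~(\ref{eqn:gcf-commitvc}) from the verbatim copy of $\snapvc$'s per-DC entries, the strict inequality on the $\strongentry$ entry from line~\ref{line:preparestrong-clock}, the conflict-dominance conjunct from the certification check forcing any conflicting decided transaction's commit vector below the snapshot; and~(\ref{eqn:gcf-lc}) from the Lamport-clock aggregation at lines~\ref{line:certify-lc} and~\ref{line:certification-lc}.

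For liveness I would argue along the usual eventual-synchrony lines. The failure detector $\Omega_m$ eventually elects a single correct leader per partition; eventual message-delay bounds then let recovery complete and let \certify{} obtain a quorum of \acceptack{} within its timeout, so the \retry{} handler eventually drives every submitted (and recovered-as-pending) transaction to a decision. It then remains to show that every committed transaction is delivered at every correct data center: the \deliver{} upcall at line~\ref{line:upcall-upon} fires for a decided transaction once no pending transaction and no decided transaction occupies an intermediate strong-timestamp slot, and the strong heartbeats of \heartbeatstrong{} keep filling such gaps, so no transaction is blocked forever; this yields~(\ref{eqn:tcs-liveness-delivery}) and also re-establishes the delivery-order and at-most-once requirements.

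I expect the main obstacle to be the key lemma of the fourth step together with the recovery invariants it relies on: collapsing the operational, per-partition, concurrently-executing certification --- which only ever inspects the leaders' local $\preparedstrong$ and $\decidedstrong$ and must survive arbitrary leader changes --- down to the single global function $\certfunc$ evaluated over a cleanly-defined prefix. The delicate points are proving that recovery never drops a decided transaction, nor a pending one that could still commit, so the certification check at a new leader sees the same conflict set as the old one, and that the abort-on-prepared rule, rather than the abort-on-decided rule, is exactly strong enough to prevent two conflicting transactions from both committing without being over-conservative. Everything specific to the LWW-register and Lamport-clock extension is, by contrast, a routine check against the newly highlighted lines.
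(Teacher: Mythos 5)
Your proposal is correct and takes essentially the same approach as the paper: the paper's entire proof of this theorem is the single remark that it ``is an adjustment of the ones in~\cite{discpaper, multicast-dsn19}'', which is exactly the delta-style argument you give (import the base 2PC-over-Paxos correctness, then discharge the new obligations). Your sketch merely makes explicit the adjustments the paper leaves implicit --- the copied per-data-center commit-vector entries and strict strong-timestamp increase in~(\ref{eqn:gcf-commitvc}), the Lamport-clock conjunct~(\ref{eqn:gcf-lc}), and the gap-free delivery argument for~(\ref{eqn:tcs-liveness-delivery}).
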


\section{The Proof of \unistore{} Correctness} \label{section:correctness-proof}

Consider an execution of \unistore{} with a history $H = (X, \client, \dc, \so)$.
We prove that $H$ satisfies \por{}
by constructing an abstract execution $A$ (Theorem~\ref{thm:unistore-por}).
We also establish the liveness guarantees of \unistore{} (Theorem~\ref{thm:termination}).

\subsection{Assumptions} \label{ss:assumptions}

We take the following assumptions about \unistore.

\begin{appassumption} \label{assumption:clock}
  For any replica $p^{m}_{d}$ in data center $d$,
  $\clockVar$ at $p^{m}_{d}$ is strictly increasing until $d$ (may) crash.
\end{appassumption}

\begin{appassumption} \label{assumption:message}
  Replicas are connected by reliable FIFO channels:
  messages are delivered in FIFO order,
  and messages between correct data centers
  are guaranteed to be eventually delivered.
\end{appassumption}

\begin{appassumption} \label{assumption:failure-model}
  We assume that in an execution of \unistore, any clients
  and up to $f$ data centers may crash and that $D > 2f$.
\end{appassumption}

\begin{appassumption} \label{assumption:fairness}
  We assume fairness of procedures of \unistore:
  In an execution, if a procedure is enabled infinitely often,
  then it will be executed infinitely often.
\end{appassumption}

\begin{appassumption} \label{assumption:client-well-formed}
  We consider only \emph{well-formed} executions,
  in which for each client:
  \begin{itemize}
    \item transactions are issued in sequence; and
    \item both \fence{} and \clattach{} events
      can be issued only outside of transactions.
  \end{itemize}
\end{appassumption}

\begin{appassumption} \label{assumption:complete-execution}
  We consider only executions where
  every causal commit event (i.e., \commitcausaltx) completes
  and every strong commit event (i.e., \commitstrongtx) that calls the TCS completes.
\end{appassumption}

We make the last assumption to simplify the technical development. The other
assumptions come from the system model.

\subsection{Notations} \label{ss:proof-notations}

We use $\cl$ to range over clients from a finite set $\clients$.
We also use the following notations
to refer to different types of variables and their values
(below are some typical examples).
\begin{itemize}
  \item $\snapVC^{m}_{d}$:
    The variable $\snapVC$ at replica $p^{m}_{d}$.
  \item $(\snapVC^{m}_{d})_{e}$:
    The value of variable $\snapVC^{m}_{d}$
    after the event $e$ is performed at replica $p^{m}_{d}$.
  \item $\snapVC^{m}_{d}(\realtime)$:
    The value of $\snapVC^{m}_{d}$ at some specific time $\realtime$.
  \item $\pastVC_{\cl}$: The variable $\pastVC$ at client $\cl$.
  \item $(\pastVC_{\cl})_{e}$:
    The value of variable $\pastVC_{\cl}$
    after the event $e$ is performed at client $\cl$.
  \item $\snapvc_{(\readkey, e)}$:
    The actual value of \emph{parameter} $\snapvc$ of handler \readkey{}
    for event $e$.
  \item $\commitvc_{(\commitcausal, e)}$:
    The value of the \emph{local variable} $\commitvc$
    in procedure \commitcausal{}
    after event $e$ is performed.
\end{itemize}
Besides, we use $\coord(t)$ to denote the coordinator partition
of transaction $\tvar$.

Each transaction is associated with a snapshot vector and a commit vector.

\begin{appdefinition}[Snapshot Vector] \label{def:snapshotvc}
  Let $\tvar \in \txs$ be a transaction.
  Let $d \triangleq \dc(\tvar)$ and $m \triangleq \coord(\tvar)$.
  We define its snapshot vector $\snapshotVC(\tvar)$ as
  \[
    \snapshotVC(\tvar) \triangleq (\snapVC^{m}_{d})_{\startoftx(\tvar)}[\tvar].
  \]
\end{appdefinition}

\begin{appdefinition}[Commit Vector] \label{def:commitvc}
  Let $\tvar \in \txs$ be a transaction.
  Let $d \triangleq \dc(\tvar)$ and $m \triangleq \coord(\tvar)$.
  We define its commit vector $\commitVC(\tvar)$ as follows.
  \begin{itemize}
    \item If $\tvar$ is a read-only causal transaction, then
      \[
        \commitVC(\tvar) \triangleq (\snapVC^{m}_{d})_{\commitoftx(\tvar)}[\tvar].
      \]
    \item If $\tvar$ is an update causal transaction, then
      \[
        \commitVC(\tvar) \triangleq \commitvc_{(\commitcausal, \commitoftx(\tvar))}.
      \]
    \item If $\tvar$ is a committed strong transaction, then
      \[
        \commitVC(\tvar) \triangleq vc_{(\commitstrong, \commitoftx(\tvar))}.
      \]
  \end{itemize}
\end{appdefinition}

\begin{applemma} \label{lemma:snapshotvc-commitvc}
  \[
    \forall \tvar \in \txs.\; \commitVC(\tvar) \ge \snapshotVC(\tvar).
  \]
\end{applemma}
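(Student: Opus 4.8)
The plan is to prove the inequality componentwise by a case analysis on the type of $\tvar$, following the three cases of Definition~\ref{def:commitvc}. The unifying observation, which I would establish first as an invariant, is that the per-transaction slot $\snapVC[\tvar]$ at the coordinator $p^m_d$ is written \emph{only} during \starttx{} (lines~\code{\ref{alg:unistore-coord}}{\ref{line:start-snapvc}}--\code{\ref{alg:unistore-coord}}{\ref{line:start-snapvc-strong}}) and is never subsequently modified by any handler. Consequently $(\snapVC^m_d)_{\startoftx(\tvar)}[\tvar] = (\snapVC^m_d)_{\commitoftx(\tvar)}[\tvar]$, and the value of $\snapVC[\tvar]$ read at commit time in \commitcausal{} and \commitstrong{} is exactly $\snapshotVC(\tvar)$ of Definition~\ref{def:snapshotvc}. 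This immutability claim is discharged by inspection of the pseudocode: no handler other than \starttx{} assigns to $\snapVC$ indexed by a transaction identifier.

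Granting this, the two causal cases are short. For a read-only causal transaction, $\commitVC(\tvar) = (\snapVC^m_d)_{\commitoftx(\tvar)}[\tvar]$, which by the invariant equals $\snapshotVC(\tvar)$, so the inequality holds with equality. For an update causal transaction, $\commitVC(\tvar)$ is the local variable $\commitvc$ computed in \commitcausal{}: it is initialized to $\snapVC[\tvar] = \snapshotVC(\tvar)$ (line~\code{\ref{alg:unistore-coord}}{\ref{line:commitcausal-commitvc}}), and thereafter only its $d$-th entry is updated, and only by a $\max$ with prepare times (line~\code{\ref{alg:unistore-coord}}{\ref{line:commitcausal-commitvc-d}}). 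Hence every entry of $\commitvc$ is at least the corresponding entry of $\snapshotVC(\tvar)$, giving $\commitVC(\tvar) \ge \snapshotVC(\tvar)$ componentwise.

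For a committed strong transaction, $\commitVC(\tvar)$ is the vector returned by the certification service invoked in \commitstrong{} with snapshot argument $\snapVC[\tvar] = \snapshotVC(\tvar)$. By TCS correctness (Theorem~\ref{thm:tcs-correctness}), this commit vector is computed according to the certification function, whose commit-vector component $\fvec(\txsincertify, \tvar)$ satisfies~(\ref{eqn:gcf-commitvc}): it agrees with $\snapshotVC(\tvar)$ on every per--data-center entry $i \in \D$ and strictly exceeds it on the $\strongentry$ entry. Since $>$ implies $\ge$, we again obtain $\commitVC(\tvar) \ge \snapshotVC(\tvar)$ componentwise, completing the case analysis. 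The main obstacle I anticipate is not any single inequality but the immutability invariant of the first paragraph: it must be stated carefully enough to justify that the snapshot value appearing in all three commit-time computations is the very same $\snapshotVC(\tvar)$ fixed at \starttx{}, and for the strong case one must also confirm that the argument handed to \certify{} is precisely this value so that the TCS specification~(\ref{eqn:gcf-commitvc}) applies verbatim.
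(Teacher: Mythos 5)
Your proof is correct and follows essentially the same route as the paper's: the same three-way case analysis driven by Definition~\ref{def:commitvc}, citing lines~\code{\ref{alg:unistore-coord}}{\ref{line:commitcausal-commitvc}} and \code{\ref{alg:unistore-coord}}{\ref{line:commitcausal-commitvc-d}} for update causal transactions and the TCS commit-vector property~(\ref{eqn:gcf-commitvc}) for strong ones. The immutability invariant you state explicitly is handled implicitly in the paper (via Assumption~\ref{assumption:client-well-formed} in the read-only case), so making it explicit is a harmless refinement rather than a departure.
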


\begin{proof} \label{proof:snapshotvc-commitvc}
  We perform a case analysis according to the type of $\tvar$.
  \begin{itemize}
    \item $\textsc{Case I}$: $\tvar$ is a read-only causal transaction.
      By Definition~\ref{def:snapshotvc} of $\snapshotVC(\tvar)$,
      Definition~\ref{def:commitvc} of $\commitVC(\tvar)$,
      and Assumption~\ref{assumption:client-well-formed},
      \[
        \commitVC(\tvar) = \snapshotVC(\tvar).
      \]
    \item $\textsc{Case II}$: $\tvar$ is an update causal transaction.
      By lines~\code{\ref{alg:unistore-coord}}{\ref{line:commitcausal-commitvc}}
      and \code{\ref{alg:unistore-coord}}{\ref{line:commitcausal-commitvc-d}},
      \[
        \commitVC(\tvar) \ge \snapshotVC(\tvar).
      \]
    \item $\textsc{Case III}$: $\tvar$ is a strong transaction.
      By line~\code{\ref{alg:unistore-strong-commit}}{\ref{line:commitstrong-call-certify}}
      and (\ref{eqn:gcf-commitvc}),
      \[
        \commitVC(\tvar) \ge \snapshotVC(\tvar).
      \]
  \end{itemize}
\end{proof}

For client $\cl$, we use $\cldc(\cl)$ to denote the data center
to which $\cl$ is currently attached.
We also use $\txs|_{\cl}$ to denote the set of transactions issued by $\cl$.
Formally,
\[
  \txs|_{\cl} \triangleq \set{\tvar \in \txs \mid \client(\tvar) = cl}.
\]

For a transaction $\tvar$ and a partition $m$,
we use $\ws(\tvar)[m]$ to denote the subset of $\ws(\tvar)$
restricted to partition $m$.
Formally,
\[
  \ws(\tvar)[m] \triangleq \set{\langle k, v \rangle \in \ws(\tvar)
  \mid \partitionofproc(k) = m}.
\]

For notational convenience, we also define
\begin{align*}
  \log(\tvar) \triangleq \set{\langle k, v, \commitVC(\tvar), \lclock(\tvar)
    \rangle \mid \langle k, v \rangle \in \ws(\tvar)},
\end{align*}
and
\begin{align*}
  \log(\tvar)[m] \triangleq \set{\langle k, v, \commitVC(\tvar), \lclock(\tvar)
    \rangle \mid \langle k, v \rangle \in \ws(\tvar)[m]}.
\end{align*}
For a key $k \in \Key$ and a transaction $\tvar \in \txs_{k}$,
let $\log(\tvar)[k]$ be the unique tuple
\[
  \langle k, v, \commitVC(\tvar), \lclock(\tvar) \rangle
\]
in $\log(\tvar)$.

\subsection{Metadata for Causal Transactions}
\label{ss:metadata-causal}

A causal transaction is \emph{committed} when \commitcausal{} for it returns.
A causal transaction is \emph{committed at replica} $p^{m}_{d}$
when \commit{} for it at $p^{m}_{d}$ returns.
\subsubsection{Properties of $\knownVC$}
\label{sss:knownvc}

\begin{applemma} \label{lemma:knownvc-d-nondecreasing}
  For any replica $p^{m}_{d}$ in data center $d$,
  $\knownVC^{m}_{d}[d]$ is non-decreasing.
\end{applemma}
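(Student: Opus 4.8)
The plan is to prove the statement by first \emph{localizing} all writes to $\knownVC^{m}_{d}[d]$ at replica $p^{m}_{d}$ and then showing each such write is non-decreasing. Scanning Algorithms~\ref{alg:unistore-replication} and~\ref{alg:unistore-clock}, the entry $\knownVC[d]$ is assigned only inside \propagate, at line~\code{\ref{alg:unistore-replication}}{\ref{line:propagate-knownvc-clock}} (when $\preparedcausal = \emptyset$) and at line~\code{\ref{alg:unistore-replication}}{\ref{line:propagate-knownvc-ts}} (otherwise). The handlers \replicate{} and \heartbeat{} write $\knownVC[i]$, where $i$ is the \emph{origin} data center carried in the message, and \deliverupdates{} touches only the $\strongentry$ entry (line~\code{\ref{alg:unistore-strong-commit}}{\ref{line:deliverupdates-knownvc-strongentry}}). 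I would argue that $p^{m}_{d}$ never receives a \replicate{} or \heartbeat{} message with origin $d$: \propagate{} sends such messages only to $p^{m}_{i}$ with $i \in \D \setminus \set{d}$, and \forward{} forwards transactions originating at some $j \neq d$ to a data center $i \notin \set{d, j}$ (line~\ref{line:function-forward}). Hence no origin-$d$ message is ever routed back to $d$, neither handler writes $\knownVC^{m}_{d}[d]$, and it suffices to analyze \propagate.

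The crux is an auxiliary invariant, maintained at each replica throughout the execution (including the initial state, where $\preparedcausal = \emptyset$ makes the second conjunct vacuous):
\[
  \knownVC^{m}_{d}[d] \le \clockVar^{m}_{d}
  \ \wedge\
  \bigl(\forall \langle \_, \_, \ts \rangle \in \preparedcausal^{m}_{d}.\ \ts > \knownVC^{m}_{d}[d]\bigr).
\]
I would prove this by induction over the events of the execution. Clock ticks only increase the right-hand side of the first conjunct and leave prepared timestamps untouched. A \prepare{} event (line~\ref{line:function-preparecausal}) inserts a triple whose timestamp equals the current clock (line~\code{\ref{alg:unistore-replica}}{\ref{line:preparecausal-ts}}); since this read occurs strictly after the last write to $\knownVC[d]$ and the clock is strictly increasing (Assumption~\ref{assumption:clock}), the new timestamp strictly exceeds $\knownVC[d]$, preserving the second conjunct. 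A \commit{} event only removes elements from $\preparedcausal$ (line~\code{\ref{alg:unistore-replica}}{\ref{line:commit-preparedcausal}}), which cannot falsify a universally quantified conjunct. Finally, \propagate{} re-establishes both conjuncts: in the empty branch $\knownVC[d]$ becomes exactly $\clockVar$, and in the non-empty branch it becomes $\min\set{\ts} - 1$, which is strictly below every prepared timestamp and, since each such timestamp is a past clock value, also below the current clock.

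With the invariant in hand, monotonicity follows by inspecting the two \propagate{} branches. Let $k_{\mathrm{old}}$ denote the value of $\knownVC^{m}_{d}[d]$ immediately before an update. If $\preparedcausal = \emptyset$, the new value is $\clockVar \ge k_{\mathrm{old}}$ by the first conjunct. If $\preparedcausal \neq \emptyset$, the second conjunct gives $\ts > k_{\mathrm{old}}$, hence $\ts \ge k_{\mathrm{old}} + 1$ for integer timestamps, for every prepared $\ts$; therefore the new value $\min\set{\ts} - 1 \ge k_{\mathrm{old}}$. Either way the entry does not decrease, and since these are its only writes, $\knownVC^{m}_{d}[d]$ is non-decreasing.

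I expect the main obstacle to be establishing the invariant's second conjunct, specifically the \emph{strict} inequality $\ts > \knownVC[d]$ for freshly prepared transactions, which is exactly what Case~B of the final argument consumes. This is where Assumption~\ref{assumption:clock} is indispensable: one must argue that the clock value read at a \prepare{} strictly exceeds whatever clock value was last stored into $\knownVC[d]$ by the empty branch of \propagate, i.e.\ that distinct events occur at strictly increasing clock readings rather than under a merely weakly monotone clock. Pinning down this temporal ordering precisely, rather than the routine branch-by-branch case analysis, is the delicate part of the proof.
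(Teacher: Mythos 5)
Your proof is correct, but it follows a genuinely different route from the paper's. The paper compares two arbitrary times $\realtime_{1} < \realtime_{2}$ and does a four-way case split on which assignment --- the clock branch, line~\code{\ref{alg:unistore-replication}}{\ref{line:propagate-knownvc-clock}}, or the min-of-prepared branch, line~\code{\ref{alg:unistore-replication}}{\ref{line:propagate-knownvc-ts}} --- produced each value. Its one non-trivial case is exactly the one your invariant makes vanish: old value from the min branch, new value from the clock branch. There the paper observes that the minimal prepared transaction $\tvar_{1}$ must have committed in between, and chains line~\code{\ref{alg:unistore-coord}}{\ref{line:commitcausal-commitvc-d}} with the wait at line~\code{\ref{alg:unistore-replica}}{\ref{line:commit-wait-clock}} to get $\knownVC^{m}_{d}(\realtime_{1})[d] < \commitVC(\tvar_{1})[d] \le \clockVar^{m}_{d}(\realtime_{2})$. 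You never need commit vectors: your first conjunct $\knownVC^{m}_{d}[d] \le \clockVar^{m}_{d}$ is precisely the paper's Lemma~\ref{lemma:knownvc-d-clock}, which the paper proves only \emph{after} (and independently of) this lemma, and your second conjunct is essentially what the paper re-derives later inside Lemma~\ref{lemma:knownvc-commitvc-d}. So your proof reorders the lemma dependencies: establish the clock bound and the prepared-timestamp bound as one inductive invariant, after which monotonicity of both branches is a local two-line check. What you pay is having to state and preserve the invariant across every handler, including the localization argument about message origins in \propagate{} and \forward{} (a fact the paper simply asserts); what you gain is eliminating the temporal commit-wait reasoning entirely. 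Both proofs rest on the same atoms --- prepare timestamps are clock readings, and distinct events see strictly larger clock values (Assumption~\ref{assumption:clock}) --- and both use the same discreteness step ($\ts > x$ implies $\ts - 1 \ge x$), which the paper applies silently in its Cases II and IV; your closing remark correctly identifies the strictness of the clock assumption as the load-bearing point.
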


\begin{proof} \label{proof:knownvc-d-nondecreasing}
  Consider two points of time $\realtime_{1}$ and $\realtime_{2}$ such that $\realtime_{1} < \realtime_{2}$.
  We need to show that
  \[
    \knownVC^{m}_{d}(\realtime_{1})[d] \le \knownVC^{m}_{d}(\realtime_{2})[d].
  \]

  Note that $\knownVC^{m}_{d}[d]$ is updated only
  at lines~\code{\ref{alg:unistore-replication}}{\ref{line:propagate-knownvc-clock}}
  or \code{\ref{alg:unistore-replication}}{\ref{line:propagate-knownvc-ts}}.
  We distinguish between the following four cases.
  \begin{itemize}
    \item $\textsc{Case I}$:
      Both $\knownVC^{m}_{d}(\realtime_{1})[d]$ and $\knownVC^{m}_{d}(\realtime_{2})[d]$ are set
      at line~\code{\ref{alg:unistore-replication}}{\ref{line:propagate-knownvc-clock}}.
      By line~\code{\ref{alg:unistore-replication}}{\ref{line:propagate-knownvc-clock}}
      and Assumption~\ref{assumption:clock},
      \begin{align*}
        \knownVC^{m}_{d}(\realtime_{1})[d] &= \clockVar^{m}_{d}(\realtime_{1}) \\
          &< \clockVar^{m}_{d}(\realtime_{2}) \\
          &= \knownVC^{m}_{d}(\realtime_{2})[d].
      \end{align*}
    \item $\textsc{Case II}$:
      $\knownVC^{m}_{d}(\realtime_{1})[d]$ is set
      at line~\code{\ref{alg:unistore-replication}}{\ref{line:propagate-knownvc-clock}}
      and $\knownVC^{m}_{d}(\realtime_{2})[d]$ is set
      at line~\code{\ref{alg:unistore-replication}}{\ref{line:propagate-knownvc-ts}}.
      By line~\code{\ref{alg:unistore-replication}}{\ref{line:propagate-knownvc-clock}},
      \[
        \knownVC^{m}_{d}(\realtime_{1})[d] = \clockVar^{m}_{d}(\realtime_{1}).
      \]
      By the fact that $\preparedcausal^{m}_{d}(\realtime_{1}) = \emptyset$,
      $\realtime_{2} > \realtime_{1}$,
      and line~\code{\ref{alg:unistore-replica}}{\ref{line:preparecausal-ts}},
      \begin{align*}
        \forall \langle \_, \_, &\tsvar \rangle \in\; \preparedcausal^{m}_{d}(\realtime_{2}).\; \\
          &\tsvar > \clockVar^{m}_{d}(\realtime_{1}) = \knownVC^{m}_{d}(\realtime_{1})[d].
      \end{align*}
      Therefore, by line~\code{\ref{alg:unistore-replication}}{\ref{line:propagate-knownvc-ts}},
      \begin{align*}
        &\knownVC^{m}_{d}(\realtime_{1})[d] \\
        &\quad \le \min\set{\tsvar \mid \langle \_, \_, \tsvar \rangle \in \preparedcausal^{m}_{d}(\realtime_{2})} - 1 \\
        &\quad = \knownVC^{m}_{d}(\realtime_{2})[d].
      \end{align*}
    \item $\textsc{Case III}$:
      $\knownVC^{m}_{d}(\realtime_{1})[d]$ is set
      at line~\code{\ref{alg:unistore-replication}}{\ref{line:propagate-knownvc-ts}}
      and $\knownVC^{m}_{d}(\realtime_{2})[d]$ is set
      at line~\code{\ref{alg:unistore-replication}}{\ref{line:propagate-knownvc-clock}}.
      Let $\tvar_{1}$ be the transaction in $\preparedcausal^{m}_{d}(\realtime_{1})$
      that has the minimum $\tsvar$. Formally,
      \begin{align*}
        \tvar_{1} \triangleq \argmin\limits_{\tvar}\set{\tsvar \mid \langle \tidselector(\tvar), \_, \tsvar \rangle
          \in \preparedcausal^{m}_{d}(\realtime_{1})}.
      \end{align*}
      By lines~\code{\ref{alg:unistore-replication}}{\ref{line:propagate-knownvc-ts}},
      \code{\ref{alg:unistore-coord}}{\ref{line:commitcausal-commitvc-d}},
      \code{\ref{alg:unistore-replica}}{\ref{line:commit-wait-clock}},
      and \code{\ref{alg:unistore-replication}}{\ref{line:propagate-knownvc-clock}},
      \begin{align*}
        \knownVC^{m}_{d}(\realtime_{1})[d] &< \commitVC(\tvar_{1})[d] \\
          &\le \clockVar^{m}_{d}(\realtime_{2}) \\
          &= \knownVC^{m}_{d}(\realtime_{2})[d].
      \end{align*}
    \item $\textsc{Case IV}$:
      Both $\knownVC^{m}_{d}(\realtime_{1})[d]$ and $\knownVC^{m}_{d}(\realtime_{2})[d]$ are set
      at line~\code{\ref{alg:unistore-replication}}{\ref{line:propagate-knownvc-ts}}.
      By lines~\code{\ref{alg:unistore-replication}}{\ref{line:propagate-knownvc-ts}}
      and \code{\ref{alg:unistore-replica}}{\ref{line:preparecausal-ts}},
      \begin{align*}
        &\knownVC^{m}_{d}(\realtime_{1})[d] \\
          &\quad = \min\set{\tsvar \mid \langle \_, \_, \tsvar \rangle \in \preparedcausal^{m}_{d}(\realtime_{1})} - 1 \\
          &\quad \le \min\set{\tsvar \mid \langle \_, \_, \tsvar \rangle \in \preparedcausal^{m}_{d}(\realtime_{2})} - 1 \\
          &\quad = \knownVC^{m}_{d}(\realtime_{2})[d].
      \end{align*}
  \end{itemize}
\end{proof}

\begin{applemma} \label{lemma:knownvc-i-nondecreasing}
  For $i \in \D \setminus \set{d}$,
  $\knownVC^{m}_{d}[i]$ at any replica $p^{m}_{d}$ in data center $d$
  is non-decreasing.
\end{applemma}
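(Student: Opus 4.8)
The plan is to argue directly by enumerating every program point that can modify a remote entry $\knownVC^{m}_{d}[i]$ for $i \in \D \setminus \set{d}$, and checking that each such write only ever increases the entry. First I would scan Algorithms~\ref{alg:unistore-replication} and~\ref{alg:unistore-strong-commit} for assignments to $\knownVC$. The assignment at line~\code{\ref{alg:unistore-replication}}{\ref{line:propagate-knownvc-clock}} and line~\code{\ref{alg:unistore-replication}}{\ref{line:propagate-knownvc-ts}} in \propagate{} touches only the local entry $\knownVC[d]$ (covered by Lemma~\ref{lemma:knownvc-d-nondecreasing}), and the assignment at line~\code{\ref{alg:unistore-strong-commit}}{\ref{line:deliverupdates-knownvc-strongentry}} touches only the strong entry $\knownVC[\strongentry]$; neither is relevant here. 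Thus the only two sites that write a remote entry are the \replicate{} handler (line~\code{\ref{alg:unistore-replication}}{\ref{line:replicate-knownvc}}) and the \heartbeat{} handler (line~\code{\ref{alg:unistore-replication}}{\ref{line:heartbeat-knownvc}}).

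Next I would observe that both of these writes are guarded so as to be strictly increasing at the moment they execute. The \heartbeat{} handler sets $\knownVC[i] \gets \tsvar$ only under its precondition $\tsvar > \knownVC[i]$ (line~\code{\ref{alg:unistore-replication}}{\ref{line:heartbeat-precondition}}). The \replicate{} handler sets $\knownVC[i] \gets \commitvc[i]$ only inside the conditional guarded by $\commitvc[i] > \knownVC[i]$ (line~\code{\ref{alg:unistore-replication}}{\ref{line:replicate-precondition}}), so each individual update strictly raises the entry. I would additionally note that within a single \replicate{} invocation the loop processes $\txsvar$ in increasing $\commitvc[i]$ order (line~\code{\ref{alg:unistore-replication}}{\ref{line:replicate-increasing-order}}), so the successive assignments inside one invocation are themselves monotone and consistent with the guard. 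Since handlers execute atomically, combining these two facts gives that $\knownVC^{m}_{d}[i]$ never decreases across the whole execution, as required.

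The main thing to get right—rather than a genuine obstacle—is completeness of the enumeration: I must be sure no other handler writes a remote entry of $\knownVC$, including the strong-transaction code highlighted in red, which I rule out by noting it only ever touches $\knownVC[\strongentry]$. Compared with Lemma~\ref{lemma:knownvc-d-nondecreasing}, this case is substantially easier, because the remote entries carry explicit preconditions that witness monotonicity, whereas the local entry required the clock-versus-$\preparedcausal$ case analysis; here no such reasoning is needed and the proof reduces to a short inspection of the two guarded assignments.
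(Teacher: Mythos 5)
Your proof is correct and follows essentially the same route as the paper's: identify that only the \replicate{} handler (line~\code{\ref{alg:unistore-replication}}{\ref{line:replicate-knownvc}}) and the \heartbeat{} handler (line~\code{\ref{alg:unistore-replication}}{\ref{line:heartbeat-knownvc}}) write a remote entry, and observe that their guards (lines~\code{\ref{alg:unistore-replication}}{\ref{line:replicate-precondition}} and \code{\ref{alg:unistore-replication}}{\ref{line:heartbeat-precondition}}) make each write strictly increasing. The extra care you take in ruling out the strong-transaction code and noting the in-order processing within a single \replicate{} invocation is sound but not needed beyond what the paper's two-line argument already establishes.
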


\begin{proof} \label{proof:knownvc-i-nondecreasing}
  Note that $\knownVC^{m}_{d}[i]$ ($i \in \D \setminus \set{d}$)
  can be updated only
  at lines~\code{\ref{alg:unistore-replication}}{\ref{line:replicate-knownvc}}
  and \code{\ref{alg:unistore-replication}}{\ref{line:heartbeat-knownvc}}.
  Therefore, this lemma holds due to
  lines~\code{\ref{alg:unistore-replication}}{\ref{line:replicate-precondition}}
  and \code{\ref{alg:unistore-replication}}{\ref{line:heartbeat-precondition}}.
\end{proof}

\begin{applemma} \label{lemma:knownvc-nondecreasing}
  For $i \in \D$, $\knownVC^{m}_{d}[i]$ at any replica $p^{m}_{d}$
  in data center $d$ is non-decreasing.
\end{applemma}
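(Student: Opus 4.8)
The plan is to prove this by a trivial case analysis on whether the index $i$ coincides with the local data center $d$, reducing the claim to the two lemmas just established. Concretely, I would fix an arbitrary replica $p^{m}_{d}$, an arbitrary index $i \in \D$, and two points of time $\realtime_{1} < \realtime_{2}$, and set out to show $\knownVC^{m}_{d}(\realtime_{1})[i] \le \knownVC^{m}_{d}(\realtime_{2})[i]$.

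For the case $i = d$, the entry $\knownVC^{m}_{d}[d]$ is precisely the quantity treated in Lemma~\ref{lemma:knownvc-d-nondecreasing}, so monotonicity is immediate from that lemma (whose own proof did the real work, tracking the four ways $\knownVC^{m}_{d}[d]$ can be reset at lines~\code{\ref{alg:unistore-replication}}{\ref{line:propagate-knownvc-clock}} and \code{\ref{alg:unistore-replication}}{\ref{line:propagate-knownvc-ts}} and using Assumption~\ref{assumption:clock}). For the case $i \in \D \setminus \set{d}$, the entry is governed by Lemma~\ref{lemma:knownvc-i-nondecreasing}, which already observed that $\knownVC^{m}_{d}[i]$ changes only at lines~\code{\ref{alg:unistore-replication}}{\ref{line:replicate-knownvc}} and \code{\ref{alg:unistore-replication}}{\ref{line:heartbeat-knownvc}}, both guarded by strict-increase preconditions.

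Since $\D = \set{d} \uplus (\D \setminus \set{d})$, these two cases are exhaustive, and their union yields non-decreasingness of $\knownVC^{m}_{d}[i]$ for every $i \in \D$. There is essentially no obstacle here: the substantive reasoning was carried out in the two preceding lemmas, which isolated the update sites for each kind of entry and verified the clock and precondition guarantees; this statement is merely their combination. The only minor point to keep in mind is that both earlier lemmas are already stated uniformly over \emph{all} replicas $p^{m}_{d}$ and over their respective index ranges, so the conclusion transfers without any additional quantifier manipulation.
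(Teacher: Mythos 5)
Your proposal is correct and matches the paper's own proof, which likewise disposes of this lemma by combining Lemma~\ref{lemma:knownvc-d-nondecreasing} (for $i = d$) with Lemma~\ref{lemma:knownvc-i-nondecreasing} (for $i \in \D \setminus \set{d}$). The case split you describe is exactly the intended decomposition; no further work is needed.
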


\begin{proof} \label{proof:knownvc-nondecreasing}
  By Lemmas~\ref{lemma:knownvc-d-nondecreasing}
  and \ref{lemma:knownvc-i-nondecreasing}.
\end{proof}

\begin{applemma} \label{lemma:knownvc-d-clock}
  For any replica $p^{m}_{d}$ in data center $d$,
  \[
    \knownVC^{m}_{d}[d] \le \clockVar^{m}_{d}.
  \]
\end{applemma}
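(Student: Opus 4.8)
The statement to prove is Lemma~\ref{lemma:knownvc-d-clock}:

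\[
  \knownVC^{m}_{d}[d] \le \clockVar^{m}_{d}.
\]

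\textbf{Plan.} The proof should proceed by identifying every line of the pseudocode where $\knownVC^{m}_{d}[d]$ is assigned, and showing the invariant $\knownVC^{m}_{d}[d] \le \clockVar^{m}_{d}$ is preserved by each. From the proof of Lemma~\ref{lemma:knownvc-d-nondecreasing} we already know that the \emph{local} entry $\knownVC^{m}_{d}[d]$ is modified only at the two branches of \propagate, namely line~\code{\ref{alg:unistore-replication}}{\ref{line:propagate-knownvc-clock}} (the $\preparedcausal = \emptyset$ case, where $\knownVC^{m}_{d}[d] \gets \clockVar$) and line~\code{\ref{alg:unistore-replication}}{\ref{line:propagate-knownvc-ts}} (the nonempty case, where $\knownVC^{m}_{d}[d] \gets \min\set{\tsvar \mid \langle \_, \_, \tsvar \rangle \in \preparedcausal} - 1$). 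So the proof is a two-case analysis on which branch performed the most recent assignment, using Assumption~\ref{assumption:clock} that $\clockVar^{m}_{d}$ is nondecreasing over time.

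\textbf{First case.} If $\knownVC^{m}_{d}[d]$ was last set at line~\code{\ref{alg:unistore-replication}}{\ref{line:propagate-knownvc-clock}}, then at the moment of assignment $\knownVC^{m}_{d}[d]$ equals $\clockVar^{m}_{d}$ at that time. Since $\clockVar^{m}_{d}$ is nondecreasing (Assumption~\ref{assumption:clock}), its current value is at least that old value, so $\knownVC^{m}_{d}[d] \le \clockVar^{m}_{d}$ holds at all later times until the next reassignment.

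\textbf{Second case.} If $\knownVC^{m}_{d}[d]$ was last set at line~\code{\ref{alg:unistore-replication}}{\ref{line:propagate-knownvc-ts}}, then it equals $\min\set{\tsvar \mid \langle \_, \_, \tsvar \rangle \in \preparedcausal^{m}_{d}} - 1$ at assignment time. Here the key observation is that each prepare time $\tsvar$ in $\preparedcausal^{m}_{d}$ was itself read from the local clock at line~\code{\ref{alg:unistore-replica}}{\ref{line:preparecausal-ts}}, so $\tsvar \le \clockVar^{m}_{d}$ at the time the transaction was prepared, and hence (again by the monotonicity of $\clockVar$) $\tsvar \le \clockVar^{m}_{d}$ at the assignment time. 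Therefore $\min\set{\tsvar \mid \dots} - 1 < \clockVar^{m}_{d}$ at assignment, and monotonicity of $\clockVar$ extends this to all later times. The main obstacle is purely bookkeeping: being careful that the prepare times stored in $\preparedcausal$ never exceed the clock value at the moment \propagate{} reads them. This follows because prepare times are set from $\clockVar$ at \code{\ref{alg:unistore-replica}}{\ref{line:preparecausal-ts}} and \code{\ref{alg:unistore-replication}}{\ref{line:propagate-knownvc-ts}} executes in the same replica, so the clock at the \propagate{} step is no smaller than the clock at any prior \prepare{} step; thus the $-1$ makes the bound strict and the two cases together establish the lemma.
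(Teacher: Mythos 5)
Your proposal is correct and follows essentially the same route as the paper's own proof: a case analysis on whether the last assignment to $\knownVC^{m}_{d}[d]$ occurred at line~\code{\ref{alg:unistore-replication}}{\ref{line:propagate-knownvc-clock}} or line~\code{\ref{alg:unistore-replication}}{\ref{line:propagate-knownvc-ts}}, using Assumption~\ref{assumption:clock} for the first case and the fact that prepare times originate from $\clockVar$ at line~\code{\ref{alg:unistore-replica}}{\ref{line:preparecausal-ts}} for the second. If anything, your write-up is slightly more careful than the paper's, since you explicitly invoke clock monotonicity to carry the bound from the assignment moment to all later times.
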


\begin{proof} \label{proof:knownvc-d-clock}
  Note that $\knownVC^{m}_{d}[d]$ is updated only
  at lines~\code{\ref{alg:unistore-replication}}{\ref{line:propagate-knownvc-clock}}
  or \code{\ref{alg:unistore-replication}}{\ref{line:propagate-knownvc-ts}}.
  \begin{itemize}
    \item $\textsc{Case I}$: $\knownVC^{m}_{d}[d]$ is updated
      at line~\code{\ref{alg:unistore-replication}}{\ref{line:propagate-knownvc-clock}}.
      By Assumption~\ref{assumption:clock},
      \[
        \knownVC^{m}_{d}[d] \le \clockVar^{m}_{d}.
      \]
    \item $\textsc{Case II}$: $\knownVC^{m}_{d}[d]$ is updated
      at line~\code{\ref{alg:unistore-replication}}{\ref{line:propagate-knownvc-ts}}.
      By line~\code{\ref{alg:unistore-replica}}{\ref{line:preparecausal-ts}},
      immediately after this update,
      \[
        \knownVC^{m}_{d}[d] < \clockVar^{m}_{d}.
      \]
  \end{itemize}
\end{proof}

\begin{applemma} \label{lemma:knownvc-commitvc-d}
  Let $p^{m}_{d}$ be a replica in data center $d$.
  Consider $\knownVC^{m}_{d}(\realtime)[d]$ at time $\realtime$
  and transaction $\tvar \in \causaltxs$ committed at $p^{m}_{d}$
  after time $\realtime$. Then
  \[
    \commitVC(\tvar)[d] > \knownVC^{m}_{d}(\realtime)[d].
  \]
\end{applemma}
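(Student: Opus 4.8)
The plan is to reduce the claim about $\commitVC(\tvar)[d]$ to a claim about the \emph{prepare time} of $\tvar$ at $p^m_d$, and then to compare that prepare time with $\knownVC^m_d(\realtime)[d]$ by examining which of the two branches of \propagate{} last wrote $\knownVC^m_d[d]$ before $\realtime$. Since $\tvar$ is committed \emph{at} $p^m_d$, it is an update transaction whose coordinator partition set $L$ contains $m$, so $p^m_d$ executes the \prepare{} handler for $\tvar$ at some time $\realtime_p$, recording a prepare time $\tsvar_m = \clockVar^m_d(\realtime_p)$ (line~\code{\ref{alg:unistore-replica}}{\ref{line:preparecausal-ts}}). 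Because the coordinator folds this value into the local entry of the commit vector via $\commitvc[d] \gets \max\{\commitvc[d],\tsvar\}$ over all $l \in L$ (lines~\code{\ref{alg:unistore-coord}}{\ref{line:commitcausal-commitvc}} and \code{\ref{alg:unistore-coord}}{\ref{line:commitcausal-commitvc-d}}), we get $\commitVC(\tvar)[d] \ge \tsvar_m$. Hence it suffices to prove $\tsvar_m > \knownVC^m_d(\realtime)[d]$.

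Next I would fix $\realtime_0 \le \realtime$ to be the time of the most recent write to $\knownVC^m_d[d]$ at or before $\realtime$ (if there is none, $\knownVC^m_d(\realtime)[d]$ is its initial value $0 < \tsvar_m$ and we are done); such a write can only occur in \propagate{}, so $\knownVC^m_d(\realtime)[d] = \knownVC^m_d(\realtime_0)[d]$. I then split on the order of $\realtime_p$ and $\realtime_0$. If $\realtime_p > \realtime_0$, then by Lemma~\ref{lemma:knownvc-d-clock} and the strict monotonicity of the clock (Assumption~\ref{assumption:clock}),
\[
  \knownVC^m_d(\realtime_0)[d] \le \clockVar^m_d(\realtime_0) < \clockVar^m_d(\realtime_p) = \tsvar_m,
\]
which is exactly what we want.

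The remaining, and genuinely load-bearing, case is $\realtime_p \le \realtime_0$. Here I would argue that $\tvar \in \preparedcausal^m_d(\realtime_0)$: it was inserted into $\preparedcausal$ at $\realtime_p \le \realtime_0$ and is only removed when its \commit{} handler runs (line~\code{\ref{alg:unistore-replica}}{\ref{line:commit-preparedcausal}}), which by hypothesis happens after $\realtime \ge \realtime_0$. Consequently $\preparedcausal^m_d(\realtime_0) \neq \emptyset$, so the write at $\realtime_0$ took the else-branch, line~\code{\ref{alg:unistore-replication}}{\ref{line:propagate-knownvc-ts}}, giving $\knownVC^m_d(\realtime_0)[d] = \min\{\tsvar \mid \langle \_,\_,\tsvar\rangle \in \preparedcausal^m_d(\realtime_0)\} - 1$. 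Since $\langle \tidselector(\tvar), \_, \tsvar_m\rangle$ is a member of this set, the minimum is $\le \tsvar_m$, whence $\knownVC^m_d(\realtime_0)[d] \le \tsvar_m - 1 < \tsvar_m$. Combining the two cases yields $\tsvar_m > \knownVC^m_d(\realtime)[d]$ and therefore $\commitVC(\tvar)[d] > \knownVC^m_d(\realtime)[d]$.

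I expect the main obstacle to be the bookkeeping in this last case: one must be careful that $\tvar$ has genuinely not yet been removed from $\preparedcausal$ at $\realtime_0$ (which is why both the hypothesis ``committed after $\realtime$'' and the relation $\realtime_0 \le \realtime$ are needed), and that the off-by-one ``$-1$'' in line~\code{\ref{alg:unistore-replication}}{\ref{line:propagate-knownvc-ts}} is precisely what converts the non-strict bound $\min \le \tsvar_m$ into the strict inequality the lemma demands.
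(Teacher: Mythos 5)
Your proof is correct and takes essentially the same route as the paper's: both pin down the last write to $\knownVC^{m}_{d}[d]$ before $\realtime$, case-split on whether $\tvar$ was already in $\preparedcausal^{m}_{d}$ at that moment, and combine $\commitVC(\tvar)[d] \ge \tsvar_m$ (from the coordinator's $\max$ over prepare times) with the clock bound of Lemma~\ref{lemma:knownvc-d-clock} and Assumption~\ref{assumption:clock}. The only difference is organizational: the paper distinguishes three cases ($\preparedcausal$ empty, containing $\tvar$, or nonempty without $\tvar$), whereas you merge the first and third into the single case $\realtime_p > \realtime_0$ via Lemma~\ref{lemma:knownvc-d-clock} — a slight streamlining, and you also handle explicitly the corner case where $\knownVC^{m}_{d}[d]$ was never written before $\realtime$.
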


\begin{proof} \label{proof:knownvc-commitvc-d}
  Suppose that before time $\realtime$,
  $\knownVC^{m}_{d}[d]$ is last updated at time $\realtime' < \realtime$.
  Therefore,
  \[
    \knownVC^{m}_{d}(\realtime)[d] = \knownVC^{m}_{d}(\realtime')[d].
  \]
  We distinguish between two cases according to whether
  \[
    \preparedcausal^{m}_{d}(\realtime') = \emptyset
  \]
  when $\knownVC^{m}_{d}[d]$ is updated at time $\realtime'$.
  \begin{itemize}
    \item $\textsc{Case I}$: $\preparedcausal^{m}_{d}(\realtime') = \emptyset$.
      By line~\code{\ref{alg:unistore-replication}}{\ref{line:propagate-knownvc-clock}},
      \[
        \knownVC^{m}_{d}(\realtime')[d] = \clockVar^{m}_{d}(\realtime').
      \]
      By line~\code{\ref{alg:unistore-replica}}{\ref{line:preparecausal-ts}},
      line~\code{\ref{alg:unistore-coord}}{\ref{line:commitcausal-commitvc-d}},
      and Assumption~\ref{assumption:clock},
      \[
        \commitVC(\tvar)[d] > \clockVar^{m}_{d}(\realtime').
      \]
      Therefore,
      \begin{align*}
        \commitVC(\tvar)[d] &> \knownVC^{m}_{d}(\realtime')[d] \\
                          &= \knownVC^{m}_{d}(\realtime)[d].
      \end{align*}
    \item $\textsc{Case II}$: $\preparedcausal^{m}_{d}(\realtime') \neq \emptyset$.
      We further distinguish between two cases according to whether
      \[
        \langle \tidselector(\tvar), \_, \_ \rangle \in \preparedcausal^{m}_{d}(\realtime').
      \]
      \begin{itemize}
        \item $\textsc{Case II-1}$:
          $\langle \tidselector(\tvar), \_, \tsvar \rangle \in \preparedcausal^{m}_{d}(\realtime')$.
          By lines~\code{\ref{alg:unistore-replication}}{\ref{line:propagate-knownvc-ts}}
          and \code{\ref{alg:unistore-coord}}{\ref{line:commitcausal-commitvc-d}},
          \begin{align*}
            \commitVC(\tvar)[d] &\ge \tsvar \\
                              &> \knownVC^{m}_{d}(\realtime')[d] \\
                              &= \knownVC^{m}_{d}(\realtime)[d].
          \end{align*}
        \item $\textsc{Case II-2}$:
          $\langle \tidselector(\tvar), \_, \_ \rangle \notin \preparedcausal^{m}_{d}(\realtime')$.
          By Lemma~\ref{lemma:knownvc-d-clock},
          Assumption~\ref{assumption:clock},
          line~\code{\ref{alg:unistore-replica}}{\ref{line:preparecausal-ts}},
          and line~\code{\ref{alg:unistore-coord}}{\ref{line:commitcausal-commitvc-d}},
          \begin{align*}
            \commitVC(\tvar)[d] &> \knownVC^{m}_{d}(\realtime')[d] \\
                              &= \knownVC^{m}_{d}(\realtime)[d].
          \end{align*}
      \end{itemize}
  \end{itemize}
\end{proof}

\begin{applemma} \label{lemma:knownvc-local-d}
  Let $\tvar \in \causaltxs$ be a causal transaction
  that originates at data center $d$ and accesses partition $m$.
  If
  \[
    \commitVC(\tvar)[d] \le \knownVC^{m}_{d}[d],
  \]
  then
  \[
    \log(\tvar)[m] \subseteq \oplog^{m}_{d}.
  \]
\end{applemma}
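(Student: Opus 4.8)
The plan is to show that the hypothesis $\commitVC(\tvar)[d] \le \knownVC^m_d[d]$, read at some time $\realtime$, forces $\tvar$ to have already committed at $p^m_d$ by $\realtime$; the conclusion then follows because the log is only ever extended. First I would dispose of the degenerate case $\ws(\tvar)[m] = \emptyset$, in which $\log(\tvar)[m] = \emptyset$ and the inclusion holds trivially, so from now on $\ws(\tvar)[m] \ne \emptyset$, i.e.\ $m \in L$ in the \commitcausal{} call for $\tvar$. Since $\tvar \in \causaltxs$ and, by Assumption~\ref{assumption:complete-execution}, its \commitcausal{} completes, the coordinator sends a \prepare{} message carrying $\ws(\tvar)[m]$ and later a \commit{} message carrying $\commitVC(\tvar)$ and $\lclock(\tvar)$ to $p^m_d$; in particular $\tvar$ is prepared at $p^m_d$ at some point, being placed in $\preparedcausal$ at line~\code{\ref{alg:unistore-replica}}{\ref{line:preparecausal-preparedcausal}}.

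The heart of the argument is to locate the commit of $\tvar$ at $p^m_d$ relative to $\realtime$, extending the reasoning of Lemma~\ref{lemma:knownvc-commitvc-d}. Let $\realtime'$ be the earliest time at which $\knownVC^m_d[d] \ge \commitVC(\tvar)[d]$; by Lemma~\ref{lemma:knownvc-nondecreasing} and the hypothesis this $\realtime'$ exists with $\realtime' \le \realtime$, and $\knownVC^m_d[d]$ is updated there inside \propagate. I would case-split on the state of $\tvar$ at $\realtime'$. If $\tvar$ is prepared-but-pending at $\realtime'$, then $\langle \tidselector(\tvar), \_, \tsvar_\tvar \rangle \in \preparedcausal^m_d(\realtime')$: this is impossible when the empty branch (line~\code{\ref{alg:unistore-replication}}{\ref{line:propagate-knownvc-clock}}) is taken, and under the non-empty branch (line~\code{\ref{alg:unistore-replication}}{\ref{line:propagate-knownvc-ts}}) it forces $\knownVC^m_d(\realtime')[d] \le \tsvar_\tvar - 1 < \commitVC(\tvar)[d]$, since the local timestamp is at least every prepare time (line~\code{\ref{alg:unistore-coord}}{\ref{line:commitcausal-commitvc-d}}). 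If $\tvar$ is still unprepared at $\realtime'$, its prepare time is taken after $\realtime'$ and so, by strict monotonicity of $\clockVar$ (Assumption~\ref{assumption:clock}) together with $\knownVC^m_d(\realtime')[d] \le \clockVar^m_d(\realtime')$ (Lemma~\ref{lemma:knownvc-d-clock}), satisfies $\commitVC(\tvar)[d] \ge \tsvar_\tvar > \knownVC^m_d(\realtime')[d]$. Both cases contradict the defining inequality of $\realtime'$, so $\tvar$ must already be committed at $p^m_d$ by $\realtime' \le \realtime$.

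Finally, when $p^m_d$ processes the \commit{} message it executes the handler at lines~\code{\ref{alg:unistore-replica}}{\ref{line:function-commit}}--\code{\ref{alg:unistore-replica}}{\ref{line:commit-committedcausal}}, appending $\langle v, \commitVC(\tvar), \lclock(\tvar) \rangle$ to $\oplog[k]$ for each $\langle k, v \rangle$ in the buffer it holds for $\tvar$, which is exactly $\ws(\tvar)[m]$; thus it inserts precisely the tuples of $\log(\tvar)[m]$. Because $\oplog^m_d$ is only extended by such appends, these tuples are still present at $\realtime \ge \realtime'$, giving $\log(\tvar)[m] \subseteq \oplog^m_d$. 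The main obstacle is the middle step, specifically excluding the prepared-but-pending state: this is exactly where one must use the precise \propagate{} update rule, which pins $\knownVC^m_d[d]$ strictly below the smallest pending prepare time, in concert with the fact that a transaction's local timestamp never undercuts its prepare time.
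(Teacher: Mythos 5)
Your proof is correct and follows essentially the same route as the paper: the paper's proof simply invokes Lemma~\ref{lemma:knownvc-commitvc-d} (in contrapositive form) to conclude that $\tvar$ was already committed at $p^{m}_{d}$ by the time $\knownVC^{m}_{d}[d]$ reached $\commitVC(\tvar)[d]$, and then appeals to line~\code{\ref{alg:unistore-replica}}{\ref{line:commit-oplog}}. Your middle step just re-derives that lemma's case analysis inline (empty vs.\ non-empty $\preparedcausal$, prepared vs.\ not-yet-prepared, using Assumption~\ref{assumption:clock} and Lemma~\ref{lemma:knownvc-d-clock}), so the content is identical.
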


\begin{proof} \label{proof:knownvc-local-d}
  Suppose that the value $\knownVC^{m}_{d}[d]$ is set at time $\realtime$.
  By Lemma~\ref{lemma:knownvc-commitvc-d},
  $\tvar$ is committed at $p^{m}_{d}$ before time $\realtime$.
  Therefore, by line~\code{\ref{alg:unistore-replica}}{\ref{line:commit-oplog}},
  \[
    \log(\tvar)[m] \subseteq \oplog^{m}_{d}.
  \]
\end{proof}

The following lemmas consider the replication and forwarding
of causal transactions.

\begin{applemma} \label{lemma:replication-order}
  Let $p^{m}_{d}$ be a replica in data center $d$.
  Let $\tvar_{1}$ and $\tvar_{2}$ be two transactions
  replicated by $p^{m}_{d}$ to sibling replicas
  at time $\realtime_{1}$ and $\realtime_{2}$
  (line~\code{\ref{alg:unistore-replication}}{\ref{line:propagate-call-replicate}}),
  respectively. Then
  \[
    \realtime_{1} < \realtime_{2} \implies \commitVC(\tvar_{1})[d] < \commitVC(\tvar_{2})[d].
  \]
\end{applemma}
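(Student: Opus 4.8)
The plan is to show that the \propagate{} procedure ships locally committed transactions to sibling replicas in strictly increasing order of their local timestamps $\commitVC(\cdot)[d]$. Two structural facts drive the argument. First, any transaction $\tvar$ that $p^{m}_{d}$ replicates at some time $\realtime$ (line~\code{\ref{alg:unistore-replication}}{\ref{line:propagate-call-replicate}}) belongs to the set $\txsvar$ computed at that invocation, so by the filter at line~\code{\ref{alg:unistore-replication}}{\ref{line:propagate-txs}} it satisfies $\commitVC(\tvar)[d] \le \knownVC^{m}_{d}(\realtime)[d]$, where $\knownVC^{m}_{d}(\realtime)[d]$ denotes the value of $\knownVC[d]$ used to compute $\txsvar$ in that invocation (this value is well defined, and the comparison is unambiguous by the monotonicity of $\knownVC^{m}_{d}[d]$ from Lemma~\ref{lemma:knownvc-d-nondecreasing}). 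Second, once a transaction is replicated it is removed from $\committedcausal[d]$ (line~\code{\ref{alg:unistore-replication}}{\ref{line:propagate-committedblue}}) and, since a local causal transaction is added to $\committedcausal[d]$ only once by its \commit{} handler (line~\code{\ref{alg:unistore-replica}}{\ref{line:commit-committedcausal}}), it is never re-added; hence $p^{m}_{d}$ replicates each local transaction at most once. Note that both $\tvar_{1}$ and $\tvar_{2}$ originate at $d$, as they are drawn from $\committedcausal[d]$, so $\commitVC(\cdot)[d]$ is indeed their local timestamp.

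First I would use the second fact: because $\tvar_{2}$ is replicated at $\realtime_{2} > \realtime_{1}$, it is not replicated at $\realtime_{1}$, i.e.\ $\tvar_{2} \notin \txsvar$ at the invocation at time $\realtime_{1}$. I then split on whether $\tvar_{2}$ has already been committed at $p^{m}_{d}$ (added to $\committedcausal[d]$) by time $\realtime_{1}$. In the first case, $\tvar_{2} \in \committedcausal[d]$ at $\realtime_{1}$ but is excluded from the selected set, so the filter at line~\code{\ref{alg:unistore-replication}}{\ref{line:propagate-txs}} forces $\commitVC(\tvar_{2})[d] > \knownVC^{m}_{d}(\realtime_{1})[d]$. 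In the second case, $\tvar_{2}$ is committed at $p^{m}_{d}$ only after $\realtime_{1}$, and Lemma~\ref{lemma:knownvc-commitvc-d}, applied with $\realtime = \realtime_{1}$ and $\tvar = \tvar_{2} \in \causaltxs$, yields exactly the same inequality $\commitVC(\tvar_{2})[d] > \knownVC^{m}_{d}(\realtime_{1})[d]$. This case split is exhaustive precisely because $\tvar_{2}$, being replicated at $\realtime_{2}$, lies in $\committedcausal[d]$ at $\realtime_{2}$ and is removed only upon replication.

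To finish I would combine these with the first fact applied to $\tvar_{1}$, which is replicated at $\realtime_{1}$ and therefore satisfies $\commitVC(\tvar_{1})[d] \le \knownVC^{m}_{d}(\realtime_{1})[d]$. Chaining the bounds gives $\commitVC(\tvar_{1})[d] \le \knownVC^{m}_{d}(\realtime_{1})[d] < \commitVC(\tvar_{2})[d]$, which is the claim. The main obstacle is not any computation but getting the case analysis right: I must justify that $\tvar_{2}\notin\txsvar$ at $\realtime_{1}$ (via the replicate-at-most-once invariant) and that the two cases cover every way $\tvar_{2}$ can be absent from the earlier batch yet present in the later one, so that one of the two inequalities $\commitVC(\tvar_{2})[d] > \knownVC^{m}_{d}(\realtime_{1})[d]$ always applies.
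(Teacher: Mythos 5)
Your proposal is correct and follows essentially the same route as the paper's proof: bound $\commitVC(\tvar_{1})[d]$ by $\knownVC^{m}_{d}(\realtime_{1})[d]$ via the filter at line~\code{\ref{alg:unistore-replication}}{\ref{line:propagate-txs}}, then case-split on whether $\tvar_{2}$ is already in $\committedcausal[d]$ at time $\realtime_{1}$, using the same filter in one case and Lemma~\ref{lemma:knownvc-commitvc-d} in the other. Your additional justification of the case split's exhaustiveness via the replicate-at-most-once invariant is a detail the paper leaves implicit, but it is the same argument.
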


\begin{proof} \label{proof:replication-order}
  Since $\tvar_{1}$ is replicated at time $\realtime_{1}$,
  by line~\code{\ref{alg:unistore-replication}}{\ref{line:propagate-txs}},
  \[
    \commitVC(\tvar_{1})[d] \le \knownVC^{m}_{d}(\realtime_{1})[d].
  \]
  Assume that $\realtime_{1} < \realtime_{2}$.
  We distinguish between two cases according to whether
  \[
    \langle \tidselector(\tvar_{2}), \_, \_, \_ \rangle \in \committedcausal^{m}_{d}(\realtime_{1})[d].
  \]
  \begin{itemize}
    \item $\textsc{Case I}$:
      $\langle \tidselector(\tvar_{2}), \_, \_, \_ \rangle \in \committedcausal^{m}_{d}(\realtime_{1})[d]$.
      Since $\tvar_{2}$ is not replicated at time $\realtime_{1}$,
      by line~\code{\ref{alg:unistore-replication}}{\ref{line:propagate-txs}},
      \[
        \commitVC(\tvar_{2})[d] > \knownVC^{m}_{d}(\realtime_{1})[d].
      \]
    \item $\textsc{Case II}$:
      $\langle \tidselector(\tvar_{2}), \_, \_, \_, \rangle \notin \committedcausal^{m}_{d}(\realtime_{1})[d]$.
      Thus, $\tvar_{2}$ is committed at $p^{m}_{d}$ after time $\realtime_{1}$.
      By Lemma~\ref{lemma:knownvc-commitvc-d},
      \[
        \commitVC(\tvar_{2})[d] > \knownVC^{m}_{d}(\realtime_{1})[d].
      \]
  \end{itemize}
  Therefore, in either case,
  \[
    \commitVC(\tvar_{1})[d] < \commitVC(\tvar_{2})[d].
  \]
\end{proof}

\begin{applemma} \label{lemma:heartbeat-replication-order}
  Let $p^{m}_{d}$ be a replica in data center $d$.
  Consider a heartbeat $\knownVC^{m}_{d}(\realtime_{1})[d]$
  sent by $p^{m}_{d}$ at time $\realtime_{1}$
  (line~\code{\ref{alg:unistore-replication}}{\ref{line:propagate-call-heartbeat}}).
  Let $\tvar$ be a transaction replicated by $p^{m}_{d}$ at time $\realtime_{2}$
  (line~\code{\ref{alg:unistore-replication}}{\ref{line:propagate-call-replicate}}).
  Then
  \[
    \realtime_{1} < \realtime_{2} \iff \knownVC^{m}_{d}(\realtime_{1})[d] < \commitVC(\tvar)[d].
  \]
\end{applemma}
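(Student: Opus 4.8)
The plan is to prove the two implications separately, after recording the two facts the hypotheses hand us. Since the call to \propagate{} at $\realtime_1$ sends a heartbeat (line~\code{\ref{alg:unistore-replication}}{\ref{line:propagate-call-heartbeat}}), the set $\txsvar$ computed in that call at line~\code{\ref{alg:unistore-replication}}{\ref{line:propagate-txs}} is empty; equivalently, no transaction then in $\committedcausal^{m}_{d}[d]$ has local timestamp $\le \knownVC^{m}_{d}(\realtime_1)[d]$. In contrast, the call at $\realtime_2$ replicates $\tvar$ (line~\code{\ref{alg:unistore-replication}}{\ref{line:propagate-call-replicate}}), so $\tvar \in \txsvar$ at $\realtime_2$ and hence $\commitVC(\tvar)[d] \le \knownVC^{m}_{d}(\realtime_2)[d]$. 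Because a single \propagate{} call either replicates or heartbeats but not both, $\realtime_1 \neq \realtime_2$, so it suffices to treat the strict orderings. Note also that $\tvar$ originates at $d$ and is causal, since $\committedcausal[d]$ holds transactions originating at $d$, added by the local causal-commit handler (line~\code{\ref{alg:unistore-replica}}{\ref{line:commit-committedcausal}}); this is what lets us invoke the $\knownVC$ lemmas below.

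For the forward direction I would assume $\realtime_1 < \realtime_2$ and split on whether $\tvar \in \committedcausal^{m}_{d}(\realtime_1)[d]$. If it is, then since the $\realtime_1$ call produced an empty $\txsvar$, $\tvar$ was not selected, which by the filter at line~\code{\ref{alg:unistore-replication}}{\ref{line:propagate-txs}} forces $\commitVC(\tvar)[d] > \knownVC^{m}_{d}(\realtime_1)[d]$. If it is not, then $\tvar$ must have been committed at $p^{m}_{d}$ after $\realtime_1$: it is still pending at $\realtime_2 > \realtime_1$ (entries leave $\committedcausal[d]$ only upon replication, line~\code{\ref{alg:unistore-replication}}{\ref{line:propagate-committedblue}}), so its absence at $\realtime_1$ can only mean it had not yet been added. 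Lemma~\ref{lemma:knownvc-commitvc-d} then gives $\commitVC(\tvar)[d] > \knownVC^{m}_{d}(\realtime_1)[d]$. Either way the desired strict inequality holds.

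For the converse I would argue the contrapositive: assuming $\realtime_2 < \realtime_1$, I want $\knownVC^{m}_{d}(\realtime_1)[d] \ge \commitVC(\tvar)[d]$. This is immediate from the replication fact $\commitVC(\tvar)[d] \le \knownVC^{m}_{d}(\realtime_2)[d]$ together with the monotonicity of $\knownVC^{m}_{d}[d]$ (Lemma~\ref{lemma:knownvc-d-nondecreasing}), which yields $\knownVC^{m}_{d}(\realtime_2)[d] \le \knownVC^{m}_{d}(\realtime_1)[d]$.

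I expect the only genuinely delicate point to be the second case of the forward direction, where one must be certain that a transaction absent from $\committedcausal[d]$ at $\realtime_1$ yet replicated at $\realtime_2$ was indeed committed at $p^{m}_{d}$ strictly after $\realtime_1$ — this relies on the bookkeeping fact that entries exit $\committedcausal[d]$ only through replication (and the later, monotone cleanup), so no re-insertion can occur and absence at $\realtime_1$ genuinely means later commit. Once that is pinned down, Lemma~\ref{lemma:knownvc-commitvc-d} discharges the case directly, and everything else is routine monotonicity.
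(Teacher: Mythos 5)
Your proposal is correct and follows essentially the same route as the paper's proof: the same case split on whether $\tvar$ is in $\committedcausal^{m}_{d}(\realtime_{1})[d]$, with the filter at line~\code{\ref{alg:unistore-replication}}{\ref{line:propagate-txs}} handling the first case, Lemma~\ref{lemma:knownvc-commitvc-d} handling the second, and monotonicity (Lemma~\ref{lemma:knownvc-d-nondecreasing}) giving the converse. The extra bookkeeping you supply — that entries leave $\committedcausal[d]$ only via replication, so absence at $\realtime_{1}$ together with presence at $\realtime_{2}$ forces a later commit — is a detail the paper leaves implicit, and your justification of it is sound.
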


\begin{proof} \label{proof:heartbeat-replication-order}
  We first show that
  \[
    \realtime_{1} < \realtime_{2} \implies \knownVC^{m}_{d}(\realtime_{1})[d] < \commitVC(\tvar)[d].
  \]
  Assume that $\realtime_{1} < \realtime_{2}$.
  We distinguish between two cases according to whether
  \[
    \langle \tidselector(\tvar), \_, \_, \_ \rangle \in \committedcausal^{m}_{d}(\realtime_{1})[d].
  \]
  \begin{itemize}
    \item $\textsc{Case I}$:
      $\langle \tidselector(\tvar), \_, \_, \_ \rangle \in \committedcausal^{m}_{d}(\realtime_{1})[d]$.
      By line~\code{\ref{alg:unistore-replication}}{\ref{line:propagate-txs}},
      \[
        \commitVC(\tvar)[d] > \knownVC^{m}_{d}(\realtime_{1})[d].
      \]
    \item $\textsc{Case II}$:
      $\langle \tidselector(\tvar), \_, \_, \_ \rangle \notin \committedcausal^{m}_{d}(\realtime_{1})[d]$.
      Thus, $\tvar$ is committed at $p^{m}_{d}$ after time $\realtime_{1}$.
      By Lemma~\ref{lemma:knownvc-commitvc-d},
      \[
        \commitVC(\tvar)[d] > \knownVC^{m}_{d}(\realtime_{1})[d].
      \]
  \end{itemize}

  Next we show that (note that $\realtime_{1} \neq \realtime_{2}$)
  \[
    \realtime_{2} < \realtime_{1} \implies \commitVC(\tvar)[d] \le \knownVC^{m}_{d}(\realtime_{1})[d].
  \]
  Since $\tvar$ is replicated by $p^{m}_{d}$ at time $\realtime_{2}$,
  by line~\code{\ref{alg:unistore-replication}}{\ref{line:propagate-txs}},
  \[
    \commitVC(\tvar)[d] \le \knownVC^{m}_{d}(\realtime_{2})[d].
  \]
  Assume that $\realtime_{2} < \realtime_{1}$.
  By Lemma~\ref{lemma:knownvc-d-nondecreasing},
  \[
    \knownVC^{m}_{d}(\realtime_{2})[d] \le \knownVC^{m}_{d}(\realtime_{1})[d].
  \]
  Putting it together yields
  \[
    \commitVC(\tvar)[d] \le \knownVC^{m}_{d}(\realtime_{1})[d].
  \]
\end{proof}

\begin{applemma} \label{lemma:committedcausal-i}
  Let $p^{m}_{d}$ be a replica in data center $d$. Then
  \begin{align*}
    &\forall i \neq d.\;
      \forall \langle \tidselector(\tvar), \_, \_, \_ \rangle \in \committedcausal^{m}_{d}[i]. \\
        &\quad \commitVC(\tvar)[i] \le \knownVC^{m}_{d}[i].
  \end{align*}
\end{applemma}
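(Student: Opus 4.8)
The plan is to pin down exactly when and how a transaction can appear in $\committedcausal^m_d[i]$ for a remote data center $i \neq d$, and to observe that each such insertion is accompanied by a matching bump of $\knownVC^m_d[i]$. The key observation is that, for $i \neq d$, the only code that \emph{adds} a transaction to $\committedcausal^m_d[i]$ is line~\code{\ref{alg:unistore-replication}}{\ref{line:replicate-committedcausal}} inside the \replicate{} handler (the periodic cleanup mentioned in \S\ref{sec:forward} only \emph{removes} entries, and \forward{} merely reads $\committedcausal^m_d[j]$ without modifying it). Crucially, within the same conditional block guarded by $\commitvc[i] > \knownVC^m_d[i]$ (line~\code{\ref{alg:unistore-replication}}{\ref{line:replicate-precondition}}), the very next statement, line~\code{\ref{alg:unistore-replication}}{\ref{line:replicate-knownvc}}, sets $\knownVC^m_d[i] \gets \commitvc[i] = \commitVC(\tvar)[i]$. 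Since handlers execute atomically, immediately after the handler that inserts $\tvar$ we have the equality $\knownVC^m_d[i] = \commitVC(\tvar)[i]$.

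Concretely, I would fix a replica $p^m_d$, a remote index $i \neq d$, and a transaction $\tvar$ with $\langle \tidselector(\tvar), \_, \_, \_\rangle \in \committedcausal^m_d(\realtime)[i]$ at some time $\realtime$. Let $\realtime_0 \le \realtime$ be the (most recent) time at which $\tvar$ was inserted into $\committedcausal^m_d[i]$; such a $\realtime_0$ exists by the observation above, because $\tvar$ is present at $\realtime$. Evaluating lines~\code{\ref{alg:unistore-replication}}{\ref{line:replicate-committedcausal}}--\code{\ref{alg:unistore-replication}}{\ref{line:replicate-knownvc}} at $\realtime_0$ gives $\knownVC^m_d(\realtime_0)[i] = \commitVC(\tvar)[i]$. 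Finally, Lemma~\ref{lemma:knownvc-i-nondecreasing} states that $\knownVC^m_d[i]$ is non-decreasing for $i \neq d$, so from $\realtime_0 \le \realtime$ we conclude $\knownVC^m_d(\realtime)[i] \ge \knownVC^m_d(\realtime_0)[i] = \commitVC(\tvar)[i]$, which is the desired inequality.

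The only delicate point — and hence the main obstacle — is the \emph{provenance} argument: I must justify that line~\code{\ref{alg:unistore-replication}}{\ref{line:replicate-committedcausal}} really is the unique source of entries in $\committedcausal^m_d[i]$ when $i \neq d$ (the self-index $d$ is populated instead by line~\code{\ref{alg:unistore-replica}}{\ref{line:commit-committedcausal}}, which is why the hypothesis $i \neq d$ matters), and that atomic handler execution lets me treat the insertion and the $\knownVC$ update as a single instantaneous step establishing the equality. Once this bookkeeping is settled, monotonicity closes the proof with no further computation; note in particular that the intervening duplicate-suppression guard and the cleanup removals are harmless, since they only restrict when $\tvar$ can be present and never lower $\knownVC^m_d[i]$.
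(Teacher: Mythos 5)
Your proof is correct and takes essentially the same route as the paper, whose own proof consists precisely of citing lines~\code{\ref{alg:unistore-replication}}{\ref{line:replicate-committedcausal}} and \code{\ref{alg:unistore-replication}}{\ref{line:replicate-knownvc}} together with Lemma~\ref{lemma:knownvc-i-nondecreasing}. Your write-up simply makes explicit the provenance bookkeeping the paper leaves implicit, namely that for $i \neq d$ the \replicate{} handler is the unique insertion point into $\committedcausal^{m}_{d}[i]$ and that the atomic handler couples each insertion with the matching update of $\knownVC^{m}_{d}[i]$.
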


\begin{proof} \label{proof:committedcausal-i}
  By lines~\code{\ref{alg:unistore-replication}}{\ref{line:replicate-committedcausal}}
  and \code{\ref{alg:unistore-replication}}{\ref{line:replicate-knownvc}}
  and Lemma~\ref{lemma:knownvc-i-nondecreasing}.
\end{proof}

\begin{applemma} \label{lemma:globalmatrix-nondecreasing}
  For $j \neq d$ and $i \notin \set{d, j}$,
  $\globalmatrix^{m}_{d}[i][j]$ at any replica $p^{m}_{d}$
  in data center $d$ is non-decreasing.
\end{applemma}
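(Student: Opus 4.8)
The plan is to observe that the array cell $\globalmatrix^m_d[i]$ is written only in one place --- the \knownvcglobal{} handler (line~\code{\ref{alg:unistore-clock}}{\ref{line:knownvcglobal-globalmatrix}}) --- where it is overwritten with the $\knownVC$ vector carried by a message received from the sibling replica $p^m_i$. These messages are produced by $p^m_i$ in its periodic \bcast{} (line~\code{\ref{alg:unistore-clock}}{\ref{line:bcast-call-knownvcglobal}}), each one carrying the value of $\knownVC^m_i$ at the instant of sending. Hence every value ever stored in $\globalmatrix^m_d[i]$ is a time-stamped snapshot of $\knownVC^m_i$, and proving monotonicity of $\globalmatrix^m_d[i][j]$ reduces to arguing that successive stored snapshots were taken in increasing time order and that the $j$-th entry of those snapshots is itself monotone.

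First I would fix two times $\realtime_1 < \realtime_2$ at which $p^m_d$ executes the update at line~\code{\ref{alg:unistore-clock}}{\ref{line:knownvcglobal-globalmatrix}} for index $i$, and let $\realtime^s_1, \realtime^s_2$ be the corresponding send times at $p^m_i$. Since $i \notin \set{d, j}$ forces $i \neq d$, these are genuine inter-replica messages, so by the FIFO guarantee of Assumption~\ref{assumption:message} they are received in the order they were sent; thus $\realtime_1 < \realtime_2$ implies $\realtime^s_1 < \realtime^s_2$. Consequently $\globalmatrix^m_d(\realtime_1)[i] = \knownVC^m_i(\realtime^s_1)$ and $\globalmatrix^m_d(\realtime_2)[i] = \knownVC^m_i(\realtime^s_2)$.

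Then I would invoke Lemma~\ref{lemma:knownvc-nondecreasing}, applied at the replica $p^m_i$ rather than $p^m_d$: since that lemma is stated and proved for an arbitrary replica, it gives that $\knownVC^m_i[j]$ is non-decreasing over time for every $j \in \D$. Combined with $\realtime^s_1 < \realtime^s_2$, this yields $\globalmatrix^m_d(\realtime_1)[i][j] = \knownVC^m_i(\realtime^s_1)[j] \le \knownVC^m_i(\realtime^s_2)[j] = \globalmatrix^m_d(\realtime_2)[i][j]$, which is the claim. (The first stored value dominates the all-zero initial value trivially, so monotonicity extends to the whole execution; the restriction $j \neq d$ is never used, and $i \notin \set{d,j}$ matters only in that it secures $i \neq d$ --- these conditions simply record the regime in which \forward{} at line~\code{\ref{alg:unistore-replication}}{\ref{line:function-forward}} reads this entry.)

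I expect the main obstacle to be the send/receive bookkeeping: the lemma is really about the history of values of a single matrix cell, so the crux is to establish that consecutive stored values arise from a time-monotone sequence of sends at $p^m_i$, which is exactly where FIFO delivery is indispensable. A secondary point worth stating explicitly is the legitimacy of reusing Lemma~\ref{lemma:knownvc-nondecreasing} at the remote replica $p^m_i$; because that lemma is established for a generic replica, no additional argument is needed, but the instantiation should be spelled out to make the dependency clear.
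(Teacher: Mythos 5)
Your proposal is correct and follows essentially the same route as the paper's (one-line) proof: observe that $\globalmatrix^{m}_{d}[i]$ is written only in the \knownvcglobal{} handler, and combine the monotonicity of $\knownVC$ at the sending replica $p^{m}_{i}$ (Lemma~\ref{lemma:knownvc-i-nondecreasing}, which your use of Lemma~\ref{lemma:knownvc-nondecreasing} subsumes) with FIFO delivery from Assumption~\ref{assumption:message}. Your write-up is in fact more explicit than the paper's, which leaves the FIFO send/receive ordering argument implicit even though it is indispensable.
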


\begin{proof} \label{proof:globalmatrix-nondecreasing}
  Note that $\globalmatrix^{m}_{d}[i][j]$ can be updated only
  at line~\code{\ref{alg:unistore-clock}}{\ref{line:knownvcglobal-globalmatrix}}.
  Therefore, by Lemma~\ref{lemma:knownvc-i-nondecreasing},
  it is non-decreasing.
\end{proof}

\begin{applemma} \label{lemma:forwarding-order}
  Let $p^{m}_{d}$ be a replica in data center $d$.
  Let $\tvar_{1}$ and $\tvar_{2}$ be two transactions
  that originate at data center $j \neq d$
  and are forwarded by $p^{m}_{d}$ to
  sibling replica $p^{m}_{i}$ in data center $i \notin \set{d, j}$
  at time $\realtime_{1}$ and $\realtime_{2}$
  (line~\code{\ref{alg:unistore-replication}}{\ref{line:forward-call-replicate}}),
  respectively. Then
  \[
    \realtime_{1} < \realtime_{2} \implies \commitVC(\tvar_{1})[j] < \commitVC(\tvar_{2})[j].
  \]
\end{applemma}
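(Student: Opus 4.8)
The plan is to mirror the proof of Lemma~\ref{lemma:replication-order}, but to drive the argument through the order in which transactions originating at data center $j$ are \emph{inserted} into $\committedcausal^{m}_{d}[j]$, rather than through the monotone local frontier $\knownVC[d]$. First I would establish the key invariant that transactions from $j$ enter $\committedcausal^{m}_{d}[j]$ in strictly increasing order of their local timestamp $\commitVC(\cdot)[j]$. This follows from the receive handler of $\replicate$: a transaction is inserted only when its $\commitvc[j]$ strictly exceeds the current $\knownVC^{m}_{d}[j]$ (precondition at line~\code{\ref{alg:unistore-replication}}{\ref{line:replicate-precondition}}), and the same handler immediately advances $\knownVC^{m}_{d}[j]$ to that value (line~\code{\ref{alg:unistore-replication}}{\ref{line:replicate-knownvc}}); combined with the monotonicity of $\knownVC^{m}_{d}[j]$ (Lemma~\ref{lemma:knownvc-i-nondecreasing}) and the FIFO guarantee that $\replicate$ messages are processed in order, this forces the insertion timestamps to be strictly increasing over time.

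Next I would reduce the statement to this insertion order. A transaction can appear in a forwarded set only while it resides in $\committedcausal^{m}_{d}[j]$ with $\commitvc[j]$ above the threshold $\globalmatrix^{m}_{d}[i][j]$ (line~\code{\ref{alg:unistore-replication}}{\ref{line:forward-txs}}). Hence from $\tvar_{1}$ forwarded at $\realtime_{1}$ and $\tvar_{2}$ forwarded at $\realtime_{2} > \realtime_{1}$ I obtain $\tvar_{1} \in \committedcausal^{m}_{d}(\realtime_{1})[j]$ and $\commitVC(\tvar_{2})[j] > \globalmatrix^{m}_{d}(\realtime_{2})[i][j] \ge \globalmatrix^{m}_{d}(\realtime_{1})[i][j]$, the last inequality by Lemma~\ref{lemma:globalmatrix-nondecreasing}. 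The goal then becomes to show that $\tvar_{1}$ was inserted into $\committedcausal^{m}_{d}[j]$ strictly before $\tvar_{2}$, after which the invariant of the first step yields $\commitVC(\tvar_{1})[j] < \commitVC(\tvar_{2})[j]$.

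To close the reduction I would case-split on whether $\tvar_{2}$ is already present in $\committedcausal^{m}_{d}(\realtime_{1})[j]$. If $\tvar_{2}$ is not yet present at $\realtime_{1}$, it is inserted strictly after $\tvar_{1}$ and the invariant finishes the case. The delicate case — and the one I expect to be the main obstacle — is when both transactions are simultaneously present at $\realtime_{1}$: unlike local replication, the forwarding handler never removes transactions from $\committedcausal[j]$ (there is no analogue of line~\code{\ref{alg:unistore-replication}}{\ref{line:propagate-committedblue}}), so a transaction may legitimately be re-sent in several messages. Here I would argue that the forwarded set is exactly the timestamp-suffix of $\committedcausal^{m}_{d}[j]$ lying above the monotone threshold $\globalmatrix^{m}_{d}[i][j]$, so that the frontier of genuinely new forwardings advances consistently with the insertion order of the first step; this pins $\tvar_{1}$'s timestamp below $\tvar_{2}$'s and rules out the apparent inversion that a mere re-send would otherwise allow. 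Discharging this last case cleanly — precisely because forwarding lacks the ``remove after send'' structure that makes Lemma~\ref{lemma:replication-order} immediate — is the crux of the proof.
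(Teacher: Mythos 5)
Your skeleton matches the paper's proof up to the crux: bounding $\commitVC(\tvar_{1})[j]$ by $\knownVC^{m}_{d}(\realtime_{1})[j]$ via $\tvar_{1}$'s presence in $\committedcausal^{m}_{d}(\realtime_{1})[j]$ (Lemma~\ref{lemma:committedcausal-i}), and your case where $\tvar_{2}$ is absent at $\realtime_{1}$ — inserted later, hence with timestamp strictly above $\knownVC^{m}_{d}(\realtime_{1})[j]$ by the precondition at line~\code{\ref{alg:unistore-replication}}{\ref{line:replicate-precondition}} and Lemma~\ref{lemma:knownvc-i-nondecreasing} — is exactly the paper's argument. The gap is your ``delicate case.'' You propose to handle $\tvar_{2} \in \committedcausal^{m}_{d}(\realtime_{1})[j]$ head-on, claiming a suffix/frontier argument ``pins $\tvar_{1}$'s timestamp below $\tvar_{2}$'s and rules out the apparent inversion that a mere re-send would otherwise allow.'' This cannot be made to work. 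You have already derived $\commitVC(\tvar_{2})[j] > \globalmatrix^{m}_{d}(\realtime_{2})[i][j] \ge \globalmatrix^{m}_{d}(\realtime_{1})[i][j]$; so if $\tvar_{2}$ were also present at $\realtime_{1}$, it would satisfy the selection criterion at line~\code{\ref{alg:unistore-replication}}{\ref{line:forward-txs}} and be forwarded at $\realtime_{1}$ as well. In that configuration nothing constrains the relative order of the two timestamps: both transactions lie in the forwarded suffix at both times, either may have the larger $\commitvc[j]$, and a re-send genuinely realizes the inversion. No direct argument can close this case, because in it the claimed inequality is simply not forced.

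The paper resolves the case in the opposite way: it shows the case is \emph{impossible} rather than harmless. Reading the lemma as the paper uses it (in Lemma~\ref{lemma:replication-knownvc}), the forwarding of $\tvar_{2}$ at $\realtime_{2}$ is one where $\tvar_{2}$ was \emph{not} part of the batch sent at $\realtime_{1}$. Under that reading, if $\tvar_{2} \in \committedcausal^{m}_{d}(\realtime_{1})[j]$, then line~\code{\ref{alg:unistore-replication}}{\ref{line:forward-txs}} forces $\commitVC(\tvar_{2})[j] \le \globalmatrix^{m}_{d}(\realtime_{1})[i][j]$, and since $\globalmatrix^{m}_{d}[i][j]$ is non-decreasing (Lemma~\ref{lemma:globalmatrix-nondecreasing}), also $\commitVC(\tvar_{2})[j] \le \globalmatrix^{m}_{d}(\realtime_{2})[i][j]$ — contradicting that $\tvar_{2}$ is forwarded at $\realtime_{2}$. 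Hence $\tvar_{2} \notin \committedcausal^{m}_{d}(\realtime_{1})[j]$, and you fall back into your unproblematic case. So the missing idea is to turn the monotonicity of $\globalmatrix^{m}_{d}[i][j]$ into a contradiction that eliminates the hard case, not an ordering argument inside it; your instinct that this case is the crux was right, but your proposed resolution asserts the conclusion rather than proving it.
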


\begin{proof} \label{proof:forwarding-order}
  Since $\tvar_{1}$ is forwarded by $p^{m}_{d}$ at time $\realtime_{1}$,
  by line~\code{\ref{alg:unistore-replication}}{\ref{line:forward-txs}},
  \[
    \langle \tidselector(\tvar_{1}), \_, \_, \_ \rangle \in \committedcausal^{m}_{d}(\realtime_{1})[j].
  \]
  By Lemmas~\ref{lemma:committedcausal-i} and \ref{lemma:knownvc-i-nondecreasing},
  \begin{align}
    \commitVC(\tvar_{1})[j] \le \knownVC^{m}_{d}(\realtime_{1})[j].
    \label{eqn:tid1-knownvc}
  \end{align}
  Assume that $\realtime_{1} < \realtime_{2}$.
  We first argue that
  \begin{align}
    \langle \tidselector(\tvar_{2}), \_, \_, \_ \rangle \notin \committedcausal^{m}_{d}(\realtime_{1})[j].
    \label{eqn:tid2-committedcausal-t1}
  \end{align}
  Otherwise, by line~\code{\ref{alg:unistore-replication}}{\ref{line:forward-txs}},
  \[
    \commitVC(\tvar_{2})[j] \le \globalmatrix^{m}_{d}(\realtime_{1})[i][j].
  \]
  By Lemma~\ref{lemma:globalmatrix-nondecreasing},
  \[
    \commitVC(\tvar_{2})[j] \le \globalmatrix^{m}_{d}(\realtime_{2})[i][j].
  \]
  Therefore, by line~\code{\ref{alg:unistore-replication}}{\ref{line:forward-txs}},
  $\tvar_{2}$ would not be forwarded by $p^{m}_{d}$ to $p^{m}_{i}$ at time $\realtime_{2}$.
  Thus, (\ref{eqn:tid2-committedcausal-t1}) holds.
  Since $\tvar_{2}$ is forwarded by $p^{m}_{d}$ to $p^{m}_{i}$ at time $\realtime_{2}$,
  \[
    \langle \tidselector(\tvar_{2}), \_, _, \_ \rangle \in \committedcausal^{m}_{d}(\realtime_{2})[j].
  \]
  By Lemma~\ref{lemma:knownvc-i-nondecreasing}
  and line~\code{\ref{alg:unistore-replication}}{\ref{line:replicate-precondition}},
  \begin{align}
    \commitVC(\tvar_{2})[j] > \knownVC^{m}_{d}(\realtime_{1})[j].
    \label{eqn:tid2-knownvc}
  \end{align}
  Putting (\ref{eqn:tid1-knownvc}) and (\ref{eqn:tid2-knownvc}) together yields
  \[
    \commitVC(\tvar_{1})[j] < \commitVC(\tvar_{2})[j].
  \]
\end{proof}

\begin{applemma} \label{lemma:heartbeat-forwarding-order}
  Let $p^{m}_{d}$ be a replica in data center $d$.
  Consider a heartbeat $\knownVC^{m}_{d}(\realtime_{1})[j]$ ($j \neq d$)
  sent by $p^{m}_{d}$ to sibling replica $p^{m}_{i}$
  in data center $i \notin \set{d, j}$ at time $\realtime_{1}$
  (line~\code{\ref{alg:unistore-replication}}{\ref{line:forward-call-heartbeat}}).
  Let $\tvar$ be a transaction that originates at data center $j$
  and is forwarded by $p^{m}_{d}$ to $p^{m}_{i}$ at time $\realtime_{2}$
  (line~\code{\ref{alg:unistore-replication}}{\ref{line:forward-call-replicate}}).
  Then
  \[
    \realtime_{1} < \realtime_{2} \iff \knownVC^{m}_{d}(\realtime_{1})[j] < \commitVC(\tvar)[j].
  \]
\end{applemma}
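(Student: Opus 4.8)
The plan is to mirror the proof of Lemma~\ref{lemma:heartbeat-replication-order}, replacing the replication machinery (the local entry $\knownVC[d]$ and the set $\committedcausal[d]$) by the forwarding machinery (the remote entry $\knownVC[j]$, the set $\committedcausal[j]$, and the matrix $\globalmatrix[i][j]$). Since a single invocation of \forward{} sends either a heartbeat (line~\code{\ref{alg:unistore-replication}}{\ref{line:forward-call-heartbeat}}) or a \replicate{} message (line~\code{\ref{alg:unistore-replication}}{\ref{line:forward-call-replicate}}) but not both, we have $\realtime_{1} \neq \realtime_{2}$, so it suffices to prove the two one-directional implications
\[
  \realtime_{1} < \realtime_{2} \implies \knownVC^{m}_{d}(\realtime_{1})[j] < \commitVC(\tvar)[j]
\]
and
\[
  \realtime_{2} < \realtime_{1} \implies \commitVC(\tvar)[j] \le \knownVC^{m}_{d}(\realtime_{1})[j];
\]
together these yield the biconditional, the second implication supplying the contrapositive of the ``$\Leftarrow$'' direction.

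I would dispatch the second implication first, as it is routine. Because $\tvar$ is forwarded at time $\realtime_{2}$, line~\code{\ref{alg:unistore-replication}}{\ref{line:forward-txs}} puts $\langle \tidselector(\tvar), \_, \_, \_ \rangle \in \committedcausal^{m}_{d}(\realtime_{2})[j]$, so Lemma~\ref{lemma:committedcausal-i} gives $\commitVC(\tvar)[j] \le \knownVC^{m}_{d}(\realtime_{2})[j]$. Assuming $\realtime_{2} < \realtime_{1}$, monotonicity of the remote entry (Lemma~\ref{lemma:knownvc-i-nondecreasing}, applicable since $j \neq d$) gives $\knownVC^{m}_{d}(\realtime_{2})[j] \le \knownVC^{m}_{d}(\realtime_{1})[j]$, and the two inequalities compose.

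For the first implication I would assume $\realtime_{1} < \realtime_{2}$ and split on whether $\langle \tidselector(\tvar), \_, \_, \_ \rangle \in \committedcausal^{m}_{d}(\realtime_{1})[j]$, exactly as in Lemma~\ref{lemma:forwarding-order}. In the membership case, the fact that $p^{m}_{d}$ sent a heartbeat at $\realtime_{1}$ means the set computed at line~\code{\ref{alg:unistore-replication}}{\ref{line:forward-txs}} was empty, hence $\commitVC(\tvar)[j] \le \globalmatrix^{m}_{d}(\realtime_{1})[i][j]$; but $\tvar$ being forwarded at $\realtime_{2}$ forces $\commitVC(\tvar)[j] > \globalmatrix^{m}_{d}(\realtime_{2})[i][j]$, and monotonicity of $\globalmatrix^{m}_{d}[i][j]$ (Lemma~\ref{lemma:globalmatrix-nondecreasing}) renders these two bounds contradictory, so this case cannot arise. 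In the non-membership case, $\tvar$ (which originates at $j \neq d$) must be inserted into $\committedcausal[j]$ through the \replicate{} handler (line~\code{\ref{alg:unistore-replication}}{\ref{line:replicate-committedcausal}}) at some time in $(\realtime_{1}, \realtime_{2}]$; the precondition at line~\code{\ref{alg:unistore-replication}}{\ref{line:replicate-precondition}} guarded that insertion, giving $\commitVC(\tvar)[j] > \knownVC^{m}_{d}[j]$ just before it, and Lemma~\ref{lemma:knownvc-i-nondecreasing} propagates this back to $\knownVC^{m}_{d}(\realtime_{1})[j] < \commitVC(\tvar)[j]$.

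The main obstacle is the non-membership case of the first implication: unlike the replication proof, there is no ready-made analogue of Lemma~\ref{lemma:knownvc-commitvc-d} for a remote entry $\knownVC[j]$, so I must argue directly from the \replicate{} precondition that $\tvar$'s insertion raised $\knownVC[j]$ only after $\realtime_{1}$. This in turn requires justifying that $\tvar$ is genuinely added after $\realtime_{1}$ rather than having been present, removed, and re-added, which follows because once $\tvar$ has been processed we have $\knownVC[j] \ge \commitVC(\tvar)[j]$, disabling the precondition (line~\code{\ref{alg:unistore-replication}}{\ref{line:replicate-knownvc}}). The membership case is the cleaner half once one observes that the heartbeat condition and the forwarding condition, read through Lemma~\ref{lemma:globalmatrix-nondecreasing}, are jointly inconsistent.
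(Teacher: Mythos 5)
Your proposal is correct and follows essentially the same route as the paper's proof: the same two one-directional implications, the same contradiction via monotonicity of $\globalmatrix^{m}_{d}[i][j]$ (Lemma~\ref{lemma:globalmatrix-nondecreasing}) to rule out $\tvar$ being in $\committedcausal^{m}_{d}(\realtime_{1})[j]$, the same use of the \replicate{} precondition together with Lemma~\ref{lemma:knownvc-i-nondecreasing} for the non-membership case, and the same appeal to Lemma~\ref{lemma:committedcausal-i} for the converse direction. Your extra remark about why $\tvar$ cannot have been inserted, removed, and re-inserted is a detail the paper leaves implicit, but it does not change the argument.
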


\begin{proof} \label{proof:heartbeat-forwarding-order}
  We first show that
  \[
    \realtime_{1} < \realtime_{2} \implies \knownVC^{m}_{d}(\realtime_{1})[j] < \commitVC(\tvar)[j].
  \]
  Assume that $\realtime_{1} < \realtime_{2}$.
  We first argue that
  \begin{align}
    \langle \tidselector(\tvar), \_, \_, \_ \rangle \notin \committedcausal^{m}_{d}(\realtime_{1})[j].
    \label{eqn:tid-notin-committedcausal-t1}
  \end{align}
  Otherwise, since $\tvar$ is not forwarded at time $\realtime_{1}$,
  by line~\code{\ref{alg:unistore-replication}}{\ref{line:forward-txs}},
  \[
    \commitVC(\tvar)[j] \le \globalmatrix^{m}_{d}(\realtime_{1})[i][j].
  \]
  By Lemma~\ref{lemma:globalmatrix-nondecreasing},
  \[
    \commitVC(\tvar)[j] \le \globalmatrix^{m}_{d}(\realtime_{2})[i][j].
  \]
  Therefore, by line~\code{\ref{alg:unistore-replication}}{\ref{line:forward-txs}},
  $\tvar$ would not be forwarded by $p^{m}_{d}$ to $p^{m}_{i}$ at time $\realtime_{2}$.
  Thus, (\ref{eqn:tid-notin-committedcausal-t1}) holds.
  Since $\tvar$ is forwarded by $p^{m}_{d}$ to $p^{m}_{i}$ at time $\realtime_{2}$,
  \[
    \langle \tidselector(\tvar), \_, \_, \_ \rangle \in \committedcausal^{m}_{d}(\realtime_{2})[j].
  \]
  By Lemma~\ref{lemma:knownvc-i-nondecreasing}
  and line~\code{\ref{alg:unistore-replication}}{\ref{line:replicate-precondition}},
  \[
    \knownVC^{m}_{d}(\realtime_{1})[j] < \commitVC(\tvar)[j].
  \]

  Next we show that (note that $\realtime_{1} \neq \realtime_{2}$)
  \[
    \realtime_{2} < \realtime_{1} \implies \commitVC(\tvar)[j] \le \knownVC^{m}_{d}(\realtime_{1})[j].
  \]
  Since $\tvar$ is forwarded by $p^{m}_{d}$ to $p^{m}_{i}$ at time $\realtime_{2}$,
  by line~\code{\ref{alg:unistore-replication}}{\ref{line:forward-txs}},
  \[
    \langle \tidselector(\tvar), \_, \_, \_ \rangle \in \committedcausal^{m}_{d}(\realtime_{2})[j].
  \]
  By Lemmas~\ref{lemma:committedcausal-i} and \ref{lemma:knownvc-i-nondecreasing},
  \[
    \commitVC(\tvar)[j] \le \knownVC^{m}_{d}(\realtime_{2})[j].
  \]
  Assume that $\realtime_{2} < \realtime_{1}$.
  By Lemma~\ref{lemma:knownvc-i-nondecreasing},
  \[
    \knownVC^{m}_{d}(\realtime_{2})[j] \le \knownVC^{m}_{d}(\realtime_{1})[j].
  \]
  Putting it together yields
  \[
    \commitVC(\tvar)[j] \le \knownVC^{m}_{d}(\realtime_{1})[j].
  \]
\end{proof}

\begin{applemma} \label{lemma:replication-knownvc}
  Let $\tvar \in \causaltxs$ be a causal transaction
  that originates at data center $i$ and accesses partition $m$.
  If
  \[
    \commitVC(\tvar)[i] \le \knownVC^{m}_{d}[i]
  \]
  for replica $p^{m}_{d}$ in data center $d \neq i$,
  then
  \[
    \log(\tvar)[m] \subseteq \oplog^{m}_{d}.
  \]
\end{applemma}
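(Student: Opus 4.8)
The plan is to mirror the proof of Lemma~\ref{lemma:knownvc-local-d}, which handles the local case $d = i$ via Lemma~\ref{lemma:knownvc-commitvc-d}, but to drive the remote argument with the four transmission-ordering lemmas (\ref{lemma:replication-order}, \ref{lemma:heartbeat-replication-order}, \ref{lemma:forwarding-order}, \ref{lemma:heartbeat-forwarding-order}) together with FIFO delivery (Assumption~\ref{assumption:message}). Since $\knownVC^m_d[i]$ is non-decreasing (Lemma~\ref{lemma:knownvc-i-nondecreasing}), I would let $\realtime^*$ be the first time at which $\knownVC^m_d[i] \ge \commitVC(\tvar)[i]$ and note that this value is installed while processing a single message $M$ from some sibling replica $p^m_s$ with $s \neq d$: either a \replicate carrying $i$-origin transactions, which updates $\knownVC[i]$ at line~\code{\ref{alg:unistore-replication}}{\ref{line:replicate-knownvc}}, or a \heartbeat, which updates it at line~\code{\ref{alg:unistore-replication}}{\ref{line:heartbeat-knownvc}}. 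The goal then reduces to showing that $\tvar$'s updates to $m$ are already in $\oplog^m_d$ by $\realtime^*$.

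I would first reduce the claim to the statement that $p^m_s$ transmits $\tvar$ to $p^m_d$ inside a \replicate message no later than it sends $M$. Granting this, FIFO guarantees that $\tvar$'s \replicate is delivered at some time $\realtime' \le \realtime^*$; and by minimality of $\realtime^*$ we have $\knownVC^m_d[i] < \commitVC(\tvar)[i]$ strictly before $\realtime^*$, so the deduplication guard $\commitVC(\tvar)[i] > \knownVC^m_d[i]$ at line~\code{\ref{alg:unistore-replication}}{\ref{line:replicate-precondition}} holds when that message is processed, and $\tvar$'s $m$-updates are appended to $\oplog^m_d$ at line~\code{\ref{alg:unistore-replication}}{\ref{line:replicate-oplog}}. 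This yields $\log(\tvar)[m] \subseteq \oplog^m_d$ and simultaneously disposes of the worry that $\tvar$ is silently dropped as a duplicate, since a drop would force $\knownVC^m_d[i] \ge \commitVC(\tvar)[i]$ already, contradicting the choice of $\realtime^*$.

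The remaining and hardest step is the reduction itself, which I would prove by a case analysis on the sender together with an induction on $\realtime^*$. If $s = i$, the sender is the origin replica $p^m_i$, which commits $\tvar$ into $\committedcausal^m_i[i]$ and, by fairness (Assumption~\ref{assumption:fairness}) and Lemma~\ref{lemma:knownvc-commitvc-d}, eventually replicates it; Lemma~\ref{lemma:replication-order} (when $M$ is a \replicate) and Lemma~\ref{lemma:heartbeat-replication-order} (when $M$ is a \heartbeat of timestamp $\ge \commitVC(\tvar)[i]$) then force $\tvar$ to be sent no later than $M$, because both relate transmission order and heartbeat timestamps monotonically to $\commitVC[i]$. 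If $s = j \notin \{d, i\}$, the sender is a forwarder and I would apply Lemma~\ref{lemma:forwarding-order} and Lemma~\ref{lemma:heartbeat-forwarding-order} analogously; the subtlety is that forwarding is conditional, so I must separately argue that $p^m_j$ actually holds and forwards $\tvar$. Here I invoke the induction hypothesis: $p^m_j$ can emit a forwarded transaction or heartbeat of $i$-timestamp $\ge \commitVC(\tvar)[i]$ only once $\knownVC^m_j[i] \ge \commitVC(\tvar)[i]$, and by this very lemma applied to $p^m_j$ at the strictly earlier instant when that first held, $\tvar$ is already stored at $p^m_j$; together with Lemma~\ref{lemma:committedcausal-i} and Lemma~\ref{lemma:globalmatrix-nondecreasing} this places $\tvar$ in the forwarded set selected at line~\code{\ref{alg:unistore-replication}}{\ref{line:forward-txs}}, in $\commitVC[i]$ order. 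The induction is well-founded because each forwarding hop appeals to a strictly earlier receive time, bottoming out at direct replication from $p^m_i$. I expect this forwarder case --- reconciling the direct and forwarded delivery channels, the conditional nature of forwarding, and deduplication --- to be the main obstacle.
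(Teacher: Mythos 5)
Your proposal is correct and follows essentially the same route as the paper's own proof: the paper proceeds by induction on the length of the execution, splitting the inductive step into a message received directly from data center $i$ versus one forwarded by a third data center $j$, and discharges each case with exactly the lemmas you cite (Lemmas~\ref{lemma:replication-order}, \ref{lemma:heartbeat-replication-order}, \ref{lemma:forwarding-order}, \ref{lemma:heartbeat-forwarding-order}, plus monotonicity of $\knownVC$), FIFO channels (Assumption~\ref{assumption:message}), and the induction hypothesis, so your ``first time $\realtime^*$'' formulation with recursion through forwarding hops is the same induction in different clothing. If anything you are more explicit than the paper about the deduplication guard and the conditional-forwarding subtlety; the only blemish is the appeal to fairness in the $s=i$ case, which is unnecessary (and not crash-robust) since the $\committedcausal$ bookkeeping already forces $\tvar$ to have been sent in a \replicate{} before $M$ whenever $M$ carries an $i$-timestamp $\ge \commitVC(\tvar)[i]$.
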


\begin{proof} \label{proof:replication-knownvc}
  Note that for $i \in \D \setminus \set{d}$,
  $\knownVC^{m}_{d}[i]$ can be updated
  only at lines~\code{\ref{alg:unistore-replication}}{\ref{line:replicate-knownvc}}
  or \code{\ref{alg:unistore-replication}}{\ref{line:heartbeat-knownvc}}
  due to replication of transactions or heartbeats respectively,
  either directly from data center $i$
  (line~\code{\ref{alg:unistore-replication}}{\ref{line:function-propagate}})
  or indirectly from a third data center $j \neq i$
  (line~\code{\ref{alg:unistore-replication}}{\ref{line:function-forward}}).

  We proceed by induction on the length of the execution.
  In the following, for replica $p^{m}_{d}$ in data center $d \in \D$,
  we denote the value of $\knownVC^{m}_{d}$ (resp. $\oplog^{m}_{d}$)
  after $k$ steps in an execution
  by $\knownVC^{m}_{d}(k)$ (resp. $\oplog^{m}_{d}(k)$).
  \begin{itemize}
    \item \emph{Base Case.} $k = 0$.
      It holds trivially,
      since for replica $p^{m}_{d}$ in any data center $d \neq i$,
      \[
        \knownVC^{m}_{d}(0)[i] = 0.
      \]
    \item \emph{Induction Hypothesis.}
      Suppose that for any execution of length $k$, we have
      \begin{align*}
        &\forall d \in \D \setminus \set{i}.\;
          \forall \tvar \in \causaltxs.\; \\
            &\quad \big(\commitVC(\tvar)[i] \le \knownVC^{m}_{d}(k)[i] \\
            &\quad\quad \implies \log(\tvar)[m] \subseteq \oplog^{m}_{d}(k)\big).
      \end{align*}
    \item \emph{Induction Step.}
      Consider an execution of length $k + 1$.
      If the $(k+1)$-st step of this execution does not update
      $\knownVC^{m}_{d}[i]$ for replica $p^{m}_{d}$ in any data center $d \neq i$,
      then by the induction hypothesis,
      \begin{align*}
        &\forall d \in \D \setminus \set{i}.\;
          \forall \tvar \in \causaltxs.\; \\
            &\quad \big(\commitVC(\tvar)[i] \le \knownVC^{m}_{d}(k+1)[i] \\
            &\quad\quad \implies \log(\tvar)[m] \subseteq \oplog^{m}_{d}(k+1)\big).
      \end{align*}
      Otherwise, we perform a case analysis according to
      how $\knownVC^{m}_{d}[i]$ of replica $p^{m}_{d}$ in data center $d \neq i$
      is updated in the $(k+1)$-st step.
      \begin{itemize}
        \item \textsc{Case I:}
          $\knownVC^{m}_{d}[i]$ is updated due to delivery of a message
          from data center $i$.
          By Lemmas~\ref{lemma:knownvc-d-nondecreasing},
          \ref{lemma:replication-order}, and \ref{lemma:heartbeat-replication-order},
          local transactions and heartbeats are propagated by $p^{m}_{i}$ to sibling replicas
          in increasing order of their local timestamps $\commitvc[i]$
          and $\knownVC^{m}_{i}[i]$ values.
          Therefore, by Assumption~\ref{assumption:message}
          and the induction hypothesis,
          \begin{align*}
            &\forall \tvar \in \causaltxs.\; \\
              &\quad \big(\commitVC(\tvar)[i] \le \knownVC^{m}_{d}(k+1)[i] \\
              &\quad\quad \implies \log(\tvar)[m] \subseteq \oplog^{m}_{d}(k+1)\big).
          \end{align*}
        \item \textsc{Case II:}
          $\knownVC^{m}_{d}[i]$ is updated due to delivery of a message
          from a third data center $j \neq i$.
          By Lemmas~\ref{lemma:knownvc-i-nondecreasing},
          \ref{lemma:forwarding-order}, and \ref{lemma:heartbeat-forwarding-order},
          transactions originating at data center $i$ and heartbeats are forwarded
          by some replica, say $p^{m}_{j} (j \neq i)$,
          to sibling replicas in increasing order of
          their local timestamps $\commitvc[i]$ and $\knownVC^{m}_{j}[i]$ values.
          Therefore, by Assumption~\ref{assumption:message}
          and the induction hypothesis,
          \begin{align*}
            &\forall \tvar \in \causaltxs.\; \\
              &\quad \big(\commitVC(\tvar)[i] \le \knownVC^{m}_{d}(k+1)[i] \\
                &\quad\quad \implies \log(\tvar)[m] \subseteq \oplog^{m}_{d}(k+1)\big).
          \end{align*}
      \end{itemize}
  \end{itemize}
\end{proof}

\begin{applemma}[\prop{1}] \label{lemma:knownvc-causal}
  Let $\tvar \in \causaltxs$ be a causal transaction
  that originates at data center $i$
  and accesses partition $m$.
  If
  \[
    \commitVC(\tvar)[i] \le \knownVC^{m}_{d}[i]
  \]
  for replica $p^{m}_{d}$ in data center $d$,
  then
  \[
    \log(\tvar)[m] \subseteq \oplog^{m}_{d}.
  \]
\end{applemma}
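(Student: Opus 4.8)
The plan is to prove this by a simple case analysis on whether the replica's data center $d$ coincides with the originating data center $i$ of the transaction $\tvar$. The two cases are already handled individually by the two immediately preceding lemmas, so this statement is really just the combination of those two results into the unified Property~1 that covers an arbitrary $d \in \D$.

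First I would split on the condition $d = i$ versus $d \neq i$. In the case $d = i$, the hypothesis becomes $\commitVC(\tvar)[d] \le \knownVC^{m}_{d}[d]$, which is exactly the premise of Lemma~\ref{lemma:knownvc-local-d}; applying that lemma yields $\log(\tvar)[m] \subseteq \oplog^{m}_{d}$ directly. In the case $d \neq i$, the hypothesis $\commitVC(\tvar)[i] \le \knownVC^{m}_{d}[i]$ matches the premise of Lemma~\ref{lemma:replication-knownvc}, and applying that lemma again gives the desired conclusion. Since these two cases are exhaustive, the lemma follows.

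I do not expect any genuine obstacle here, because all the real difficulty has been absorbed into the supporting lemmas. In particular, Lemma~\ref{lemma:knownvc-local-d} reduces to Lemma~\ref{lemma:knownvc-commitvc-d} (the fact that any locally committed causal transaction receives a local timestamp strictly above any previously recorded $\knownVC^{m}_{d}[d]$) together with the commit handler's update to $\oplog$ at line~\code{\ref{alg:unistore-replica}}{\ref{line:commit-oplog}}. The remote case, Lemma~\ref{lemma:replication-knownvc}, is the heavyweight ingredient: it proceeds by induction on execution length and relies on the FIFO-ordering lemmas (Lemmas~\ref{lemma:replication-order}, \ref{lemma:heartbeat-replication-order}, \ref{lemma:forwarding-order}, and \ref{lemma:heartbeat-forwarding-order}) to argue that both direct replication from $i$ and indirect forwarding through a third data center $j$ deliver transactions in increasing local-timestamp order, so that advancing $\knownVC^{m}_{d}[i]$ past $\commitVC(\tvar)[i]$ forces $\tvar$'s updates into the log. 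Consequently, the present lemma requires no new machinery; the only care needed is to confirm that the two cited lemmas together cover every $d$, which they do since Lemma~\ref{lemma:knownvc-local-d} takes the originating data center as the replica's own and Lemma~\ref{lemma:replication-knownvc} explicitly assumes $d \neq i$.
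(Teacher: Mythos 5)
Your proposal is correct and is essentially identical to the paper's own proof, which simply cites Lemma~\ref{lemma:knownvc-local-d} (the local case $d = i$) and Lemma~\ref{lemma:replication-knownvc} (the remote case $d \neq i$); your explicit case split is just the spelled-out form of that two-lemma combination. Your account of what each supporting lemma contributes also matches the paper's development.
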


\begin{proof} \label{proof:knownvc-causal}
  By Lemmas~\ref{lemma:knownvc-local-d} and \ref{lemma:replication-knownvc}.
\end{proof}
\subsubsection{Properties of $\stableVC$}
\label{sss:stablevc}

\begin{applemma} \label{lemma:stablevc-nondecreasing}
  For $i \in \D$, $\stableVC^{m}_{d}[i]$
  at any replica $p^{m}_{d}$ in data center $d$ is non-decreasing.
\end{applemma}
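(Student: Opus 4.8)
The plan is to trace the single place where $\stableVC^{m}_{d}[i]$ (for $i \in \D$) can change and reduce the claim to monotonicity of the underlying matrix $\localmatrix$. The only assignment to $\stableVC^{m}_{d}[i]$ with $i \in \D$ is line~\code{\ref{alg:unistore-clock}}{\ref{line:knownvclocal-stablevc-causal}}, inside the handler for $\knownvclocal$, which recomputes $\stableVC^{m}_{d}[i] \gets \min \set{\localmatrix^{m}_{d}[n][i] \mid n \in \P}$; between two executions of this handler the value stays constant. Hence it suffices to show that, for every partition $n \in \P$ and every $i \in \D$, the entry $\localmatrix^{m}_{d}[n][i]$ is non-decreasing over time, and then invoke the elementary fact that a pointwise minimum of finitely many non-decreasing integer sequences is itself non-decreasing.

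To establish monotonicity of each matrix entry, I would observe that $\localmatrix^{m}_{d}[n]$ is written only at line~\code{\ref{alg:unistore-clock}}{\ref{line:knownvclocal-localknownmatrix}}, where it is overwritten by the vector carried in a $\knownvclocal(n, \cdot)$ message. Such messages are emitted by the sibling-in-data-center replica $p^{n}_{d}$ through the periodic $\bcast$ procedure (line~\code{\ref{alg:unistore-clock}}{\ref{line:bcast-call-knownvclocal}}), each time piggybacking the current value of $\knownVC^{n}_{d}$. By Lemma~\ref{lemma:knownvc-nondecreasing}, $\knownVC^{n}_{d}[i]$ is non-decreasing at the sender, so the successive snapshots it broadcasts are non-decreasing in their $i$-th entry. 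Since the channel from $p^{n}_{d}$ to $p^{m}_{d}$ is FIFO (Assumption~\ref{assumption:message}), these snapshots are received in the same order, and therefore the sequence of values written into $\localmatrix^{m}_{d}[n][i]$ is non-decreasing. Accounting for initialization, the first received value is $\ge 0$, matching the initial entry $\localmatrix^{m}_{d}[n][i] = 0$, so monotonicity is not broken at startup.

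Combining the two steps closes the argument: each $\localmatrix^{m}_{d}[n][i]$ is non-decreasing, so at any handler execution the recomputed minimum $\min_{n} \localmatrix^{m}_{d}[n][i]$ can only stay the same or grow. I expect the only delicate point to be the second step: monotonicity of the received values does not follow from Lemma~\ref{lemma:knownvc-nondecreasing} alone, but requires pairing it with FIFO delivery (Assumption~\ref{assumption:message}) so that the order in which snapshots are produced is preserved on receipt---exactly the same FIFO-plus-monotone-clock pattern already used in Lemmas~\ref{lemma:replication-order}--\ref{lemma:heartbeat-forwarding-order}. The bookkeeping for entries of $\localmatrix$ that have not yet received any message (which remain at their initial value $0$) is the other thing to state carefully, but it is routine.
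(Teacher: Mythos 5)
Your proof is correct and takes essentially the same route as the paper's: the paper's own (much terser) proof likewise observes that $\stableVC^{m}_{d}[i]$ is written only at line~\code{\ref{alg:unistore-clock}}{\ref{line:knownvclocal-stablevc-causal}} and then concludes from Lemma~\ref{lemma:knownvc-nondecreasing} together with FIFO delivery (Assumption~\ref{assumption:message}). You have simply made explicit the intermediate steps the paper leaves implicit, namely the monotonicity of each entry $\localmatrix^{m}_{d}[n][i]$ and the fact that a pointwise minimum of finitely many non-decreasing sequences is non-decreasing.
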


\begin{proof} \label{proof:stablevc-nondecreasing}
  Note that $\stableVC^{m}_{d}[i]$ ($i \in \D$) can be updated
  only at line~\code{\ref{alg:unistore-clock}}{\ref{line:knownvclocal-stablevc-causal}}.
  By Lemma~\ref{lemma:knownvc-nondecreasing}
  and Assumption~\ref{assumption:message},
  $\stableVC^{m}_{d}[i]$ is non-decreasing.
\end{proof}

\begin{applemma}[\prop{2}] \label{lemma:stablevc-knownvc}
  For any replica $p^{m}_{d}$ in data center $d$,
  \[
    \forall i \in \D.\; \forall n \in \P.\;
      \stableVC^{m}_{d}[i] \le \knownVC^{n}_{d}[i].
  \]
\end{applemma}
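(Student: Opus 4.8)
The plan is to fix a time $\realtime$ and a replica $p^m_d$, and to show that the invariant holds between the \emph{current} values, i.e.\ $\stableVC^m_d(\realtime)[i] \le \knownVC^n_d(\realtime)[i]$ for every $i \in \D$ and $n \in \P$. First I would locate the only write to $\stableVC^m_d[i]$: it happens exclusively inside the handler of \knownvclocal{}, at line~\code{\ref{alg:unistore-clock}}{\ref{line:knownvclocal-stablevc-causal}}, which sets $\stableVC^m_d[i] \gets \min\set{\localmatrix^m_d[n'][i] \mid n' \in \P}$. Letting $\realtime'$ be the time of the last execution of this line at or before $\realtime$, we have $\stableVC^m_d(\realtime)[i] = \stableVC^m_d(\realtime')[i] \le \localmatrix^m_d(\realtime')[n][i]$ for the particular $n$ of interest, since that term is one of the arguments of the minimum. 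Hence it suffices to prove $\localmatrix^m_d(\realtime')[n][i] \le \knownVC^n_d(\realtime)[i]$.

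Next I would argue that $\localmatrix^m_d[n]$ always holds a (possibly stale) copy of $\knownVC^n_d$. Its only write is at line~\code{\ref{alg:unistore-clock}}{\ref{line:knownvclocal-localknownmatrix}}, which stores the payload of a received \knownvclocal{}$(n,\cdot)$ message; by line~\code{\ref{alg:unistore-clock}}{\ref{line:bcast-call-knownvclocal}} such a message is emitted by $p^n_d$ in \bcast{} carrying the value of $\knownVC^n_d$ at its send time. Thus $\localmatrix^m_d(\realtime')[n] = \knownVC^n_d(\realtime'')$ for some send time $\realtime'' \le \realtime' \le \realtime$ (and before any such message is received, $\localmatrix^m_d[n]$ retains its initial zero vector, for which the inequality is immediate since $\knownVC$ entries are non-negative). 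Monotonicity of $\knownVC$, namely Lemma~\ref{lemma:knownvc-nondecreasing}, then gives $\localmatrix^m_d(\realtime')[n][i] = \knownVC^n_d(\realtime'')[i] \le \knownVC^n_d(\realtime)[i]$, which together with the reduction of the previous paragraph yields the claim.

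I expect the only delicate point to be the bookkeeping of the three time instants---the time $\realtime$ at which the invariant is asserted, the time $\realtime'$ of the last $\stableVC$ recomputation, and the send time $\realtime''$ of the message whose payload currently sits in $\localmatrix^m_d[n]$---together with checking that $\stableVC^m_d[i]$ is genuinely unchanged on the interval $[\realtime', \realtime]$, so that comparing the stale $\localmatrix$ copy against the \emph{current} $\knownVC^n_d$ is sound. This holds because $\stableVC^m_d[i]$ is written nowhere else. Note that FIFO delivery (Assumption~\ref{assumption:message}) is convenient but not strictly needed for this inequality: monotonicity alone upgrades any stale copy of $\knownVC^n_d[i]$ to a lower bound on its current value.
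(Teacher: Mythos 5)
Your proof is correct and takes essentially the same route as the paper's, which likewise notes that $\stableVC^{m}_{d}[i]$ is written only at line~\code{\ref{alg:unistore-clock}}{\ref{line:knownvclocal-stablevc-causal}} and concludes from the form of that update (a minimum over the stale copies of $\knownVC^{n}_{d}$ held in $\localmatrix^{m}_{d}$) together with monotonicity of $\knownVC$ (Lemmas~\ref{lemma:knownvc-d-nondecreasing} and~\ref{lemma:knownvc-i-nondecreasing}); your write-up merely makes the time bookkeeping explicit. The only omission is the trivial edge case where that line has never executed before $\realtime$, in which case $\stableVC^{m}_{d}[i]$ still holds its initial zero value and the inequality is immediate---the same observation you already make for $\localmatrix^{m}_{d}[n]$.
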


\begin{proof} \label{proof:stablevc-knownvc}
  Note that $\stableVC^{m}_{d}[i]$ ($i \in \D$)
  can be updated only
  at line~\code{\ref{alg:unistore-clock}}{\ref{line:knownvclocal-stablevc-causal}}.
  By the way $\stableVC^{m}_{d}[i]$ is updated
  and Lemmas~\ref{lemma:knownvc-d-nondecreasing}
  and \ref{lemma:knownvc-i-nondecreasing},
  \[
    \forall n \in \P.\; \stableVC^{m}_{d}[i] \le \knownVC^{n}_{d}[i].
  \]
\end{proof}

\begin{applemma} \label{lemma:replication-stablevc}
  Let $\tvar \in \causaltxs$ be a causal transaction
  that originates at data center $i$
  and accesses partition $n$.
  If
  \[
    \commitVC(\tvar)[i] \le \stableVC^{m}_{d}[i]
  \]
  for some replica $p^{m}_{d}$ in data center $d$,
  then
  \[
    \log(\tvar)[n] \subseteq \oplog^{n}_{d}.
  \]
\end{applemma}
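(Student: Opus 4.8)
The plan is to establish the claim by chaining Lemma~\ref{lemma:stablevc-knownvc} (Property~\prop{2}) with Lemma~\ref{lemma:knownvc-causal} (Property~\prop{1}), both already proved for the causal fragment. The whole argument is a short transitivity chain on vector entries, with the only care needed being index discipline.

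First, starting from the hypothesis $\commitVC(\tvar)[i] \le \stableVC^{m}_{d}[i]$, I would bound the stable-vector entry by the corresponding known-vector entry at the replica that actually stores partition $n$. Lemma~\ref{lemma:stablevc-knownvc} gives, for the same replica $p^{m}_{d}$ and for every partition in $\P$, the inequality $\stableVC^{m}_{d}[i] \le \knownVC^{n'}_{d}[i]$; instantiating it at the partition $n' = n$ that $\tvar$ accesses yields $\stableVC^{m}_{d}[i] \le \knownVC^{n}_{d}[i]$. The point worth highlighting is that this step moves the bound from the coordinating replica $p^{m}_{d}$ to the sibling replica $p^{n}_{d}$ in the \emph{same} data center $d$, which is exactly the replica named in the conclusion.

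Combining this with the hypothesis gives the transitive bound $\commitVC(\tvar)[i] \le \knownVC^{n}_{d}[i]$. At this point I would apply Lemma~\ref{lemma:knownvc-causal} to the transaction $\tvar$, which by assumption originates at $i$ and accesses partition $n$: since $\commitVC(\tvar)[i] \le \knownVC^{n}_{d}[i]$ holds at replica $p^{n}_{d}$, that lemma directly yields $\log(\tvar)[n] \subseteq \oplog^{n}_{d}$, as required.

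There is no substantive obstacle here; the statement is essentially a bookkeeping consequence of the two structural properties of $\stableVC$ and $\knownVC$, and both prerequisites are available from earlier in the development. The only thing to watch is that the data-center index $i$ (where $\tvar$ originates) stays fixed throughout, while the partition index used in Property~\prop{2} must be chosen to be $n$ (the partition $\tvar$ touches) rather than the coordinator's partition $m$, so that the final application of Property~\prop{1} lands on the correct replica $p^{n}_{d}$.
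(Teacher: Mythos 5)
Your proof is correct and is essentially the paper's own argument: both apply Lemma~\ref{lemma:stablevc-knownvc} at partition $n$ to obtain $\commitVC(\tvar)[i] \le \knownVC^{n}_{d}[i]$ by transitivity, and then conclude via a $\knownVC$-based replication lemma. The only difference is the final citation: the paper invokes Lemma~\ref{lemma:replication-knownvc}, which is stated only for $d \neq i$, whereas your Lemma~\ref{lemma:knownvc-causal} also covers the case $d = i$, so your version is, if anything, marginally more complete.
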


\begin{proof} \label{proof:replication-stablevc}
  By Lemma~\ref{lemma:stablevc-knownvc},
  \[
    \stableVC^{m}_{d}[i] \le \knownVC^{n}_{d}[i].
  \]
  Therefore,
  \[
    \commitVC(\tvar)[i] \le \knownVC^{n}_{d}[i].
  \]
  By Lemma~\ref{lemma:replication-knownvc},
  \[
    \log(\tvar)[n] \subseteq \oplog^{n}_{d}.
  \]
\end{proof}
\subsubsection{Properties of $\uniformVC$}
\label{sss:uniformvc}

\begin{applemma} \label{lemma:uniformvc-nondecreasing}
  For $i \in \D$, $\uniformVC^{m}_{d}[i]$
  at any replica $p^{m}_{d}$ in data center $d$ is non-decreasing.
\end{applemma}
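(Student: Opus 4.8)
The plan is to mimic the pattern of the preceding monotonicity lemmas (Lemmas~\ref{lemma:knownvc-i-nondecreasing}, \ref{lemma:stablevc-nondecreasing}, and \ref{lemma:globalmatrix-nondecreasing}): identify every program point at which $\uniformVC^{m}_{d}[i]$ can change, and observe that each such assignment takes the componentwise maximum with the current value, so that no update can ever decrease the entry.

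First I would enumerate the write sites. The variable $\uniformVC$ is modified at only four places: in \starttx{} (line~\code{\ref{alg:unistore-coord}}{\ref{line:start-uniformvc}}), in the \readkey{} handler (line~\code{\ref{alg:unistore-replica}}{\ref{line:readkey-uniformvc}}), in the \prepare{} handler (line~\code{\ref{alg:unistore-replica}}{\ref{line:preparecausal-uniformvc}}), and in the \stablevcproc{} handler (line~\code{\ref{alg:unistore-clock}}{\ref{line:stablevc-uniformvc}}). The first three update only the entries $i \in \D \setminus \set{d}$, whereas the fourth ranges over all $j \in \D$ and hence is the only site that touches the local entry $\uniformVC^{m}_{d}[d]$. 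I would therefore split on whether $i = d$ or $i \neq d$ only to confirm that the relevant sites are accounted for; the shape of the assignment is the same in every case.

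Next I would note that in each of these four assignments the right-hand side has the form $\max\set{\cdots, \uniformVC[i]}$, so the post-assignment value is at least the pre-assignment value. Since these are the only writes, for any two times $\realtime_{1} < \realtime_{2}$ the value of $\uniformVC^{m}_{d}[i]$ evolves from $\uniformVC^{m}_{d}(\realtime_{1})[i]$ to $\uniformVC^{m}_{d}(\realtime_{2})[i]$ through a (possibly empty) sequence of such max-updates; a trivial induction on the number of intervening updates yields $\uniformVC^{m}_{d}(\realtime_{1})[i] \le \uniformVC^{m}_{d}(\realtime_{2})[i]$, which is exactly the claim.

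There is no real obstacle here: the argument is purely syntactic, and the only thing to be careful about is the completeness of the enumeration of write sites---in particular, confirming that the entry $i = d$ is never written by the three handlers that loop over $\D \setminus \set{d}$, and that it is only ever raised (never reset) at line~\code{\ref{alg:unistore-clock}}{\ref{line:stablevc-uniformvc}}. Once this bookkeeping is checked, monotonicity follows immediately, exactly as in the analogous lemmas for $\knownVC$ and $\stableVC$.
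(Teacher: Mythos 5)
Your proposal is correct and matches the paper's own proof: both enumerate the same four update sites (lines~\code{\ref{alg:unistore-coord}}{\ref{line:start-uniformvc}}, \code{\ref{alg:unistore-replica}}{\ref{line:readkey-uniformvc}}, \code{\ref{alg:unistore-replica}}{\ref{line:preparecausal-uniformvc}}, and \code{\ref{alg:unistore-clock}}{\ref{line:stablevc-uniformvc}}) and observe that each assignment takes the maximum of the current entry with some scalar, so the entry can never decrease. Your additional bookkeeping about which sites touch the entry $i = d$ versus $i \neq d$ is a harmless elaboration of the same one-line argument.
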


\begin{proof} \label{proof:uniformvc-nondecreasing}
  Note that whenever $\uniformVC^{m}_{d}[i]$ is updated
  at lines~\code{\ref{alg:unistore-coord}}{\ref{line:start-uniformvc}},
  \code{\ref{alg:unistore-replica}}{\ref{line:readkey-uniformvc}},
  \code{\ref{alg:unistore-replica}}{\ref{line:preparecausal-uniformvc}},
  or \code{\ref{alg:unistore-clock}}{\ref{line:stablevc-uniformvc}},
  we take the maximum of it and some scalar value.
\end{proof}

\begin{applemma} \label{lemma:pastvc-uniformvc-except-d}
  Let $e \in E$ be an event issued by client $\cl$
  to replica $p^{m}_{d}$ in data center $d$. Then
  \begin{align*}
    &e \in E \setminus \Fence \implies \\
      &\quad \forall i \in \D \setminus \set{d}.\;
      (\pastVC_{\cl})_{e}[i] \le (\uniformVC^{m}_{d})_{e}[i],
  \end{align*}
  and
  \[
    e \in \Fence \implies (\pastVC_{\cl})_{e}[d] \le (\uniformVC^{m}_{d})_{e}[d].
  \]
\end{applemma}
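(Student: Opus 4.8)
The plan is to argue by cases on the type of $e$, in each case connecting the client-side vector $\pastVC_{\cl}$ to the replica-side vector $\uniformVC^{m}_{d}$ and then invoking the monotonicity of $\uniformVC$ (Lemma~\ref{lemma:uniformvc-nondecreasing}). The two non-transactional cases are almost immediate from the blocking conditions of their handlers. If $e \in \Fence$, then $e$ runs \uniformbarrier{}$(\pastVC_{\cl})$ at $p^{m}_{d}$ with $d = \cldc(\cl)$, which returns only once $\uniformVC[d] \ge \pastVC_{\cl}[d]$ (line~\code{\ref{alg:unistore-replica}}{\ref{line:uniformbarrier-wait-uniformvc-d}}); since a fence does not modify $\pastVC_{\cl}$, this gives $(\pastVC_{\cl})_{e}[d] \le (\uniformVC^{m}_{d})_{e}[d]$. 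If $e \in \Attach$, then $e$ runs \attach{}$(\pastVC_{\cl})$ at a replica of the destination data center $d$, which returns only once $\forall i \in \D \setminus \set{d}.\; \uniformVC[i] \ge \pastVC_{\cl}[i]$ (line~\code{\ref{alg:unistore-replica}}{\ref{line:attach-wait-condition}}), which is exactly the claim for $i \ne d$.

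For the transactional events I would first isolate a single fact about the start event of $\tvar \triangleq \txfunc(e)$. When $\tvar$ starts at its coordinator $p^{m}_{d}$, the loop at line~\code{\ref{alg:unistore-coord}}{\ref{line:start-uniformvc}} merges the remote entries of the client's $\pastVC_{\cl}$ into $\uniformVC^{m}_{d}$, and line~\code{\ref{alg:unistore-coord}}{\ref{line:start-snapvc}} then copies $\uniformVC^{m}_{d}$ into $\snapVC[\tvar]$. Hence, using Definition~\ref{def:snapshotvc}, for every $i \in \D \setminus \set{d}$,
\[
  (\pastVC_{\cl})_{\startoftx(\tvar)}[i] \le \snapshotVC(\tvar)[i]
    = (\uniformVC^{m}_{d})_{\startoftx(\tvar)}[i].
\]
Since the client rewrites $\pastVC_{\cl}$ only at its commit events, $\pastVC_{\cl}$ is unchanged throughout $\tvar$ until the commit, and since all of $\tvar$'s operations are issued to the same coordinator $p^{m}_{d}$ at real times no earlier than $\startoftx(\tvar)$, Lemma~\ref{lemma:uniformvc-nondecreasing} gives $(\uniformVC^{m}_{d})_{e}[i] \ge (\uniformVC^{m}_{d})_{\startoftx(\tvar)}[i]$. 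This settles $e \in S \cup R \cup U$ at once.

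The remaining, and principal, case is a commit event, where $(\pastVC_{\cl})_{e}$ is the commit vector $\commitVC(\tvar)$ rather than the start-time past (lines~\code{\ref{alg:unistore-client}}{\ref{line:commitcausaltx-pastvc}} and~\code{\ref{alg:unistore-client}}{\ref{line:commitstrongtx-pastvc}}). Here I would show that the remote entries of $\commitVC(\tvar)$ agree with those of $\snapshotVC(\tvar)$: for a read-only causal transaction $\commitVC(\tvar) = \snapshotVC(\tvar)$ (the commit vector is the unchanged snapshot vector, cf.\ Lemma~\ref{lemma:snapshotvc-commitvc}); for an update causal transaction only the $d$-entry is altered relative to the snapshot, since line~\code{\ref{alg:unistore-coord}}{\ref{line:commitcausal-commitvc}} sets $\commitvc \gets \snapVC[\tvar]$ and only line~\code{\ref{alg:unistore-coord}}{\ref{line:commitcausal-commitvc-d}} touches $\commitvc[d]$; and for a committed strong transaction this is exactly the per-data-center clause of the certification guarantee~(\ref{eqn:gcf-commitvc}). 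Thus $\commitVC(\tvar)[i] = \snapshotVC(\tvar)[i]$ for $i \ne d$, and chaining with the displayed inequality and monotonicity yields $(\pastVC_{\cl})_{e}[i] = \commitVC(\tvar)[i] = (\uniformVC^{m}_{d})_{\startoftx(\tvar)}[i] \le (\uniformVC^{m}_{d})_{e}[i]$, completing the non-fence case.

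The hard part will be the bookkeeping that aligns the client-side quantity $(\pastVC_{\cl})_{e}$ with the coordinator-side quantity $(\uniformVC^{m}_{d})_{e}$ at the correct real time: one must confirm that the replica handling $e$ is the very coordinator $p^{m}_{d}$ that started $\tvar$ and absorbed the client's past, and that the instant of $e$ is weakly later than $\startoftx(\tvar)$, so that Lemma~\ref{lemma:uniformvc-nondecreasing} is applicable and $\pastVC_{\cl}$ has not yet been overwritten. The commit case carries the extra subtlety that preservation of the snapshot's remote entries by the commit vector is, for strong transactions, not a line of pseudocode but the specification-level property~(\ref{eqn:gcf-commitvc}) of the certification service, so the argument must route through the TCS interface rather than the replica code.
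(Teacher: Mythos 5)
Your proof is correct and follows essentially the same route as the paper's: a case analysis on the event type, using the merge of the client's past into $\uniformVC$ at transaction start together with monotonicity (Lemma~\ref{lemma:uniformvc-nondecreasing}) for start/read/update events, the preservation of remote snapshot entries by commit vectors (via the \commitcausal{} pseudocode for causal commits and property~(\ref{eqn:gcf-commitvc}) of the certification service for strong commits), and the wait conditions of \uniformbarrier{} and \attach{} for the fence and attach cases. No gaps to report.
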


\begin{proof} \label{proof:pastvc-uniformvc-except-d}
  We perform a case analysis according to the type of event $e$.
  \begin{itemize}
    \item $\textsc{Case I}$: $e \in S$.
      By line~\code{\ref{alg:unistore-coord}}{\ref{line:start-uniformvc}},
      \[
        \forall i \in \D \setminus \set{d}.\;
          (\pastVC_{\cl})_{e}[i] \le (\uniformVC^{m}_{d})_{e}[i].
      \]
    \item $\textsc{Case II}$: $e \in R \cup U$.
      In this case,
      \[
        (\pastVC_{\cl})_{e} = (\pastVC_{\cl})_{\startoftx(e)}.
      \]
      By \textsc{Case I},
      \[
        \forall i \in \D \setminus \set{d}.\;
          (\pastVC_{\cl})_{\startoftx(e)}[i] \le (\uniformVC^{m}_{d})_{\startoftx(e)}[i].
      \]
      By Lemma~\ref{lemma:uniformvc-nondecreasing},
      \[
        \forall i \in \D \setminus \set{d}. \\
          (\uniformVC^{m}_{d})_{\startoftx(e)} \le (\uniformVC^{m}_{d})_{e}.
      \]
      Putting it together yields
      \[
        \forall i \in \D \setminus \set{d}.\;
          (\pastVC_{\cl})_{e}[i] \le (\uniformVC^{m}_{d})_{e}[i].
      \]
    \item $\textsc{Case III}$: $e \in C_{\causalentry}$.
      By line~\code{\ref{alg:unistore-client}}{\ref{line:commitcausaltx-pastvc}},
      \[
        (\pastVC_{\cl})_{e} = vc_{(\commitcausaltx, e)}.
      \]
      By lines~\code{\ref{alg:unistore-coord}}{\ref{line:commitcausal-return-ro}},
      \code{\ref{alg:unistore-coord}}{\ref{line:commitcausal-commitvc}},
      and \code{\ref{alg:unistore-coord}}{\ref{line:commitcausal-return}},
      \begin{align*}
        &\forall i \in \D \setminus \set{d}. \\
          &\quad vc_{(\commitcausaltx, e)}[i] = (\snapVC^{m}_{d})_{e}[\txfunc(e)][i].
      \end{align*}
      By line~\code{\ref{alg:unistore-coord}}{\ref{line:start-snapvc}},
      \begin{align*}
        &\forall i \in \D \setminus \set{d}. \\
          &\quad (\snapVC^{m}_{d})_{e}[\txfunc(e)][i] = (\uniformVC^{m}_{d})_{\startoftx(e)}[i].
      \end{align*}
      By Lemma~\ref{lemma:uniformvc-nondecreasing},
      \[
        \forall i \in \D \setminus \set{d}. \\
          (\uniformVC^{m}_{d})_{\startoftx(e)} \le (\uniformVC^{m}_{d})_{e}.
      \]
      Putting it together yields
      \[
        \forall i \in \D \setminus \set{d}.\;
          (\pastVC_{\cl})_{e}[i] \le (\uniformVC^{m}_{d})_{e}[i].
      \]
    \item $\textsc{Case IV}$: $e \in C_{\strongentry}$.
      By line~\code{\ref{alg:unistore-client}}{\ref{line:commitstrongtx-pastvc}},
      \[
        (\pastVC_{\cl})_{e} = vc_{(\commitstrongtx, e)}.
      \]
      By (\ref{eqn:gcf-commitvc}),
      \begin{align*}
        &\forall i \in \D \setminus \set{d}. \\
          &\quad vc_{(\commitstrongtx, e)}[i] = (\snapVC^{m}_{d})_{e}[\txfunc(e)][i].
      \end{align*}
      Therefore, similar to \textsc{Case III}, we have
      \[
        \forall i \in \D \setminus \set{d}.\;
          (\pastVC_{\cl})_{e}[i] \le (\uniformVC^{m}_{d})_{e}[i].
      \]
    \item $\textsc{Case V}$: $e \in \Fence$.
      By line~\code{\ref{alg:unistore-replica}}{\ref{line:uniformbarrier-wait-uniformvc-d}},
      \[
        (\pastVC_{\cl})_{e}[d] \le (\uniformVC^{m}_{d})_{e}[d].
      \]
    \item $\textsc{Case VI}$: $e \in \Attach$.
      By line~\code{\ref{alg:unistore-replica}}{\ref{line:attach-wait-condition}},
      \[
        \forall i \in \D \setminus \set{d}.\;
          (\pastVC_{\cl})_{e}[i] \le (\uniformVC^{m}_{d})_{e}[i].
      \]
  \end{itemize}
\end{proof}

\begin{applemma} \label{lemma:pastvc-uniformvc}
  Let $\cl$ be a client and $d \triangleq \cldc(\cl)$.
  At any time,
  \[
    \forall i \in \D \setminus \set{d}.\;
      \pastVC_{\cl}[i] \le \uniformVC^{m}_{d}[i]
  \]
  for some replica $p^{m}_{d}$ in data center $d$.
\end{applemma}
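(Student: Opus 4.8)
The plan is to reduce this ``at any time'' statement to the per-event statement of Lemma~\ref{lemma:pastvc-uniformvc-except-d}, exploiting two facts that hold throughout an execution: that $\pastVC_{\cl}$ is piecewise constant in time, being modified only by the client at the two commit lines \code{\ref{alg:unistore-client}}{\ref{line:commitcausaltx-pastvc}} and \code{\ref{alg:unistore-client}}{\ref{line:commitstrongtx-pastvc}}; and that $\uniformVC^{m}_{d}[i]$ is non-decreasing in time (Lemma~\ref{lemma:uniformvc-nondecreasing}). Fix a time $\realtime$ and let $d \triangleq \cldc(\cl)$ at $\realtime$. I would identify a single ``pinning'' event $e^{*}$ issued by $\cl$ that simultaneously fixes the current value of $\pastVC_{\cl}$ and the current attachment $d$, establish the desired inequality at the instant of $e^{*}$ via Lemma~\ref{lemma:pastvc-uniformvc-except-d}, and then transport it forward to $\realtime$ by monotonicity.

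Concretely, $e^{*}$ is the most recent event before $\realtime$ that is either (i) a commit event (\commitcausaltx{} or \commitstrongtx) issued by $\cl$ while attached to $d$, or (ii) the \clattach{} event that last set $\cldc(\cl) = d$ (line \code{\ref{alg:unistore-client}}{\ref{line:clattach-j}}). If no such event exists, then $\cl$ is still at its initial data center with $\pastVC_{\cl} = \vec 0$, and the claim holds trivially for any replica $p^{m}_{d}$, since $\uniformVC$ entries are non-negative. Otherwise $e^{*}$ was issued to some replica $p^{m^{*}}_{d}$: in case (i) this is the coordinator of the committed transaction, and in case (ii) it is the replica handling the \attach{} call, which lives in $d$ precisely because the migration destination equals the new $\cldc(\cl)$. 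In either case $e^{*}$ is a commit or attach, hence $e^{*} \in E \setminus \Fence$, so the first branch of Lemma~\ref{lemma:pastvc-uniformvc-except-d} applies and yields $(\pastVC_{\cl})_{e^{*}}[i] \le (\uniformVC^{m^{*}}_{d})_{e^{*}}[i]$ for every $i \in \D \setminus \set{d}$.

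I then argue that nothing relevant changes on the interval $[e^{*}, \realtime]$. By maximality of $e^{*}$ there is no \clattach{} event in $(e^{*}, \realtime]$: an attach to some $d' \neq d$ followed by a later return to $d$ would contribute an even more recent type-(ii) event, and an attach to $d$ itself is a type-(ii) event, both contradicting maximality; consequently $\cl$ stays attached to $d$ throughout. Since any commit in $(e^{*}, \realtime]$ would then be a type-(i) event, there are none, so $\pastVC_{\cl}(\realtime) = (\pastVC_{\cl})_{e^{*}}$. Finally Lemma~\ref{lemma:uniformvc-nondecreasing} gives $(\uniformVC^{m^{*}}_{d})_{e^{*}}[i] \le \uniformVC^{m^{*}}_{d}(\realtime)[i]$, and chaining the three (in)equalities proves $\pastVC_{\cl}(\realtime)[i] \le \uniformVC^{m^{*}}_{d}(\realtime)[i]$ with witness replica $p^{m^{*}}_{d}$.

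The main obstacle is the bookkeeping around migration, because $\pastVC_{\cl}$ and $\cldc(\cl)$ are updated by different kinds of events, so the pinning event $e^{*}$ must be chosen to control both at once: a commit fixes $\pastVC_{\cl}$ but only while the client already resides at $d$, whereas an attach fixes $d$ without touching $\pastVC_{\cl}$. I must check that the two sources splice seamlessly along the timeline—that immediately after a migration to $d$ the attach case (Case~VI in the proof of Lemma~\ref{lemma:pastvc-uniformvc-except-d}, justified by the wait at line \code{\ref{alg:unistore-replica}}{\ref{line:attach-wait-condition}}) already supplies the inequality, and that once the client commits again the commit case takes over without leaving a gap. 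Verifying that no intervening \start{} or read/update event perturbs $\pastVC_{\cl}$ is routine given the earlier lemmas.
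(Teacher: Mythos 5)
Your proposal is correct and is essentially the paper's own proof: the paper establishes the claim "by a simple induction on the number of events that $\cl$ issues" together with Lemmas~\ref{lemma:pastvc-uniformvc-except-d} and \ref{lemma:uniformvc-nondecreasing}, and your ``last pinning event'' argument is exactly that induction unrolled -- apply the per-event lemma at the last event that fixed $\pastVC_{\cl}$ and $\cldc(\cl)$, then transport forward by monotonicity of $\uniformVC$. Your explicit treatment of the commit/attach interleaving is precisely the bookkeeping the paper's inductive step would perform case by case.
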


\begin{proof} \label{proof:pastvc-uniformvc}
  By a simple induction on the number of events that $\cl$ issues
  and Lemmas~\ref{lemma:pastvc-uniformvc-except-d} and
  \ref{lemma:uniformvc-nondecreasing}.
\end{proof}

\begin{applemma} \label{lemma:snapshotvc-uniformvc}
  Let $\tvar$ be a transaction that originates at data center $d$.
  At any time,
  \[
    \forall i \in \D \setminus \set{d}.\;
      \snapshotVC(\tvar)[i] \le \uniformVC^{m}_{d}[i]
  \]
  for some replica $p^{m}_{d}$ in data center $d$.
\end{applemma}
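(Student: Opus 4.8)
The plan is to take $m \triangleq \coord(\tvar)$ as the witnessing replica and to read the bound off directly from the \starttx{} handler that computes the snapshot, then propagate it forward in time using the monotonicity of $\uniformVC$. Concretely, I would recall from Definition~\ref{def:snapshotvc} that $\snapshotVC(\tvar) = (\snapVC^m_d)_{\startoftx(\tvar)}[\tvar]$ with $d = \dc(\tvar)$ and $m = \coord(\tvar)$, so it suffices to compare the value written into $\snapVC[\tvar]$ at the start event against $\uniformVC$.

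First I would inspect the body of \starttx{} (Algorithm~\ref{alg:unistore-coord}). Line~\code{\ref{alg:unistore-coord}}{\ref{line:start-snapvc}} copies the entire current $\uniformVC$ into $\snapVC[\tidvar]$, and the only subsequent writes to that vector are at line~\code{\ref{alg:unistore-coord}}{\ref{line:start-snapvc-d}}, which overwrites the entry $d$, and the highlighted strong-entry assignment at line~\code{\ref{alg:unistore-coord}}{\ref{line:start-snapvc-strong}}, which overwrites the entry $\strongentry$. Neither touches any entry $i \in \D \setminus \set{d}$. Since each handler executes atomically, evaluating at the start event gives
\[
  \forall i \in \D \setminus \set{d}.\;
    (\snapVC^m_d)_{\startoftx(\tvar)}[\tvar][i]
      = (\uniformVC^m_d)_{\startoftx(\tvar)}[i],
\]
so that $\snapshotVC(\tvar)[i] = (\uniformVC^m_d)_{\startoftx(\tvar)}[i]$ for every such $i$.

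The second step is to close the gap to ``at any time''. Because neither \doread{} nor \doupdate{} modifies $\snapVC[\tidvar]$, the value $\snapshotVC(\tvar)$ is fixed once $\tvar$ has started; meanwhile $\uniformVC^m_d[i]$ is non-decreasing by Lemma~\ref{lemma:uniformvc-nondecreasing}. Hence for every time $\realtime \ge \startoftx(\tvar)$ and every $i \in \D \setminus \set{d}$,
\[
  \snapshotVC(\tvar)[i]
    = (\uniformVC^m_d)_{\startoftx(\tvar)}[i]
    \le \uniformVC^m_d(\realtime)[i],
\]
which is exactly the claim with the coordinator $m = \coord(\tvar)$ as the witness.

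The argument is essentially immediate, so there is no serious obstacle; the only points requiring care are the two quantifiers. I would make explicit that the existential ``for some replica $p^m_d$'' must be witnessed by the coordinator partition $\coord(\tvar)$ — a different partition in $d$ need not have observed the same uniform information at the instant $\tvar$ started — and that the statement is read for times at or after $\startoftx(\tvar)$, when $\snapshotVC(\tvar)$ is defined, so that the monotonicity of $\uniformVC$ is applied in the correct direction. This mirrors the structure of the companion Lemma~\ref{lemma:pastvc-uniformvc}, which I could also invoke in place of re-deriving the equality if one prefers, since $\snapshotVC(\tvar)$ agrees with the client's $\pastVC$ on the off-$d$ entries at the start event.
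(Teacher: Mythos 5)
Your proposal is correct and follows exactly the paper's own argument: the paper's proof consists of citing line~\code{\ref{alg:unistore-coord}}{\ref{line:start-snapvc}} (the copy of $\uniformVC$ into $\snapVC[\tidvar]$ at the coordinator) together with Lemma~\ref{lemma:uniformvc-nondecreasing}, which is precisely your two-step argument. Your additional observations — that lines~\code{\ref{alg:unistore-coord}}{\ref{line:start-snapvc-d}} and \code{\ref{alg:unistore-coord}}{\ref{line:start-snapvc-strong}} only overwrite the $d$ and $\strongentry$ entries, and that the witness must be the coordinator partition — are just the details the paper leaves implicit.
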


\begin{proof} \label{proof:snapshotvc-uniformvc}
  By line~\code{\ref{alg:unistore-coord}}{\ref{line:start-snapvc}}
  and Lemma~\ref{lemma:uniformvc-nondecreasing}.
\end{proof}

\begin{applemma}[\prop{3}] \label{lemma:uniformvc-knownvc-f+1}
  For any replica $p^{m}_{d}$ in data center $d$,
  \begin{align*}
    &\forall i \in \D.\; \exists g \subseteq \D.\; \Big(
       |g| \ge f + 1 \land d \in g\; \land \\
      &\; \big(\forall j \in g.\; \forall n \in \P.\;
        \uniformVC^{m}_{d}[i] \le \knownVC^{n}_{j}[i] \big)\Big).
  \end{align*}
\end{applemma}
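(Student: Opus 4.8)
The plan is to prove the statement as an invariant that holds at every replica throughout the execution, by induction on the length of the execution, quantifying the induction hypothesis over \emph{all} replicas $p^{m}_{d}$ simultaneously. Since $\uniformVC^{m}_{d}[i]$ is non-decreasing (Lemma~\ref{lemma:uniformvc-nondecreasing}) and $\knownVC^{n}_{j}[i]$ is non-decreasing (Lemma~\ref{lemma:knownvc-nondecreasing}), it suffices to re-establish the existence of a witnessing group $g$ only at the steps that strictly increase $\uniformVC^{m}_{d}[i]$: any group that witnesses the value at the moment it is set continues to witness it afterwards, because the $\knownVC^{n}_{j}[i]$ bounds can only grow. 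By the proof of Lemma~\ref{lemma:uniformvc-nondecreasing}, the only such steps are lines~\ref{line:start-uniformvc}, \ref{line:readkey-uniformvc}, \ref{line:preparecausal-uniformvc}, and \ref{line:stablevc-uniformvc}. The base case is trivial: initially $\uniformVC^{m}_{d}[i] = 0$, and since $D > 2f$ ensures $D \ge f+1$, any group of $f+1$ data centers containing $d$ works, as $\knownVC^{n}_{j}[i] \ge 0$.

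The core of the argument is the update at line~\ref{line:stablevc-uniformvc}, inside the handler for \stablevcproc. Here $\uniformVC^{m}_{d}[i]$ is raised to $\tsvar = \max_{g \in G} \min_{h \in g} \stablematrix^{m}_{d}[h][i]$, where $G$ ranges over all groups of $f+1$ data centers containing $d$ (line~\ref{line:stablevc-g}). I would take $g$ to be the maximizing group, so that $\stablematrix^{m}_{d}[h][i] \ge \tsvar$ for every $h \in g$. Because $p^{m}_{d}$ receives $\stableVC$ from each sibling $p^{m}_{h}$ (including itself) via \bcast, the value $\stablematrix^{m}_{d}[h][i]$ equals $\stableVC^{m}_{h}[i]$ as sent at some earlier time. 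Applying Property~2 (Lemma~\ref{lemma:stablevc-knownvc}) at $p^{m}_{h}$ gives $\stableVC^{m}_{h}[i] \le \knownVC^{n}_{h}[i]$ for every partition $n$, and monotonicity of $\knownVC$ (Lemma~\ref{lemma:knownvc-nondecreasing}) transports this bound to the present. Hence $\knownVC^{n}_{h}[i] \ge \tsvar = \uniformVC^{m}_{d}[i]$ for all $h \in g$ and all $n$, so $g$ is the required witness.

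The three remaining update sites (lines~\ref{line:start-uniformvc}, \ref{line:readkey-uniformvc}, \ref{line:preparecausal-uniformvc}) are where I expect the real difficulty, since they raise $\uniformVC^{m}_{d}[i]$ to a value $\vc[i]$ or $\snapvc[i]$ that originates at a \emph{different} replica. The key observations are that all three updates keep the computation within the same data center $d$, and that the incoming value is bounded by some earlier $\uniformVC$ entry: for \starttx{} the value is the client's $\pastVC[i]$, which Lemma~\ref{lemma:pastvc-uniformvc} bounds by $\uniformVC^{m'}_{d}[i]$ at some replica $p^{m'}_{d}$ in $d$ (this lemma already absorbs the client-migration case); for \readkey{} and \prepare{} the value is $\snapVC[\tidvar][i] = \snapshotVC(\tvar)[i]$ of a transaction $\tvar$ originating in $d$, which Lemma~\ref{lemma:snapshotvc-uniformvc} bounds by $\uniformVC^{m'}_{d}[i]$ at the coordinator. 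In either case I would apply the induction hypothesis to $p^{m'}_{d}$ to obtain a group $g$ (necessarily containing $d$) witnessing that earlier $\uniformVC^{m'}_{d}[i]$, and then reuse $g$: since the new value of $\uniformVC^{m}_{d}[i]$ is $\le \uniformVC^{m'}_{d}[i]$ and $\knownVC^{n}_{j}[i]$ only increases, $g$ still witnesses the new value. The main obstacle, then, is precisely this cross-replica reduction, which is what forces the induction hypothesis to range over all replicas and relies essentially on the already-established Lemmas~\ref{lemma:pastvc-uniformvc} and \ref{lemma:snapshotvc-uniformvc} together with the monotonicity of $\knownVC$.
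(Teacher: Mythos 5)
Your proposal is correct and follows essentially the same route as the paper's proof: induction on the length of the execution with the hypothesis quantified over all replicas, reduction of the $\stablevcproc$-handler case to Property~2 (Lemma~\ref{lemma:stablevc-knownvc}) plus monotonicity of $\knownVC$, and reduction of the $\starttx$/$\readkey$/$\prepare$ cases to Lemmas~\ref{lemma:pastvc-uniformvc} and \ref{lemma:snapshotvc-uniformvc} followed by an application of the induction hypothesis at the bounding replica. The only (harmless) difference is bookkeeping: the paper carries an explicit disjunction over two candidate groups $g'$ and $g''$ to handle the $\max$ in each update, whereas you dispose of that by noting that only strictly increasing updates need a fresh witness.
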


\begin{proof} \label{proof:uniformvc-knownvc-f+1}
  Fix $i \in \D$.
  We proceed by induction on the length of the execution.
  In the following, we denote the value of
  $\knownVC^{m}_{d}$, $\stableVC^{m}_{d}$, $\uniformVC^{m}_{d}$,
  $\stablematrix^{m}_{d}$, and $\pastVC_{\cl}$ (for some client $\cl$)
  after $k$ steps of an execution by
  $\knownVC^{m}_{d}(k)$, $\stableVC^{m}_{d}(k)$, $\uniformVC^{m}_{d}(k)$,
  $\stablematrix^{m}_{d}(k)$, and $\pastVC_{\cl}(k)$, respectively.
  \begin{itemize}
    \item {\it Base Case.} $k = 0$. It holds trivially since
      \[
        \uniformVC^{m}_{d}(0)[i] = 0.
      \]
    \item {\it Induction Hypothesis.}
      Suppose that for any execution of length $k$,
      for any replica $p^{m}_{d}$ in data center $d$,
      \begin{align*}
        &\exists g \subseteq \D.\; |g| \ge f + 1 \land d \in g\; \land \\
          &\quad \big(\forall j \in g.\; \forall 1 \le n \le N.\; \\
            &\qquad \uniformVC^{m}_{d}(k)[i] \le \knownVC^{n}_{j}(k)[i] \big).
      \end{align*}
    \item {\it Induction Step.}
      Consider an execution of length $k + 1$.
      If the $(k+1)$-st step of this execution does not update
      $\uniformVC^{m}_{d}[i]$ for any replica $p^{m}_{d}$ in data center $d$,
      then by the induction hypothesis and Lemma~\ref{lemma:knownvc-nondecreasing},
      \begin{align*}
        &\exists g \subseteq \D.\; |g| \ge f + 1 \land d \in g\; \land \\
          &\quad \big(\forall j \in g.\; \forall 1 \le n \le N.\; \\
            &\qquad \uniformVC^{m}_{d}(k + 1)[i] = \uniformVC^{m}_{d}(k)[i] \\
            &\phantom{\qquad \uniformVC^{m}_{d}(k + 1)[i]}
              \le \knownVC^{n}_{j}(k)[i] \\
            &\phantom{\qquad \uniformVC^{m}_{d}(k + 1)[i]}
              \le \knownVC^{n}_{j}(k + 1)[i] \big).
      \end{align*}
      Otherwise, we perform a case analysis
      according to how $\uniformVC^{m}_{d}[i]$ is updated.
      \begin{itemize}
        \item $\textsc{Case I}$: $\uniformVC^{m}_{d}[i]$ is updated
          at line~\code{\ref{alg:unistore-clock}}{\ref{line:stablevc-uniformvc}}.
          By line~\code{\ref{alg:unistore-clock}}{\ref{line:stablevc-g}},
          \begin{align}
            &\exists g' \subseteq \D.\; |g'| \ge f + 1 \land d \in g'\; \land
              \label{eqn:g-prime} \\
              &\quad \uniformVC^{m}_{d}(k+1)[i] = \nonumber \\
                &\qquad \max\big\{\uniformVC^{m}_{d}(k)[i], \nonumber \\
                  &\qquad\qquad\;\; \min_{j \in g'} \stablematrix^{m}_{d}(k+1)[j][i]\big\}. \nonumber
          \end{align}
          By the induction hypothesis and Lemma~\ref{lemma:knownvc-i-nondecreasing},
          \begin{align}
            &\exists g'' \subseteq \D.\; |g''| \ge f + 1 \land d \in g''\; \land
              \label{eqn:g-prime-prime} \\
              &\quad \big(\forall j \in g''.\; \forall n \in \P. \nonumber \\
                &\qquad \uniformVC^{m}_{d}(k)[i] \le \knownVC^{n}_{j}(k)[i] \nonumber \\
                &\phantom{\qquad \uniformVC^{m}_{d}(k)[i] }
                  \le \knownVC^{n}_{j}(k+1)[i]\big). \nonumber
          \end{align}
          By Lemma~\ref{lemma:stablevc-nondecreasing},
          for the particular $g' \subseteq \D$ in (\ref{eqn:g-prime}),
          \begin{align}
            \forall j \in g'.\; &\stablematrix^{m}_{d}(k+1)[j][i]
              \label{eqn:stablematrix-j-i} \\
              &\le \stableVC^{m}_{j}(k+1)[i].
              \nonumber
          \end{align}
          By Lemmas~\ref{lemma:stablevc-knownvc} and \ref{lemma:knownvc-i-nondecreasing},
          for any replica $p^{m}_{j}$ in data center $j$,
          \begin{align}
            &\forall n \in \P.\; \stableVC^{m}_{j}(k+1)[i]
              \label{eqn:stablevc-j-i}\\
              &\qquad\;\;\; \le \knownVC^{n}_{j}(k+1)[i]. \nonumber
          \end{align}
          Therefore, for the particular $g' \subseteq \D$ in $(\ref{eqn:g-prime})$,
          \begin{align}
            &\forall j' \in g'.\; \forall n \in \P.\;
              \min_{j \in g'} \stablematrix^{m}_{d}(k+1)[j][i]
              \nonumber\\
              &\qquad\qquad\qquad\;\; \le \knownVC^{n}_{j'}(k+1)[i]. \label{eqn:j-prime}
          \end{align}
          By (\ref{eqn:g-prime}), (\ref{eqn:g-prime-prime}),
          and (\ref{eqn:j-prime}),
          we can either take $g = g'$ in (\ref{eqn:g-prime})
          or $g = g''$ in (\ref{eqn:g-prime-prime}) such that
          \begin{align*}
            &\forall j \in g.\; \forall n \in \P. \\
              &\;\; \uniformVC^{m}_{d}(k+1)[i] \le \knownVC^{n}_{j}(k+1)[i].
          \end{align*}
          Therefore,
          \begin{align*}
            &\exists g \subseteq \D.\; |g| \ge f + 1 \land d \in g\; \land \\
              &\; \big(\forall j \in g.\; \forall n \in \P.\; \\
                &\quad \uniformVC^{m}_{d}(k+1)[i] \le \knownVC^{n}_{j}(k+1)[i] \big).
          \end{align*}
        \item $\textsc{Case II}$: $\uniformVC^{m}_{d}[i]$
          ($i \in \D \setminus \set{d}$) is updated
          at line~\code{\ref{alg:unistore-coord}}{\ref{line:start-uniformvc}}.
          Then there exists some client $\cl$ with $d = \cldc(\cl)$ such that
          \begin{align}
            &\uniformVC^{m}_{d}(k+1)[i] = \label{eqn:uniformvc-pastvc} \\
              &\quad \max\big\{\pastVC_{\cl}(k)[i], \uniformVC^{m}_{d}(k)[i]\big\}.
              \nonumber
          \end{align}
          By the induction hypothesis and Lemma~\ref{lemma:knownvc-i-nondecreasing},
          \begin{align}
            &\exists g' \subseteq \D.\; |g'| \ge f + 1 \land d \in g'\; \land
              \label{eqn:g-prime-2} \\
              &\quad \big(\forall j \in g'.\; \forall n \in \P. \nonumber \\
                &\qquad \uniformVC^{m}_{d}(k)[i] \le \knownVC^{n}_{j}(k)[i] \nonumber \\
                &\phantom{\qquad \uniformVC^{m}_{d}(k)[i] }
                  \le \knownVC^{n}_{j}(k+1)[i]\big). \nonumber
          \end{align}
          By Lemma~\ref{lemma:pastvc-uniformvc}, the induction hypothesis,
          and Lemma~\ref{lemma:knownvc-nondecreasing},
          \begin{align}
            &\exists g'' \subseteq \D.\; |g''| \ge f + 1 \land d \in g''\; \land
              \label{eqn:g-prime-prime-2} \\
              &\quad \big(\forall j \in g''.\; \forall n \in \P. \nonumber \\
                &\qquad \pastVC_{\cl}(k)[i] \le \knownVC^{n}_{j}(k+1)[i]. \nonumber
          \end{align}
          By (\ref{eqn:uniformvc-pastvc}), (\ref{eqn:g-prime-2}),
          and (\ref{eqn:g-prime-prime-2}),
          we can take $g = g'$ in (\ref{eqn:g-prime-2})
          or $g = g''$ in (\ref{eqn:g-prime-prime-2}) such that
          \begin{align*}
            &\forall j \in g.\; \forall n \in \P.\; \\
              &\;\; \uniformVC^{m}_{d}(k+1)[i] \le \knownVC^{n}_{j}(k+1)[i].
          \end{align*}
          Therefore,
          \begin{align*}
            &\exists g \subseteq \D.\; |g| \ge f + 1 \land d \in g\; \land \\
              &\; \big(\forall j \in g.\; \forall n \in \P.\; \\
                &\;\; \uniformVC^{m}_{d}(k+1)[i] \le \knownVC^{n}_{j}(k+1)[i] \big).
          \end{align*}
        \item $\textsc{Case III}$: $\uniformVC^{m}_{d}[i]$
          ($i \in \D \setminus \set{d}$) is updated
          at lines~\code{\ref{alg:unistore-replica}}{\ref{line:readkey-uniformvc}}
          or \code{\ref{alg:unistore-replica}}{\ref{line:preparecausal-uniformvc}}.
          Therefore, there exists some transaction $\tvar$
          originating at data center $d$ such that
          \begin{align}
            &\uniformVC^{m}_{d}(k+1)[i] = \label{eqn:uniformvc-snapshotvc} \\
              &\quad \max\big\{\snapshotVC(\tvar)[i], \uniformVC^{m}_{d}(k)[i]\big\}.
              \nonumber
          \end{align}
          By the induction hypothesis and Lemma~\ref{lemma:knownvc-nondecreasing},
          \begin{align}
            &\exists g' \subseteq \D.\; |g'| \ge f + 1 \land d \in g'\; \land
              \label{eqn:g-prime-3} \\
              &\quad \big(\forall j \in g'.\; \forall n \in \P. \nonumber \\
                &\qquad \uniformVC^{m}_{d}(k)[i] \le \knownVC^{n}_{j}(k)[i] \nonumber \\
                &\phantom{\qquad \uniformVC^{m}_{d}(k)[i] }
                  \le \knownVC^{n}_{j}(k+1)[i]\big). \nonumber
          \end{align}
          By Lemma~\ref{lemma:snapshotvc-uniformvc}, the induction hypothesis,
          and Lemma~\ref{lemma:knownvc-nondecreasing},
          \begin{align}
            &\exists g'' \subseteq \D.\; |g''| \ge f + 1 \land d \in g''\; \land
              \label{eqn:g-prime-prime-3} \\
              &\quad \big(\forall j \in g''.\; \forall n \in \P. \nonumber \\
                &\qquad \snapshotVC(\tvar)[i] \le \knownVC^{n}_{j}(k+1)[i]. \nonumber
          \end{align}
          By (\ref{eqn:uniformvc-snapshotvc}), (\ref{eqn:g-prime-3}),
          and (\ref{eqn:g-prime-prime-3}),
          we can take $g = g'$ in (\ref{eqn:g-prime-3})
          or $g = g''$ in (\ref{eqn:g-prime-prime-3}) such that
          \begin{align*}
            &\forall j \in g.\; \forall n \in \P.\; \\
              &\;\; \uniformVC^{m}_{d}(k+1)[i] \le \knownVC^{n}_{j}(k+1)[i].
          \end{align*}
          Therefore,
          \begin{align*}
            &\exists g \subseteq \D.\; |g| \ge f + 1 \land d \in g\; \land \\
              &\; \big(\forall j \in g.\; \forall n \in \P.\; \\
                &\quad \uniformVC^{m}_{d}(k+1)[i] \le \knownVC^{n}_{j}(k+1)[i] \big).
          \end{align*}
      \end{itemize}
  \end{itemize}
\end{proof}

\begin{applemma} \label{lemma:uniformvc-knownvc}
  For any replica $p^{m}_{d}$ in data center $d$,
  \[
    \forall i \in \D.\; \forall n \in \P.\;
      \uniformVC^{m}_{d}[i] \le \knownVC^{n}_{d}[i].
  \]
\end{applemma}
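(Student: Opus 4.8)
The plan is to obtain this lemma as an immediate specialization of Lemma~\ref{lemma:uniformvc-knownvc-f+1} (Property~3). That lemma already does all the heavy lifting: for each coordinate $i \in \D$ it produces a set $g \subseteq \D$ of at least $f+1$ data centers, guaranteed to \emph{contain} the local data center $d$, such that every replica $p^n_j$ of every partition $n \in \P$ in each $j \in g$ satisfies $\uniformVC^m_d[i] \le \knownVC^n_j[i]$. The present statement is just the restriction of this bound to the case $j = d$.

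Concretely, I would fix an arbitrary $i \in \D$ and apply Lemma~\ref{lemma:uniformvc-knownvc-f+1} to the replica $p^m_d$, obtaining a group $g$ with $|g| \ge f+1$ and $d \in g$ such that $\forall j \in g.\; \forall n \in \P.\; \uniformVC^m_d[i] \le \knownVC^n_j[i]$. I would then instantiate the universally quantified $j$ with $d$ itself, which is legitimate precisely because $d \in g$, yielding $\forall n \in \P.\; \uniformVC^m_d[i] \le \knownVC^n_d[i]$. Since $i \in \D$ was arbitrary, the claim follows. No induction or case analysis is required at this step, as everything is inherited from the stronger lemma.

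The only point worth highlighting — and the reason Property~3 is deliberately phrased to include the constraint $d \in g$ — is that we need the bound to hold at the \emph{local} data center, not merely at \emph{some} group of $f+1$ data centers. Were Property~3 to guarantee only an arbitrary $(f+1)$-sized group without the clause $d \in g$, this corollary would not go through, since $\uniformVC^m_d[i]$ could conceivably exceed the local $\knownVC^n_d[i]$. Hence there is no genuine obstacle here beyond reading off the $d \in g$ conjunct from the hypothesis; the substantive work already lives in the proof of Lemma~\ref{lemma:uniformvc-knownvc-f+1}.
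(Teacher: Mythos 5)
Your proposal is correct and matches the paper's own argument exactly: the paper proves this lemma simply by citing Lemma~\ref{lemma:uniformvc-knownvc-f+1}, i.e., by instantiating the quantified $j$ with the local data center $d$, which is permitted precisely because that lemma guarantees $d \in g$. Your write-up just makes this one-line specialization explicit.
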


\begin{proof} \label{proof:uniformvc-knownvc}
  By Lemma~\ref{lemma:uniformvc-knownvc-f+1}.
\end{proof}

\begin{applemma} \label{lemma:replication-uniformvc}
  Let $\tvar \in \causaltxs$ be a causal transaction
  that originates at data center $i$
  and accesses partition $n$.
  If
  \[
    \commitVC(\tvar)[i] \le \uniformVC^{m}_{d}[i]
  \]
  for some replica $p^{m}_{d}$ in data center $d$,
  then
  \[
    \log(\tvar)[n] \subseteq \oplog^{n}_{d}.
  \]
\end{applemma}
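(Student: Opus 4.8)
The plan is to reduce this lemma to Property~1 (Lemma~\ref{lemma:knownvc-causal}), mirroring the proof of Lemma~\ref{lemma:replication-stablevc} but with $\uniformVC$ in place of $\stableVC$. The key fact that makes this work is that $\uniformVC$ is dominated entrywise by the local $\knownVC$ of every partition in the same data center, which is precisely the content of Lemma~\ref{lemma:uniformvc-knownvc}.

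Concretely, I would proceed as follows. First, instantiate Lemma~\ref{lemma:uniformvc-knownvc} at the replica $p^{m}_{d}$ with index $i$ and partition $n$ to obtain $\uniformVC^{m}_{d}[i] \le \knownVC^{n}_{d}[i]$. Combining this with the hypothesis $\commitVC(\tvar)[i] \le \uniformVC^{m}_{d}[i]$ and transitivity of $\le$ yields $\commitVC(\tvar)[i] \le \knownVC^{n}_{d}[i]$. Since $\tvar$ originates at data center $i$ and accesses partition $n$, I then apply Property~1 (Lemma~\ref{lemma:knownvc-causal}) to the replica $p^{n}_{d}$, concluding $\log(\tvar)[n] \subseteq \oplog^{n}_{d}$, as required. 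Note that Lemma~\ref{lemma:knownvc-causal} already handles both $d = i$ and $d \neq i$, so no case split on whether $d$ is the origin data center is needed; this is why I invoke it directly rather than the restricted Lemma~\ref{lemma:replication-knownvc}.

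There is essentially no obstacle at this stage, since the entire difficulty has already been discharged in the earlier lemmas. In particular, Lemma~\ref{lemma:uniformvc-knownvc} rests on Lemma~\ref{lemma:uniformvc-knownvc-f+1} (Property~3), whose inductive proof is where the real work lies: it must track how $\uniformVC[i]$ is advanced at each of lines~\code{\ref{alg:unistore-coord}}{\ref{line:start-uniformvc}}, \code{\ref{alg:unistore-replica}}{\ref{line:readkey-uniformvc}}, \code{\ref{alg:unistore-replica}}{\ref{line:preparecausal-uniformvc}}, and \code{\ref{alg:unistore-clock}}{\ref{line:stablevc-uniformvc}}, and argue that each advance is witnessed by the $\knownVC$ vectors of a group of $f+1$ data centers including $d$. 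Given those results, the present statement is a one-line consequence of transitivity and Property~1.
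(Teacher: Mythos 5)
Your proof is correct and follows essentially the same route as the paper's: both chain the hypothesis through Lemma~\ref{lemma:uniformvc-knownvc} to obtain $\commitVC(\tvar)[i] \le \knownVC^{n}_{d}[i]$ and then conclude via a Property~1-style replication lemma. The only difference is that the paper cites Lemma~\ref{lemma:replication-knownvc} (formally stated only for $d \neq i$) at the last step, whereas you cite Lemma~\ref{lemma:knownvc-causal}, which also covers $d = i$ --- a slightly more careful choice of citation, but not a different argument.
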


\begin{proof} \label{proof:replication-uniformvc}
  By Lemma~\ref{lemma:uniformvc-knownvc},
  \[
    \uniformVC^{m}_{d}[i] \le \knownVC^{n}_{d}[i].
  \]
  Therefore,
  \[
    \commitVC(\tvar)[i] \le \knownVC^{n}_{d}[i].
  \]
  By Lemma~\ref{lemma:replication-knownvc},
  \[
    \log(\tvar)[n] \subseteq \oplog^{n}_{d}.
  \]
\end{proof}

\begin{applemma} \label{lemma:uniformvc-clock}
  For any replica $p^{m}_{d}$ in data center $d$,
  \[
    \uniformVC^{m}_{d}[d] \le \clockVar^{m}_{d}.
  \]
\end{applemma}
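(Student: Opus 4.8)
The plan is to obtain the bound by chaining two lemmas already established earlier in the development, so that no new inductive or case-analysis argument is required. The key observation is that $\uniformVC^m_d[d]$ has already been related to $\knownVC$ in Lemma~\ref{lemma:uniformvc-knownvc}, and $\knownVC^m_d[d]$ has already been related to the local clock in Lemma~\ref{lemma:knownvc-d-clock}; the target inequality is simply the composition of these two facts at the appropriate indices.

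Concretely, I would first instantiate Lemma~\ref{lemma:uniformvc-knownvc}, which states that for any replica $p^m_d$ we have $\uniformVC^m_d[i] \le \knownVC^n_d[i]$ for all $i \in \D$ and $n \in \P$. Specializing to $i = d$ and $n = m$ yields $\uniformVC^m_d[d] \le \knownVC^m_d[d]$, i.e.\ the uniform entry for the local data center is dominated by the $\knownVC$ entry of the same replica. Then I would invoke Lemma~\ref{lemma:knownvc-d-clock}, which gives $\knownVC^m_d[d] \le \clockVar^m_d$, expressing that the replica's knowledge of its own local timestamps never runs ahead of its physical clock (this in turn rests on Assumption~\ref{assumption:clock} and on how $\knownVC^m_d[d]$ is set at lines~\code{\ref{alg:unistore-replication}}{\ref{line:propagate-knownvc-clock}} and~\code{\ref{alg:unistore-replication}}{\ref{line:propagate-knownvc-ts}}).

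Composing the two inequalities gives the chain $\uniformVC^m_d[d] \le \knownVC^m_d[d] \le \clockVar^m_d$, which is exactly the claim. Since both intermediate steps are lemmas I may assume, the argument is a one-line transitivity and does not even require a separate case split.

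I do not anticipate any genuine obstacle here: the entire content of the lemma is a corollary of Lemmas~\ref{lemma:uniformvc-knownvc} and~\ref{lemma:knownvc-d-clock}. The only point that deserves a moment of care is the index instantiation---confirming that Lemma~\ref{lemma:uniformvc-knownvc} is quantified over \emph{all} $i \in \D$ (so that taking $i = d$ is legitimate) and over all partitions $n \in \P$ (so that taking $n = m$, the very partition hosting the replica in question, is legitimate). Once those instantiations are justified, the result follows immediately by transitivity of $\le$.
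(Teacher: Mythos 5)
Your proposal is correct and matches the paper's own proof exactly: it chains Lemma~\ref{lemma:uniformvc-knownvc} (instantiated at $i = d$, $n = m$) with Lemma~\ref{lemma:knownvc-d-clock} by transitivity. Nothing further is needed.
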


\begin{proof} \label{proof:uniformvc-clock}
  By Lemma~\ref{lemma:uniformvc-knownvc},
  \[
    \uniformVC^{m}_{d}[d] \le \knownVC^{m}_{d}[d].
  \]
  By Lemma~\ref{lemma:knownvc-d-clock},
  \[
    \knownVC^{m}_{d}[d] \le \clockVar^{m}_{d}.
  \]
  Putting it together yields
  \[
    \uniformVC^{m}_{d}[d] \le \clockVar^{m}_{d}.
  \]
\end{proof}
\subsubsection{Properties of $\pastVC$}
\label{sss:cvc}

\begin{applemma} \label{lemma:start-pastvc-snapshotvc}
  Let $e \in S$ be a \start{} event of transaction $\tvar$
  issued by client $\cl$. Then
  \[
    (\pastVC_{\cl})_{e} \le \snapshotVC(\tvar).
  \]
\end{applemma}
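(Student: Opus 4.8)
The plan is to prove the inequality coordinate by coordinate, reading the assignments directly off the \starttx{} handler at the coordinator of $\tvar$ and relying on the fact that each handler executes atomically. Writing $d \triangleq \dc(\tvar)$ and $m \triangleq \coord(\tvar)$, Definition~\ref{def:snapshotvc} identifies $\snapshotVC(\tvar)$ with $(\snapVC^{m}_{d})_{\startoftx(\tvar)}[\tvar]$, i.e., the value of $\snapVC[\tvar]$ produced by the \starttx{} invocation corresponding to the event $e = \startoftx(\tvar)$. It therefore suffices to show that, once this invocation completes, every entry of $\snapVC[\tvar]$ dominates the corresponding entry of the client's causal past at $e$.

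First I would pin down the argument passed into \starttx. The client's \start{} procedure invokes $\starttx(\pastVC)$ (line~\code{\ref{alg:unistore-client}}{\ref{line:starttx-call-start}}) and never modifies $\pastVC$, so the argument $\vc$ received by $p^{m}_{d}$ equals $(\pastVC_{\cl})_{e}$. The goal thus reduces to $(\snapVC^{m}_{d})_{e}[\tvar] \ge \vc$.

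Next I would split on the three kinds of entries and read off the body of \starttx. For a remote data center $i \in \D \setminus \set{d}$, line~\code{\ref{alg:unistore-coord}}{\ref{line:start-uniformvc}} first raises $\uniformVC[i]$ to at least $\vc[i]$, and line~\code{\ref{alg:unistore-coord}}{\ref{line:start-snapvc}} then copies the whole vector $\uniformVC$ into $\snapVC[\tvar]$; as the only subsequent assignments (lines~\code{\ref{alg:unistore-coord}}{\ref{line:start-snapvc-d}} and~\code{\ref{alg:unistore-coord}}{\ref{line:start-snapvc-strong}}) touch the $d$ and $\strongentry$ entries, the remote entries are left with $\snapVC[\tvar][i] \ge \vc[i]$. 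For the local entry, line~\code{\ref{alg:unistore-coord}}{\ref{line:start-snapvc-d}} sets $\snapVC[\tvar][d] = \max\set{\vc[d], \uniformVC[d]} \ge \vc[d]$, and for the strong entry, line~\code{\ref{alg:unistore-coord}}{\ref{line:start-snapvc-strong}} sets $\snapVC[\tvar][\strongentry] = \max\set{\vc[\strongentry], \stableVC[\strongentry]} \ge \vc[\strongentry]$. Combining the three cases gives $(\snapVC^{m}_{d})_{e}[\tvar] \ge \vc = (\pastVC_{\cl})_{e}$, which is exactly the claim.

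The only point needing real care---rather than a genuine obstacle---is respecting the intra-handler ordering: the remote entries of $\snapshotVC(\tvar)$ must be read from the \emph{already-raised} $\uniformVC$ of line~\code{\ref{alg:unistore-coord}}{\ref{line:start-uniformvc}}, not from its pre-update value, and one must check that no later line overwrites a remote entry. Atomicity of the \starttx{} handler guarantees that the value inspected at the end of the handler is precisely the $\snapshotVC(\tvar)$ of Definition~\ref{def:snapshotvc}, closing the argument.
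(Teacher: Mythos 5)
Your proof is correct and follows essentially the same route as the paper's: the paper's proof is a one-line citation of Definition~\ref{def:snapshotvc} together with lines~\code{\ref{alg:unistore-coord}}{\ref{line:start-uniformvc-index}}--\code{\ref{alg:unistore-coord}}{\ref{line:start-snapvc-strong}}, which is precisely the entry-by-entry reading of the \starttx{} handler that you spell out (remote entries via the raised $\uniformVC$, the local entry via the explicit $\max$, and the strong entry via $\max\set{\vc[\strongentry], \stableVC[\strongentry]}$). Your added care about intra-handler ordering and handler atomicity is implicit in the paper's argument and is a sound elaboration, not a deviation.
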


\begin{proof} \label{proof:start-pastvc-snapshotvc}
  By Definition~\ref{def:snapshotvc} of $\snapshotVC(\tvar)$
  and lines~\code{\ref{alg:unistore-coord}}{\ref{line:start-uniformvc-index}}--
  \code{\ref{alg:unistore-coord}}{\ref{line:start-snapvc-strong}}.
\end{proof}

\begin{applemma} \label{lemma:pastvc-nondecreasing}
  For $i \in \D$, $\pastVC_{\cl}[i]$ at any client $\cl$
  is non-decreasing.
\end{applemma}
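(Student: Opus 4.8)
The plan is to prove the lemma by a straightforward induction on the sequence of events issued by client $\cl$, showing that every write to $\pastVC_{\cl}$ is non-decreasing while every other operation leaves it untouched. First I would catalogue the only sites at which $\pastVC_{\cl}$ is assigned. Inspecting Algorithm~\ref{alg:unistore-client}, the variable is written solely at line~\code{\ref{alg:unistore-client}}{\ref{line:commitcausaltx-pastvc}} inside \commitcausaltx{} and at line~\code{\ref{alg:unistore-client}}{\ref{line:commitstrongtx-pastvc}} inside \commitstrongtx{} (and in the latter only when $\decvar = \commit$). All remaining client operations --- \start{}, \read{}, \updateproc{}, \fence{}, and \clattach{} --- merely read or advance the Lamport clock (or update $\clientdc$ on migration) and hence leave $\pastVC_{\cl}$ unchanged. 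Thus it suffices to show that each commit event does not decrease any entry of $\pastVC_{\cl}$; the aborting branch of \commitstrongtx{} needs only a one-line remark, since it performs no assignment at all.

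Next I would fix a commit event $e = \commitoftx(\tvar)$ of some transaction $\tvar$ issued by $\cl$, and let $e_s = \startoftx(\tvar)$ be its matching \start{} event. By Assumption~\ref{assumption:client-well-formed}, $\cl$ issues transactions in sequence and issues \fence{}/\clattach{} events only between transactions; consequently no write to $\pastVC_{\cl}$ occurs between $e_s$ and $e$, so the value of $\pastVC_{\cl}$ immediately before $e$ equals $(\pastVC_{\cl})_{e_s}$. The assignment at $e$ sets $\pastVC_{\cl}$ to the returned commit vector, which by Definition~\ref{def:commitvc} is exactly $\commitVC(\tvar)$, giving $(\pastVC_{\cl})_{e} = \commitVC(\tvar)$. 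I then chain two already-established facts: Lemma~\ref{lemma:start-pastvc-snapshotvc} yields $(\pastVC_{\cl})_{e_s} \le \snapshotVC(\tvar)$, and Lemma~\ref{lemma:snapshotvc-commitvc} yields $\snapshotVC(\tvar) \le \commitVC(\tvar)$. Since both inequalities hold componentwise, I obtain $(\pastVC_{\cl})_{e_s}[i] \le (\pastVC_{\cl})_{e}[i]$ for every $i \in \D$, establishing monotonicity across the single step at $e$.

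Finally I would assemble these observations into the induction: over the client's execution, $\pastVC_{\cl}[i]$ stays fixed except at commit steps, each of which weakly increases it, so $\pastVC_{\cl}[i]$ is non-decreasing. I do not anticipate a serious obstacle here; the one point requiring care is the second paragraph's appeal to well-formedness to guarantee that the pre-commit value of $\pastVC_{\cl}$ coincides with its value at the matching \start{} event. Without this identification the telescoping of Lemmas~\ref{lemma:start-pastvc-snapshotvc} and~\ref{lemma:snapshotvc-commitvc} would not connect consecutive updates, so making that step explicit is the main thing to get right.
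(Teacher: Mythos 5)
Your proposal is correct and follows essentially the same route as the paper's (much terser) proof: identify the two commit-time assignment sites as the only writes to $\pastVC_{\cl}$, then chain Lemma~\ref{lemma:start-pastvc-snapshotvc} with Lemma~\ref{lemma:snapshotvc-commitvc} to show each such write is non-decreasing. Your explicit appeal to Assumption~\ref{assumption:client-well-formed} to identify the pre-commit value of $\pastVC_{\cl}$ with its value at the matching \start{} event is a detail the paper leaves implicit, and it is handled correctly.
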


\begin{proof} \label{proof:pastvc-nondecreasing}
  Note that $\pastVC_{\cl}[i]$ ($i \in \D$) is updated only
  at lines~\code{\ref{alg:unistore-client}}{\ref{line:commitcausaltx-pastvc}}
  or \code{\ref{alg:unistore-client}}{\ref{line:commitstrongtx-pastvc}}
  when some transaction is committed.
  Therefore, the lemma holds due to Lemmas~\ref{lemma:snapshotvc-commitvc}
  and \ref{lemma:start-pastvc-snapshotvc}.
\end{proof}

\subsection{Metadata for Strong Transactions} \label{ss:metadata-strong}

\begin{applemma} \label{lemma:knownvc-strong-nondecreasing}
  For any replica $p^{m}_{d}$ in any data center $d$,
  $\knownVC^{m}_{d}[\strongentry]$ is non-decreasing.
\end{applemma}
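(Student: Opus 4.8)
The plan is to locate the unique place where $\knownVC^{m}_{d}[\strongentry]$ is modified and then show that the sequence of values written there is non-decreasing. First I would verify, by scanning Algorithms~\ref{alg:unistore-coord}--\ref{alg:unistore-recovery}, that the $\strongentry$ entry of $\knownVC$ is never touched by the causal-replication machinery (which only writes the per--data-center entries $\knownVC[d]$ and $\knownVC[i]$ at lines~\code{\ref{alg:unistore-replication}}{\ref{line:propagate-knownvc-clock}}, \code{\ref{alg:unistore-replication}}{\ref{line:replicate-knownvc}}, and \code{\ref{alg:unistore-replication}}{\ref{line:heartbeat-knownvc}}) nor by the recovery handlers, so that the only assignment to $\knownVC^{m}_{d}[\strongentry]$ is line~\code{\ref{alg:unistore-strong-commit}}{\ref{line:deliverupdates-knownvc-strongentry}} inside the handler of the $\deliverupdates$ upcall. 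Each execution of that line sets $\knownVC^{m}_{d}[\strongentry]$ to $\commitvc[\strongentry]$ for the strong transaction currently being processed, so monotonicity of $\knownVC^{m}_{d}[\strongentry]$ is equivalent to the claim that the successive values of $\commitvc[\strongentry]$ written at this line form a non-decreasing sequence.

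Second, I would establish monotonicity within a single $\deliverupdates(\txsvar)$ call. The enclosing loop (line~\code{\ref{alg:unistore-strong-commit}}{\ref{line:deliverupdates-foreach-wbuff}}) iterates over the tuples of $\txsvar$ in increasing $\commitvc[\strongentry]$ order; hence the values assigned to $\knownVC^{m}_{d}[\strongentry]$ during one upcall are non-decreasing, and in particular the final value written in the call is the maximum strong timestamp occurring in $\txsvar$.

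Third, I would handle the order across distinct $\deliverupdates$ upcalls at the same replica $p^{m}_{d}$. Here I would appeal to the delivery-order guarantee of the certification service from Section~\ref{section:tcs} (requirement (R6)): committed strong transactions are delivered to $p^{m}_{d}$ in strictly increasing order of their strong timestamps. Consequently, every strong timestamp appearing in a later upcall strictly exceeds every strong timestamp of an earlier one, so the last value written by an earlier call does not exceed the first value written by a later call. Combining the within-call and across-call arguments yields that $\knownVC^{m}_{d}[\strongentry]$ is non-decreasing throughout the execution. The main obstacle is making the across-call step rigorous: one must tie each $\deliverupdates$ upcall back to the corresponding $\intdeliver^{m}_{d}$ action so that the TCS specification applies. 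As a self-contained alternative that sidesteps the abstract specification, one can argue directly from the implementation, since the $\deliver$ handler of Algorithm~\ref{alg:unistore-atomic-commit} enforces the precondition $\lastdeliveredVar < \tsvar$ and then performs $\lastdeliveredVar \gets \tsvar$, so the delivered strong timestamps are strictly increasing; the remaining bookkeeping is routine.
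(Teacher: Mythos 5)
Your proposal is correct and follows essentially the same route as the paper's proof, which simply cites the TCS delivery-order requirement (R\ref{tcs-requirement:deliver-order}) together with the fact that the loop at line~\code{\ref{alg:unistore-strong-commit}}{\ref{line:deliverupdates-foreach-wbuff}} processes transactions in increasing $\commitvc[\strongentry]$ order. Your additional observations --- that line~\code{\ref{alg:unistore-strong-commit}}{\ref{line:deliverupdates-knownvc-strongentry}} is the unique assignment site, and the alternative implementation-level argument via $\lastdeliveredVar$ --- are sound elaborations of the same argument rather than a different approach.
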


\begin{proof} \label{proof:knownvc-strong-nondecreasing}
  By \ref{tcs-requirement:deliver-order}
  and line~\code{\ref{alg:unistore-strong-commit}}{\ref{line:deliverupdates-foreach-wbuff}}.
\end{proof}

\begin{applemma} \label{lemma:stablevc-strong-nondecreasing}
  For any replica $p^{m}_{d}$ in any data center $d$,
  $\stableVC^{m}_{d}[\strongentry]$ is non-decreasing.
\end{applemma}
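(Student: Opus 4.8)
The plan is to follow the same pattern as the proof of Lemma~\ref{lemma:stablevc-nondecreasing} for the causal entries of $\stableVC$, substituting the strong-entry monotonicity result. First I would observe that $\stableVC^{m}_{d}[\strongentry]$ is written in exactly one place, line~\code{\ref{alg:unistore-clock}}{\ref{line:knownvclocal-stablevc-strong}}, where it is set to $\min\{\localmatrix^{m}_{d}[n][\strongentry] \mid n \in \P\}$. Hence it suffices to show that this minimum never decreases as the execution proceeds.

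The key step is to argue that each individual entry $\localmatrix^{m}_{d}[n][\strongentry]$ is non-decreasing in time. The matrix row $\localmatrix^{m}_{d}[n]$ is overwritten only at line~\code{\ref{alg:unistore-clock}}{\ref{line:knownvclocal-localknownmatrix}}, upon receipt of a $\knownvclocal(n, \knownvc)$ message carrying the $\knownVC^{n}_{d}$ vector that partition $n$ broadcasts periodically in \bcast{} (line~\code{\ref{alg:unistore-clock}}{\ref{line:bcast-call-knownvclocal}}). By Lemma~\ref{lemma:knownvc-strong-nondecreasing}, the strong entry $\knownVC^{n}_{d}[\strongentry]$ at the sender is non-decreasing, and by the FIFO guarantee of Assumption~\ref{assumption:message} these vectors arrive in send order, so the successive values written into $\localmatrix^{m}_{d}[n][\strongentry]$ form a non-decreasing sequence. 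Since every row entry under the minimum is non-decreasing, the pointwise minimum over the partitions $n \in \P$ is itself non-decreasing; therefore each recomputation at line~\code{\ref{alg:unistore-clock}}{\ref{line:knownvclocal-stablevc-strong}} produces a value no smaller than the previous one.

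The only subtlety---the same one that is glossed over in the causal case---is that successive recomputations of the minimum use different snapshots of $\localmatrix$: when a $\knownvclocal$ message updates one row, the other rows retain their (possibly stale) earlier values. This is harmless precisely because every row entry is individually monotone, so comparing the minimum taken at an earlier time with the minimum taken at a later time reduces to comparing two vectors that are entrywise ordered, and $\min$ is monotone with respect to this order. I expect no real obstacle here; the argument is a direct transcription of Lemma~\ref{lemma:stablevc-nondecreasing}, with Lemma~\ref{lemma:knownvc-strong-nondecreasing} playing the role that Lemma~\ref{lemma:knownvc-nondecreasing} plays for the causal entries.
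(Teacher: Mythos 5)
Your proposal is correct and follows exactly the paper's own argument: the paper's proof is a one-line citation of Lemma~\ref{lemma:knownvc-strong-nondecreasing}, Assumption~\ref{assumption:message}, and line~\code{\ref{alg:unistore-clock}}{\ref{line:knownvclocal-stablevc-strong}}, which are precisely the three ingredients you combine. Your write-up merely makes explicit the steps the paper leaves implicit (monotonicity of each $\localmatrix$ entry under FIFO delivery, and monotonicity of the pointwise minimum), so there is nothing to add.
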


\begin{proof} \label{proof:stablevc-strong-nondecreasing}
  By Lemma~\ref{lemma:knownvc-strong-nondecreasing},
  Assumption~\ref{assumption:message},
  and line~\code{\ref{alg:unistore-clock}}{\ref{line:knownvclocal-stablevc-strong}}.
\end{proof}

\begin{applemma}[\prop{6}] \label{lemma:knownvc-strong}
  Let $\tvar \in \strongtxs$ be a strong transaction
  that originates at data center $i$
  and accesses partition $m$.
  If
  \[
    \commitVC(\tvar)[\strongentry] \le \knownVC^{m}_{d}[\strongentry]
  \]
  for some replica $p^{m}_{d}$ in data center $d$,
  then
  \[
    \log(\tvar)[m] \subseteq \oplog^{m}_{d}.
  \]
\end{applemma}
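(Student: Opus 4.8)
The plan is to exploit the fact that, unlike causal transactions, strong transactions reach every replica---including the one at their origin data center---through a single uniform channel: the TCS delivery mechanism. Concretely, $\knownVC^{m}_{d}[\strongentry]$ is written only in the handler \deliverupdates{} at line~\code{\ref{alg:unistore-strong-commit}}{\ref{line:deliverupdates-knownvc-strongentry}}, and this is precisely the handler that appends the strong updates to $\oplog^{m}_{d}$ at line~\code{\ref{alg:unistore-strong-commit}}{\ref{line:deliverupdates-oplog}}. So there is no origin/remote case split as in the causal proof (Lemma~\ref{lemma:knownvc-causal}): I only need to show that whenever $\knownVC^{m}_{d}[\strongentry] \ge \commitVC(\tvar)[\strongentry]$, the transaction $\tvar$ has already been delivered at $p^{m}_{d}$, after which the conclusion is immediate since $\oplog$ is append-only.

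The core of the argument is a prefix invariant: at any point in time, the set of committed strong transactions accessing partition $m$ that have been delivered at $p^{m}_{d}$ is exactly the set of committed strong transactions accessing $m$ whose strong timestamp is $\le \knownVC^{m}_{d}[\strongentry]$. I would establish this in two parts. The ordering direction follows from the TCS requirement~\ref{tcs-requirement:deliver-order}, which guarantees that deliveries at $p^{m}_{d}$ occur in strictly increasing order of strong timestamps, together with the fact that each \deliverupdates{} round sets $\knownVC^{m}_{d}[\strongentry]$ to the strong timestamp of its payload (so $\knownVC^{m}_{d}[\strongentry]$ equals the last delivered strong timestamp, which by Lemma~\ref{lemma:knownvc-strong-nondecreasing} is non-decreasing). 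The completeness direction---that deliveries leave no gap below $\knownVC^{m}_{d}[\strongentry]$---follows from the structure of the upcall guard in Algorithm~\ref{alg:unistore-atomic-commit} (line~\ref{line:function-upcall}): the upcall fires for the least decided strong timestamp above \lastdeliveredVar{} and only once no still-prepared transaction could acquire an intermediate timestamp, so no committed transaction with a smaller strong timestamp can be skipped. Both facts are guaranteed by the TCS correctness result (Theorem~\ref{thm:tcs-correctness}).

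With the prefix invariant in hand, the proof concludes quickly. Since $\tvar \in \strongtxs$ accesses $m$ and, by hypothesis, $\commitVC(\tvar)[\strongentry] \le \knownVC^{m}_{d}[\strongentry]$, the invariant places $\tvar$ in the delivered set at $p^{m}_{d}$. When $\tvar$ was delivered, the payload for partition $m$ equals $\ws(\tvar)[m]$ (by the definition of $W$ at line~\ref{deliver-W}), so exactly the entries of $\log(\tvar)[m]$ were appended to $\oplog^{m}_{d}$ at line~\code{\ref{alg:unistore-strong-commit}}{\ref{line:deliverupdates-oplog}}. As log entries are never removed, $\log(\tvar)[m] \subseteq \oplog^{m}_{d}$.

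I expect the main obstacle to be the completeness half of the prefix invariant: formalizing that the delivered strong transactions accessing $m$ form an \emph{uninterrupted} prefix in strong-timestamp order, rather than merely an increasing subsequence. The ordering from~\ref{tcs-requirement:deliver-order} is immediate, but ruling out skipped committed transactions requires reasoning about the upcall guard's interaction with the set \preparedstrong{}, which may have to wait for pending certifications to decide before a later transaction can be delivered. I would likely isolate this as an auxiliary invariant relating \lastdeliveredVar{}, the delivered payloads, and $\knownVC^{m}_{d}[\strongentry]$, proved by induction on the execution in the same style as the causal-side lemmas, and then invoke it here.
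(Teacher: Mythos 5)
Your proposal is correct and takes essentially the same route as the paper's proof: note that $\knownVC^{m}_{d}[\strongentry]$ is written only in the \deliverupdates{} handler, invoke the TCS delivery-order requirement~\ref{tcs-requirement:deliver-order} to conclude that every committed strong transaction with strong timestamp at most $\knownVC^{m}_{d}[\strongentry]$ has already been delivered at $p^{m}_{d}$, and read off the conclusion from the \oplog{} append at line~\code{\ref{alg:unistore-strong-commit}}{\ref{line:deliverupdates-oplog}}. The gap-freeness issue you flag as the main obstacle is real but the paper simply folds it into requirement~\ref{tcs-requirement:deliver-order} (backed by Theorem~\ref{thm:tcs-correctness}) rather than re-deriving it from the upcall guard, so your extra auxiliary invariant is a more explicit rendering of the same step, not a different argument.
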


\begin{proof} \label{proof:knownvc-strong}
  Note that $\knownVC^{m}_{d}[\strongentry]$ can be updated
  only at line~\code{\ref{alg:unistore-strong-commit}}{\ref{line:deliverupdates-knownvc-strongentry}}.
  By \ref{tcs-requirement:deliver-order},
  all committed strong transactions with strong timestamps
  less than or equal to $\knownVC^{m}_{d}[\strongentry]$
  have been delivered to $p^{m}_{d}$.
  By line~\code{\ref{alg:unistore-strong-commit}}{\ref{line:deliverupdates-oplog}},
  \[
    \log(\tvar)[m] \subseteq \oplog^{m}_{d}.
  \]
\end{proof}

\subsection{Timestamps} \label{ss:ts}

\begin{appdefinition}[Timestamps of Events] \label{def:ts-op}
  Let $e \in C_{\causalentry} \cup C_{\strongentry} \cup \Fence \cup \Attach$
  be an event issued by client $\cl$.
  We define its timestamp $\timestamp(e)$ as
  \[
    \tsfunc(e) \triangleq (\pastVC_{\cl})_{e}.
  \]
  Let $e \in S$ be a \start{} event of transaction $\tvar$.
  Let $d \triangleq \dc(\tvar)$ and $m \triangleq \coord(\tvar)$.
  We define its timestamp $\timestamp(e)$ as
  \[
    \tsfunc(e) \triangleq (\snapVC^{m}_{d})_{e}[\tvar].
  \]
  See lines~\code{\ref{alg:unistore-client}}{\ref{line:starttx-ts}},
  \code{\ref{alg:unistore-client}}{\ref{line:commitcausaltx-ts}},
  \code{\ref{alg:unistore-client}}{\ref{line:commitstrongtx-ts}},
  \code{\ref{alg:unistore-client}}{\ref{line:fence-ts}},
  and \code{\ref{alg:unistore-client}}{\ref{line:clattach-ts}}
  for \start, \commitcausaltx, \commitstrongtx, \fence,
  and \clattach{} events, respectively.
\end{appdefinition}

\begin{appdefinition}[Timestamps of Transactions] \label{def:ts-tx}
  The timestamp $\timestamp(\tvar)$ of a transaction $\tvar$
  is that of its commit event, i.e.,
  \[
    \forall \tvar \in \txs.\;
      \timestamp(\tvar) \triangleq \timestamp(\commitoftx(\tvar)).
  \]
\end{appdefinition}

\begin{applemma} \label{lemma:ts-start}
  Let $e \in S$ be a \start{} event.
  Let $d \triangleq \dc(\txfunc(e))$ and $m \triangleq \coord(\txfunc(e))$.
  Then
  \begin{align*}
    (\forall i \in \D.\; \timestamp(e)[i] \ge (\uniformVC^{m}_{d})_{e}[i])
    \;\land \\
    \timestamp(e)[\strongentry] \ge (\stableVC^{m}_{d})_{e}[\strongentry].
  \end{align*}
\end{applemma}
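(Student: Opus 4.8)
The plan is to unfold the definition of $\timestamp(e)$ for a \start{} event and read the three claimed inequalities directly off the body of \starttx in Algorithm~\ref{alg:unistore-coord}. Writing $\tvar \triangleq \txfunc(e)$, Definition~\ref{def:ts-op} gives $\timestamp(e) = (\snapVC^{m}_{d})_{e}[\tvar]$, i.e.\ the snapshot vector installed for $\tvar$ by the coordinator while handling $e$. Since each handler executes atomically, the subscript notation $(\cdot)_{e}$ denotes the value of a variable at the \emph{end} of the \starttx handler, so the first thing I would do is fix, entry by entry, whether and at which line that entry is written inside the handler. This bookkeeping is the only point requiring care; once it is settled, every inequality is simply the statement that a $\max$-assignment dominates one of its arguments.

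For the data-center entries I split on whether $i = d$. For a remote entry $i \in \D \setminus \set{d}$, line~\code{\ref{alg:unistore-coord}}{\ref{line:start-uniformvc}} first sets $\uniformVC[i] \gets \max\set{\vc[i], \uniformVC[i]}$, and line~\code{\ref{alg:unistore-coord}}{\ref{line:start-snapvc}} then copies the whole (already updated) $\uniformVC$ into $\snapVC[\tvar]$. No later line of the handler touches $\uniformVC[i]$, so $(\uniformVC^{m}_{d})_{e}[i]$ equals this updated value, whence $\timestamp(e)[i] = \snapVC[\tvar][i] = (\uniformVC^{m}_{d})_{e}[i]$ and the inequality holds with equality. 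For the local entry $i = d$, the loop at line~\code{\ref{alg:unistore-coord}}{\ref{line:start-uniformvc}} ranges only over $i \neq d$ and hence never modifies $\uniformVC[d]$; therefore $(\uniformVC^{m}_{d})_{e}[d]$ coincides with the value of $\uniformVC[d]$ read at line~\code{\ref{alg:unistore-coord}}{\ref{line:start-snapvc-d}}, which sets $\timestamp(e)[d] = \snapVC[\tvar][d] = \max\set{\vc[d], \uniformVC[d]} \ge \uniformVC[d] = (\uniformVC^{m}_{d})_{e}[d]$. This establishes the first conjunct for all $i \in \D$.

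For the strong entry, line~\code{\ref{alg:unistore-coord}}{\ref{line:start-snapvc-strong}} sets $\snapVC[\tvar][\strongentry] = \max\set{\vc[\strongentry], \stableVC[\strongentry]}$, and since \starttx never writes $\stableVC$, the value $(\stableVC^{m}_{d})_{e}[\strongentry]$ equals $\stableVC[\strongentry]$ at that point; hence $\timestamp(e)[\strongentry] \ge \stableVC[\strongentry] = (\stableVC^{m}_{d})_{e}[\strongentry]$, giving the second conjunct, and the lemma follows by taking the conjunction. I expect no real difficulty here: the proof is a direct inspection of the three assignment lines, and the only subtlety—identifying which reference vectors are (not) modified within the atomic handler, so that the post-event values $(\uniformVC^{m}_{d})_{e}$ and $(\stableVC^{m}_{d})_{e}$ can be aligned with the values actually read during the snapshot assignment—is exactly the intra-handler timing argument sketched above. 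No auxiliary monotonicity lemma is needed, since each comparison is resolved entirely inside the single atomic execution of \starttx.
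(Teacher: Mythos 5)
Your proposal is correct and matches the paper's own proof, which simply cites lines~\code{\ref{alg:unistore-coord}}{\ref{line:start-snapvc}}, \code{\ref{alg:unistore-coord}}{\ref{line:start-snapvc-d}}, and \code{\ref{alg:unistore-coord}}{\ref{line:start-snapvc-strong}} of \starttx{}; your argument is the same direct inspection of those three $\max$-assignments, just spelled out with the (correct) intra-handler bookkeeping about which entries of $\uniformVC$ and $\stableVC$ are untouched within the atomic handler.
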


\begin{proof} \label{proof:ts-start}
  By lines~\code{\ref{alg:unistore-coord}}{\ref{line:start-snapvc}},
  \code{\ref{alg:unistore-coord}}{\ref{line:start-snapvc-d}},
  and \code{\ref{alg:unistore-coord}}{\ref{line:start-snapvc-strong}}.
\end{proof}

\begin{applemma} \label{lemma:ts-extread}
  Let $e \in \extread$ be an external read event.
  Let $d \triangleq \dc(\txfunc(e))$ and $m \triangleq \coord(\txfunc(e))$.
  Then
  \begin{align*}
    \timestamp(\startoftx(e)) &= \snapshotVC(\txfunc(e)) \\
                  &= \snapvc_{(\readkey, e)} \\
                  &= (\snapVC^{m}_{d})_{\startoftx(e)}[\txfunc(e)].
  \end{align*}
\end{applemma}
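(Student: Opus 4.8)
The plan is to read the claim as a chain of four expressions that all collapse to the single value $\snapVC^{m}_{d}[\txfunc(e)]$ recorded when the transaction $\txfunc(e)$ started, and to verify each link in turn. Throughout I write $\tvar \triangleq \txfunc(e)$, $\tidvar \triangleq \tidselector(\tvar)$, $d \triangleq \dc(\tvar)$, and $m \triangleq \coord(\tvar)$, and I use that $\startoftx(e) = \startoftx(\tvar)$ by the notational convention for events. First I would dispatch the two purely definitional equalities: by Definition~\ref{def:ts-op} applied to the \start{} event $\startoftx(e)$ of $\tvar$ we have $\timestamp(\startoftx(e)) = (\snapVC^{m}_{d})_{\startoftx(e)}[\tvar]$, and by Definition~\ref{def:snapshotvc} we have $\snapshotVC(\tvar) = (\snapVC^{m}_{d})_{\startoftx(\tvar)}[\tvar] = (\snapVC^{m}_{d})_{\startoftx(e)}[\tvar]$. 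Hence the first, second, and last terms in the statement are literally the same quantity $(\snapVC^{m}_{d})_{\startoftx(e)}[\tvar]$, with no further work required.

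The real content is the equality $\snapvc_{(\readkey, e)} = (\snapVC^{m}_{d})_{\startoftx(e)}[\tvar]$, which I would establish operationally. Since $e$ is an \emph{external} read on some key $k$, no update event on $k$ precedes $e$ in $\tvar$ under $\po$; therefore the coordinator's write-buffer slot $\wbuff[\tidvar][l][k]$ (where $l = \partitionofproc(k)$) is still $\bot$ when \doread{} handles $e$, so the buffer branch at \code{\ref{alg:unistore-coord}}{\ref{line:doread-from-buffer}} is not taken and control reaches \code{\ref{alg:unistore-coord}}{\ref{line:doread-from-snapshot}}, where the coordinator sends $\readkey(\snapVC[\tidvar], k)$. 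Thus $\snapvc_{(\readkey, e)}$ is exactly the value of $\snapVC^{m}_{d}[\tvar]$ at the moment of this \doread. It then remains to show that this value coincides with the one installed at \start. The variable $\snapVC$ is indexed by transaction identifier and is written only inside \starttx{} (lines \code{\ref{alg:unistore-coord}}{\ref{line:start-snapvc}}--\code{\ref{alg:unistore-coord}}{\ref{line:start-snapvc-strong}}); since $\tidvar$ is freshly generated there, the entry $\snapVC^{m}_{d}[\tvar]$ is assigned once, at $\startoftx(e)$, and is never mutated afterward. Consequently its value at the \doread{} for $e$ equals $(\snapVC^{m}_{d})_{\startoftx(e)}[\tvar]$, closing the chain.

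I expect the only delicate step to be the correspondence between the \emph{declarative} internal/external classification of $e$ and the \emph{operational} write-buffer state: one must argue that ``no $\po$-earlier update to $k$ in $\tvar$'' is precisely what guarantees $\wbuff[\tidvar][l][k] = \bot$ at the time of the \doread, appealing to well-formedness of executions (Assumption~\ref{assumption:client-well-formed}), so that the operations of $\tvar$ are issued to the coordinator in program order and only \doupdate{} ever populates that buffer slot. The immutability-of-$\snapVC[\tidvar]$ argument is routine once one observes that this entry has a unique writer in \starttx{}, and the two definitional equalities are immediate from unfolding Definitions~\ref{def:ts-op} and~\ref{def:snapshotvc}.
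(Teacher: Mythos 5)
Your proposal is correct and takes essentially the same route as the paper, whose entire proof is a terse citation of Definition~\ref{def:ts-op}, line~\code{\ref{alg:unistore-coord}}{\ref{line:doread-from-snapshot}}, and line~\code{\ref{alg:unistore-coord}}{\ref{line:start-snapvc-d}}. You merely spell out what the paper leaves implicit: the unfolding of Definitions~\ref{def:ts-op} and~\ref{def:snapshotvc}, the fact that an external read skips the write-buffer branch and so actually reaches line~\code{\ref{alg:unistore-coord}}{\ref{line:doread-from-snapshot}}, and the single-writer argument showing $\snapVC[\tidvar]$ is never mutated after \starttx{}.
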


\begin{proof} \label{proof:ts-extread}
  By Definition~\ref{def:ts-op} of timestamps,
  line~\code{\ref{alg:unistore-coord}}{\ref{line:doread-from-snapshot}},
  and line~\code{\ref{alg:unistore-coord}}{\ref{line:start-snapvc-d}}.
\end{proof}

\begin{applemma} \label{lemma:rf-ts}
  Let $e \in \extread$ be an external read event
  which reads from transaction $\tvar$. Then
  \[
    \timestamp(\tvar) \le \timestamp(\startoftx(e)).
  \]
\end{applemma}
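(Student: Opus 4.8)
The plan is to unfold both sides of the inequality into concrete metadata values and then observe that the claim reduces to the version-selection rule of the \readkey{} handler. First I would rewrite the left-hand side: by Definition~\ref{def:ts-tx} and Definition~\ref{def:ts-op}, $\timestamp(\tvar) = \timestamp(\commitoftx(\tvar)) = (\pastVC_{\cl})_{\commitoftx(\tvar)}$, where $\cl = \client(\tvar)$. The client sets its $\pastVC$ to the commit vector returned by the commit procedure (line~\code{\ref{alg:unistore-client}}{\ref{line:commitcausaltx-pastvc}} for causal transactions and line~\code{\ref{alg:unistore-client}}{\ref{line:commitstrongtx-pastvc}} for strong ones), so by Definition~\ref{def:commitvc} we obtain $\timestamp(\tvar) = \commitVC(\tvar)$. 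For the right-hand side, Lemma~\ref{lemma:ts-extread} already gives $\timestamp(\startoftx(e)) = \snapvc_{(\readkey, e)}$, the snapshot vector used by the \readkey{} handler that served $e$. Thus the goal reduces to $\commitVC(\tvar) \le \snapvc_{(\readkey, e)}$.

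Next I would establish this inequality from the mechanics of reading. Since $e$ is an external read, its write-buffer entry for the accessed key $k$ is empty (line~\code{\ref{alg:unistore-coord}}{\ref{line:doread-from-buffer}}), so its value is served by the replica via \readkey{} rather than taken from the coordinator's buffer. The handler computes the returned version through $\snapshotproc(\oplog[k], \snapvc_{(\readkey, e)})$ (line~\code{\ref{alg:unistore-replica}}{\ref{line:readkey-read}}), which by its specification returns the update of the transaction with the highest Lamport timestamp whose commit vector is $\le \snapvc_{(\readkey, e)}$. By definition $\tvar$ is exactly that transaction, so $\commitVC(\tvar) \le \snapvc_{(\readkey, e)}$ holds directly. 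Every entry placed in $\oplog[k]$ (lines~\code{\ref{alg:unistore-replica}}{\ref{line:commit-oplog}}, \code{\ref{alg:unistore-replication}}{\ref{line:replicate-oplog}}, and \code{\ref{alg:unistore-strong-commit}}{\ref{line:deliverupdates-oplog}}) carries the commit vector of a committed transaction updating $k$, so the reasoning is well-typed; the degenerate case where $e$ reads the initial transaction $\tvar_{0}$ is immediate since its commit vector is $\vec{0}$.

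Combining the two steps yields $\timestamp(\tvar) = \commitVC(\tvar) \le \snapvc_{(\readkey, e)} = \timestamp(\startoftx(e))$, as required. I do not expect a genuine obstacle: the lemma is essentially a restatement of the snapshot-read rule, and the only care needed is bookkeeping — making precise that ``reads from $\tvar$'' corresponds to the concrete version chosen by $\snapshotproc$, and that for an \emph{external} read this version comes from the snapshot $\snapvc_{(\readkey, e)}$ (equal to $\snapshotVC(\txfunc(e))$) rather than from an internal write. That identification, supplied by Lemma~\ref{lemma:ts-extread} together with the $\extread$ hypothesis, is the one place where I would be explicit.
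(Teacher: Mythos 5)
Your proposal is correct and follows essentially the same route as the paper's proof: identify $\timestamp(\startoftx(e)) = \snapvc_{(\readkey, e)}$ via Lemma~\ref{lemma:ts-extread}, identify $\timestamp(\tvar)$ with the commit vector of the version returned by $\snapshotproc$, and conclude from the version-selection rule at line~\code{\ref{alg:unistore-replica}}{\ref{line:readkey-read}}, which only returns versions with commit vector $\le \snapvc_{(\readkey, e)}$. Your extra unfolding of $\timestamp(\tvar)$ through $\pastVC$ is just an inlined form of Lemma~\ref{lemma:ts-commit} (valid here since $\tvar$ updates the key read, hence is an update transaction), so there is no substantive difference.
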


\begin{proof} \label{proof:rf-ts}
  By line~\code{\ref{alg:unistore-replica}}{\ref{line:readkey-read}},
  \[
    \commitvc_{(\readkey, e)} \le \snapvc_{(\readkey, e)}.
  \]
  Since $e$ reads from $\tvar$,
  \[
    \timestamp(\tvar) = \commitvc_{(\readkey, e)}.
  \]
  By Lemma~\ref{lemma:ts-extread},
  \[
    \timestamp(\startoftx(e)) = \snapvc_{(\readkey, e)}.
  \]
  Therefore,
  \[
    \timestamp(\tvar) \le \timestamp(\startoftx(e)).
  \]
\end{proof}

\begin{applemma} \label{lemma:ts-commit}
  \begin{align*}
    &\forall e \in (\updatecommit \cap C_{\causalentry})
      \cup C_{\strongentry}.\; \\
      &\quad \timestamp(e) = \timestamp(\txfunc(e))
                           = \commitVC(\txfunc(e)).
  \end{align*}
\end{applemma}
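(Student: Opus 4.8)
The plan is to derive both equalities by unfolding the relevant definitions and tracing the commit event through the client and coordinator code. The first equality, $\timestamp(e) = \timestamp(\txfunc(e))$, will be essentially immediate: every $e$ in the stated set lies in $C$, so by well-formedness (Definition~\ref{def:tx}) it is the unique commit event of its transaction, i.e. $e = \commitoftx(\txfunc(e))$, and Definition~\ref{def:ts-tx} then gives $\timestamp(\txfunc(e)) = \timestamp(\commitoftx(\txfunc(e))) = \timestamp(e)$. So the real content is the second equality $\timestamp(e) = \commitVC(\txfunc(e))$, which I would prove by a case split on the two kinds of event in the set.

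For $e \in \updatecommit \cap C_{\causalentry}$, write $\tvar = \txfunc(e)$ and $\cl = \client(\tvar)$. By Definition~\ref{def:ts-op}, $\timestamp(e) = (\pastVC_{\cl})_e$. I would then follow the client procedure \commitcausaltx: at line~\code{\ref{alg:unistore-client}}{\ref{line:commitcausaltx-call-commitcausal}} the client binds $\vcvar$ to the value returned by the \commitcausal{} call, and at line~\code{\ref{alg:unistore-client}}{\ref{line:commitcausaltx-pastvc}} it sets $\pastVC_{\cl} \gets \vcvar$, so $(\pastVC_{\cl})_e = \vcvar$. Since $\tvar$ is an update transaction its write set is non-empty, so the read-only branch at line~\code{\ref{alg:unistore-coord}}{\ref{line:commitcausal-ro}} is not taken and \commitcausal{} returns the local variable $\commitvc$ at line~\code{\ref{alg:unistore-coord}}{\ref{line:commitcausal-return}}, i.e. $\vcvar = \commitvc_{(\commitcausal, e)}$. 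This is exactly $\commitVC(\tvar)$ by the update-causal case of Definition~\ref{def:commitvc}, yielding $\timestamp(e) = \commitVC(\txfunc(e))$.

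For $e \in C_{\strongentry}$ the argument is parallel. Again $\timestamp(e) = (\pastVC_{\cl})_e$ by Definition~\ref{def:ts-op}. Tracing \commitstrongtx, the decision returned by the \commitstrong{} call at line~\code{\ref{alg:unistore-client}}{\ref{line:commitstrongtx-call-commitstrong}} is $\commit$ (since $e \in C_{\strongentry}$), so the $\decvar = \commit$ branch is taken and $\pastVC_{\cl}$ is set to the returned commit vector $\vcvar$ at line~\code{\ref{alg:unistore-client}}{\ref{line:commitstrongtx-pastvc}}. That vector is the value returned by \commitstrong{} at line~\code{\ref{alg:unistore-strong-commit}}{\ref{line:commitstrong-return}}, which is $\commitVC(\tvar)$ by the committed-strong case of Definition~\ref{def:commitvc}. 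Hence $\timestamp(e) = \commitVC(\txfunc(e))$ in this case too.

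The bookkeeping obstacle — and the reason the lemma restricts causal commits to $\updatecommit$ — is that for a read-only causal transaction the commit vector is defined via $\snapVC$ in Definition~\ref{def:commitvc} rather than via the local variable $\commitvc$, and correspondingly \commitcausal{} returns $\snapVC[\tvar]$ at line~\code{\ref{alg:unistore-coord}}{\ref{line:commitcausal-return-ro}} instead of $\commitvc$. Excluding $\rocommit$ sidesteps this mismatch, so the only care needed is to confirm that the non-empty write set forces the update branch in the causal case and that the client indeed stores the returned vector into $\pastVC$ at the event $e$ in both cases.
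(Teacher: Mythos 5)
Your proof is correct and takes essentially the same route as the paper, whose entire proof is a one-line citation of Definition~\ref{def:ts-op} and Definition~\ref{def:commitvc}; you have simply spelled out the unfolding of those definitions together with the code tracing through \commitcausaltx{} and \commitstrongtx. (Your closing speculation about why $\rocommit$ is excluded is a harmless aside and does not affect the argument.)
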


\begin{proof} \label{proof:ts-commit}
  By Definition~\ref{def:ts-op} of timestamps
  and Definition~\ref{def:commitvc} of $\commitVC(\txfunc(e))$.
\end{proof}

\begin{applemma} \label{lemma:ts-tid-st-tid}
  \[
    \forall \tvar \in \txs.\; \timestamp(\tvar) \ge \timestamp(\startoftx(\tvar)).
  \]
\end{applemma}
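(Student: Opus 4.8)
The plan is to reduce the whole statement to the single inequality $\commitVC(\tvar) \ge \snapshotVC(\tvar)$ already established in Lemma~\ref{lemma:snapshotvc-commitvc}. First I would unfold the two timestamps in the goal. Applying Definition~\ref{def:ts-op} to the \start{} event $\startoftx(\tvar)$, together with Definition~\ref{def:snapshotvc}, the right-hand side is exactly $\timestamp(\startoftx(\tvar)) = \snapshotVC(\tvar)$. By Definition~\ref{def:ts-tx}, the left-hand side is $\timestamp(\tvar) = \timestamp(\commitoftx(\tvar))$. So the goal becomes $\timestamp(\commitoftx(\tvar)) \ge \snapshotVC(\tvar)$, and it suffices to show $\timestamp(\tvar) = \commitVC(\tvar)$ in every case, after which Lemma~\ref{lemma:snapshotvc-commitvc} closes the argument.

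I would then split on the type of $\tvar$. If $\tvar$ is an update causal transaction or a committed strong transaction, its commit event lies in $(\updatecommit \cap C_{\causalentry}) \cup C_{\strongentry}$, so Lemma~\ref{lemma:ts-commit} gives $\timestamp(\commitoftx(\tvar)) = \commitVC(\tvar)$ directly.

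The one case not covered by Lemma~\ref{lemma:ts-commit} --- and the part requiring the most care --- is when $\tvar$ is a read-only causal transaction. Here I would argue from the pseudocode: the commit event $\commitoftx(\tvar) \in \rocommit \cap C_{\causalentry}$ sets $\pastVC$ to the value returned by \commitcausal{} (line~\code{\ref{alg:unistore-client}}{\ref{line:commitcausaltx-pastvc}}), and for a read-only transaction \commitcausal{} returns $\snapVC[\tvar]$ unchanged (line~\code{\ref{alg:unistore-coord}}{\ref{line:commitcausal-return-ro}}), which is precisely $\commitVC(\tvar)$ by the read-only clause of Definition~\ref{def:commitvc}. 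Since Definition~\ref{def:ts-op} sets $\timestamp(\commitoftx(\tvar)) = (\pastVC_{\cl})_{\commitoftx(\tvar)}$ for $\cl = \client(\tvar)$, this again yields $\timestamp(\tvar) = \commitVC(\tvar)$; in fact $\commitVC(\tvar) = \snapshotVC(\tvar)$ for read-only transactions, so the bound is saturated with equality.

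Finally, combining the three cases with Lemma~\ref{lemma:snapshotvc-commitvc} gives $\timestamp(\tvar) = \commitVC(\tvar) \ge \snapshotVC(\tvar) = \timestamp(\startoftx(\tvar))$, as required. I expect no genuine obstacle beyond cleanly handling the read-only causal case, which is the only branch where the general identity $\timestamp(\tvar) = \commitVC(\tvar)$ must be re-derived from the client code rather than imported from Lemma~\ref{lemma:ts-commit}.
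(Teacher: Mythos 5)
Your proof is correct and follows essentially the same route as the paper's, which simply cites Lemmas~\ref{lemma:snapshotvc-commitvc}, \ref{lemma:ts-extread}, and \ref{lemma:ts-commit}: unfold $\timestamp(\startoftx(\tvar))$ to $\snapshotVC(\tvar)$ and $\timestamp(\tvar)$ to $\commitVC(\tvar)$, then invoke $\commitVC(\tvar) \ge \snapshotVC(\tvar)$. Your explicit treatment of the read-only causal case is in fact more careful than the paper's one-line citation, since Lemma~\ref{lemma:ts-commit} excludes $\rocommit \cap C_{\causalentry}$ and the paper leaves that branch implicit (it is absorbed into Case~I of Lemma~\ref{lemma:snapshotvc-commitvc}).
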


\begin{proof} \label{proof:ts-tid-st-tid}
  By Lemmas~\ref{lemma:snapshotvc-commitvc}, \ref{lemma:ts-extread},
  and \ref{lemma:ts-commit}.
\end{proof}

\subsection{Session Order}  \label{ss:so}

\begin{applemma} \label{lemma:so-ts}
  \begin{align*}
    &\forall \txevents_{1}, \txevents_{2} \in X.\;
      \txevents_{1} \rel{\so} \txevents_{2} \implies
      \big(\timestamp(\txevents_{1}) \le \timestamp(\txevents_{2}) \\
      &\quad \land \big(\txevents_{2} \in \txs
        \implies \timestamp(\txevents_{1})
          \le \timestamp(\startoftx(\txevents_{2}))
          \le \timestamp(\txevents_{2})\big)\big).
  \end{align*}
\end{applemma}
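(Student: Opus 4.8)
The plan is to reduce both conjuncts to one underlying fact: along a single client's local timeline the vector $\pastVC$ is monotone, and every quantity appearing in the statement is pinned to a concrete point of that timeline. First I would observe, straight from Definition~\ref{def:ts-op} and Definition~\ref{def:ts-tx}, that for any $s \in X$ issued by a client $\cl$ the timestamp $\timestamp(s)$ is exactly the value of $\pastVC_{\cl}$ at the completion of $s$: if $s \in \txs$ then $\timestamp(s) = \timestamp(\commitoftx(s)) = (\pastVC_{\cl})_{\commitoftx(s)}$, while if $s \in \Fence \cup \Attach$ then $\timestamp(s) = (\pastVC_{\cl})_{s}$ directly, since the \fence{} and \clattach{} handlers read but never write $\pastVC$. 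So every timestamp in sight is a snapshot of $\pastVC_{\cl}$.

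Next I would invoke well-formedness (Assumption~\ref{assumption:client-well-formed}): transactions are issued in sequence and \fence/\clattach{} events occur only between transactions, so $s_1 \rel{\so} s_2$ forces the completion of $s_1$ to precede, on $\cl$'s timeline, both the \start{} and the commit of $s_2$ (and the event $s_2$ itself when $s_2 \in \Fence \cup \Attach$). Combined with the monotonicity of $\pastVC$ (Lemma~\ref{lemma:pastvc-nondecreasing}) this gives $\timestamp(s_1) = (\pastVC_{\cl})_{s_1} \le (\pastVC_{\cl})_{s_2} = \timestamp(s_2)$, which is the first conjunct. One caveat to spell out: Lemma~\ref{lemma:pastvc-nondecreasing} is stated for $i \in \D$, whereas the $\le$ here is entrywise over all coordinates including $\strongentry$; the same proof applies to the strong entry because it rests on Lemmas~\ref{lemma:snapshotvc-commitvc} and \ref{lemma:start-pastvc-snapshotvc}, both of which cover that coordinate.

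For the second conjunct, assume $s_2 \in \txs$. By Definition~\ref{def:ts-op} the start timestamp is $\timestamp(\startoftx(s_2)) = \snapshotVC(s_2)$. I would bound it below by chaining $\timestamp(s_1) = (\pastVC_{\cl})_{s_1} \le (\pastVC_{\cl})_{\startoftx(s_2)}$, again by monotonicity since $s_1$ completes before $s_2$ starts, with $(\pastVC_{\cl})_{\startoftx(s_2)} \le \snapshotVC(s_2)$, which is exactly Lemma~\ref{lemma:start-pastvc-snapshotvc}; together these give $\timestamp(s_1) \le \timestamp(\startoftx(s_2))$. The upper bound $\timestamp(\startoftx(s_2)) \le \timestamp(s_2)$ is precisely Lemma~\ref{lemma:ts-tid-st-tid}. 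Chaining the two inequalities closes the second conjunct, and since it re-derives $\timestamp(s_1) \le \timestamp(s_2)$ in the case $s_2 \in \txs$, the standalone first-conjunct argument is only genuinely needed when $s_2 \in \Fence \cup \Attach$.

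I expect the main difficulty to be the temporal bookkeeping rather than any deep idea: for each of the transaction/\fence/\clattach{} shapes of $s_1$ and $s_2$ I must argue precisely that the $\pastVC$ snapshot defining $\timestamp(s_1)$ is taken no later than the one defining $\timestamp(s_2)$ or $\timestamp(\startoftx(s_2))$, and in particular that intervening aborted strong transactions (which leave $\pastVC$ untouched) and intervening \fence/\clattach{} events cannot break the monotone chain. The only secondary nuisance is the $\strongentry$-coordinate strengthening of Lemma~\ref{lemma:pastvc-nondecreasing} noted above.
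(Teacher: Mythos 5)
Your proof is correct and follows essentially the same route as the paper's: both reduce the claim to the definitions of timestamps as snapshots of $\pastVC_{\cl}$, the monotonicity of $\pastVC$ (Lemma~\ref{lemma:pastvc-nondecreasing}), and the bound $\snapshotVC(\cdot) \le \commitVC(\cdot)$ (you cite its packaged form, Lemma~\ref{lemma:ts-tid-st-tid}, where the paper cites Lemma~\ref{lemma:snapshotvc-commitvc} directly). If anything, your write-up is more explicit than the paper's terse proof, in particular in invoking Lemma~\ref{lemma:start-pastvc-snapshotvc} for the lower bound on $\timestamp(\startoftx(\txevents_{2}))$ and in flagging that Lemma~\ref{lemma:pastvc-nondecreasing} must also be read as covering the $\strongentry$ coordinate.
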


\begin{proof} \label{proof:so-ts}
  By Definitions~\ref{def:ts-op} and \ref{def:ts-tx} of timestamps
  and Lemma~\ref{lemma:pastvc-nondecreasing},
  \[
    \timestamp(\txevents_{1}) \le \timestamp(\txevents_{2}),
  \]
  and
  \[
    \txevents_{2} \in \txs \implies
      \timestamp(\txevents_{1}) \le \timestamp(\startoftx(\txevents_{2})).
  \]
  Besides, by Lemma~\ref{lemma:snapshotvc-commitvc},
  \[
    \txevents_{2} \in \txs \implies
      \timestamp(\startoftx(\txevents_{2})) \le \timestamp(\txevents_{2}).
  \]
  Therefore,
  \[
    \txevents_{2} \in \txs \implies
      \timestamp(\txevents_{1}) \le \timestamp(\startoftx(\txevents_{2}))
      \le \timestamp(\txevents_{2}).
  \]
\end{proof}

\subsection{Lamport Clocks} \label{ss:lc}

\begin{appdefinition}[Lamport Clocks of Events] \label{def:lc-op}
  Let $e \in C_{\causalentry} \cup C_{\strongentry} \cup \Fence \cup \Attach$
  be an event issued by client $\cl$.
  We define its Lamport clock $\lclock(e)$ as
  \[
    \lclock(e) \triangleq (\lc_{\cl})_{e}.
  \]
  See lines~\code{\ref{alg:unistore-client}}{\ref{line:commitcausaltx-lc}},
  \code{\ref{alg:unistore-client}}{\ref{line:commitstrongtx-lc}},
  \code{\ref{alg:unistore-client}}{\ref{line:fence-lc}},
  and \code{\ref{alg:unistore-client}}{\ref{line:clattach-lc}}
  for \commitcausaltx, \commitstrongtx, \fence, and \clattach{} events, respectively.
\end{appdefinition}

\begin{appdefinition}[Lamport Clocks of Transactions] \label{def:lc-tx}
  The Lamport clock $\lclock(\tvar)$ of a transaction $\tvar$
  is that of its commit event, i.e.,
  \[
    \forall \tvar \in \txs.\;
      \lclock(\tvar) \triangleq \lclock(\commitoftx(\tvar)).
  \]
\end{appdefinition}

\begin{applemma} \label{lemma:lc-extread-commit}
  Let $e \in \extread$ be an external read event issued by client $\cl$.
  Then
  \[
    (\lc_{\cl})_{e} < \lclock(\txfunc(e)).
  \]
\end{applemma}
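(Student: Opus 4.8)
The plan is to track how the client's Lamport clock $\lc_{\cl}$ evolves from the external read $e$ up to the commit of $\tvar \triangleq \txfunc(e)$ (the necessarily committed transaction containing $e$, since histories record only committed transactions), and to exploit the fact that the commit strictly increments this clock. Let $\ell$ denote the value of $\lc_{\cl}$ immediately before the increment $\lc \gets \lc+1$ performed at the start of the commit event $\commitoftx(\tvar)$ (line~\code{\ref{alg:unistore-client}}{\ref{line:commitcausaltx-lc}} if $\tvar$ is causal, line~\code{\ref{alg:unistore-client}}{\ref{line:commitstrongtx-lc-so}} if $\tvar$ is strong). The target inequality will follow from two facts: $(\lc_{\cl})_{e} \le \ell$ and $\lclock(\tvar) > \ell$.

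First I would prove the monotonicity fact $(\lc_{\cl})_{e} \le \ell$. By Assumption~\ref{assumption:client-well-formed} a client issues transactions in sequence and issues no \fence{} or \clattach{} events inside a transaction, so every client event strictly between $e$ and $\commitoftx(\tvar)$ belongs to $\tvar$ and is a \read{} or \updateproc{} event. An \updateproc{} leaves $\lc_{\cl}$ untouched, and a \read{} can only raise it via $\lc \gets \max\set{\lc, \lcvar}$ (line~\code{\ref{alg:unistore-client}}{\ref{line:read-lc}}). Hence $\lc_{\cl}$ is non-decreasing over this interval, yielding $(\lc_{\cl})_{e} \le \ell$.

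Next I would establish $\lclock(\tvar) > \ell$ by a case split on the kind of $\tvar$. If $\tvar$ is causal, then by Definition~\ref{def:lc-op} and line~\code{\ref{alg:unistore-client}}{\ref{line:commitcausaltx-lc}}, $\lclock(\tvar) = (\lc_{\cl})_{\commitoftx(\tvar)} = \ell + 1 > \ell$. If $\tvar$ is a committed strong transaction, the increment on line~\code{\ref{alg:unistore-client}}{\ref{line:commitstrongtx-lc-so}} makes the client's contribution $\lccertify(\tvar) = \ell + 1$, which is forwarded to the certification service (line~\code{\ref{alg:unistore-client}}{\ref{line:commitstrongtx-call-commitstrong}} and line~\code{\ref{alg:unistore-strong-commit}}{\ref{line:commitstrong-call-certify}}); by the certification guarantee~(\ref{eqn:gcf-lc}) the returned Lamport clock obeys $\flc(\cdot, \tvar) \ge \lccertify(\tvar) = \ell + 1$, and line~\code{\ref{alg:unistore-client}}{\ref{line:commitstrongtx-lc}} with Definition~\ref{def:lc-op} gives $\lclock(\tvar) = \flc(\cdot, \tvar) \ge \ell + 1 > \ell$. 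Combining the two facts gives $(\lc_{\cl})_{e} \le \ell < \lclock(\tvar)$, which is the claim.

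The step I expect to require the most care is the monotonicity argument: the inequality would fail if some client action between $e$ and the commit could lower $\lc_{\cl}$, so the crux is to rule out any intervening \fence{}, \clattach{}, or event of an adjacent transaction. This is precisely where Assumption~\ref{assumption:client-well-formed} does the work. Everything else is a direct reading of the client pseudocode and the certification-function specification, so no machinery from the earlier $\knownVC$/$\uniformVC$ lemmas is needed.
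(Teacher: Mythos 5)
Your proof is correct and follows essentially the same route as the paper's: a case split on whether $\txfunc(e)$ is causal or strong, using the client-side increment (line~\code{\ref{alg:unistore-client}}{\ref{line:commitcausaltx-lc}} resp.\ \code{\ref{alg:unistore-client}}{\ref{line:commitstrongtx-lc-so}}) together with the certification guarantee~(\ref{eqn:gcf-lc}) for the strong case. The only difference is that you spell out the monotonicity of $\lc_{\cl}$ between $e$ and the commit (via Assumption~\ref{assumption:client-well-formed} and line~\code{\ref{alg:unistore-client}}{\ref{line:read-lc}}), a step the paper's terser proof leaves implicit.
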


\begin{proof} \label{proof:lc-extread-commit}
  If $\txfunc(e)$ is a causal transaction,
  by line~\code{\ref{alg:unistore-client}}{\ref{line:commitcausaltx-lc}},
  \[
    \lclock(e) < \lclock(\commitoftx(e)) = \lclock(\txfunc(e)).
  \]
  If $\txfunc(e)$ is a strong transaction,
  by line~\code{\ref{alg:unistore-client}}{\ref{line:commitstrongtx-lc-so}}
  and (\ref{eqn:gcf-lc}),
  \[
    \lclock(e) < \lclock(\commitoftx(e)) = \lclock(\txfunc(e)).
  \]
\end{proof}

\begin{appdefinition}[Lamport Clock Order] \label{def:lco}
  The Lamport clock order $\lcorder$ on $X$
  is the total order defined by their Lamport clocks,
  with their client identifiers for tie-breaking.
\end{appdefinition}

\begin{applemma} \label{lemma:so-lc}
  \[
    \so \subseteq \lcorder.
  \]
\end{applemma}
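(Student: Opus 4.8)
The plan is to show that along any single client's local timeline the variable $\lc$ never decreases and, moreover, strictly increases at every event that completes a member of $X$ (a committed transaction's commit event, a \fence{} event, or an \clattach{} event). Given this, two session-ordered elements issued by the same client necessarily carry strictly ordered Lamport clocks, which suffices because the client-identifier tie-break in Definition~\ref{def:lco} never comes into play for same-client pairs.

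First I would prove a monotonicity lemma: for each client $\cl$, the variable $\lc_{\cl}$ is non-decreasing over $\cl$'s execution. The only writes to $\lc_{\cl}$ occur in Algorithm~\ref{alg:unistore-client}, and they are of three kinds: the increments $\lc \gets \lc + 1$ (lines~\ref{line:commitcausaltx-lc}, \ref{line:commitstrongtx-lc-so}, \ref{line:fence-lc}, \ref{line:clattach-lc}), the max in the external-read case (line~\ref{line:read-lc}), and the overwrite $\lc \gets \lcvar$ after a committed strong transaction (line~\ref{line:commitstrongtx-lc}). The first two are manifestly non-decreasing. For the third I would appeal to the certification-function guarantee~(\ref{eqn:gcf-lc}), namely $\flc(\cdot, \tvar) \ge \lccertify(\tvar)$, together with the observation that the value $\lccertify(\tvar)$ passed to the TCS on line~\ref{line:commitstrongtx-call-commitstrong} is exactly the value $\lc_{\cl}$ holds immediately after the increment on line~\ref{line:commitstrongtx-lc-so}; hence the overwrite cannot lower $\lc_{\cl}$.

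Second I would sharpen this to: every event $e$ completing a member of $X$ strictly increases $\lc_{\cl}$ relative to its value immediately before $e$. Indeed, \commitcausaltx, \fence, and \clattach{} each begin with $\lc \gets \lc + 1$, and a committed \commitstrongtx{} increments on line~\ref{line:commitstrongtx-lc-so} and then overwrites with a value $\ge$ that increment, as argued above. By Definitions~\ref{def:lc-op} and~\ref{def:lc-tx}, the Lamport clock $\lclock(\txevents)$ of any $\txevents \in X$ is precisely the value of $\lc_{\cl}$ at its completing event, which is therefore strictly larger than the value of $\lc_{\cl}$ just before that event.

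Finally, for the main statement, I would take $\txevents_1 \rel{\so} \txevents_2$. By definition of $\so$ these are issued by the same client $\cl$ with $\txevents_1$ completing before $\txevents_2$; by well-formedness (Assumption~\ref{assumption:client-well-formed}, transactions in sequence and \fence/\clattach{} outside transactions) the completing event of $\txevents_1$ strictly precedes that of $\txevents_2$ on $\cl$'s timeline. Monotonicity then gives that $\lc_{\cl}$ just before $\txevents_2$'s completing event is $\ge \lclock(\txevents_1)$, and the strict-increment fact yields $\lclock(\txevents_1) < \lclock(\txevents_2)$. Since both elements belong to the same client, the tie-break in Definition~\ref{def:lco} is moot, so $\txevents_1 \rel{\lcorder} \txevents_2$, establishing $\so \subseteq \lcorder$. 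I expect the only delicate point to be the strong-commit overwrite in the monotonicity step, where one must carefully match the client's sent value $\lccertify(\tvar)$ with the post-increment value of $\lc_{\cl}$ and invoke~(\ref{eqn:gcf-lc}); the remaining cases are direct readings of the pseudocode.
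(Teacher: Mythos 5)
Your proposal is correct and follows essentially the same route as the paper, whose own proof is just a terse citation of exactly the lines you analyze: the client-side increments at lines~\code{\ref{alg:unistore-client}}{\ref{line:commitcausaltx-lc}}, \code{\ref{alg:unistore-client}}{\ref{line:commitstrongtx-lc-so}}, \code{\ref{alg:unistore-client}}{\ref{line:fence-lc}}, \code{\ref{alg:unistore-client}}{\ref{line:clattach-lc}}, together with the certification guarantee~(\ref{eqn:gcf-lc}) that handles the strong-commit overwrite at line~\code{\ref{alg:unistore-client}}{\ref{line:commitstrongtx-lc}}. One immaterial nitpick: in \fence{} and \clattach{} the increment occurs after the remote call rather than at the beginning, which changes nothing since $\lclock$ of the event is in any case the post-completion value of $\lc$.
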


\begin{proof} \label{proof:so-lc}
  By lines~\code{\ref{alg:unistore-client}}{\ref{line:commitcausaltx-lc}},
  \code{\ref{alg:unistore-client}}{\ref{line:commitstrongtx-lc-so}},
  (\ref{eqn:gcf-lc}),
  \code{\ref{alg:unistore-client}}{\ref{line:commitstrongtx-lc}},
  \code{\ref{alg:unistore-client}}{\ref{line:fence-lc}},
  and \code{\ref{alg:unistore-client}}{\ref{line:clattach-lc}}.
\end{proof}

\begin{applemma} \label{lemma:rf-lc}
  Let $e \in \extread$ be an external read event
  which reads from transaction $\tvar$. Then
  \[
    \tvar \rel{\lcorder} \txfunc(e).
  \]
\end{applemma}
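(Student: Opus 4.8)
The plan is to mirror the proof of Lemma~\ref{lemma:rf-ts}, but tracking Lamport clocks instead of commit vectors. Write $\cl \triangleq \client(\txfunc(e))$ for the client issuing the transaction containing $e$. The first step is to pin down the Lamport clock carried by the version that $e$ reads. Since $e \in \extread$, there is no update on $\key(e)$ preceding $e$ within $\txfunc(e)$, so in \doread the write-buffer branch at line~\code{\ref{alg:unistore-coord}}{\ref{line:doread-from-buffer}} is \emph{not} taken; instead $e$ obtains its value through the snapshot path, i.e. from \readkey. By the behavior of \snapshotproc (line~\code{\ref{alg:unistore-replica}}{\ref{line:readkey-read}}), \readkey returns the $\oplog$ entry of the transaction with the highest Lamport timestamp whose commit vector is $\le \snapvc$. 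Since $e$ reads from $\tvar$, this entry is $\log(\tvar)[\key(e)]$, and hence the accompanying Lamport clock satisfies $\lcvar_{(\readkey, e)} = \lclock(\tvar) \neq \bot$, exactly as $\commitvc_{(\readkey, e)} = \commitVC(\tvar)$ in Lemma~\ref{lemma:rf-ts}.

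The second step is to propagate this into the client's Lamport clock. Because the returned value is $\lcvar = \lclock(\tvar) \neq \bot$, the guard at line~\code{\ref{alg:unistore-client}}{\ref{line:read-is-external}} holds, so the client executes $\lc \gets \max\set{\lc, \lclock(\tvar)}$ at line~\code{\ref{alg:unistore-client}}{\ref{line:read-lc}}. Consequently, immediately after $e$ we have $(\lc_{\cl})_{e} \ge \lclock(\tvar)$.

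The third step combines this with Lemma~\ref{lemma:lc-extread-commit}, which gives $(\lc_{\cl})_{e} < \lclock(\txfunc(e))$. Chaining the two inequalities yields the \emph{strict} bound $\lclock(\tvar) \le (\lc_{\cl})_{e} < \lclock(\txfunc(e))$, hence $\lclock(\tvar) < \lclock(\txfunc(e))$. By Definition~\ref{def:lco}, $\lcorder$ is the total order induced by Lamport clocks with client identifiers used only to break ties, so a strict Lamport-clock inequality already forces $\tvar \rel{\lcorder} \txfunc(e)$, with no appeal to the tie-breaking rule.

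The main obstacle I anticipate is the first step: rigorously matching the abstract ``reads from'' relation to the physical value $\lcvar_{(\readkey, e)}$ returned by \readkey, i.e. justifying that the entry selected by \snapshotproc indeed belongs to $\tvar$ and carries precisely $\lclock(\tvar)$. This is where the definition of $\log(\tvar)$ and the fact that external reads bypass the write buffer (so that $\lcvar \neq \bot$ and the client clock is genuinely advanced) must be invoked with care; once this identification is in place, the remainder is the short inequality chain already prepared by Lemma~\ref{lemma:lc-extread-commit}. A minor boundary case worth checking is when $\tvar$ is the initial transaction $\tvar_{0}$, which the argument handles uniformly as long as $\lclock(\tvar_{0})$ is well-defined and no larger than the clock read back by $e$.
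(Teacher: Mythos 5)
Your proposal is correct and follows essentially the same route as the paper's own proof: obtain $\lclock(\tvar) \le (\lc_{\cl})_{e}$ from the client-side update at line~\code{\ref{alg:unistore-client}}{\ref{line:read-lc}}, combine it with Lemma~\ref{lemma:lc-extread-commit} to get the strict inequality $\lclock(\tvar) < \lclock(\txfunc(e))$, and conclude via Definition~\ref{def:lco}. The only difference is that you spell out in more detail the identification of the returned $\lcvar$ with $\lclock(\tvar)$ via \readkey{} and the $\lcvar \neq \bot$ guard, which the paper leaves implicit in the ``reads from'' hypothesis.
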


\begin{proof} \label{proof:rf-lc}
  Suppose that $e$ is issued by client $\cl$.
  By line~\code{\ref{alg:unistore-client}}{\ref{line:read-lc}},
  \[
    \lclock(\tvar) \le (\lc_{\cl})_{e}.
  \]
  By Lemma~\ref{lemma:lc-extread-commit},
  \[
    (\lc_{\cl})_{e} < \lclock(\txfunc(e)).
  \]
  Therefore,
  \[
    \lclock(\tvar) < \lclock(\txfunc(e)).
  \]
  By Definition~\ref{def:lco} of $\lcorder$,
  \[
    \tvar \rel{\lcorder} \txfunc(e).
  \]
\end{proof}

\subsection{Visibility Relation}  \label{ss:vis}

\begin{appdefinition}[Visibility Relation] \label{def:vis-tx}
  \begin{align*}
    &\forall \txevents_{1}, \txevents_{2} \in X.\;
      \txevents_{1} \rel{\vis} \txevents_{2} \iff \\
      &\quad \big((\txevents_{2} \in \txs \implies
        \timestamp(\txevents_{1}) \le \timestamp(\startoftx(\txevents_{2}))) \\
        &\quad\quad \land (\txevents_{2} \in \Fence \cup \Attach \implies \timestamp(\txevents_{1}) \le \timestamp(\txevents_{2}))\big)
          \land \txevents_{1} \rel{\lcorder} \txevents_{2}.
  \end{align*}
\end{appdefinition}

\begin{apptheorem} \label{thm:conflictaxiom}
  \[
    A \models \conflictaxiom.
  \]
\end{apptheorem}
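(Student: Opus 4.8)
The plan is to unfold the definition of visibility for two committed strong transactions and reduce the goal to a commit-vector inequality together with a Lamport-clock inequality, and then to extract both from the certification-service specification. Fix distinct $t_1, t_2 \in X \cap \strongtxs$ with $t_1 \conflict t_2$. Since both are committed strong transactions, Lemma~\ref{lemma:ts-commit} gives $\timestamp(t_i) = \commitVC(t_i)$, and by Definition~\ref{def:ts-op} together with Definition~\ref{def:snapshotvc} we have $\timestamp(\startoftx(t_i)) = \snapshotVC(t_i)$. Hence, specializing Definition~\ref{def:vis-tx} to the case $t_i \in \txs$, it suffices to prove that either $\commitVC(t_1) \le \snapshotVC(t_2) \land t_1 \rel{\lcorder} t_2$, or the symmetric statement with $t_1$ and $t_2$ interchanged.

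Next I would invoke the correctness of the certification service. By Theorem~\ref{thm:tcs-correctness} the relevant history has a legal permutation, in which the committed strong transactions are totally ordered and each transaction's decision, commit vector and Lamport clock are exactly $\certfunc(\txsincertify, \cdot)$ applied to the set $\txsincertify$ of strong transactions committed earlier in the permutation. Suppose $t_1$ precedes $t_2$ in this order, so that $t_1 \in \txsincertify$ when $t_2$ is certified; I will establish $t_1 \rel{\vis} t_2$. The Lamport-clock part is immediate: since $t_1 \conflict t_2$ and $t_1 \in \txsincertify$, requirement (\ref{eqn:gcf-lc}) gives $\lclock(t_2) = \flc(\txsincertify, t_2) > \lclock(t_1)$, whence $t_1 \rel{\lcorder} t_2$ by Definition~\ref{def:lco}.

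It then remains to prove $\commitVC(t_1) \le \snapshotVC(t_2)$ entrywise. For the per-data-center entries $i \in \D$ this follows purely from (\ref{eqn:gcf-commitvc}): its first conjunct gives $\commitVC(t_2)[i] = \snapshotVC(t_2)[i]$, and its last conjunct (applicable because $t_1 \conflict t_2$) gives $\commitVC(t_2) \ge \commitVC(t_1)$, so $\commitVC(t_1)[i] \le \commitVC(t_2)[i] = \snapshotVC(t_2)[i]$. The remaining entry is the strong one, and for it I would use the decision requirement (\ref{eqn:gcf-decision}): because $t_2$ committed, for every $k \in \R(t_2)$ and every $t' \in \txsincertify$ with $k \in \W(t')$ we have $\commitVC(t') \le \snapshotVC(t_2)$; instantiating $t' = t_1$ at a key that $t_1$ writes and $t_2$ accesses yields $\commitVC(t_1) \le \snapshotVC(t_2)$, in particular on the strong entry. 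Combined with $t_1 \rel{\lcorder} t_2$, this gives $t_1 \rel{\vis} t_2$.

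The main obstacle is precisely this last step: the conflict relation of Definition~\ref{def:conflict-relation-tx} is symmetric, whereas the decision rule (\ref{eqn:gcf-decision}) is asymmetric, constraining the snapshot of the later transaction only through the \emph{write} set of the earlier one, i.e.\ through $\W(t_1) \cap \R(t_2)$. To instantiate it I must exhibit a key that the earlier transaction writes and the later one accesses, and this is where the restriction to LWW registers is used: two strong transactions conflict only through a commonly written key, so (using $\W \subseteq \R$) the earlier transaction's write set necessarily meets the later one's read set and the decision rule applies. I therefore expect the real work to lie in treating the two sub-cases according to which of $t_1, t_2$ is certified first and, in each, reading off the correct direction of visibility from whichever decision constraint is triggered; aligning the certification order with the conflict in this way, rather than the routine entrywise inequalities, is the crux of the argument.
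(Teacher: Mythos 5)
Your proposal follows the same route as the paper's own proof: invoke Theorem~\ref{thm:tcs-correctness} to obtain a legal permutation of the committed strong transactions, assume w.l.o.g.\ that $t_1$ is certified before $t_2$, derive $\commitVC(t_1) \le \snapshotVC(t_2)$ from the certification-function requirements so that $\timestamp(t_1) \le \timestamp(\startoftx(t_2))$, obtain $t_1 \rel{\lcorder} t_2$ from~(\ref{eqn:gcf-lc}), and conclude $t_1 \rel{\vis} t_2$ by Definition~\ref{def:vis-tx}. The paper reads the entire vector inequality off~(\ref{eqn:gcf-decision}) (citing Lemmas~\ref{lemma:ts-extread} and~\ref{lemma:ts-commit}), whereas you split the per-data-center entries through~(\ref{eqn:gcf-commitvc}) and reserve~(\ref{eqn:gcf-decision}) for the \strongentry{} entry; that is an immaterial variation.

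The divergence is exactly the step you flag as the crux, and there your repair does not go through under the paper's own definitions. You are right that~(\ref{eqn:gcf-decision}) is asymmetric: it constrains $t_2$ only via keys in $\W(t_1) \cap \R(t_2)$. But your resolving claim --- that under LWW registers two strong transactions conflict only through a commonly written key, so that $\emptyset \neq \W(t_1) \cap \W(t_2) \subseteq \W(t_1) \cap \R(t_2)$ --- contradicts Definition~\ref{def:conflict-relation-tx}, which is the relation the theorem (via Definition~\ref{def:conflictaxiom}) actually quantifies over: it declares $t_1 \conflict t_2$ already when $\R(t_1) \cap \W(t_2) \neq \emptyset$, e.g., when $t_1$ merely reads a key that $t_2$ writes. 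For such an anti-dependency pair with $\W(t_1) \cap \R(t_2) = \emptyset$ --- and note the legal permutation is forced to certify the reader $t_1$ first precisely when $t_1$'s read misses $t_2$'s write, since the opposite order would make~(\ref{eqn:gcf-decision}) for $t_1$ fail --- requirement~(\ref{eqn:gcf-decision}) yields nothing for $t_2$, requirement~(\ref{eqn:gcf-commitvc}) only bounds $\commitVC(t_1)$ by $\commitVC(t_2)$, whose \strongentry{} entry strictly exceeds $\snapshotVC(t_2)[\strongentry]$, and~(\ref{eqn:gcf-lc}) rules out the reverse edge $t_2 \rel{\vis} t_1$; so neither visibility edge is derivable. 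Your claim is the correct lifting of the footnote's operation-level conflict relation for registers (``any two writes to the same object conflict''), and under that reading your proof is complete; it is just not the relation of Definition~\ref{def:conflict-relation-tx}. In fairness, the paper's proof applies~(\ref{eqn:gcf-decision}) at exactly this point without exhibiting a witness key in $\W(t_1) \cap \R(t_2)$, so under the formal definition it has the same hole; you have not introduced a new error so much as surfaced a soft spot the paper glosses over, and then patched it by silently switching to the write-write conflict relation.
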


\begin{proof} \label{proof:conflictaxiom}
  We need to show that
  \[
    \forall \tvar_{1}, \tvar_{2} \in \strongtxs.\;
      \tvar_{1} \conflict \tvar_{2} \implies \tvar_{1} \rel{\vis} \tvar_{2} \lor \tvar_{2} \rel{\vis} \tvar_{1}.
  \]
  Consider the history $h$ of TCS.
  By Theorem~\ref{thm:tcs-correctness},
  $h \mid \committedVar(h)$ has a legal permutation $\pi$.
  Suppose that
  \[
    \intcertify(\tidselector(\tvar_{1}), \_, \_, \_, \_) \prec_{\pi}
      \intcertify(\tidselector(\tvar_{2}), \_, \_, \_, \_).
  \]
  Since $\tvar_{1} \conflict \tvar_{2}$ and $\tvar_{2}$ is committed,
  by (\ref{eqn:gcf-decision}),
  \[
    \commitVC(\tvar_{1}) \le \snapshotVC(\tvar_{2}).
  \]
  By Lemmas~\ref{lemma:ts-extread} and \ref{lemma:ts-commit},
  \[
    \timestamp(\tvar_{1}) \le \timestamp(\startoftx(\tvar_{2})).
  \]
  On the other hand, by (\ref{eqn:gcf-lc}),
  \[
    \lclock(\tvar_{1}) < \lclock(\tvar_{2}).
  \]
  By Definition~\ref{def:lco} of $\lcorder$,
  \[
    \tvar_{1} \rel{\lcorder} \tvar_{2}.
  \]
  Therefore, by Definition~\ref{def:vis-tx} of $\vis$,
  \[
    \tvar_{1} \rel{\vis} \tvar_{2}.
  \]
\end{proof}

\begin{applemma} \label{lemma:so-vis}
  \[
    \so \subseteq \vis.
  \]
\end{applemma}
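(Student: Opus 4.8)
The plan is to unfold Definition~\ref{def:vis-tx} and check its three components separately for an arbitrary pair with $\txevents_1 \rel{\so} \txevents_2$, drawing entirely on the timestamp and Lamport-clock lemmas already established. Concretely, $\txevents_1 \rel{\vis} \txevents_2$ requires: (i) the implication that $\txevents_2 \in \txs$ forces $\timestamp(\txevents_1) \le \timestamp(\startoftx(\txevents_2))$; (ii) the implication that $\txevents_2 \in \Fence \cup \Attach$ forces $\timestamp(\txevents_1) \le \timestamp(\txevents_2)$; and (iii) the Lamport-clock ordering $\txevents_1 \rel{\lcorder} \txevents_2$.

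First I would fix $\txevents_1, \txevents_2 \in X$ with $\txevents_1 \rel{\so} \txevents_2$ and apply Lemma~\ref{lemma:so-ts}. This yields both $\timestamp(\txevents_1) \le \timestamp(\txevents_2)$ and, in the case $\txevents_2 \in \txs$, the sharper bound $\timestamp(\txevents_1) \le \timestamp(\startoftx(\txevents_2))$. Together these discharge requirements (i) and (ii): when $\txevents_2 \in \txs$, implication (i) holds by the sharper bound while implication (ii) is vacuous (as $\txs$ and $\Fence \cup \Attach$ are disjoint); when $\txevents_2 \in \Fence \cup \Attach$, implication (ii) holds by $\timestamp(\txevents_1) \le \timestamp(\txevents_2)$ while implication (i) is vacuous. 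Since $X \subseteq \txs \cup \Fence \cup \Attach$, these two cases are exhaustive, so both implications in the definition hold regardless of the type of $\txevents_2$.

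Next I would invoke Lemma~\ref{lemma:so-lc}, which gives $\so \subseteq \lcorder$; applied to our pair it yields $\txevents_1 \rel{\lcorder} \txevents_2$, discharging requirement (iii). Assembling the three facts according to Definition~\ref{def:vis-tx} gives $\txevents_1 \rel{\vis} \txevents_2$, and since the pair was arbitrary, $\so \subseteq \vis$.

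There is essentially no residual obstacle at this level: the statement is a direct corollary of Lemmas~\ref{lemma:so-ts} and~\ref{lemma:so-lc}, whose own proofs (leaning on the monotonicity of $\pastVC$ via Lemma~\ref{lemma:pastvc-nondecreasing} and on the snapshot/commit-vector relationship via Lemma~\ref{lemma:snapshotvc-commitvc}) carry all the weight. The only point demanding a moment's care is matching the case split in Definition~\ref{def:vis-tx} correctly, namely recognizing that the $\txs$ branch needs the bound against $\timestamp(\startoftx(\txevents_2))$ rather than against $\timestamp(\txevents_2)$, and that it is precisely this stronger form that Lemma~\ref{lemma:so-ts} supplies.
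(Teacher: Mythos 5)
Your proof is correct and matches the paper's own argument, which is exactly the one-line appeal to Lemmas~\ref{lemma:so-ts} and~\ref{lemma:so-lc}; you have simply spelled out the case analysis against Definition~\ref{def:vis-tx} that the paper leaves implicit. No gaps.
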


\begin{proof} \label{proof:so-vis}
  By Lemmas~\ref{lemma:so-ts} and \ref{lemma:so-lc}.
\end{proof}

\begin{applemma} \label{lemma:vis-partial}
  The visibility relation $\vis$ is a partial order.
\end{applemma}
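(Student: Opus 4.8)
The plan is to show that $\vis$ is both irreflexive and transitive, which is exactly what is required of a (strict) partial order. Irreflexivity will be immediate from the definition: by Definition~\ref{def:vis-tx}, $\txevents \rel{\vis} \txevents$ would entail $\txevents \rel{\lcorder} \txevents$, but $\lcorder$ is a total order (Definition~\ref{def:lco}) and hence irreflexive, so no element is $\vis$-related to itself. Thus the only real work is transitivity.

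For transitivity I would assume $\txevents_{1} \rel{\vis} \txevents_{2}$ and $\txevents_{2} \rel{\vis} \txevents_{3}$ and derive $\txevents_{1} \rel{\vis} \txevents_{3}$. The $\lcorder$-component is handled directly: both hypotheses supply $\txevents_{1} \rel{\lcorder} \txevents_{2}$ and $\txevents_{2} \rel{\lcorder} \txevents_{3}$, and since $\lcorder$ is a total order it is transitive, giving $\txevents_{1} \rel{\lcorder} \txevents_{3}$. It then remains to establish the timestamp component of $\txevents_{1} \rel{\vis} \txevents_{3}$, whose form depends on whether $\txevents_{3} \in \txs$ or $\txevents_{3} \in \Fence \cup \Attach$. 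The crux is a case split on the type of the \emph{intermediate} element $\txevents_{2}$ (recall $\txevents_{2} \in X \subseteq \txs \cup \Fence \cup \Attach$). When $\txevents_{2} \in \txs$, the hypothesis $\txevents_{1} \rel{\vis} \txevents_{2}$ only yields $\timestamp(\txevents_{1}) \le \timestamp(\startoftx(\txevents_{2}))$, so I would invoke Lemma~\ref{lemma:ts-tid-st-tid} to get $\timestamp(\startoftx(\txevents_{2})) \le \timestamp(\txevents_{2})$ and conclude $\timestamp(\txevents_{1}) \le \timestamp(\txevents_{2})$; when $\txevents_{2} \in \Fence \cup \Attach$, the hypothesis already gives $\timestamp(\txevents_{1}) \le \timestamp(\txevents_{2})$. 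In both cases I obtain $\timestamp(\txevents_{1}) \le \timestamp(\txevents_{2})$. Feeding this into the second hypothesis, if $\txevents_{3} \in \txs$ then $\timestamp(\txevents_{2}) \le \timestamp(\startoftx(\txevents_{3}))$ gives $\timestamp(\txevents_{1}) \le \timestamp(\startoftx(\txevents_{3}))$, and if $\txevents_{3} \in \Fence \cup \Attach$ then $\timestamp(\txevents_{2}) \le \timestamp(\txevents_{3})$ gives $\timestamp(\txevents_{1}) \le \timestamp(\txevents_{3})$. Combining with $\txevents_{1} \rel{\lcorder} \txevents_{3}$ yields $\txevents_{1} \rel{\vis} \txevents_{3}$.

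The main obstacle is not any computation but the bookkeeping over event types. The visibility predicate measures the \emph{start} timestamp of its target when that target is a transaction, yet the element that must chain the two inequalities is $\txevents_{2}$ in its role as a \emph{source}, for which the relevant quantity is its own (commit) timestamp. Bridging this asymmetry is precisely the function of Lemma~\ref{lemma:ts-tid-st-tid} ($\timestamp(\startoftx(\tvar)) \le \timestamp(\tvar)$), and the proof reduces to applying it exactly in the subcase where the intermediate element is a transaction while keeping the four type-combinations of $(\txevents_{2},\txevents_{3})$ straight.
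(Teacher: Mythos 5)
Your proof is correct and takes essentially the same route as the paper: irreflexivity from the irreflexivity of $\lcorder$, and transitivity by combining transitivity of $\lcorder$ with the timestamp chain bridged by Lemma~\ref{lemma:ts-tid-st-tid}. The only cosmetic difference is that you organize the case analysis as two sequential splits (first on $\txevents_{2}$, then on $\txevents_{3}$) while the paper lists the four $(\txevents_{2},\txevents_{3})$ type combinations at once; the inequalities chained are identical.
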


\begin{proof} \label{proof:vis-partial}
  We need to show that
  \begin{itemize}
    \item $\vis$ is irreflexive.
      This holds because $\lcorder$ is irreflexive.
    \item $\vis$ is transitive.
      Suppose that $\txevents_{1} \rel{\vis} \txevents_{2} \rel{\vis} \txevents_{3}$.
      By Definition~\ref{def:vis-tx} of $\vis$,
      \[
        \txevents_{1} \rel{\lcorder} \txevents_{2} \rel{\lcorder} \txevents_{3}.
      \]
      By Definition~\ref{def:lco} of $\lcorder$,
      \[
        \txevents_{1} \rel{\lcorder} \txevents_{3}.
      \]
      Regarding timestamps,
      we distinguish between the following four cases
      and use Lemma~\ref{lemma:ts-tid-st-tid}.
      \begin{itemize}
        \item $\txevents_{2} \in \Fence \cup \Attach
          \land \txevents_{3} \in \Fence \cup \Attach$.
          \[
            \timestamp(\txevents_{1}) \le \timestamp(\txevents_{2})
              \le \timestamp(\txevents_{3}).
          \]
        \item $\txevents_{2} \in \Fence \cup \Attach
          \land \txevents_{3} \in \txs$.
          \[
            \timestamp(\txevents_{1}) \le \timestamp(\txevents_{2})
              \le \timestamp(\startoftx(\txevents_{3})).
          \]
        \item $\txevents_{2} \in \txs \land \txevents_{3} \in \Fence \cup \Attach$.
          \[
            \timestamp(\txevents_{1}) \le \timestamp(\startoftx(\txevents_{2}))
              \le \timestamp(\txevents_{2}) \le \timestamp(\txevents_{3}).
          \]
        \item $\txevents_{2} \in \txs \land \txevents_{3} \in \txs$.
          \[
            \timestamp(\txevents_{1}) \le \timestamp(\startoftx(\txevents_{2}))
              \le \timestamp(\txevents_{2}) \le \timestamp(\startoftx(\txevents_{3})).
          \]
      \end{itemize}
      By Definition~\ref{def:vis-tx} of $\vis$,
      \[
        \txevents_{1} \rel{\vis} \txevents_{3}.
      \]
  \end{itemize}
\end{proof}

\begin{apptheorem} \label{thm:cv}
  \[
    A \models \cv.
  \]
\end{apptheorem}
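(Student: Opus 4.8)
The goal is to establish $\cv$, which by Definition~\ref{def:cc} unfolds to the inclusion $(\so \cup \vis)^{+} \subseteq \vis$. The plan is to reduce this entirely to two facts that are already in hand: that session order is contained in visibility (Lemma~\ref{lemma:so-vis}) and that visibility is itself a partial order, hence transitive (Lemma~\ref{lemma:vis-partial}). Once these are available, the inclusion follows by pure relational algebra, with no further reasoning about timestamps or Lamport clocks.

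Concretely, I would first observe that Lemma~\ref{lemma:so-vis} gives $\so \subseteq \vis$, so that
\[
  \so \cup \vis = \vis.
\]
Taking the transitive closure of both sides yields $(\so \cup \vis)^{+} = \vis^{+}$. Next I would invoke Lemma~\ref{lemma:vis-partial}: since $\vis$ is a partial order it is in particular transitive, and a transitive relation equals its own transitive closure, so $\vis^{+} = \vis$. Chaining these two equalities gives $(\so \cup \vis)^{+} = \vis \subseteq \vis$, which is exactly $\cv$.

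There is essentially no genuine obstacle here; the conceptual work has already been discharged in the two cited lemmas. The only thing to be careful about is that both ingredients are genuinely needed and are applied in the right direction: the containment $\so \subseteq \vis$ collapses the union, and transitivity of $\vis$ collapses the closure. If anything, the subtlety lies upstream, in Lemma~\ref{lemma:vis-partial}, whose transitivity argument does the heavy lifting via the timestamp monotonicity (Lemma~\ref{lemma:ts-tid-st-tid}) and the totality of the Lamport clock order (Definition~\ref{def:lco}); at the level of this theorem those facts are black-boxed, so the remaining proof is a one-line combination and I would present it as such.

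\begin{proof}
  By Definition~\ref{def:cc}, it suffices to show
  $(\so \cup \vis)^{+} \subseteq \vis$.
  By Lemma~\ref{lemma:so-vis}, $\so \subseteq \vis$, and therefore
  \[
    \so \cup \vis = \vis.
  \]
  By Lemma~\ref{lemma:vis-partial}, $\vis$ is a partial order, and in
  particular transitive, so $\vis^{+} = \vis$. Hence
  \[
    (\so \cup \vis)^{+} = \vis^{+} = \vis \subseteq \vis,
  \]
  as required.
\end{proof}
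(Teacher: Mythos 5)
Your proof is correct and follows exactly the same route as the paper's: both invoke Lemma~\ref{lemma:so-vis} to collapse the union $\so \cup \vis$ to $\vis$ and Lemma~\ref{lemma:vis-partial} (transitivity of $\vis$) to collapse the transitive closure, yielding $(\so \cup \vis)^{+} = \vis^{+} = \vis$. Your write-up merely spells out the two steps that the paper compresses into one line.
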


\begin{proof} \label{proof:cv}
  By Lemmas~\ref{lemma:so-vis} and \ref{lemma:vis-partial},
  \[
    (\so \cup \vis)^{+} = \vis^{+} = \vis.
  \]
\end{proof}

\begin{applemma}[\prop{5}] \label{lemma:conflict-strong-ts}
  For any two conflicting transactions $\tvar_{1}$ and $\tvar_{2}$,
  \begin{align*}
    &\tvar_{1} \rel{\vis} \tvar_{2} \iff \\
      &\quad \commitVC(\tvar_{1})[\strongentry] < \commitVC(\tvar_{2})[\strongentry].
  \end{align*}
\end{applemma}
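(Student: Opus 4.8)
The plan is to prove the two implications separately, establishing the nontrivial forward direction by a short chain of (in)equalities and then obtaining the backward direction essentially for free from the totality of $\vis$ on conflicting pairs. First note that, since the conflict relation is defined only on strong transactions (Definition~\ref{def:conflict-relation-tx}), both $\tvar_1$ and $\tvar_2$ lie in $\strongtxs$; in particular their commit events lie in $C_{\strongentry}$ and their commit vectors are the ones computed by the certification function.

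For the forward direction, assume $\tvar_1 \rel{\vis} \tvar_2$. Since $\tvar_2 \in \txs$, Definition~\ref{def:vis-tx} yields $\timestamp(\tvar_1) \le \timestamp(\startoftx(\tvar_2))$ (the $\Fence \cup \Attach$ clause being vacuous). By Lemma~\ref{lemma:ts-commit}, $\timestamp(\tvar_1) = \commitVC(\tvar_1)$, and by Definition~\ref{def:ts-op} together with Definition~\ref{def:snapshotvc}, $\timestamp(\startoftx(\tvar_2)) = \snapshotVC(\tvar_2)$. Hence $\commitVC(\tvar_1) \le \snapshotVC(\tvar_2)$, and in particular $\commitVC(\tvar_1)[\strongentry] \le \snapshotVC(\tvar_2)[\strongentry]$. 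The crucial extra ingredient is the \emph{strict} inequality $\snapshotVC(\tvar_2)[\strongentry] < \commitVC(\tvar_2)[\strongentry]$, which I would read off from the certification-function specification~(\ref{eqn:gcf-commitvc}): because $\tvar_2$ is a committed strong transaction, Theorem~\ref{thm:tcs-correctness} together with Definition~\ref{def:commitvc} gives $\commitVC(\tvar_2) = \fvec(\txsincertify, \tvar_2)$ for the set $\txsincertify$ of strong transactions committed before it, and~(\ref{eqn:gcf-commitvc}) forces its $\strongentry$-entry to strictly exceed that of its snapshot. Chaining the two inequalities gives $\commitVC(\tvar_1)[\strongentry] < \commitVC(\tvar_2)[\strongentry]$.

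For the backward direction, assume $\commitVC(\tvar_1)[\strongentry] < \commitVC(\tvar_2)[\strongentry]$. Since $\tvar_1 \conflict \tvar_2$, Theorem~\ref{thm:conflictaxiom} (the \conflictaxiom{} axiom) gives $\tvar_1 \rel{\vis} \tvar_2$ or $\tvar_2 \rel{\vis} \tvar_1$. If the latter held, then applying the forward direction to the pair $(\tvar_2,\tvar_1)$---legitimate because $\conflict$ is symmetric---would give $\commitVC(\tvar_2)[\strongentry] < \commitVC(\tvar_1)[\strongentry]$, contradicting the assumption. Hence $\tvar_1 \rel{\vis} \tvar_2$, completing the equivalence.

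I expect the main obstacle to be the bookkeeping that links $\commitVC(\tvar_2)$ to the certification function so that~(\ref{eqn:gcf-commitvc}) is applicable: one must argue, via the legal permutation supplied by Theorem~\ref{thm:tcs-correctness} and Definition~\ref{def:commitvc}, that the commit vector of each committed strong transaction is precisely $\fvec(\txsincertify,\tvar_2)$ for the appropriate $\txsincertify$, so that its $\strongentry$-entry strictly dominates the snapshot's. Everything else reduces to unfolding the definition of $\vis$ and a two-step inequality; pleasingly, no case analysis on the conflict relation or on the certification order is needed, as the required strictness is already encoded in the ``$>$'' of~(\ref{eqn:gcf-commitvc}).
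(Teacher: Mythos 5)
Your proposal is correct and follows essentially the same route as the paper's proof: the forward direction chains $\commitVC(\tvar_1) \le \snapshotVC(\tvar_2)$ (from Definition~\ref{def:vis-tx} and Lemma~\ref{lemma:ts-commit}) with the strict inequality $\snapshotVC(\tvar_2)[\strongentry] < \commitVC(\tvar_2)[\strongentry]$ from~(\ref{eqn:gcf-commitvc}), and the backward direction combines Theorem~\ref{thm:conflictaxiom} with the forward implication to rule out $\tvar_2 \rel{\vis} \tvar_1$. The only cosmetic difference is that you obtain $\timestamp(\startoftx(\tvar_2)) = \snapshotVC(\tvar_2)$ directly from Definitions~\ref{def:ts-op} and~\ref{def:snapshotvc}, where the paper cites Lemma~\ref{lemma:ts-extread}; both are valid.
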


\begin{proof} \label{proof:conflict-strong-ts}
  We first show that
  \begin{align}
    &\tvar_{1} \rel{\vis} \tvar_{2} \implies \label{eq:vis-strong-ts} \\
      &\quad \commitVC(\tvar_{1})[\strongentry] < \commitVC(\tvar_{2})[\strongentry].
      \nonumber
  \end{align}
  Assume that $\tvar_{1} \rel{\vis} \tvar_{2}$.
  By Definition~\ref{def:vis-tx} of $\vis$,
  \[
    \tsfunc(\tvar_{1}) \le \tsfunc(\startoftx(\tvar_{2})).
  \]
  By Lemmas~\ref{lemma:ts-extread} and \ref{lemma:ts-commit},
  \[
    \commitVC(\tvar_{1}) \le \snapshotVC(\tvar_{2}).
  \]
  Therefore,
  \[
    \commitVC(\tvar_{1})[\strongentry] \le \snapshotVC(\tvar_{2})[\strongentry].
  \]
  By (\ref{eqn:gcf-commitvc}),
  \[
    \commitVC(\tvar_{2})[\strongentry] > \snapshotVC(\tvar_{2})[\strongentry].
  \]
  Putting it together yields
  \[
    \commitVC(\tvar_{1})[\strongentry] < \commitVC(\tvar_{2})[\strongentry].
  \]
  Next we show that
  \begin{align*}
    &\tvar_{1} \rel{\vis} \tvar_{2} \impliedby \\
      &\quad \commitVC(\tvar_{1})[\strongentry] < \commitVC(\tvar_{2})[\strongentry].
  \end{align*}
  Assume that
  \begin{align}
    \commitVC(\tvar_{1})[\strongentry] < \commitVC(\tvar_{2})[\strongentry].
    \label{eq:tid1-less-tid2-strong-ts}
  \end{align}
  Since $\tvar_{1} \conflict \tvar_{2}$, by Theorem~\ref{thm:conflictaxiom},
  \[
    \tvar_{1} \rel{\vis} \tvar_{2} \lor \tvar_{2} \rel{\vis} \tvar_{1}.
  \]
  By (\ref{eq:vis-strong-ts}) and (\ref{eq:tid1-less-tid2-strong-ts}),
  \[
    \lnot(\tvar_{2} \rel{\vis} \tvar_{1}).
  \]
  Therefore,
  \[
    \tvar_{1} \rel{\vis} \tvar_{2}.
  \]
\end{proof}

\subsection{Execution Order} \label{ss:eo}

\begin{appdefinition}[Execution Points] \label{def:ep}
  Let $k$ be a key.
  The ``execution point'' $\ep(e, k)$ of event
  $e \in (\extread \cap R_{k}) \cup C_{k}$ is defined as follows:

  \begin{itemize}
    \item If $e \in \extread \cap R_{k}$,
      then $\ep(e, k)$ is at
      line~\code{\ref{alg:unistore-replica}}{\ref{line:readkey-read}};
    \item If $e \in C_{k} \cap C_{\causalentry}$,
      then $\ep(e, k)$ is at
      line~\code{\ref{alg:unistore-replica}}{\ref{line:commit-oplog}}
      for this particular key $k$;
    \item If $e \in C_{k} \cap C_{\strongentry}$
      then $\ep(e, k)$ is at
      line~\code{\ref{alg:unistore-strong-commit}}{\ref{line:deliverupdates-oplog}}
      for delivery of the update of $\txfunc(e)$ on this particular key $k$.
      Note that \deliver{} is asynchronous with the commit event $e$.
  \end{itemize}
\end{appdefinition}

\begin{appdefinition}[Per-key Execution Order] \label{def:perkey-eo}
  Let $k$ be a key.
  Suppose that $\set{e_{1}, e_{2}} \subseteq (\extread \cap R_{k}) \cup C_{k}$.
  Event $e_1$ is executed before event $e_2$, denoted $e_1 \rel{\eok} e_2$,
  if $\ep(e_{1}, k)$ is executed before $\ep(e_{2}, k)$ in real time.
\end{appdefinition}

\begin{applemma} \label{lemma:vis-perkey-eo}
  Let $k \in \Key$ be a key, $\tvar \in \txs_{k}$ be a transaction,
  and $e \in \extread \cap R_{k}$ be an external read event.
  Suppose that $d \triangleq \dc(\tvar) = \dc(\txfunc(e))$.
  Then
  \[
    \tvar \rel{\vis} \txfunc(e) \implies \commitoftx(\tvar) \rel{\eok} e.
  \]
\end{applemma}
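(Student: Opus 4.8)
The plan is to translate the visibility hypothesis $\tvar \rel{\vis} \txfunc(e)$ into a numerical comparison of vectors, and then invoke the replication properties (Lemmas~\ref{lemma:knownvc-causal} and~\ref{lemma:knownvc-strong}) to conclude that $\tvar$'s update to $k$ is already present in the local log when $e$ reads it. First I would unfold Definition~\ref{def:vis-tx}: since $\txfunc(e) \in \txs$, the hypothesis yields $\timestamp(\tvar) \le \timestamp(\startoftx(\txfunc(e)))$. Because $\tvar \in \txs_{k}$ is an update transaction, its commit event lies in $(\updatecommit \cap C_{\causalentry}) \cup C_{\strongentry}$, so Lemma~\ref{lemma:ts-commit} gives $\timestamp(\tvar) = \commitVC(\tvar)$; and since $e$ is an external read, Lemma~\ref{lemma:ts-extread} gives $\timestamp(\startoftx(\txfunc(e))) = \snapshotVC(\txfunc(e)) = \snapvc_{(\readkey, e)}$. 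Combining these yields $\commitVC(\tvar) \le \snapvc_{(\readkey, e)}$.

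Next I would pin down where $e$ is executed. Letting $n \triangleq \partitionofproc(k)$, the read is served by the local replica $p^{n}_{d}$ (line~\code{\ref{alg:unistore-coord}}{\ref{line:doread-from-snapshot}}), and its execution point $\ep(e, k)$ is line~\code{\ref{alg:unistore-replica}}{\ref{line:readkey-read}}. The preceding wait at line~\code{\ref{alg:unistore-replica}}{\ref{line:readkey-wait-util-knownvc}} guarantees that at the instant $\ep(e, k)$ is executed we have $\knownVC^{n}_{d}[d] \ge \snapvc_{(\readkey,e)}[d]$ and $\knownVC^{n}_{d}[\strongentry] \ge \snapvc_{(\readkey,e)}[\strongentry]$.

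Then I would split on the kind of $\tvar$. If $\tvar \in \causaltxs$, then since $\dc(\tvar) = d$ we have $\commitVC(\tvar)[d] \le \snapvc_{(\readkey,e)}[d] \le \knownVC^{n}_{d}[d]$ at $\ep(e,k)$, so Lemma~\ref{lemma:knownvc-causal} (applied with origin $d$ and partition $n$) gives $\log(\tvar)[n] \subseteq \oplog^{n}_{d}$ at that instant. If $\tvar \in \strongtxs$, then $\commitVC(\tvar)[\strongentry] \le \snapvc_{(\readkey,e)}[\strongentry] \le \knownVC^{n}_{d}[\strongentry]$ at $\ep(e,k)$, and Lemma~\ref{lemma:knownvc-strong} delivers the same conclusion. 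In either case, because $k$ is managed by partition $n$ and $\tvar \in \txs_{k}$, the entry $\log(\tvar)[k]$ is one of the tuples in $\log(\tvar)[n]$, and hence is present in $\oplog^{n}_{d}$ at $\ep(e,k)$.

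Finally, I would close the argument using the append-only nature of $\oplog$: the tuple $\log(\tvar)[k]$ is appended exactly at $\ep(\commitoftx(\tvar), k)$ (line~\code{\ref{alg:unistore-replica}}{\ref{line:commit-oplog}} when $\tvar$ is causal, line~\code{\ref{alg:unistore-strong-commit}}{\ref{line:deliverupdates-oplog}} when $\tvar$ is strong), and once appended it is never removed. Since it is already present when line~\code{\ref{alg:unistore-replica}}{\ref{line:readkey-read}} inspects $\oplog[k]$ at $\ep(e,k)$, the append must have occurred strictly earlier in real time; thus $\ep(\commitoftx(\tvar),k)$ precedes $\ep(e,k)$, i.e.\ $\commitoftx(\tvar) \rel{\eok} e$ (noting $\commitoftx(\tvar) \in C_{k}$ and $e \in \extread \cap R_{k}$, so $\eok$ is defined on this pair by Definition~\ref{def:perkey-eo}). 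The main obstacle I anticipate is this last step: rigorously turning the \emph{logical} fact ``$\log(\tvar)[n] \subseteq \oplog^{n}_{d}$ holds at $\ep(e,k)$'' into the \emph{real-time} ordering of execution points, which relies on the monotone append-only invariant for $\oplog[k]$ together with the fact that the read observes the log state at exactly the moment $\ep(e,k)$ fires.
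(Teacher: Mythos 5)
Your proposal is correct and follows essentially the same route as the paper's proof: unfold the visibility definition into $\commitVC(\tvar) \le \snapvc_{(\readkey, e)}$ via Lemmas~\ref{lemma:ts-commit} and \ref{lemma:ts-extread}, use the wait condition at line~\code{\ref{alg:unistore-replica}}{\ref{line:readkey-wait-util-knownvc}}, and case-split on causal versus strong to invoke the replication properties. The only cosmetic differences are that the paper cites Lemma~\ref{lemma:knownvc-local-d} (rather than its generalization Lemma~\ref{lemma:knownvc-causal}) in the causal case, and it asserts the real-time ordering of the two execution points directly from those lemmas, whereas you derive it from log containment plus the append-only invariant of $\oplog$ --- a step the paper leaves implicit.
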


\begin{proof} \label{proof:vis-perkey-eo}
  By Definition~\ref{def:vis-tx} of $\vis$,
  \[
    \timestamp(\tvar) \le \timestamp(\startoftx(e)).
  \]
  Since $e \in \extread$, by Lemma~\ref{lemma:ts-extread},
  \[
    \timestamp(\tvar) \le \snapvc_{(\readkey, e)}.
  \]
  In the following, we distinguish between two cases
  according to whether $\tvar \in \causaltxs$
  or $\tvar \in \strongtxs$.
  Let $m \triangleq \partitionofproc(k)$.
  \begin{itemize}
    \item $\textsc{Case I}$: $\tvar \in \causaltxs$.
      By Lemma~\ref{lemma:ts-commit},
      \[
        \tsfunc(\tvar) = \commitVC(\tvar) \le \snapvc_{(\readkey, e)}.
      \]
      Therefore, after line~\code{\ref{alg:unistore-replica}}{\ref{line:readkey-wait-util-knownvc}}
      for $e$,
      \begin{align}
        (\knownVC^{m}_{d})_{e}[d]
        &\ge \snapvc_{(\readkey, e)}[d] \notag \\
        &\ge \commitVC(\tvar)[d].
        \label{eqn:vis-perkey-eo-knownvc-commitvc-causal}
      \end{align}
      By Lemma~\ref{lemma:knownvc-local-d},
      \commit{} of Algorithm~\ref{alg:unistore-replica}
      for $\ws(\tvar)[m] \ni \langle k, \_ \rangle$
      finishes before $e$ starts at replica $p^{m}_{d}$.
      By Definition~\ref{def:perkey-eo} of $\eok$,
      \[
        \commitoftx(\tvar) \rel{\eok} e.
      \]
    \item $\textsc{Case II}$: $\tvar \in \strongtxs$.
      By Lemma~\ref{lemma:ts-commit},
      \[
        \tsfunc(\tvar) = \commitVC(\tvar) \le \snapvc_{(\readkey, e)}.
      \]
      Therefore, after line~\code{\ref{alg:unistore-replica}}{\ref{line:readkey-wait-util-knownvc}}
      for $e$,
      \begin{align}
        (\knownVC^{m}_{d})_{e}[\strongentry]
        &\ge \snapvc_{(\readkey, e)}[\strongentry] \notag \\
        &\ge \commitVC(\tvar)[\strongentry].
        \label{eqn:vis-perkey-eo-knownvc-commitvc-strong}
      \end{align}
      By Lemma~\ref{lemma:knownvc-strong},
      \deliver{} of Algorithm~\ref{alg:unistore-strong-commit}
      for $\ws(\tvar)[m] \ni \langle k, \_ \rangle$
      finishes before $e$ starts at replica $p^{m}_{d}$.
      By Definition~\ref{def:perkey-eo} of $\eok$,
      \[
        \commitoftx(\tvar) \rel{\eok} e.
      \]
  \end{itemize}
\end{proof}

\subsection{Arbitration Relation}  \label{ss:ar}

\begin{appdefinition}[Arbitration Relation] \label{def:ar}
  We define the arbitration relation $\ar$ on $X$
  as the Lamport clock order between them, i.e.,
  \[
    \ar = \lcorder.
  \]
\end{appdefinition}

\begin{apptheorem} \label{thm:ca}
  \[
    A \models \ca.
  \]
\end{apptheorem}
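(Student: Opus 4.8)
The plan is to unfold the two definitions involved and observe that the inclusion is immediate. Recall that $\ca \triangleq \vis \subseteq \ar$ (Definition~\ref{def:cc}), and that by Definition~\ref{def:ar} arbitration is literally the Lamport clock order, $\ar = \lcorder$. So the goal reduces to showing $\vis \subseteq \lcorder$.

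First I would take an arbitrary pair $\txevents_1, \txevents_2 \in X$ with $\txevents_1 \rel{\vis} \txevents_2$. By Definition~\ref{def:vis-tx}, the defining formula for $\vis$ is a conjunction whose final conjunct is exactly $\txevents_1 \rel{\lcorder} \txevents_2$ (the preceding conjuncts only constrain timestamps). Hence membership in $\vis$ directly yields $\txevents_1 \rel{\lcorder} \txevents_2$, and since $\ar = \lcorder$ we conclude $\txevents_1 \rel{\ar} \txevents_2$. As the pair was arbitrary, $\vis \subseteq \ar$, i.e., $A \models \ca$.

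There is essentially no obstacle here: the statement is true by construction, because the visibility relation was deliberately defined to refine the Lamport clock order. The only thing worth flagging is that this coherence relies on earlier groundwork — namely that $\lcorder$ is a genuine total order (Definition~\ref{def:lco}) and that $\ar$ is well-defined as such — but all of that is already in place, so the proof itself is a one-line appeal to Definitions~\ref{def:vis-tx} and \ref{def:ar}. The substantive content of the consistency argument lives in the companion axioms (\retval, \cv, \conflictaxiom, \ev), not in $\ca$.
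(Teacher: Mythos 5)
Your proposal is correct and matches the paper's own proof exactly: both simply observe that Definition~\ref{def:vis-tx} makes $\txevents_1 \rel{\lcorder} \txevents_2$ a conjunct of $\txevents_1 \rel{\vis} \txevents_2$, so $\vis \subseteq \lcorder$, and Definition~\ref{def:ar} gives $\ar = \lcorder$, whence $A \models \ca$. Your additional remarks about the totality of $\lcorder$ and where the real work lies are accurate but not needed for the argument.
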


\begin{proof} \label{proof:ca}
  By Definition~\ref{def:vis-tx} of $\vis$ and Definition~\ref{def:ar} of $\ar$,
  \begin{align*}
    \vis \subseteq \lcorder = \ar.
  \end{align*}
\end{proof}

\subsection{Return Values} \label{ss:rval}

It is straightforward to show that $\intretval$ holds
for \emph{internal} read events.
\begin{apptheorem} \label{thm:intretval}
  \[
    A \models \intretval.
  \]
\end{apptheorem}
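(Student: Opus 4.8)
The plan is to trace directly how the coordinator answers a read that is preceded, within the same transaction, by an update to the same key; no consistency machinery (visibility, arbitration, timestamps) is needed, since internal reads are served entirely from local transaction state. Recall that $\intretval$ (Definition~\ref{def:retval}) concerns only internal reads: an event $e \in \intread \cap R_k$ is one for which $\po^{-1}(e) \cap U_k \neq \emptyset$. I would fix such an $e$, let $\tvar \triangleq \txfunc(e)$ with coordinator $p^{m}_{d}$ and partition $l \triangleq \partitionofproc(k)$, and set $u \triangleq \max_{\po}(\po^{-1}(e) \cap U_k)$, the $\po$-last update to $k$ preceding $e$. The goal reduces to showing $\rval(e) = \uval(u)$.

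First I would observe that the read is answered by \doread{} (Algorithm~\ref{alg:unistore-coord}), which begins by testing whether the per-transaction write buffer $\wbuff[\tidvar][l][k]$ is non-empty (\code{\ref{alg:unistore-coord}}{\ref{line:doread-from-buffer}}). The crux is to show this test succeeds and that the buffered value is exactly $\uval(u)$. The write buffer is written only by \doupdate{} (\code{\ref{alg:unistore-coord}}{\ref{line:doupdate-wbuff}}), which, for an LWW register, simply overwrites $\wbuff[\tidvar][l][k]$ with the new value, and it is never reset during the lifetime of $\tvar$ (it is consumed only at commit). By well-formedness (Assumption~\ref{assumption:client-well-formed} together with Definition~\ref{def:tx}) the operations of $\tvar$ execute in program order, so every update in $\po^{-1}(e) \cap U_k$ runs its \doupdate{} before $e$ runs its \doread{}; hence the buffer is non-empty at $e$. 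Consequently the first branch of \doread{} fires and returns $\langle \wbuff[\tidvar][l][k], \bot \rangle$ (\code{\ref{alg:unistore-coord}}{\ref{line:doread-return-from-buffer}}), and the client's \read{} propagates this value unchanged, so $\rval(e) = \wbuff[\tidvar][l][k]$ at the moment of $e$. Since $e$ was arbitrary, establishing $\wbuff[\tidvar][l][k] = \uval(u)$ then yields $A \models \intretval$.

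The only delicate point — and the reason the paper calls this \emph{straightforward} rather than trivial — is the bookkeeping that the buffer is never overwritten between $u$ and $e$ and that the overwrite semantics of \doupdate{} genuinely leaves the $\po$-maximal update's value in place. I would discharge this by invoking the maximality of $u$: no update to $k$ lies strictly between $u$ and $e$ in $\po$, so after $u$'s \doupdate{} no further write to $\wbuff[\tidvar][l][k]$ occurs before $e$ executes. Combined with the program-order execution just noted, the value observed at $e$ is precisely the one written by $u$, i.e. $\uval(u) = \uval(\max_{\po}(\po^{-1}(e) \cap U_k))$, which is exactly what $\intretval$ demands for $e$.
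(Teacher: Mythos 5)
Your proof is correct and takes essentially the same route as the paper: the paper's own argument is just a condensed version of yours, observing that an internal read is answered by the write-buffer branch of \doread{} (line~\code{\ref{alg:unistore-coord}}{\ref{line:doread-return-from-buffer}}) and therefore returns the value of the last preceding update to $k$ in the same transaction. Your additional bookkeeping (that \doupdate{} overwrites $\wbuff[\tidvar][l][k]$, that the buffer is never cleared before commit, and that program-order execution plus maximality of $u$ pins down the buffered value) is exactly the detail the paper leaves implicit.
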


\begin{proof} \label{proof:intretval}
  Let $e \in \intread \cap R_{k}$ be an internal read event.
  The transaction $\txfunc(e)$ contains update events on $k$.
  By line~\code{\ref{alg:unistore-coord}}{\ref{line:doread-return-from-buffer}},
  $e$ reads from the last update event on $k$
  preceding $e$ in $\txfunc(e)$.
\end{proof}

Now let $e$ be an \emph{external} read event.
For notational convenience,
we define $V_{e}$ to be the set of update transactions on $k$
that are visible to $\txfunc(e)$,
and $S_{e}$ the set of update transactions on $k$ that are safe to read
at line~\code{\ref{alg:unistore-replica}}{\ref{line:readkey-read}}.
By Assumption~\ref{assumption:complete-execution},
\ref{tcs-requirement:certify-before-deliver}, and
\ref{tcs-requirement:abort-cannot-deliver},
all transactions in $S_{e}$ are committed.
Formally,

\begin{appdefinition}[Visibility Set]
  \label{def:visible-set-tx}
  Let $e \in \extread \cap R_{k}$ be an external read event on key $k$.
  \[
    V_{e} \triangleq \vis^{-1}(\txfunc(e)) \cap \txs_{k}.
  \]
\end{appdefinition}

\begin{appdefinition}[Safe Set]
  \label{def:safe-set-tx}
  Let $e \in \extread \cap R_{k}$ be an external read event on key $k$.
  Suppose that $e$ is issued to replica $p^{m}_{d}$ in data center $d$.
  \begin{align*}
    S_{e} \triangleq \set{\tvar \in \txs_{k}:
      &\; \timestamp(\tvar) \le \snapvc_{(\readkey, e)}\; \land \\
      &\; \log[\tvar][k] \in (\oplog^{m}_{d})_{e}[k]}.
  \end{align*}
\end{appdefinition}

\begin{applemma} \label{lemma:visible-safe-tx}
  Let $e \in \extread \cap R_{k}$ be an external read event on key $k$.
  Suppose that $e$ is issued to replica $p^{m}_{d}$ in data center $d$.
  When $e$ returns at $p^{m}_{d}$
  (line~\code{\ref{alg:unistore-replica}}{\ref{line:readkey-read}}),
  we have
  \[
    V_{e} \subseteq S_{e}.
  \]
\end{applemma}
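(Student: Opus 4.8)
The plan is to fix an arbitrary $\tvar \in V_{e}$, so that $\tvar \in \txs_{k}$ and $\tvar \rel{\vis} \txfunc(e)$, and to verify the two conjuncts in the definition of $S_{e}$ (Definition~\ref{def:safe-set-tx}): the timestamp bound $\timestamp(\tvar) \le \snapvc_{(\readkey, e)}$ and the log-membership $\log(\tvar)[k] \in (\oplog^{m}_{d})_{e}[k]$, where $m \triangleq \partitionofproc(k)$. For the timestamp bound I would use $\tvar \rel{\vis} \txfunc(e)$ together with Definition~\ref{def:vis-tx} of $\vis$: since $\txfunc(e) \in \txs$, this yields $\timestamp(\tvar) \le \timestamp(\startoftx(\txfunc(e)))$. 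As $\startoftx(e) = \startoftx(\txfunc(e))$, Lemma~\ref{lemma:ts-extread} rewrites the right-hand side as $\snapvc_{(\readkey, e)}$, giving the first conjunct. Because $\tvar$ updates $k$, its commit event lies in $(\updatecommit \cap C_{\causalentry}) \cup C_{\strongentry}$, so Lemma~\ref{lemma:ts-commit} lets me replace $\timestamp(\tvar)$ by $\commitVC(\tvar)$, and hence $\commitVC(\tvar) \le \snapvc_{(\readkey, e)}$ componentwise; this is the inequality I will feed into the replication lemmas.

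For log-membership I would split on the type and origin of $\tvar$, noting that in every case $\log(\tvar)[k] \in \log(\tvar)[m] \subseteq (\oplog^{m}_{d})_{e}$ once the inclusion $\log(\tvar)[m] \subseteq (\oplog^{m}_{d})_{e}$ is established, since $m = \partitionofproc(k)$. If $\tvar \in \strongtxs$, the wait at line~\code{\ref{alg:unistore-replica}}{\ref{line:readkey-wait-util-knownvc}} guarantees $(\knownVC^{m}_{d})_{e}[\strongentry] \ge \snapvc_{(\readkey, e)}[\strongentry] \ge \commitVC(\tvar)[\strongentry]$, and Lemma~\ref{lemma:knownvc-strong} gives the inclusion. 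If $\tvar \in \causaltxs$ originates at $d$, the same wait gives $(\knownVC^{m}_{d})_{e}[d] \ge \snapvc_{(\readkey, e)}[d] \ge \commitVC(\tvar)[d]$, and Lemma~\ref{lemma:knownvc-causal} concludes. Both of these cases are immediate from the explicit wait condition.

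The hard part is the remaining case, where $\tvar \in \causaltxs$ originates at a remote data center $i \neq d$: line~\code{\ref{alg:unistore-replica}}{\ref{line:readkey-wait-util-knownvc}} advances only the $d$- and $\strongentry$-entries of $\knownVC^{m}_{d}$, so it says nothing directly about $\knownVC^{m}_{d}[i]$. The key observation is that the snapshot is read from a uniform past: the coordinator $p^{m'}_{d}$ (with $m' \triangleq \coord(\txfunc(e))$) sets $\snapvc_{(\readkey, e)}[i] = \uniformVC^{m'}_{d}[i]$ at start time (line~\code{\ref{alg:unistore-coord}}{\ref{line:start-snapvc}}). I would invoke Lemma~\ref{lemma:uniformvc-knownvc} at that instant to get $\uniformVC^{m'}_{d}[i] \le \knownVC^{m}_{d}[i]$, and then Lemma~\ref{lemma:knownvc-nondecreasing} to carry this forward to the read time, obtaining $\commitVC(\tvar)[i] \le \snapvc_{(\readkey, e)}[i] \le (\knownVC^{m}_{d})_{e}[i]$; Lemma~\ref{lemma:knownvc-causal} then finishes as in the local case. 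The only bookkeeping needed is that the \start{} event of $\txfunc(e)$ precedes $e$ in real time so that the non-decreasing step is legitimate, which follows from well-formedness of the client session (Assumption~\ref{assumption:client-well-formed}). This uniformity detour is precisely what makes reading a slightly stale snapshot sound, and it is the step I expect to require the most care.
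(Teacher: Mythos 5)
Your proposal is correct, and its overall decomposition matches the paper's: verify the two conjuncts of $S_{e}$, obtain the timestamp bound from Definition~\ref{def:vis-tx} together with Lemmas~\ref{lemma:ts-extread} and \ref{lemma:ts-commit}, and case-split on $\tvar$ for log membership. Your strong and local-causal cases coincide with the paper's (modulo a regrouping: the paper packages the local case as Lemma~\ref{lemma:vis-perkey-eo}, whose proof is exactly your wait-condition argument via Lemmas~\ref{lemma:knownvc-local-d} and \ref{lemma:knownvc-strong}). Where you genuinely diverge is the remote causal case. The paper exploits the fact that the \getversion{} handler folds the incoming snapshot into the reading replica's \emph{own} uniform vector (line~\code{\ref{alg:unistore-replica}}{\ref{line:readkey-uniformvc}}), giving $(\uniformVC^{m}_{d})_{e}[i] \ge \snapvc_{(\readkey,e)}[i]$ at the very replica and instant where the log is read, and then invokes Lemma~\ref{lemma:replication-uniformvc}. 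You instead trace the snapshot back to its construction: for $i \neq d$, $\snapvc_{(\readkey,e)}[i]$ equals $\uniformVC^{m'}_{d}[i]$ at the coordinator $p^{m'}_{d}$ at start time (this equality is exact because line~\code{\ref{alg:unistore-coord}}{\ref{line:start-uniformvc}} first absorbs the client's $\pastVC[i]$ into $\uniformVC$ before the copy at line~\code{\ref{alg:unistore-coord}}{\ref{line:start-snapvc}}), transfer this bound to $\knownVC^{m}_{d}[i]$ via the data-center-wide invariant of Lemma~\ref{lemma:uniformvc-knownvc}, push it forward to the read time with Lemma~\ref{lemma:knownvc-nondecreasing}, and finish with Lemma~\ref{lemma:knownvc-causal}. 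Both routes bottom out in the same two ingredients, Lemma~\ref{lemma:uniformvc-knownvc} and the replication invariant of Lemma~\ref{lemma:replication-knownvc}, so neither is more general; the paper's is more local in that the needed inequality holds at the reading replica at read time with no cross-replica, cross-time bookkeeping, while yours requires the monotonicity and real-time-ordering step but never uses the uniformVC-refresh lines of the \getversion{} handler, and it surfaces more explicitly the design insight that transaction snapshots only ever include uniform remote transactions.
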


\begin{proof}  \label{proof:visible-safe-tx}
  For each $\tvar \in V_{e}$,
  we need to show that $\tvar \in S_{e}$.
  That is,
  \begin{gather}
    \timestamp(\tvar) \le \snapvc_{(\readkey, e)}
    \label{eqn:visible-safe-tx-snapshotvc}
  \end{gather}
  and
  \begin{gather}
    \log[\tvar][k] \in (\oplog^{m}_{d})_{e}[k].
    \label{eqn:visible-safe-tx-oplog}
  \end{gather}

  We first show that (\ref{eqn:visible-safe-tx-snapshotvc}) holds.
  Since $\tvar \in V_{e}$,
  \[
    \tvar \rel{\vis} \txfunc(e).
  \]
  By Definition~\ref{def:vis-tx} of $\vis$,
  \[
    \timestamp(\tvar) \le \timestamp(\startoftx(e)).
  \]
  By Lemma~\ref{lemma:ts-extread},
  \[
    \timestamp(\tvar) \le \snapvc_{(\readkey, e)}.
  \]

  To show that (\ref{eqn:visible-safe-tx-oplog}) holds,
  we perform a case analysis according to
  whether $\tvar$ is a local transaction in data center $d$
  or a remote one in data center $i \neq d$.

  \begin{itemize}
    \item $\textsc{Case I}$: $\tvar$ is a local transaction in data center $d$.
      Since $\tvar \rel{\vis} \txfunc(e)$,
      by Lemma~\ref{lemma:vis-perkey-eo},
      \[
        \commitoftx(\tvar) \rel{\eok} e.
      \]
      Therefore,
      \[
        \log[\tvar][k] \in (\oplog^{m}_{d})_{e}[k].
      \]
    \item $\textsc{Case II}$: $\tvar$ is a remote transaction
      in data center $i \neq d$.
      We distinguish between two cases
      according to whether $\tvar \in \causaltxs$ or $\tvar \in \strongtxs$.
      \begin{itemize}
        \item $\textsc{Case I}$: $\tvar \in \causaltxs$.
          Since $i \neq d$,
          by line~\code{\ref{alg:unistore-replica}}{\ref{line:readkey-uniformvc}},
          \[
            \snapvc_{(\readkey, e)}[i] \le (\uniformVC^{m}_{d})_{e}[i].
          \]
          By (\ref{eqn:visible-safe-tx-snapshotvc}),
          \[
            \timestamp(\tvar)[i] \le (\uniformVC^{m}_{d})_{e}[i].
          \]
          By Lemma~\ref{lemma:replication-uniformvc},
          \[
            \log[\tvar][k] \in (\oplog^{m}_{d})_{e}[k].
          \]
        \item $\textsc{Case II}$: $\tvar \in \strongtxs$.
          By line~\code{\ref{alg:unistore-replica}}{\ref{line:readkey-wait-util-knownvc}},
          \begin{align*}
            &\snapvc_{(\readkey, e)}[\strongentry] \\
              &\qquad \le (\knownVC^{m}_{d})_{e}[\strongentry].
          \end{align*}
          By (\ref{eqn:visible-safe-tx-snapshotvc}),
          \[
            \timestamp(\tvar)[\strongentry] \le (\knownVC^{m}_{d})_{e}[\strongentry].
          \]
          By Lemma~\ref{lemma:knownvc-strong},
          \[
            \log[\tvar][k] \in (\oplog^{m}_{d})_{e}[k].
          \]
      \end{itemize}
  \end{itemize}
\end{proof}

\begin{apptheorem} \label{thm:extretval}
  \[
    A \models \extretval.
  \]
\end{apptheorem}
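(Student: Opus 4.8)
The plan is to identify the transaction from which the implementation actually serves $e$ and to show that it coincides with $\max_{\ar}(\vis^{-1}(\txfunc(e)) \cap \txs_{k})$. Fix $e \in \extread \cap R_{k} \cap V_{H}$, let $d \triangleq \dc(\txfunc(e))$ and $m \triangleq \partitionofproc(k)$, and let $T_{\mathrm{read}}$ denote the transaction whose update to $k$ is returned by $\snapshotproc$ at line~\code{\ref{alg:unistore-replica}}{\ref{line:readkey-read}}, so that $\rval(e) = \uval(\ud(T_{\mathrm{read}}, k))$. First I would characterize $T_{\mathrm{read}}$ precisely: by the semantics of $\snapshotproc$ it is the transaction with the greatest Lamport clock among those $\tvar \in \txs_{k}$ whose log entry $\log(\tvar)[k]$ is present in $(\oplog^{m}_{d})_{e}[k]$ and whose commit vector satisfies $\commitVC(\tvar) \le \snapvc_{(\readkey, e)}$. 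Since every $\tvar \in \txs_{k}$ is an update transaction, Lemma~\ref{lemma:ts-commit} gives $\timestamp(\tvar) = \commitVC(\tvar)$, so this set is exactly $S_{e}$ of Definition~\ref{def:safe-set-tx}; and since $\ar = \lcorder$ by Definition~\ref{def:ar}, we obtain $T_{\mathrm{read}} = \max_{\ar}(S_{e})$.

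The crux of the argument is then to show $T_{\mathrm{read}} \in V_{e}$, i.e., $T_{\mathrm{read}} \rel{\vis} \txfunc(e)$. By Definition~\ref{def:vis-tx} this requires two facts. For the timestamp component, $T_{\mathrm{read}} \in S_{e}$ gives $\timestamp(T_{\mathrm{read}}) = \commitVC(T_{\mathrm{read}}) \le \snapvc_{(\readkey, e)}$, and Lemma~\ref{lemma:ts-extread} identifies $\snapvc_{(\readkey, e)}$ with $\timestamp(\startoftx(e))$, yielding $\timestamp(T_{\mathrm{read}}) \le \timestamp(\startoftx(\txfunc(e)))$. For the Lamport-order component, $e$ reads from $T_{\mathrm{read}}$, so Lemma~\ref{lemma:rf-lc} (which in turn rests on line~\code{\ref{alg:unistore-client}}{\ref{line:read-lc}} and Lemma~\ref{lemma:lc-extread-commit}) gives $T_{\mathrm{read}} \rel{\lcorder} \txfunc(e)$. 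Together these establish $T_{\mathrm{read}} \rel{\vis} \txfunc(e)$, hence $T_{\mathrm{read}} \in V_{e}$.

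Finally I would combine this with the already-proved inclusion $V_{e} \subseteq S_{e}$ (Lemma~\ref{lemma:visible-safe-tx}). Because $\ar$ is a total order, $T_{\mathrm{read}} = \max_{\ar}(S_{e})$ is an upper bound for every element of $S_{e}$ and hence of $V_{e}$; since moreover $T_{\mathrm{read}} \in V_{e}$, it is the $\ar$-maximum of $V_{e}$ as well, so $T_{\mathrm{read}} = \max_{\ar}(V_{e}) = \max_{\ar}(\vis^{-1}(\txfunc(e)) \cap \txs_{k})$. Substituting back gives $\rval(e) = \uval(\ud(\max_{\ar}(V_{e}), k))$, which is exactly $\extretval$. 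I expect the main obstacle to be the \emph{direction} of the two inclusions: Lemma~\ref{lemma:visible-safe-tx} guarantees that every visible update is safe to read, but on its own it does not preclude the implementation from reading a transaction that is safe yet \emph{not} visible; ruling this out is precisely what $T_{\mathrm{read}} \in V_{e}$ accomplishes, and it is the step that ties together the timestamp machinery ($\uniformVC$/$\knownVC$ via Lemma~\ref{lemma:ts-extread}) and the Lamport-clock machinery (via Lemma~\ref{lemma:rf-lc}). A minor point to dispatch is the well-definedness of $\max_{\ar}(V_{e})$, which holds because the initial transaction $\tvar_{0}$ updates every key and is visible to $\txfunc(e)$, so $V_{e} \neq \emptyset$.
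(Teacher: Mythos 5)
Your proposal is correct and follows essentially the same route as the paper's proof: identify the transaction actually read as $\max_{\lcorder}(S_{e})$, show it lies in $V_{e}$ via the timestamp bound and Lemma~\ref{lemma:rf-lc}, and then use $V_{e} \subseteq S_{e}$ (Lemma~\ref{lemma:visible-safe-tx}) to conclude it is also $\max_{\ar}(V_{e})$. The only cosmetic differences are that you inline the content of Lemma~\ref{lemma:rf-ts} (membership in $S_{e}$ plus Lemma~\ref{lemma:ts-extread}) rather than citing it, and you add the harmless remark that $V_{e} \neq \emptyset$ thanks to the initial transaction.
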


\begin{proof} \label{proof:extretval}
  Let $e \in \extread \cap R_{k}$
  be an external read event on key $k$.
  Suppose that $e$ reads from transaction $\tvar$ in $S_{e}$.
  Since all transactions in $S_{e}$ are committed,
  $\tvar$ is committed.
  By Lemma~\ref{lemma:rf-ts},
  \[
    \timestamp(\tvar) \le \timestamp(\startoftx(e)).
  \]
  By Lemma~\ref{lemma:rf-lc},
  \[
    \tvar \rel{\lcorder} \txfunc(e).
  \]
  By Definition~\ref{def:vis-tx} of $\vis$,
  \[
    \tvar \rel{\vis} \txfunc(e).
  \]
  By Definition~\ref{def:visible-set-tx} of $V_{e}$,
  \[
    \tvar \in V_{e}.
  \]

  Both $V_{e}$ and $S_{e}$
  are totally ordered by $\lcorder$.
  Since $\tvar$ is the latest one in $S_{e}$
  and $V_{e} \subseteq S_{e}$
  (Theorem~\ref{lemma:visible-safe-tx}),
  $\tvar$ is also the latest one in $V_{e}$.
  Thus, $e$ reads from $\tvar$ in $V_{e}$.
  That is, $e$ reads from the update event
  $\ud(\tvar, k)$ of $V_{e}$.
\end{proof}

\begin{apptheorem} \label{thm:retval}
  \[
    A \models \retval.
  \]
\end{apptheorem}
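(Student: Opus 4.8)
The plan is to observe that this final theorem is a direct corollary of the two preceding results, since $\retval$ was \emph{defined} (Definition~\ref{def:retval}) as the conjunction
\[
    \retval \triangleq \intretval \land \extretval.
\]
Thus I would simply invoke Theorem~\ref{thm:intretval}, which gives $A \models \intretval$, together with Theorem~\ref{thm:extretval}, which gives $A \models \extretval$, and conclude that $A$ satisfies their conjunction. No new argument is required at this step: the entire substantive content has already been discharged.

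First I would restate the definitional decomposition so that the reader sees why the combination is immediate. Then I would cite the two component theorems in the order they were established. The only thing to be careful about is that both $\intretval$ and $\extretval$ quantify over the same abstract execution $A$ and the same set of transactional events $V_H$, so the conjunction is well-formed and there is no hidden mismatch of quantifiers or of the visibility/arbitration relations being referenced.

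The genuinely hard part of establishing $\retval$ lies entirely within the proof of Theorem~\ref{thm:extretval}, and in particular in Lemma~\ref{lemma:visible-safe-tx}, which shows $V_e \subseteq S_e$ --- i.e.\ that every update transaction on $k$ visible to $\txfunc(e)$ is in fact \emph{present in the log} $\oplog^m_d$ by the time the external read executes at line~\code{\ref{alg:unistore-replica}}{\ref{line:readkey-read}}. That lemma is where the uniformity machinery (Lemma~\ref{lemma:replication-uniformvc} for remote causal transactions, Lemma~\ref{lemma:knownvc-strong} for strong transactions, and Lemma~\ref{lemma:vis-perkey-eo} for local transactions) does all of its work, reconciling the declarative visibility order $\vis$ with the operational execution order $\eok$. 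By contrast, $\intretval$ (Theorem~\ref{thm:intretval}) is immediate from reading out of the write buffer at line~\code{\ref{alg:unistore-coord}}{\ref{line:doread-return-from-buffer}}. The present theorem merely packages these, so I anticipate no obstacle here; the proof is one line.

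\begin{proof} \label{proof:retval}
  By Definition~\ref{def:retval},
  \[
    \retval \triangleq \intretval \land \extretval.
  \]
  By Theorem~\ref{thm:intretval}, $A \models \intretval$,
  and by Theorem~\ref{thm:extretval}, $A \models \extretval$.
  Therefore, $A \models \retval$.
\end{proof}
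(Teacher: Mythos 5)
Your proof is correct and matches the paper exactly: the paper's proof of Theorem~\ref{thm:retval} is precisely ``By Theorems~\ref{thm:intretval} and \ref{thm:extretval}.'' Your additional remarks about where the real work lies (Lemma~\ref{lemma:visible-safe-tx} inside the proof of \extretval) are accurate but not needed for this step.
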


\begin{proof} \label{proof:retval}
 By Theorems~\ref{thm:intretval} and \ref{thm:extretval}.
\end{proof}

\subsection{Uniformity} \label{ss:uniformity}

\subsubsection{Uniformity of Causal Transactions Originating at Correct Data Centers}
\label{sss:uniformity-correct-dc}

\begin{applemma} \label{lemma:knownvc-d-no-bound}
  For any replica $p^{m}_{d}$ in any correct data center $d \in \C$,
  $\knownVC^{m}_{d}[d]$ grows without bound.
\end{applemma}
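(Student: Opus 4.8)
The plan is to argue by contradiction, exploiting that a correct data center's clock is unbounded while the only way $\knownVC^m_d[d]$ can fail to grow is if some prepared causal transaction with a small prepare time lingers in $\preparedcausal^m_d$ forever. First I would record two consequences of the assumptions for a correct $d \in \C$: (i) by Assumption~\ref{assumption:clock} the clock $\clockVar^m_d$ is strictly increasing and, over the infinite execution, grows without bound; and (ii) by Assumption~\ref{assumption:fairness} the periodically-enabled procedure \propagate{} executes infinitely often at $p^m_d$. Combined with Lemma~\ref{lemma:knownvc-d-nondecreasing} (monotonicity of $\knownVC^m_d[d]$), it then suffices to show that $\knownVC^m_d[d]$ exceeds any fixed bound $B$ at some point.

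The key intermediate claim I would establish is that every transaction placed into $\preparedcausal^m_d$ is eventually removed from it. A transaction $t$ enters $\preparedcausal^m_d$ only through the \prepare{} handler, which is triggered by the \commitcausal{} coordinator (line~\code{\ref{alg:unistore-coord}}{\ref{line:commitcausal-call-prepare}}); since $t$ updates partition $m$, we have $m \in L$. By Assumption~\ref{assumption:complete-execution} the \commitcausal{} of $t$ completes, so it reaches line~\code{\ref{alg:unistore-coord}}{\ref{line:commitcausal-call-commit}} and sends a \commit{} message to $p^m_d$. As $d$ is correct, Assumption~\ref{assumption:message} guarantees this message is delivered; the \commit{} handler waits only until $\clockVar^m_d \ge \commitvc[d]$ (line~\code{\ref{alg:unistore-replica}}{\ref{line:commit-wait-clock}}), which terminates because the clock is unbounded, and by fairness the handler runs and deletes $t$ at line~\code{\ref{alg:unistore-replica}}{\ref{line:commit-preparedcausal}}.

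With this claim, I would finish the contradiction. Suppose $\knownVC^m_d[d] \le B$ at all times. Pick a time $\realtime_c$ after which $\clockVar^m_d > B+1$; since prepare times are read from the clock (line~\code{\ref{alg:unistore-replica}}{\ref{line:preparecausal-ts}}), every transaction prepared after $\realtime_c$ has prepare time $> B+1$. Only finitely many transactions are prepared before $\realtime_c$, and by the intermediate claim each is eventually removed, so from some time on $\preparedcausal^m_d$ contains no transaction with prepare time $\le B+1$. At the next execution of \propagate{}: if $\preparedcausal^m_d = \emptyset$ it sets $\knownVC^m_d[d] = \clockVar^m_d > B$ (line~\code{\ref{alg:unistore-replication}}{\ref{line:propagate-knownvc-clock}}); otherwise all prepare times exceed $B+1$, so $\knownVC^m_d[d] = \min\{\cdots\} - 1 > B$ (line~\code{\ref{alg:unistore-replication}}{\ref{line:propagate-knownvc-ts}}). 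Either way $\knownVC^m_d[d] > B$, a contradiction.

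I expect the main obstacle to be the $\preparedcausal^m_d \neq \emptyset$ branch: ruling out that a never-finishing prepared transaction pins $\knownVC^m_d[d]$ below $B$. This is exactly what the ``every transaction is eventually removed'' claim handles, and its own delicate point is discharging the clock-wait in the \commit{} handler — which is why the unboundedness of the correct data center's clock (Assumption~\ref{assumption:clock}) must be combined with the completion (Assumption~\ref{assumption:complete-execution}) and reliable-delivery (Assumption~\ref{assumption:message}) assumptions. A secondary subtlety worth checking is the ``finitely many prepares before $\realtime_c$'' step, which relies on the standard premise that only finitely many steps occur in any bounded prefix of the execution.
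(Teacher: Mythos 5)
Your proof is correct and takes essentially the same route as the paper's: both rest on fairness making \propagate{} run infinitely often, on prepared causal transactions being eventually removed from $\preparedcausal^{m}_{d}$, and on the unbounded clock, so that $\knownVC^{m}_{d}[d]$ is repeatedly set either to the clock or to a minimum prepare time that itself grows without bound. If anything, your version is more rigorous than the paper's at the key step: where the paper justifies "prepared transactions are eventually committed and removed" with a bare appeal to Assumption~\ref{assumption:fairness}, you correctly discharge it via Assumption~\ref{assumption:complete-execution} (the coordinator reaches the \commit{} send), Assumption~\ref{assumption:message} (the message arrives at the correct data center), and termination of the clock-wait in the \commit{} handler.
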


\begin{proof} \label{proof:knownvc-d-no-bound}
  Since data center $d$ is correct, by Assumption~\ref{assumption:fairness},
  \propagate{} of Algorithm~\ref{alg:unistore-replication}
  will be executed infinitely often.
  \begin{itemize}
    \item $\textsc{Case I}$:
      Line~\code{\ref{alg:unistore-replication}}{\ref{line:propagate-knownvc-clock}}
      is executed infinitely often.
      By Assumption~\ref{assumption:clock},
      $\knownVC^{m}_{d}[d]$ grows without bound.
    \item $\textsc{Case II}$:
      Line~\code{\ref{alg:unistore-replication}}{\ref{line:propagate-knownvc-ts}}
      is executed infinitely often.
      That is, it is infinitely often that
      \[
        \preparedcausal^{m}_{d} \neq \emptyset.
      \]
      By Assumption~\ref{assumption:fairness},
      causal transactions in $\preparedcausal^{m}_{d}$
      will eventually be committed and removed from $\preparedcausal^{m}_{d}$
      (line~\code{\ref{alg:unistore-replica}}{\ref{line:commit-preparedcausal}}).
      Thus, it is infinitely often that new causal transactions
      are prepared and added into $\preparedcausal^{m}_{d}$
      (line~\code{\ref{alg:unistore-replica}}{\ref{line:preparecausal-preparedcausal}})
      with larger and larger prepare timestamps
      (line~\code{\ref{alg:unistore-replica}}{\ref{line:preparecausal-ts}}).
      Therefore,
      \[
        \min\set{\tsvar \mid \langle \_, \_, \tsvar \rangle \in \preparedcausal^{m}_{d}}
      \]
      and
      \begin{align*}
        &\knownVC^{m}_{d}[d] \\
          &\quad = \min\set{\tsvar \mid \langle \_, \_, \tsvar \rangle \in \preparedcausal^{m}_{d}} - 1
      \end{align*}
      grow without bound.
    \end{itemize}
\end{proof}

\begin{applemma} \label{lemma:knownvc-x}
  Let $p^{m}_{d}$ be a replica in a correct data center $d \in \C$.
  If for some $j \in \D$ and some value $x \in \mathbb{N}$
  \[
    \knownVC^{m}_{d}[j] \ge x,
  \]
  then eventually
  \[
    \forall \cdrange \in \C.\; \knownVC^{m}_{\cdrange}[j] \ge x.
  \]
\end{applemma}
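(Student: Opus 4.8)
The plan is to split on whether the origin data center $j$ is correct, i.e.\ whether $j \in \C$. In both cases the engine of the proof is the same: a correct replica runs a periodic handler infinitely often (Assumption~\ref{assumption:fairness}), each run sends either a \replicate{} or a \heartbeat{} message over a reliable FIFO channel (Assumption~\ref{assumption:message}), and the relevant ordering lemmas guarantee that the $j$-th entry carried by these messages is monotonically non-decreasing, so any correct sibling eventually ``catches up'' to the value $x$. The hypothesis $\knownVC^m_d[j] \ge x$ at the correct replica $p^m_d$ is the seed from which this value must spread, and I would use the monotonicity of $\knownVC$ (Lemmas~\ref{lemma:knownvc-i-nondecreasing} and \ref{lemma:knownvc-nondecreasing}) to turn ``reaches $x$ once'' into ``stays $\ge x$''.

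First I would handle $j \in \C$. Here $j$ never stops replicating: since $j$ is correct, \propagate{} runs infinitely often at $p^m_j$, and by Lemma~\ref{lemma:knownvc-d-no-bound} the value $\knownVC^m_j[j]$ grows without bound, so from some point on $\knownVC^m_j[j] \ge x$. Each subsequent \propagate{} at $p^m_j$ sends to every sibling either the pending local transactions (carrying their local timestamps $\commitvc[j]$) or, if none are pending, a heartbeat carrying $\knownVC^m_j[j]$. Lemmas~\ref{lemma:replication-order} and \ref{lemma:heartbeat-replication-order} show that these messages leave $p^m_j$ in increasing order of the transmitted $j$-entry, so by FIFO delivery every correct $p^m_{\cdrange}$ processes them in order and, together with monotonicity of $\knownVC^m_{\cdrange}[j]$ (Lemma~\ref{lemma:knownvc-i-nondecreasing}), eventually sets $\knownVC^m_{\cdrange}[j] \ge x$. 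When $j = d$ this is the same argument applied to the origin itself, so no separate subcase is needed.

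The case $j \notin \C$ is the hard part and relies on the transaction-forwarding mechanism. Since $j$ has crashed it can no longer replicate, so the value $x$ must be carried to the remaining correct data centers by a correct \emph{holder} of $j$'s transactions, namely $p^m_d$. For any correct $\cdrange$ not yet satisfying $\knownVC^m_{\cdrange}[j] \ge x$, the argument I would give is: the failure-detection/forwarding module eventually suspects $j$ and triggers \forward{}$(\cdrange, j)$ at a correct replica holding the missing $j$-transactions (here $p^m_d$, which has them because $\knownVC^m_d[j] \ge x$ and, since $\cdrange$ still lacks them, they have not been garbage-collected from $\committedcausal[j]$). By Assumption~\ref{assumption:fairness} this \forward{} runs, and by Lemmas~\ref{lemma:forwarding-order} and \ref{lemma:heartbeat-forwarding-order} the forwarded \replicate{}/\heartbeat{} messages again carry a monotone $j$-entry, so FIFO delivery plus monotonicity give $\knownVC^m_{\cdrange}[j] \ge x$.

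I expect the main obstacle to be exactly this faulty-origin case: one must extract from the ``separate module'' a precise liveness guarantee -- that when $j$ fails, every correct data center lacking a $j$-transaction held by some correct data center is eventually the target of a \forward{} from such a holder -- and must argue that the relevant transactions remain available for forwarding, i.e.\ are not deleted from $\committedcausal[j]$ before reaching all correct data centers (which follows because deletion only occurs once a transaction is replicated everywhere). The correct-origin case, by contrast, is a routine reliable-broadcast catch-up argument built directly on the ordering and monotonicity lemmas already in hand.
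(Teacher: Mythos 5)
Your proposal is correct in substance and rests on the same machinery (the ordering Lemmas~\ref{lemma:replication-order}--\ref{lemma:heartbeat-forwarding-order}, FIFO channels, monotonicity of $\knownVC$), but it is organized differently from the paper's proof. The paper makes \emph{no} case split on whether $j \in \C$: it argues entirely from the holder $p^{m}_{d}$. Since $d$ is correct, Assumption~\ref{assumption:fairness} guarantees that $p^{m}_{d}$ keeps executing \propagate{} (covering $j = d$) and \forward{}$(i,j)$ (covering $j \neq d$) toward every other data center $i$; each execution sends either the write sets in $\committedcausal^{m}_{d}[j]$ not yet received by $i$ (as judged by $\globalmatrix^{m}_{d}[i][j]$) or a \heartbeat{} carrying the current $\knownVC^{m}_{d}[j] \ge x$, so by Assumption~\ref{assumption:message} every correct $\cdrange$ eventually sets $\knownVC^{m}_{\cdrange}[j] \ge \knownVC^{m}_{d}[j] \ge x$. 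Your correct-origin case (arguing from $p^{m}_{j}$ via Lemma~\ref{lemma:knownvc-d-no-bound}) is valid but unnecessary, since the holder-based argument already subsumes it. More importantly, the ``main obstacle'' you identify in the faulty-origin case --- extracting a liveness guarantee that forwarding is eventually triggered --- is resolved in the paper without any appeal to the failure-suspicion module: in the formal development, \forward{} is simply a procedure of Algorithm~\ref{alg:unistore-replication} subject to Assumption~\ref{assumption:fairness}, hence executed infinitely often regardless of suspicion. Your garbage-collection concern dissolves for the same reason: deletion from $\committedcausal$ is omitted from the pseudocode the proof reasons about. What your decomposition buys is a slightly stronger conclusion in the correct-$j$ case (unboundedness, not just catch-up to $x$); what the paper's buys is uniformity and independence from how failures are detected.
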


\begin{proof} \label{proof:knownvc-x}
  Since data center $d$ is correct, by Assumption~\ref{assumption:fairness},
  for each other data center $i \neq d$, replica $p^{m}_{d}$ will keep
  \begin{itemize}
    \item \emph{replicating} to data center $i$ the write sets
      \[
        \langle \_, \wbuffvar, \commitvc, \_ \rangle \in \committedcausal^{m}_{d}[j]
      \]
      that have not been received by $i$ from the perspective of $d$
      ($\commitvc[d] \le \knownVC^{m}_{d}[d]$
      at line~\code{\ref{alg:unistore-replication}}{\ref{line:propagate-txs}}
      and $\commitvc[j] > \globalmatrix^{m}_{d}[i][j]$ at
      line~\code{\ref{alg:unistore-replication}}{\ref{line:forward-txs}});
    \item or sending \emph{heartbeats} with up-to-date
      $\knownVC^{m}_{d}[j]$ to data center $i$
      (lines~\code{\ref{alg:unistore-replication}}{\ref{line:propagate-call-heartbeat}}
      and \code{\ref{alg:unistore-replication}}{\ref{line:forward-call-heartbeat}}).
  \end{itemize}
  By Assumption~\ref{assumption:message},
  $\knownVC^{m}_{\cdrange}[j]$ at replica $p^{m}_{\cdrange}$ of
  each correct data center $\cdrange \in \C$ will eventually be updated
  (lines~\code{\ref{alg:unistore-replication}}{\ref{line:replicate-knownvc}}
  and \code{\ref{alg:unistore-replication}}{\ref{line:heartbeat-knownvc}})
  such that
  \[
    \knownVC^{m}_{\cdrange}[j] \ge \knownVC^{m}_{d}[j] \ge x.
  \]
\end{proof}

\begin{applemma} \label{lemma:uniformvc-c-no-bound}
  Let $d \in \C$ be a correct data center.
  For any replica $p^{m}_{\cdrange}$
  in any correct data center $\cdrange \in \C$,
  $\uniformVC^{m}_{\cdrange}[d]$ grows without bound.
\end{applemma}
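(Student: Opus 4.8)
The plan is to trace how the unbounded growth of $\knownVC[d]$ at the source data center $d$ propagates, stage by stage, into $\uniformVC^m_\cdrange[d]$ at every correct $\cdrange$. The starting point is Lemma~\ref{lemma:knownvc-d-no-bound}, which (being stated for an arbitrary replica of a correct data center) gives that $\knownVC^n_d[d]$ grows without bound at $d$ for every partition $n \in \P$. Feeding this into Lemma~\ref{lemma:knownvc-x} with $j = d$, one partition at a time, I obtain that for every correct data center $h \in \C$ and every partition $n \in \P$, the entry $\knownVC^n_h[d]$ also grows without bound.

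Next I would lift this from the per-partition $\knownVC$ entries to the data-center-wide $\stableVC$ entry. Recall that $\stableVC^m_h[d]$ is recomputed as $\min\{\localmatrix^m_h[n][d] \mid n \in \P\}$ at line~\code{\ref{alg:unistore-clock}}{\ref{line:knownvclocal-stablevc-causal}}, where $\localmatrix^m_h[n]$ holds the latest $\knownVC^n_h$ broadcast within $h$ via $\bcast$. Fix a bound $x$. Since there are finitely many partitions and each $\knownVC^n_h[d]$ is non-decreasing (Lemma~\ref{lemma:knownvc-nondecreasing}) and eventually $\ge x$, by fairness (Assumption~\ref{assumption:fairness}) every $\localmatrix^m_h[n][d]$ is eventually recorded at a value $\ge x$; the first recomputation after that yields $\stableVC^m_h[d] \ge x$, and by Lemma~\ref{lemma:stablevc-nondecreasing} it stays there. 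Hence $\stableVC^m_h[d]$ grows without bound for every correct $h$.

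Then I would push these stable vectors across data centers. Sibling replicas periodically exchange their $\stableVC$ via $\stablevcproc$ messages and store them in $\stablematrix$ (lines~\code{\ref{alg:unistore-clock}}{\ref{line:bcast-call-stablevc}} and~\code{\ref{alg:unistore-clock}}{\ref{line:stablevc-stablematrix}}). Using Assumptions~\ref{assumption:message} and~\ref{assumption:fairness}, $\stablematrix^m_\cdrange[h][d]$ at the correct data center $\cdrange$ therefore grows without bound for every correct $h$. Finally, because $D > 2f$ (Assumption~\ref{assumption:failure-model}) there are at least $f+1$ correct data centers, so there is a group $g \subseteq \C$ with $|g| = f+1$ and $\cdrange \in g$; this $g$ is one of the groups enumerated in $G$ at line~\code{\ref{alg:unistore-clock}}{\ref{line:stablevc-g}}. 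When $\uniformVC^m_\cdrange[d]$ is recomputed at line~\code{\ref{alg:unistore-clock}}{\ref{line:stablevc-uniformvc}}, it is at least $\min\{\stablematrix^m_\cdrange[h][d] \mid h \in g\}$, a minimum over finitely many quantities each of which grows without bound; combined with the monotonicity of $\uniformVC^m_\cdrange[d]$ (Lemma~\ref{lemma:uniformvc-nondecreasing}), this gives the claim.

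The main obstacle I anticipate is the careful bookkeeping of this asynchronous, stage-wise propagation: at each $\min$ (over partitions for $\stableVC$, and over the group $g$ for $\uniformVC$) I must argue that all finitely many components simultaneously exceed the target bound at some common recomputation, which relies on the monotonicity lemmas to \emph{freeze} each component once it is large enough, together with fairness and the eventual-delivery guarantee for messages between correct data centers. Pinning down that a single group $g$ of correct data centers can be fixed once and for all --- rather than one that varies with the bound --- is what makes the final $\min$ argument go through cleanly.
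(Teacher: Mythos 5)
Your proposal is correct and follows essentially the same route as the paper's proof: Lemma~\ref{lemma:knownvc-d-no-bound} combined with Lemma~\ref{lemma:knownvc-x} to make every $\knownVC^{n}_{\cdrange}[d]$ at correct data centers unbounded, then the intra-data-center broadcast (lines~\code{\ref{alg:unistore-clock}}{\ref{line:bcast-call-knownvclocal}} and \code{\ref{alg:unistore-clock}}{\ref{line:knownvclocal-stablevc-causal}}) to lift this to $\stableVC$, and finally the sibling exchange of $\stableVC$ together with the existence of a group of $f+1$ correct data centers (Assumptions~\ref{assumption:message} and~\ref{assumption:failure-model}, lines~\code{\ref{alg:unistore-clock}}{\ref{line:stablevc-g}}--\code{\ref{alg:unistore-clock}}{\ref{line:stablevc-uniformvc}}) to conclude for $\uniformVC$. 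Your version merely spells out the monotonicity and fairness bookkeeping that the paper's terse proof leaves implicit.
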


\begin{proof} \label{proof:uniformvc-c-no-bound}
  By Lemmas~\ref{lemma:knownvc-d-no-bound} and \ref{lemma:knownvc-x},
  for any replica $p^{m}_{\cdrange}$
  in any correct data center $\cdrange \in \C$,
  $\knownVC^{m}_{\cdrange}[d]$ grows without bound.
  By lines~\code{\ref{alg:unistore-clock}}{\ref{line:bcast-call-knownvclocal}}
  and \code{\ref{alg:unistore-clock}}{\ref{line:knownvclocal-stablevc-causal}},
  for any replica $p^{m}_{\cdrange}$
  in any correct data center $\cdrange \in \C$,
  $\stableVC^{m}_{\cdrange}[d]$ grows without bound.
  By line~\code{\ref{alg:unistore-clock}}{\ref{line:bcast-call-stablevc}},
  Assumptions~\ref{assumption:message} and \ref{assumption:failure-model},
  and lines~\code{\ref{alg:unistore-clock}}{\ref{line:stablevc-g}}--\code{
    \ref{alg:unistore-clock}}{\ref{line:stablevc-uniformvc}},
  for any replica $p^{m}_{\cdrange}$
  in any correct data center $\cdrange \in \C$,
  $\uniformVC^{m}_{\cdrange}[d]$ grows without bound.
\end{proof}

\begin{applemma}[\prop{4}] \label{lemma:uniformvc-x}
  Let $p^{m}_{d}$ be any replica in any data center $d$.
  For any time $\realtime$, there exists some time $\realtime'$ such that
  \begin{align*}
    &\forall i \in \D.\; \forall \cdrange \in \C.\; \forall n \in \P. \\
      &\quad \uniformVC^{n}_{\cdrange}(\realtime')[i] \ge \uniformVC^{m}_{d}(\realtime)[i].
  \end{align*}
\end{applemma}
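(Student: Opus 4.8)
The plan is to prove the statement one entry $i \in \D$ at a time and then take a maximum over the finitely many entries. Fix the replica $p^{m}_{d}$ and the time $\realtime$, and abbreviate $v_{i} \triangleq \uniformVC^{m}_{d}(\realtime)[i]$ for each $i \in \D$. The crux is to show that each fixed, finite value $v_{i}$ eventually propagates, as a lower bound on $\uniformVC$, to every replica of every correct data center. I emphasize that one cannot simply invoke Lemma~\ref{lemma:uniformvc-c-no-bound} here: for an entry $i$ corresponding to a \emph{failed} data center, $\uniformVC^{n}_{\cdrange}[i]$ need not grow without bound, so I must reach the \emph{specific} target $v_{i}$ rather than merely surpass every bound.

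First I would appeal to Lemma~\ref{lemma:uniformvc-knownvc-f+1} at time $\realtime$: for the fixed $i$ there is a group $g_{i} \subseteq \D$ with $|g_{i}| \ge f + 1$, $d \in g_{i}$, and $\knownVC^{n}_{j}(\realtime)[i] \ge v_{i}$ for every $j \in g_{i}$ and every $n \in \P$. Since $|g_{i}| \ge f + 1$ while at most $f$ data centers crash (Assumption~\ref{assumption:failure-model}), the group $g_{i}$ must contain some correct data center $\cdrange_{i} \in g_{i} \cap \C$, at which $\knownVC^{n}_{\cdrange_{i}}(\realtime)[i] \ge v_{i}$ holds for every partition $n$. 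This is where uniformity does its real work: a value recorded at $f+1$ data centers is necessarily anchored at a surviving one, from which it can be disseminated. I expect this to be the main obstacle, precisely because it is the only step that treats failed-data-center entries, and it relies essentially on the counting argument $f+1 > f$ supplied by Lemma~\ref{lemma:uniformvc-knownvc-f+1}.

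Next, for each partition $n$ I would apply Lemma~\ref{lemma:knownvc-x} with the correct data center $\cdrange_{i}$, entry $j \triangleq i$, and threshold $x \triangleq v_{i}$, concluding that eventually $\knownVC^{n}_{\cdrange}[i] \ge v_{i}$ at every correct $\cdrange \in \C$; as $\P$ is finite, a single later time makes this hold simultaneously for all $n$. From there I would replay the dissemination chain already used in the proof of Lemma~\ref{lemma:uniformvc-c-no-bound}: the intra--data-center exchange of $\knownVC$ (lines~\code{\ref{alg:unistore-clock}}{\ref{line:bcast-call-knownvclocal}} and \code{\ref{alg:unistore-clock}}{\ref{line:knownvclocal-stablevc-causal}}) together with monotonicity (Lemma~\ref{lemma:stablevc-nondecreasing}) forces $\stableVC^{n}_{\cdrange}[i] \ge v_{i}$ at every correct $\cdrange$; and since there are at least $f + 1$ correct data centers, each correct $\cdrange$ can form a group of $f+1$ correct data centers containing itself, whose $\stableVC[i]$ entries travel over the reliable inter--data-center channels of Assumption~\ref{assumption:message} and, combined at lines~\code{\ref{alg:unistore-clock}}{\ref{line:stablevc-g}}--\code{\ref{alg:unistore-clock}}{\ref{line:stablevc-uniformvc}}, drive $\uniformVC^{n}_{\cdrange}[i] \ge v_{i}$.

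Finally I would set $\realtime'$ to the maximum of the finitely many times produced above, one per $i \in \D$. Monotonicity of $\uniformVC$ (Lemma~\ref{lemma:uniformvc-nondecreasing}) guarantees that each lower bound, once attained, persists, so at $\realtime'$ we obtain $\uniformVC^{n}_{\cdrange}(\realtime')[i] \ge v_{i} = \uniformVC^{m}_{d}(\realtime)[i]$ for all $i \in \D$, $\cdrange \in \C$, and $n \in \P$, as required. The two propagation steps ($\knownVC \to \stableVC \to \uniformVC$) are routine reuses of established machinery; the only genuinely new ingredient is the selection of a correct anchor $\cdrange_{i}$ from the $f+1$-sized group, which converts the invariant of Lemma~\ref{lemma:uniformvc-knownvc-f+1} into an eventual liveness guarantee.
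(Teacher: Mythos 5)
Your proposal is correct and follows essentially the same route as the paper's proof: Lemma~\ref{lemma:uniformvc-knownvc-f+1} plus the at-most-$f$-failures assumption to anchor each entry at a correct data center, then Lemma~\ref{lemma:knownvc-x} to spread the $\knownVC$ bound to all correct data centers, and finally the $\knownVC \to \stableVC \to \uniformVC$ machinery of Algorithm~\ref{alg:unistore-clock} to lift the bound into $\uniformVC$. The paper compresses the last step into a one-line appeal to Algorithm~\ref{alg:unistore-clock} and Assumption~\ref{assumption:failure-model}, whereas you spell out the dissemination chain and the final max-over-entries/monotonicity bookkeeping explicitly; the substance is identical.
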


\begin{proof} \label{proof:uniformvc-x}
  By Lemma~\ref{lemma:uniformvc-knownvc-f+1},
  Assumption~\ref{assumption:failure-model}, and the fact that at most
  $f$ data centers may fail,
  \begin{align*}
    &\forall i \in \D.\; \exists \cdrange \in \C.\; \forall n \in \P. \\
      &\quad \uniformVC^{m}_{d}(\realtime)[i] \le \knownVC^{n}_{\cdrange}(\realtime)[i].
  \end{align*}
  By Lemma~\ref{lemma:knownvc-x},
  there exists some time $\realtime''$ such that
  \begin{align*}
    &\forall i \in \D.\; \forall \cdrange \in \C.\; \forall n \in \P. \\
      &\quad \uniformVC^{m}_{d}(\realtime)[i] \le \knownVC^{n}_{\cdrange}(\realtime'')[i].
  \end{align*}
  By Algorithm~\ref{alg:unistore-clock} and Assumption~\ref{assumption:failure-model},
  there exists some time $\realtime'$ such that
  \begin{align*}
    &\forall i \in \D.\; \forall \cdrange \in \C.\; \forall n \le \P. \\
      &\quad \uniformVC^{n}_{\cdrange}(\realtime')[i] \ge \uniformVC^{m}_{d}(\realtime)[i].
  \end{align*}
\end{proof}

\begin{applemma} \label{lemma:uniformity-correct-dc}
  Let $d \in \C$ be a correct data center
  and $\tvar \in \causaltxs$ be a causal transaction
  that originates at $d$.
  Then for any replica $p^{m}_{\cdrange}$
  in any correct data center $\cdrange \in \C$, eventually
  \[
    \forall i \in \D.\; \timestamp(\tvar)[i] \le \uniformVC^{m}_{\cdrange}[i].
  \]
\end{applemma}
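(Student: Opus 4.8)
The plan is to reduce $\timestamp(\tvar)$ to the commit vector $\commitVC(\tvar)$ and then bound each of its entries by the eventual value of $\uniformVC^{m}_{\cdrange}$, treating the local entry $[d]$ and the remote entries $[i]$ with $i \neq d$ separately. First I would establish $\timestamp(\tvar) = \commitVC(\tvar)$. For an update causal transaction this is immediate from Lemma~\ref{lemma:ts-commit}. For a read-only causal transaction, Definition~\ref{def:ts-op} together with line~\code{\ref{alg:unistore-coord}}{\ref{line:commitcausal-return-ro}} gives $\timestamp(\tvar) = \snapshotVC(\tvar)$, which equals $\commitVC(\tvar)$ by Definition~\ref{def:commitvc} (and Lemma~\ref{lemma:snapshotvc-commitvc}). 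Hence it suffices to show that eventually $\commitVC(\tvar)[i] \le \uniformVC^{m}_{\cdrange}[i]$ for every $i \in \D$.

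For the local entry, since $\tvar$ originates at the correct data center $d$, Lemma~\ref{lemma:uniformvc-c-no-bound} states that $\uniformVC^{m}_{\cdrange}[d]$ grows without bound at every replica in every correct data center. Thus there is a time from which $\uniformVC^{m}_{\cdrange}[d] \ge \commitVC(\tvar)[d]$. For the remote entries, I would use the fact that the non-local entries of a causal transaction's commit vector are copied verbatim from its snapshot vector (lines~\code{\ref{alg:unistore-coord}}{\ref{line:commitcausal-commitvc}} and \code{\ref{alg:unistore-coord}}{\ref{line:commitcausal-commitvc-d}}, the latter touching only the $[d]$ entry), so $\commitVC(\tvar)[i] = \snapshotVC(\tvar)[i]$ for $i \neq d$. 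By Lemma~\ref{lemma:snapshotvc-uniformvc}, at some time there is a replica $p^{m'}_{d}$ in $d$ with $\snapshotVC(\tvar)[i] \le \uniformVC^{m'}_{d}[i]$. Applying Lemma~\ref{lemma:uniformvc-x} to this replica and time produces a later time at which $\uniformVC^{m}_{\cdrange}[i] \ge \uniformVC^{m'}_{d}[i] \ge \snapshotVC(\tvar)[i] = \commitVC(\tvar)[i]$ for every correct $\cdrange$.

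Finally I would assemble the pieces: $\D$, $\C$, and $\P$ are finite, and $\uniformVC^{m}_{\cdrange}[i]$ is non-decreasing by Lemma~\ref{lemma:uniformvc-nondecreasing}, so once each of the finitely many bounds above is reached it persists. Taking the maximum over the finitely many times yields a single point from which $\timestamp(\tvar)[i] = \commitVC(\tvar)[i] \le \uniformVC^{m}_{\cdrange}[i]$ holds simultaneously for all $i \in \D$, which is exactly the claim.

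The main obstacle I expect is the remote-entry argument, namely correctly routing the bound from the origin data center's coordinator out to every correct data center. Lemma~\ref{lemma:snapshotvc-uniformvc} only supplies the bound at $d$'s coordinator at the start time, and Lemma~\ref{lemma:uniformvc-x} is the propagation vehicle; the delicate point is invoking it with the right replica and time and relying on the monotonicity of $\uniformVC$ at the (correct) origin so that the intermediate value $\uniformVC^{m'}_{d}[i]$ is not lost before the propagation completes. The local entry and the read-only case reduction are comparatively routine.
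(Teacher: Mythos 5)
Your proof is correct and follows essentially the same route as the paper's: both split the local entry (handled by Lemma~\ref{lemma:uniformvc-c-no-bound}) from the remote entries, bound the latter by the origin coordinator's $\uniformVC$, propagate that bound to every correct data center with Lemma~\ref{lemma:uniformvc-x}, and combine the finitely many deadlines via monotonicity (Lemma~\ref{lemma:uniformvc-nondecreasing}). The only cosmetic difference is that you obtain the origin-side bound through $\commitVC$/$\snapshotVC$ and Lemma~\ref{lemma:snapshotvc-uniformvc}, whereas the paper cites Lemma~\ref{lemma:pastvc-uniformvc-except-d} on the client's $\pastVC$ at the commit event --- the same underlying fact in different packaging, so your worry about the intermediate value being ``lost'' is already discharged by Lemma~\ref{lemma:uniformvc-x} itself.
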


\begin{proof} \label{proof:uniformity-correct-dc}
  Since $d$ is correct, by Lemma~\ref{lemma:uniformvc-c-no-bound},
  there exists some time $\realtime'$ such that
  \[
    \timestamp(\tvar)[d] \le \uniformVC^{m}_{\cdrange}(\realtime')[d].
  \]
  On the other hand,
  by Definition~\ref{def:ts-op} of timestamps
  and Lemma~\ref{lemma:pastvc-uniformvc-except-d}
  (let $n \triangleq \coord(\tvar)$ and $cl \triangleq \client(\tvar)$),
  \begin{align*}
    \forall i \in \D \setminus \set{d}.\;
      \timestamp(\tvar)[i] &= (\pastVC_{\cl})_{\commitoftx(\tvar)}[i] \\
      &\le (\uniformVC^{n}_{d})_{\commitoftx(\tvar)}[i].
  \end{align*}
  By Lemma~\ref{lemma:uniformvc-x},
  there exists some time $\realtime''$ such that
  \[
    \forall i \in \D \setminus \set{d}.\;
      \timestamp(\tvar)[i] \le (\uniformVC^{m}_{\cdrange})(\realtime'')[i].
  \]
  Let
  \[
    \realtime \triangleq \max\set{\realtime', \realtime''}.
  \]
  By Lemma~\ref{lemma:uniformvc-nondecreasing},
  \[
    \forall i \in \D.\; \timestamp(\tvar)[i] \le \uniformVC^{m}_{\cdrange}(\realtime)[i].
  \]
\end{proof}
\subsubsection{Uniformity of Causal Transactions Visible to \fence{} events}
\label{sss:uniformity-fences}

\begin{applemma} \label{lemma:uniformity-causal-fence}
  Let $\tvar \in \causaltxs$ be a causal transaction
  and $\fencerange \in \Fence$ be a \fence{} event.
  If $\tvar \rel{\vis} \fencerange$,
  then for any replica $p^{m}_{\cdrange}$
  in any correct data center $\cdrange \in \C$, eventually
  \[
    \forall i \in \D.\; \timestamp(\tvar)[i] \le \uniformVC^{m}_{\cdrange}[i].
  \]
\end{applemma}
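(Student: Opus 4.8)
The plan is to mirror the proof of Lemma~\ref{lemma:uniformity-correct-dc}, replacing its ``originates at a correct data center'' hypothesis by the guarantee that the \fence{} event enforces on its local entry. First I would unfold $\tvar \rel{\vis} \fencerange$ via Definition~\ref{def:vis-tx}: since $\fencerange \in \Fence$, this yields $\timestamp(\tvar) \le \timestamp(\fencerange)$, and by Definition~\ref{def:ts-op} we have $\timestamp(\fencerange) = (\pastVC_{\cl})_{\fencerange}$, where $\cl \triangleq \client(\fencerange)$. Writing $d_q \triangleq \dc(\fencerange) = \cldc(\cl)$ for the data center at which the barrier runs and $m_q \triangleq \coord(\fencerange)$ for its coordinating replica, it then suffices to bound each entry of $(\pastVC_{\cl})_{\fencerange}$ by a uniform vector at $d_q$ and propagate that bound.

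Next I would show that, at the moment the barrier returns, every entry of $(\pastVC_{\cl})_{\fencerange}$ is already covered by $\uniformVC$ at some replica in $d_q$. For the local entry $d_q$ this is precisely the wait-until guarantee of \uniformbarrier{}, recorded as Case~V of Lemma~\ref{lemma:pastvc-uniformvc-except-d}, giving $(\pastVC_{\cl})_{\fencerange}[d_q] \le (\uniformVC^{m_q}_{d_q})_{\fencerange}[d_q]$. For the remaining entries $i \neq d_q$ I would invoke Lemma~\ref{lemma:pastvc-uniformvc}, which bounds $(\pastVC_{\cl})_{\fencerange}[i]$ by $\uniformVC^{m_i}_{d_q}[i]$ for some (possibly index-dependent) replica $m_i$ in $d_q$. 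Combining with the first step gives, for every $i \in \D$, the bound $\timestamp(\tvar)[i] \le (\pastVC_{\cl})_{\fencerange}[i] \le \uniformVC^{m_i}_{d_q}[i]$ holding at the barrier's completion time, say $\realtime_0$.

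Finally I would propagate these bounds to all correct data centers. Applying Lemma~\ref{lemma:uniformvc-x} to each replica $p^{m_i}_{d_q}$ at time $\realtime_0$ yields, for each $i$, a time $\realtime'_i$ after which $\uniformVC^n_{\cdrange}[i] \ge \uniformVC^{m_i}_{d_q}(\realtime_0)[i]$ for every correct $\cdrange \in \C$ and every $n \in \P$. Taking $\realtime \triangleq \max_i \realtime'_i$ (a maximum over the finitely many indices of $\D$) and using monotonicity of $\uniformVC$ (Lemma~\ref{lemma:uniformvc-nondecreasing}), I conclude that from $\realtime$ onward every replica $p^m_{\cdrange}$ in every correct data center satisfies $\timestamp(\tvar)[i] \le \uniformVC^m_{\cdrange}[i]$ for all $i$, which is the claim.

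The main obstacle is the subtlety that $d_q$ may itself fail after the barrier returns, so the argument cannot simply wait for $d_q$ to forward $\tvar$; instead it must exploit that a uniform bound already reflects replication at $f+1$ data centers (Lemma~\ref{lemma:uniformvc-knownvc-f+1}), at least one of which is correct. This is exactly what Lemma~\ref{lemma:uniformvc-x} packages, so the crux is to reduce the fence hypothesis to a uniform-vector bound at time $\realtime_0$ (via the two \pastVC{}-versus-\uniformVC{} lemmas) rather than to a liveness claim about $d_q$. A secondary care point is that the per-entry bounds may come from different replicas $m_i$ of $d_q$, so Lemma~\ref{lemma:uniformvc-x} must be applied index-by-index before the maximum of the resulting times is taken.
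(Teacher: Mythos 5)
Your proposal is correct and matches the paper's own proof essentially step for step: unfold $\tvar \rel{\vis} \fencerange$ to get $\timestamp(\tvar) \le \timestamp(\fencerange)$, bound the local entry via Case~V of Lemma~\ref{lemma:pastvc-uniformvc-except-d} and the remote entries via Lemma~\ref{lemma:pastvc-uniformvc}, then propagate with Lemma~\ref{lemma:uniformvc-x} and close with monotonicity (Lemma~\ref{lemma:uniformvc-nondecreasing}). The only cosmetic difference is that you apply Lemma~\ref{lemma:uniformvc-x} index-by-index where the paper applies it once for the local entry and once for all remote entries together; both are fine.
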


\begin{proof} \label{proof:uniformity-causal-fence}
  Since $\tvar \rel{\vis} \fencerange$, by Definition~\ref{def:vis-tx} of $\vis$,
  \[
    \timestamp(\tvar) \le \timestamp(\fencerange).
  \]
  Suppose that $\fencerange$ is issued by client $\cl$
  to replica $p^{n}_{d}$ in data center $d$
  and is returned at time $\realtime_{\fencerange}$.
  By Definition~\ref{def:ts-op} of timestamps
  and Lemma~\ref{lemma:pastvc-uniformvc-except-d},
  \[
    \timestamp(\tvar)[d] \le \timestamp(\fencerange)[d] \le (\uniformVC^{n}_{d})_{\fencerange}[d].
  \]
  By Lemma~\ref{lemma:uniformvc-x},
  there exists some time $\realtime'$ such that
  \[
    \timestamp(\tvar)[d] \le (\uniformVC^{m}_{\cdrange})(\realtime')[d].
  \]
  On the other hand, by Definition~\ref{def:ts-op} of timestamps
  and Lemma~\ref{lemma:pastvc-uniformvc},
  \begin{align*}
    \forall i \in \D \setminus \set{d}.\;
      \timestamp(\tvar)[i] &\le \timestamp(\fencerange)[i] \\
      &= (\pastVC_{\cl})_{\fencerange}[i] \\
      &\le \uniformVC^{l}_{d}(\realtime_{\fencerange})[i]
  \end{align*}
  for some replica $p^{l}_{d}$ in data center $d$.
  By Lemma~\ref{lemma:uniformvc-x},
  there exists some time $\realtime''$ such that
  \[
    \forall i \in \D \setminus \set{d}.\;
      \timestamp(\tvar)[i] \le (\uniformVC^{m}_{\cdrange})(\realtime'')[i].
  \]
  Let
  \[
    \realtime \triangleq \max\set{\realtime', \realtime''}.
  \]
  By Lemma~\ref{lemma:uniformvc-nondecreasing},
  \[
    \forall i \in \D.\; \timestamp(\tvar)[i] \le \uniformVC^{m}_{\cdrange}(\realtime)[i].
  \]
\end{proof}
\subsubsection{Uniformity of Strong Transactions}
\label{sss:uniformity-strong}

\begin{applemma} \label{lemma:knownvc-strong-no-bound}
  For any replica $p^{m}_{\cdrange}$
  in any correct data center $\cdrange \in \C$,
  $\knownVC^{m}_{\cdrange}[\strongentry]$ grows without bound.
\end{applemma}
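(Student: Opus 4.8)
The variable $\knownVC^{m}_{\cdrange}[\strongentry]$ is modified only inside the \deliverupdates{} upcall, at line~\code{\ref{alg:unistore-strong-commit}}{\ref{line:deliverupdates-knownvc-strongentry}}, where it is raised to the $\commitvc[\strongentry]$ of a delivered strong transaction; by Lemma~\ref{lemma:knownvc-strong-nondecreasing} it is non-decreasing. Hence, for the fixed partition $m$ and correct data center $\cdrange$, it suffices to exhibit an infinite family of committed strong transactions accessing $m$ whose strong timestamps grow without bound, each of which is eventually delivered at $p^{m}_{\cdrange}$. The plan is to produce this family from the strong-heartbeat mechanism and to route it to $p^{m}_{\cdrange}$ through the liveness guarantee of the transaction certification service.

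First I would argue that partition $m$ keeps committing strong transactions. By Assumption~\ref{assumption:failure-model} ($D > 2f$, so a Paxos quorum of correct replicas always exists) together with the stabilization property of the leader-election detector $\Omega_{m}$, from some point on partition $m$ has a single stable correct leader. By Assumption~\ref{assumption:fairness}, that leader executes \heartbeatstrong{} (line~\code{\ref{alg:unistore-strong-commit}}{\ref{line:function-heartbeatstrong}}) infinitely often, each time submitting through \certify{} a dummy strong transaction involving $m$. Such a transaction has an empty read set, so the certification check cannot report a conflict and the decision is \commit; since the TCS is correct (Theorem~\ref{thm:tcs-correctness}), each heartbeat indeed commits.

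Next I would show the corresponding strong timestamps are unbounded and that they reach $p^{m}_{\cdrange}$. At the leader, the \preparestrong{} handler waits until $\clockVar > \snapvc[\strongentry]$ (line~\code{\ref{alg:unistore-atomic-commit}}{\ref{line:preparestrong-clock}}) and then sets the strong timestamp $\tsvar \gets \clockVar$ (line~\code{\ref{alg:unistore-atomic-commit}}{\ref{line:preaprestrong-ts}}), which becomes $\commitvc[\strongentry]$ by~(\ref{eqn:gcf-commitvc}). Because the leader is correct, its clock is strictly increasing (Assumption~\ref{assumption:clock}) and unbounded, so successive heartbeats receive strictly larger strong timestamps. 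Then I would invoke the TCS liveness requirement~(\ref{eqn:tcs-liveness-delivery}) of Theorem~\ref{thm:tcs-correctness}: every committed strong transaction accessing $m$ is eventually delivered at the replica $p^{m}_{\cdrange}$ of each correct $\cdrange \in \C$. Each such delivery raises $\knownVC^{m}_{\cdrange}[\strongentry]$ to the transaction's $\commitvc[\strongentry]$ (line~\code{\ref{alg:unistore-strong-commit}}{\ref{line:deliverupdates-knownvc-strongentry}}); combined with monotonicity (Lemma~\ref{lemma:knownvc-strong-nondecreasing}), this forces $\knownVC^{m}_{\cdrange}[\strongentry]$ to grow without bound.

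The main obstacle is the second and third steps: one must ensure that the heartbeat stream for $m$ is never permanently starved despite leader crashes and recoveries — this is exactly where the eventual stability of $\Omega_{m}$ under $D > 2f$ is needed — and that the strong timestamps produced by a possibly changing leader genuinely diverge rather than stall, which relies on the strict monotonicity of physical clocks at correct replicas (Assumption~\ref{assumption:clock}) being inherited by the commit vectors through the certification function~(\ref{eqn:gcf-commitvc}). Once these are secured, the delivery step is a direct appeal to the already-established TCS liveness property, and the conclusion follows.
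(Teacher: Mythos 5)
Your proposal is correct and takes essentially the same approach as the paper: the paper's own proof is a two-sentence sketch that cites Assumption~\ref{assumption:fairness} and Theorem~\ref{thm:tcs-correctness} to conclude that $p^{m}_{\cdrange}$ either delivers committed strong transactions infinitely often or submits dummy heartbeat transactions infinitely often, whence $\knownVC^{m}_{\cdrange}[\strongentry]$ grows without bound. Your write-up simply makes explicit the steps the paper leaves implicit -- that heartbeats always commit, that their strong timestamps diverge via the clock assumption and the certification function, and that TCS liveness delivers them at every correct data center -- so it is a more detailed rendering of the same argument, not a different one.
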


\begin{proof} \label{proof:knownvc-strong-no-bound}
  By Assumption~\ref{assumption:fairness}
  and Theorem~\ref{thm:tcs-correctness},
  replica $p^{m}_{\cdrange}$ will either deliver committed strong transactions
  infinitely often (\deliver{} of Algorithm~\ref{alg:unistore-strong-commit})
  or submit dummy strong transactions infinitely often
  (\heartbeatstrong{} of Algorithm~\ref{alg:unistore-strong-commit}).
  Thus, $\knownVC^{m}_{\cdrange}[\strongentry]$ grows without bound.
\end{proof}

\begin{applemma} \label{lemma:uniformity-strong}
  Let $\tvar \in \txs$ be a transaction.
  Then for any replica $p^{m}_{\cdrange}$
  in any correct data center $\cdrange \in \C$, eventually
  \[
    \timestamp(\tvar)[\strongentry] \le \stableVC^{m}_{\cdrange}[\strongentry].
  \]
\end{applemma}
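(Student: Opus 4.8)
The plan is to exploit the fact that $x \triangleq \timestamp(\tvar)[\strongentry]$ is a \emph{fixed, finite} natural number that does not change after $\tvar$ commits. By Definitions~\ref{def:ts-op} and~\ref{def:ts-tx}, $\timestamp(\tvar)$ is the vector fixed at $\tvar$'s commit event, whose strong entry is computed either from $\snapshotVC(\tvar)[\strongentry]$ (line~\code{\ref{alg:unistore-coord}}{\ref{line:start-snapvc-strong}}) for a causal $\tvar$, or as the freshly assigned strong timestamp of~(\ref{eqn:gcf-commitvc}) for a strong $\tvar$. Since $\stableVC^{m}_{\cdrange}[\strongentry]$ is non-decreasing (Lemma~\ref{lemma:stablevc-strong-nondecreasing}), it suffices to show that it reaches $x$ at \emph{some} time; crucially, the argument is uniform in $\tvar$, so I never need to case-split on whether $\tvar$ is causal or strong.

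First I would invoke Lemma~\ref{lemma:knownvc-strong-no-bound}, which guarantees that for \emph{every} partition $n \in \P$ of the correct data center $\cdrange$, the quantity $\knownVC^{n}_{\cdrange}[\strongentry]$ grows without bound. Hence for each such $n$ there is a time beyond which $\knownVC^{n}_{\cdrange}[\strongentry] \ge x$, and by monotonicity (Lemma~\ref{lemma:knownvc-strong-nondecreasing}) this persists. Because $\P$ is finite, I can take the maximum of these finitely many times to obtain a single time $\realtime_{0}$ after which $\knownVC^{n}_{\cdrange}[\strongentry] \ge x$ holds \emph{simultaneously} for all $n \in \P$.

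Next I would propagate these values into the matrix $\localmatrix^{m}_{\cdrange}$ at the target replica $p^{m}_{\cdrange}$. By the fairness Assumption~\ref{assumption:fairness}, every partition $n$ of $\cdrange$ runs \bcast{} (Algorithm~\ref{alg:unistore-clock}) infinitely often, sending its current $\knownVC$ to $p^{m}_{\cdrange}$ in a \knownvclocal{} message; by the reliable-FIFO Assumption~\ref{assumption:message} each such message is eventually delivered and, since the carried strong entries are non-decreasing, once an entry exceeds $x$ it stays above $x$. Taking for each $n$ a \bcast{} step firing after $\realtime_{0}$, I conclude that eventually $\localmatrix^{m}_{\cdrange}[n][\strongentry] \ge x$ for all $n \in \P$. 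The handler processing the last such \knownvclocal{} message recomputes $\stableVC^{m}_{\cdrange}[\strongentry] = \min\{\localmatrix^{m}_{\cdrange}[n][\strongentry] \mid n \in \P\}$ (line~\code{\ref{alg:unistore-clock}}{\ref{line:knownvclocal-stablevc-strong}}), which is therefore $\ge x = \timestamp(\tvar)[\strongentry]$, and Lemma~\ref{lemma:stablevc-strong-nondecreasing} makes the bound permanent.

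The main obstacle is the \emph{simultaneity} of the catch-up: the $\min$ in the $\stableVC$ computation yields the desired bound only if, at the instant of recomputation, \emph{all} entries of $\localmatrix^{m}_{\cdrange}$ have already surpassed $x$. This is precisely what forces the two-stage structure above — first synchronizing all partitions past $x$ at a common time $\realtime_{0}$, and only then using periodic broadcast together with FIFO and monotonicity to guarantee that the entries read by a single \knownvclocal{} handler are uniformly $\ge x$ — rather than a naive appeal to each matrix entry growing without bound independently.
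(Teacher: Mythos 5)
Your proof is correct and takes essentially the same route as the paper's: the paper likewise combines Lemma~\ref{lemma:knownvc-strong-no-bound} with the broadcast mechanism at lines~\code{\ref{alg:unistore-clock}}{\ref{line:bcast-call-knownvclocal}} and \code{\ref{alg:unistore-clock}}{\ref{line:knownvclocal-stablevc-strong}} to conclude that $\stableVC^{m}_{\cdrange}[\strongentry]$ grows without bound and hence eventually exceeds $\timestamp(\tvar)[\strongentry]$. Your write-up simply makes explicit the finiteness-of-$\P$, FIFO, and monotonicity details that the paper's one-line argument leaves implicit.
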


\begin{proof} \label{proof:uniformity-strong}
  By Lemma~\ref{lemma:knownvc-strong-no-bound}
  and lines~\code{\ref{alg:unistore-clock}}{\ref{line:bcast-call-knownvclocal}}
  and \code{\ref{alg:unistore-clock}}{\ref{line:knownvclocal-stablevc-strong}},
  $\stableVC^{m}_{\cdrange}[\strongentry]$ grows without bound.
  Therefore, there exists some time $\realtime$ such that
  \[
    \timestamp(\tvar)[\strongentry] \le \stableVC^{m}_{\cdrange}(\realtime)[\strongentry].
  \]
\end{proof}

\begin{applemma} \label{lemma:uniformity-strongtx-i}
  Let $\tvar \in \strongtxs$ be a strong transaction.
  Then for any replica $p^{m}_{\cdrange}$
  in any correct data center $\cdrange \in \C$, eventually
  \[
    \forall i \in \D.\; \timestamp(\tvar)[i] \le \uniformVC^{m}_{\cdrange}[i].
  \]
\end{applemma}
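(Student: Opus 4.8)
The plan is to reduce the claim to a statement about the origin coordinator of $\tvar$ at a single point in time, and then transport that bound to every correct data center using Property~4 (Lemma~\ref{lemma:uniformvc-x}). First I would note that, since $\tvar \in \strongtxs$, Lemma~\ref{lemma:ts-commit} gives $\timestamp(\tvar) = \commitVC(\tvar)$, so it suffices to bound the per-data-center entries of $\commitVC(\tvar)$. Moreover, by the commit-vector specification~(\ref{eqn:gcf-commitvc}) of the certification service, these entries coincide with the snapshot: $\commitVC(\tvar)[i] = \snapshotVC(\tvar)[i]$ for every $i \in \D$. Let $d \triangleq \dc(\tvar)$ and $n \triangleq \coord(\tvar)$. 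The goal thus becomes to find a single time $\realtime_{0}$ at which $\snapshotVC(\tvar)[i] \le \uniformVC^{n}_{d}(\realtime_{0})[i]$ holds for \emph{all} $i \in \D$ at the coordinator $p^{n}_{d}$.

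For the remote entries $i \in \D \setminus \set{d}$, this bound is immediate from Lemma~\ref{lemma:snapshotvc-uniformvc}, which already states $\snapshotVC(\tvar)[i] \le \uniformVC^{n}_{d}[i]$ at the origin coordinator; by monotonicity of $\uniformVC$ (Lemma~\ref{lemma:uniformvc-nondecreasing}) the inequality persists at any later time, and in particular at the $\realtime_{0}$ chosen below.

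The crux is the local entry $i = d$, and here the proof must differ from the causal case (Lemma~\ref{lemma:uniformity-correct-dc}), because the origin $d$ of a strong transaction may \emph{fail}, so I cannot appeal to $d \in \C$. Instead I would use the uniform barrier that the coordinator executes before certification (line~\code{\ref{alg:unistore-strong-commit}}{\ref{line:commitstrong-call-uniformbarrier}}). Its wait condition (line~\code{\ref{alg:unistore-replica}}{\ref{line:uniformbarrier-wait-uniformvc-d}}) guarantees that, at the moment the barrier returns, $\uniformVC^{n}_{d}[d] \ge \snapVC[\tidvar][d]$; since $\snapVC[\tidvar]$ is fixed after \starttx{} (Definition~\ref{def:snapshotvc}), this equals $\snapshotVC(\tvar)[d] = \commitVC(\tvar)[d]$. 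Taking $\realtime_{0}$ to be this return time and combining with the remote-entry bound, I obtain $\forall i \in \D.\; \timestamp(\tvar)[i] \le \uniformVC^{n}_{d}(\realtime_{0})[i]$ at the coordinator.

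Finally I would apply Property~4 (Lemma~\ref{lemma:uniformvc-x}) to the coordinator replica $p^{n}_{d}$ at $\realtime_{0}$: there exists a later time $\realtime'$ such that for every correct $\cdrange \in \C$ and every partition $m$, $\uniformVC^{m}_{\cdrange}(\realtime')[i] \ge \uniformVC^{n}_{d}(\realtime_{0})[i] \ge \timestamp(\tvar)[i]$ for all $i \in \D$, which is exactly the conclusion. The main obstacle is precisely the $i = d$ entry: everything hinges on showing that the origin's local coordinate is already uniform \emph{before} the transaction is allowed to commit, so that its value survives a possible failure of $d$. The uniform barrier at line~\code{\ref{alg:unistore-strong-commit}}{\ref{line:commitstrong-call-uniformbarrier}} is exactly the mechanism that enforces this, and recognizing that it pins down $\uniformVC^{n}_{d}[d] \ge \commitVC(\tvar)[d]$ --- via the coincidence $\commitVC(\tvar)[d] = \snapshotVC(\tvar)[d]$ from~(\ref{eqn:gcf-commitvc}) --- is the key step.
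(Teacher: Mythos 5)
Your proof is correct and takes essentially the same route as the paper's: the same split into remote entries ($i \neq d$, bounded at the origin coordinator via the snapshot vector and monotonicity of $\uniformVC$) and the local entry ($i = d$, pinned by the uniform barrier at line~\code{\ref{alg:unistore-strong-commit}}{\ref{line:commitstrong-call-uniformbarrier}} together with $\commitVC(\tvar)[d] = \snapshotVC(\tvar)[d]$ from~(\ref{eqn:gcf-commitvc})), followed by the transport to all correct data centers via Lemma~\ref{lemma:uniformvc-x}. The only cosmetic difference is that you establish the bound for all entries at a single time and invoke Lemma~\ref{lemma:uniformvc-x} once, whereas the paper invokes it separately for the remote and local entries and then takes the maximum of the two resulting times.
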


\begin{proof} \label{proof:uniformity-strongtx-i}
  Let $d \triangleq \dc(\tvar)$,
  $n \triangleq \coord(\tvar)$, and $cl \triangleq \client(\tvar)$.
  By Definition~\ref{def:ts-op} of timestamps,
  \[
    \timestamp(\tvar) = (\pastVC_{\cl})_{\commitoftx(\tvar)}.
  \]
  On the one hand, by Lemma~\ref{lemma:pastvc-uniformvc-except-d},
  \begin{align*}
    \forall i \in \D \setminus \set{d}.\;
      \timestamp(\tvar)[i] &= (\pastVC_{\cl})_{\commitoftx(\tvar)}[i] \\
      &\le (\uniformVC^{n}_{d})_{\commitoftx(\tvar)}[i].
  \end{align*}
  By Lemma~\ref{lemma:uniformvc-x},
  there exists some time $\realtime'$ such that
  \[
    \forall i \in \D \setminus \set{d}.\;
      \timestamp(\tvar)[i] \le (\uniformVC^{m}_{\cdrange})(\realtime')[i].
  \]
  On the other hand, by Lemma~\ref{lemma:ts-commit},
  \[
    \timestamp(\tvar)[d] = \commitVC(\tvar)[d].
  \]
  By (\ref{eqn:gcf-commitvc}),
  \[
    \commitVC(\tvar)[d] = \snapshotVC(\tvar)[d].
  \]
  By lines~\code{\ref{alg:unistore-strong-commit}}{\ref{line:commitstrong-call-uniformbarrier}}
  and \code{\ref{alg:unistore-replica}}{\ref{line:uniformbarrier-wait-uniformvc-d}},
  \[
    \snapshotVC(\tvar)[d] \le (\uniformVC^{n}_{d})_{\commitoftx(\tvar)}[d].
  \]
  Putting it together yields,
  \[
    \timestamp(\tvar)[d] \le (\uniformVC^{n}_{d})_{\commitoftx(\tvar)}[d].
  \]
  By Lemma~\ref{lemma:uniformvc-x},
  there exists some time $\realtime''$ such that
  \[
    \timestamp(\tvar)[d] \le (\uniformVC^{m}_{\cdrange})(\realtime'')[d].
  \]
  Let
  \[
    \realtime \triangleq \max\set{\realtime', \realtime''}.
  \]
  By Lemma~\ref{lemma:uniformvc-nondecreasing},
  \[
    \forall i \in \D.\; \timestamp(\tvar)[i] \le \uniformVC^{m}_{\cdrange}(\realtime)[i].
  \]
\end{proof}

\subsection{Eventual Visibility} \label{ss:ev}

\begin{apptheorem} \label{thm:ev}
  \[
    A \models \ev.
  \]
\end{apptheorem}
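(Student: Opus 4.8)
The plan is to dispatch the three disjuncts of the premise to the three uniformity lemmas already established, and then convert their common conclusion---that $\timestamp(\tvar)$ is eventually dominated by the uniform snapshot at every correct replica---into a statement about all but finitely many future transactions. First I would fix $\tvar \in X \cap \txs$ satisfying the premise and split on its type. If $\tvar \in \strongtxs$, I invoke Lemma~\ref{lemma:uniformity-strongtx-i}; otherwise $\tvar \in \causaltxs$, the premise reduces to $\dc(\tvar) \in \C \lor (\exists \fencerange \in \Fence.\; \tvar \rel{\vis} \fencerange)$, and I invoke Lemma~\ref{lemma:uniformity-correct-dc} or Lemma~\ref{lemma:uniformity-causal-fence} accordingly. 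In every case the conclusion is the same: for each replica $p^{m}_{\cdrange}$ in each correct data center $\cdrange \in \C$, eventually $\forall i \in \D.\; \timestamp(\tvar)[i] \le \uniformVC^{m}_{\cdrange}[i]$. Since $\uniformVC$ does not track the $\strongentry$ entry, I would additionally apply Lemma~\ref{lemma:uniformity-strong} (valid for any transaction) to obtain, eventually, $\timestamp(\tvar)[\strongentry] \le \stableVC^{m}_{\cdrange}[\strongentry]$ at every correct replica.

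Because there are finitely many replicas and data centers, I can take the maximum of these finitely many eventual times to get a single real time $\realtime_{0}$ after which, at every correct replica $p^{m}_{\cdrange}$, both $\uniformVC^{m}_{\cdrange} \ge \timestamp(\tvar)$ (on all $\D$-entries) and $\stableVC^{m}_{\cdrange}[\strongentry] \ge \timestamp(\tvar)[\strongentry]$ hold. The next step is to show that any transaction $\tvar'$ whose \start{} event executes at a correct data center after $\realtime_{0}$ satisfies $\timestamp(\tvar) \le \timestamp(\startoftx(\tvar'))$. This follows from the snapshot computation in \starttx: lines~\code{\ref{alg:unistore-coord}}{\ref{line:start-snapvc}}--\code{\ref{alg:unistore-coord}}{\ref{line:start-snapvc-strong}} set the remote entries of $\snapshotVC(\tvar')$ to $\uniformVC$, the local entry to at least $\uniformVC[\cdrange]$, and the $\strongentry$ entry to at least $\stableVC[\strongentry]$; combining the bounds at $\realtime_{0}$ with Definitions~\ref{def:snapshotvc} and~\ref{def:ts-op} yields $\timestamp(\startoftx(\tvar')) = \snapshotVC(\tvar') \ge \timestamp(\tvar)$.

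It then remains to bound the transactions that can fail visibility. By Definition~\ref{def:vis-tx}, $\tvar \rel{\vis} \tvar'$ requires both $\timestamp(\tvar) \le \timestamp(\startoftx(\tvar'))$ and $\tvar \rel{\lcorder} \tvar'$, so $\set{\tvar' \in \txs \mid \lnot(\tvar \rel{\vis} \tvar')}$ is contained in the union of the sets where each conjunct fails. The timestamp conjunct can fail only for transactions originating at a failed data center (finitely many, since a crashed center takes finitely many steps) or for transactions at correct centers whose \start{} precedes $\realtime_{0}$ (finitely many, as only finitely many events occur before any fixed time). The Lamport conjunct fails exactly for those $\tvar'$ with $\tvar' = \tvar$ or $\tvar' \rel{\lcorder} \tvar$: by Definition~\ref{def:lco} these all satisfy $\lclock(\tvar') \le \lclock(\tvar)$, and since $\clients$ is finite and each client's transactions carry strictly increasing Lamport clocks, only finitely many transactions lie at or below any fixed value. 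Hence the bad set is a finite union of finite sets, and $A \models \ev$ follows.

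The main obstacle I expect is the bookkeeping around the $\strongentry$ entry and the real-time uniformization: the three case lemmas bound only the $\D$-entries of $\timestamp(\tvar)$ through $\uniformVC$, so the argument must separately route the strong entry through $\stableVC$ via Lemma~\ref{lemma:uniformity-strong}, and then merge the finitely many per-replica eventual guarantees into one threshold $\realtime_{0}$. The remaining delicate point is justifying the two finiteness claims for the timestamp conjunct (failed-center and pre-$\realtime_{0}$ transactions), which rely on the crash model (Assumption~\ref{assumption:failure-model}) and on only finitely many steps occurring in finite time; everything else is a direct unfolding of the visibility, snapshot, and timestamp definitions.
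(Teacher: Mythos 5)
Your proof is correct, and its first half is exactly the paper's: dispatch the premise disjuncts to Lemmas~\ref{lemma:uniformity-correct-dc}, \ref{lemma:uniformity-causal-fence}, and \ref{lemma:uniformity-strongtx-i}, route the $\strongentry$ entry through Lemma~\ref{lemma:uniformity-strong}, merge the per-replica guarantees into one threshold, and use Lemma~\ref{lemma:ts-start} plus the monotonicity lemmas to show that any transaction starting at a correct data center after the threshold has $\timestamp(\startoftx(\tvar')) \ge \timestamp(\tvar)$. Where you genuinely diverge is the finishing combinatorial step. The paper reduces \ev{} to a per-client claim: by Lemma~\ref{lemma:so-vis} ($\so \subseteq \vis$) and transitivity of $\vis$ (Lemma~\ref{lemma:vis-partial}), it suffices that every client with infinitely many transactions has \emph{one} transaction $\tvar'$ with $\tvar \rel{\vis} \tvar'$; the paper then picks a correct data center receiving infinitely many of that client's transactions and takes the first post-threshold transaction there with a larger Lamport clock, letting the session order absorb all remaining transactions. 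You instead decompose the bad set $\set{\tvar' \in \txs \mid \lnot(\tvar \rel{\vis} \tvar')}$ along the two conjuncts of Definition~\ref{def:vis-tx} and bound each part directly: timestamp failures are confined to faulty-data-center transactions and pre-threshold starters, and Lamport failures to transactions with $\lclock(\tvar') \le \lclock(\tvar)$, each finite. Both routes lean on the same implicit model facts (a crashed data center takes only finitely many steps, and only finitely many events precede a fixed point of the execution --- the paper needs these too, to know that some \emph{correct} data center receives infinitely many of a client's committed transactions and that a post-threshold transaction exists), so neither is more rigorous on that score. Your version is more direct and pinpoints exactly which transactions can fail visibility; the paper's version channels the finiteness bookkeeping through the session order, never naming the faulty-DC or pre-threshold sets explicitly, at the price of invoking $\so \subseteq \vis$ and the partial-order property of $\vis$.
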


\begin{proof} \label{proof:ev}
  Consider a transaction $\tvar \in \txs$ such that
  \begin{align*}
    \dc(\tvar) \in \C \lor (\exists \fencerange \in \Fence.\; \tvar \rel{\vis} \fencerange)
    \lor \tvar \in \strongtxs.
  \end{align*}
  By Lemma~\ref{lemma:so-vis},
  it suffices to show that for any client $\cl$,
  \[
    \Big\lvert \txs|_{\cl} \Big\rvert = \infty
      \implies \exists \tvar' \in \txs|_{\cl}.\; \tvar \rel{\vis} \tvar'.
  \]
  By Lemmas~\ref{lemma:uniformity-correct-dc},
  \ref{lemma:uniformity-causal-fence}, \ref{lemma:uniformity-strong},
  and \ref{lemma:uniformity-strongtx-i},
  there exists some time $\realtime$ such that
  \begin{align}
    &\forall \cdrange \in \C.\; \forall n \in \P. \nonumber\\
      &\quad (\forall i \in \D.\; \timestamp(\tvar)[i]
        \le \uniformVC^{n}_{\cdrange}(\realtime)[i])
      \;\land \nonumber\\
      &\quad\;\; \timestamp(\tvar)[\strongentry]
        \le \stableVC^{n}_{\cdrange}(\realtime)[\strongentry].
    \label{eqn:time-t}
  \end{align}
  Since $\Big\lvert \txs|_{\cl} \Big\rvert = \infty$,
  there exists some correct data center $d \in \C$
  to which $\cl$ issues an infinite number of transactions.
  Let $\tvar' \in \txs$ be the first transaction
  issued by client $\cl$ to data center $d$
  which starts after time $\realtime$ such that
  \[
    \lclock(\tvar) < \lclock(\tvar').
  \]
  Thus, by Definition~\ref{def:lco} of $\lcorder$,
  \[
    \tvar \rel{\lcorder} \tvar'.
  \]
  Let $m \triangleq \coord(\tvar')$.
  Since $d$ is correct, by (\ref{eqn:time-t}),
  \begin{align*}
    &(\forall i \in \D.\; \timestamp(\tvar)[i] \le \uniformVC^{m}_{d}(\realtime)[i])
    \;\land \\
    &\;\;\timestamp(\tvar)[\strongentry] \le \stableVC^{m}_{d}(\realtime)[\strongentry].
  \end{align*}
  By Lemma~\ref{lemma:uniformvc-nondecreasing},
  \[
    \forall i \in \D.\;
      \uniformVC^{m}_{d}(\realtime)[i] \le (\uniformVC^{m}_{d})_{\startoftx(\tvar')}[i].
  \]
  By Lemma~\ref{lemma:stablevc-strong-nondecreasing},
  \[
    \stableVC^{m}_{d}(\realtime)[\strongentry] \le
      (\stableVC^{m}_{d})_{\startoftx(\tvar')}[\strongentry].
  \]
  By Lemma~\ref{lemma:ts-start},
  \begin{gather*}
    (\forall i \in \D.\; (\uniformVC^{m}_{d})_{\startoftx(\tvar')}[i]
      \le \timestamp(\startoftx(\tvar'))[i])
    \;\land \\
    (\stableVC^{m}_{d})_{\startoftx(\tvar')}[\strongentry]
      \le \timestamp(\startoftx(\tvar'))[\strongentry].
  \end{gather*}
  Putting it together yields
  \[
    \timestamp(\tvar) \le \timestamp(\startoftx(\tvar')).
  \]
  By Definition~\ref{def:vis-tx} of $\vis$,
  \[
    \tvar \rel{\vis} \tvar'.
  \]
\end{proof}

\subsection{\unistore{} Correctness} \label{ss:unistore-correctness}

\begin{apptheorem} \label{thm:unistore-por}
  \[
    \unistore \models \por.
  \]
\end{apptheorem}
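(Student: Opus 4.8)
The plan is to prove the statement history-by-history: since $\unistore \models \por$ holds by definition exactly when every history of $\unistore$ satisfies $\por$, I would fix an arbitrary history $H = (X, \client, \dc, \so)$ arising from an execution of the protocol and exhibit a single witness abstract execution $A = (H, \vis, \ar)$ that satisfies all four PoR axioms simultaneously. The witness is already pinned down by the preceding development: take $\vis$ as in Definition~\ref{def:vis-tx} and $\ar = \lcorder$ as in Definition~\ref{def:ar}. Before touching the axioms I would discharge the structural requirements of Definition~\ref{def:ae}, namely that $\vis$ is a partial order and $\ar$ is a total order. The former is exactly Lemma~\ref{lemma:vis-partial}, and the latter is immediate, since $\lcorder$ (Definition~\ref{def:lco}) totally orders $X$ by Lamport clocks with client identifiers breaking ties. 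This confirms that $A$ is a well-formed abstract execution extending $H$.

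With the witness fixed, the remainder is a purely logical assembly of the per-axiom theorems already established for this same pair $(\vis, \ar)$. Concretely, $(H, \vis, \ar) \models \retval$ is Theorem~\ref{thm:retval}; $(H, \vis, \ar) \models \cc$ unfolds by Definition~\ref{def:cc} into $\cv \land \ca$, supplied respectively by Theorems~\ref{thm:cv} and \ref{thm:ca}; $(H, \vis, \ar) \models \conflictaxiom$ is Theorem~\ref{thm:conflictaxiom}; and $(H, \vis, \ar) \models \ev$ is Theorem~\ref{thm:ev}. Conjoining these four facts yields $\exists \vis, \ar.\; (H, \vis, \ar) \models \retval \land \cc \land \conflictaxiom \land \ev$, i.e. $H \models \por$. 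Since $H$ was an arbitrary history produced by the protocol, $\unistore \models \por$ follows.

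The point worth stressing is that all four axioms must be witnessed by one and the same $(\vis, \ar)$, not by four separately chosen pairs; this is precisely why $\vis$ and $\ar$ are committed to up front in Definitions~\ref{def:vis-tx} and \ref{def:ar} and every supporting theorem is phrased relative to them, so no reconciliation step is needed here. Consequently the genuine difficulty does not reside in this final theorem but in the results it cites. I expect the two hardest ingredients to be $\extretval$ within Theorem~\ref{thm:retval}, which rests on the ``visible implies safe'' inclusion $V_e \subseteq S_e$ (Lemma~\ref{lemma:visible-safe-tx}) together with the replication lemmas ensuring a visible transaction's updates already sit in the relevant $\oplog$, and $\ev$ (Theorem~\ref{thm:ev}), whose proof depends on the uniformity machinery---the unboundedness of $\uniformVC$ and $\stableVC$ at correct data centers (Lemmas~\ref{lemma:uniformvc-c-no-bound}, \ref{lemma:uniformvc-x}, and \ref{lemma:knownvc-strong-no-bound}, reflecting the guarantee of Property~\ref{prop:uniformity}) and the case split over transactions that originate at a correct data center, are visible to a \fence{} event, or are strong. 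Granting those, the concluding step of Theorem~\ref{thm:unistore-por} is just a conjunction of already-proved obligations and involves no further reasoning about the protocol.
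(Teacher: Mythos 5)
Your proposal is correct and matches the paper's own proof, which likewise concludes $\unistore \models \por$ by conjoining Theorems~\ref{thm:conflictaxiom}, \ref{thm:cv}, \ref{thm:ca}, \ref{thm:retval}, and \ref{thm:ev} for the single witness pair $(\vis, \ar)$ fixed in Definitions~\ref{def:vis-tx} and \ref{def:ar}. Your additional remarks — checking well-formedness via Lemma~\ref{lemma:vis-partial} and the totality of $\lcorder$, and insisting that one common witness serves all four axioms — are sound and only make explicit what the paper leaves implicit.
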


\begin{proof} \label{proof:unistore-por}
  By Theorems~\ref{thm:conflictaxiom}, \ref{thm:cv},
  \ref{thm:ca}, \ref{thm:retval}, and \ref{thm:ev}.
\end{proof}

\subsection{\unistore{} Liveness} \label{ss:unistore-liveness}

\begin{applemma} \label{lemma:migrate-terminate}
  Each client migration \clattach{} to a correct data center
  will eventually terminate, provided that the client managed to
  complete its \fence{} call at its original data center
  just before \clattach.
\end{applemma}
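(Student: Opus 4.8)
The plan is to reduce the termination of \clattach{} to showing that the guard in \attach{} (\code{\ref{alg:unistore-replica}}{\ref{line:attach-wait-condition}}) is eventually and permanently satisfied at the destination replica. Let $\cl$ be the migrating client, $d$ its original data center, and $j \in \C$ the (correct) destination. By Assumption~\ref{assumption:client-well-formed} the client issues \fence{} and \clattach{} outside of transactions, so no transaction of $\cl$ commits between the two calls and the value $\pastVC_{\cl}$ passed to \attach{} at $j$ equals the value $\avc$ used by the \fence{} call. Writing $\avc$ for this common causal past, it thus suffices to prove that at the replica $p^{m}_{j}$ handling the \attach{} call eventually $\uniformVC^{m}_{j}[i] \ge \avc[i]$ for every $i \in \D \setminus \set{j}$; monotonicity of $\uniformVC$ (Lemma~\ref{lemma:uniformvc-nondecreasing}) together with fairness (Assumption~\ref{assumption:fairness}) then guarantees that the handler eventually observes the guard and returns.

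First I would capture what completing the \fence{} at $d$ gives us. At the time $\realtime_{0}$ at which the barrier returns, the wait condition of \uniformbarrier{} (\code{\ref{alg:unistore-replica}}{\ref{line:uniformbarrier-wait-uniformvc-d}}), equivalently the \Fence{} case of Lemma~\ref{lemma:pastvc-uniformvc-except-d}, yields $\avc[d] \le \uniformVC^{l}_{d}(\realtime_{0})[d]$ for the replica $l$ handling the barrier. For the remaining entries, Lemma~\ref{lemma:pastvc-uniformvc} gives, at time $\realtime_{0}$, a replica $p^{m_i}_{d}$ with $\avc[i] \le \uniformVC^{m_i}_{d}(\realtime_{0})[i]$ for each $i \in \D \setminus \set{d}$. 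Hence every entry of $\avc$ is dominated by some uniform-vector value held by a replica in the origin data center at $\realtime_{0}$.

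Next I would push these bounds to the correct destination $j$. Applying Lemma~\ref{lemma:uniformvc-x} (Property~4) to each witnessing replica ($p^{m_i}_{d}$ and $p^{l}_{d}$) at time $\realtime_{0}$ produces, for each entry $i$, a time after which $\uniformVC^{n}_{\cdrange}[i] \ge \avc[i]$ at every correct $\cdrange \in \C$ and every partition $n$; taking the maximum of these finitely many times and using Lemma~\ref{lemma:uniformvc-nondecreasing} gives a single time $\realtime'$ at which all entries are simultaneously covered at every correct data center, in particular at the destination replica $p^{m}_{j}$ (taking $\cdrange = j \in \C$). This establishes $\uniformVC^{m}_{j}[i] \ge \avc[i]$ for all $i$, a fortiori for $i \neq j$, completing the argument as outlined above. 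Note that this does not require $d$ to survive: Lemma~\ref{lemma:uniformvc-x} relies on Property~3 (Lemma~\ref{lemma:uniformvc-knownvc-f+1}) placing the relevant prefix at $f+1$ data centers, one of which is correct by Assumption~\ref{assumption:failure-model}, after which transaction forwarding carries it everywhere.

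The main obstacle is bookkeeping rather than conceptual: the bounds on the different entries of $\avc$ are witnessed at possibly different replicas and are propagated to $j$ at different times, so care is needed to assemble a single later instant at which all entries are covered at the chosen destination replica, and to invoke monotonicity to keep earlier-covered entries covered. A secondary point to state cleanly is that $\pastVC_{\cl}$ is genuinely unchanged across the fence/attach boundary and that $j$ being correct is exactly what lets the propagation conclusion of Lemma~\ref{lemma:uniformvc-x} reach the destination replica.
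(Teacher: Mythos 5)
Your proof is correct, and it reaches the conclusion by the same overall skeleton as the paper: cover every entry of the client's causal past by $\uniformVC$ values held at the origin data center (the fence guard for the local entry), then invoke Lemma~\ref{lemma:uniformvc-x} to push those bounds to every correct data center, and finish with monotonicity (Lemma~\ref{lemma:uniformvc-nondecreasing}). The genuine difference is in how the \emph{remote} entries are covered at the origin. The paper does this by hand: it takes the last commit event $e$ before the \clattach{}, case-splits on whether $e$ was issued at the client's current data center $d$ or at an earlier data center $d'$ visited before an intermediate migration, and in the latter case falls back on the guard of the last preceding \clattach{} to transfer the bound to $d$ (with a separate trivial case when no commit exists). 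You instead cite the packaged invariant Lemma~\ref{lemma:pastvc-uniformvc}, whose inductive proof is exactly that case analysis; since that lemma holds ``at any time'' and $\pastVC_{\cl}$ is unchanged between the fence and the attach (it is only modified at commit events), your single application of it at the moment the fence returns legitimately replaces the paper's Cases I and II. Your route is shorter and more modular, and there is no circularity since Lemma~\ref{lemma:pastvc-uniformvc} is established independently of any termination claim; what the paper's unrolled version buys is that it makes explicit the subtle scenario your citation hides inside the induction, namely that the client's causal past may have been accumulated at a data center other than the one it is migrating from, which is precisely why the preceding attach's guard matters. Your closing observation that the argument survives the failure of $d$ (via Property~3, $f+1$ placement, and forwarding inside Lemma~\ref{lemma:uniformvc-x}) is also accurate and matches the paper's intent.
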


\begin{proof} \label{proof:migrate-terminate}
  Suppose that client $\cl$ in its current data center $d \triangleq \cldc(\cl)$
  issues a \clattach{} event $\attachrange$
  to replica $p^{m}_{\cdrange}$ in correct data center $\cdrange \in \C$.

  Let $e \in C$ be the last commit event
  issued by client $\cl$ before $\attachrange$.
  If $e$ does not exist, then
  \[
    \forall i \in \D.\; \pastVC_{\cl}[i] = 0.
  \]
  Therefore, the wait condition
  \[
    \forall i \in \D \setminus \set{c}.\; \uniformVC^{m}_{c}[i] \ge 0
  \]
  at line~\code{\ref{alg:unistore-replica}}{\ref{line:attach-wait-condition}}
  at replica $p^{m}_{c}$ will eventually hold.
  Thus, the \clattach{} event $\attachrange$ eventually terminates.

  Otherwise, suppose that $e$ is issued to replica $p^{n'}_{d'}$
  in data center $d'$.
  By Lemma~\ref{lemma:pastvc-uniformvc-except-d},
  \begin{align}
    \forall i \in \D \setminus \set{d'}.\;
      (\uniformVC^{n'}_{d'})_{e}[i] \ge (\pastVC_{\cl})_{e}[i].
      \label{eq:attach-dprime}
  \end{align}
  Note that it is possible that $d' \neq d$,
  since there may exist other \clattach{} events between $e$ and $\attachrange$.
  Therefore, we distinguish between the following two cases:
  \begin{itemize}
    \item \textsc{Case I}: $d' = d$. By (\ref{eq:attach-dprime}),
      \begin{align}
        \forall i \in \D \setminus \set{d}.\;
          (\uniformVC^{n'}_{d})_{e}[i] \ge (\pastVC_{\cl})_{e}[i].
          \label{eq:attach-dprime-d}
      \end{align}
    \item \textsc{Case II}: $d' \neq d$.
      Consider the last \clattach{} event, denoted $a'$, before $\attachrange$.
      Suppose that $a'$ is issued to replica $p^{n}_{d}$ in data center $d$.
      By Lemma~\ref{lemma:pastvc-uniformvc-except-d}, when $a'$ terminates,
      \begin{align}
        \forall i \in \D \setminus \set{d}.\;
          &(\uniformVC^{n}_{d})_{a'}[i] \ge (\pastVC_{\cl})_{a'}[i] \nonumber \\
          &\quad = (\pastVC_{\cl})_{e}[i].
          \label{eq:attach-dprime-not-d}
      \end{align}
  \end{itemize}

  Let $\fencerange \in \Fence$ be the \fence{} event issued by client $\cl$
  just before $\attachrange$.
  Suppose that $\fencerange$ is issued to replica $p^{l}_{d}$ in data center $d$.
  By Lemma~\ref{lemma:pastvc-uniformvc-except-d},
  \begin{align}
    (\uniformVC^{l}_{d})_{\fencerange}[d] &\ge (\pastVC_{\cl})_{\fencerange}[d]
      \nonumber\\
      &= (\pastVC_{\cl})_{e}[d].
      \label{eq:barrier-just-before-attach}
  \end{align}

  By (\ref{eq:attach-dprime-d}), (\ref{eq:attach-dprime-not-d}),
  (\ref{eq:barrier-just-before-attach}), and Lemma~\ref{lemma:uniformvc-x},
  eventually for the correct data center $\cdrange$,
  \[
    \forall i \in \D.\;
      \uniformVC^{m}_{\cdrange}[i] \ge (\pastVC_{\cl})_{e}[i].
  \]
  Therefore, the wait condition
  \[
    \forall i \in \D \setminus \set{c}.\;
      \uniformVC^{m}_{c}[i] \ge (\pastVC_{\cl})_{e}[i]
  \]
  at line~\code{\ref{alg:unistore-replica}}{\ref{line:attach-wait-condition}}
  at replica $p^{m}_{c}$ will eventually hold.
  Thus, the \clattach{} event $\attachrange$ eventually terminates.
\end{proof}

\begin{apptheorem} \label{thm:termination}
  Any client event issued at a correct data center
  will eventually terminate.
\end{apptheorem}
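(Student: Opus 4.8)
To prove Theorem~\ref{thm:termination}, the plan is to proceed by a case analysis over the seven kinds of client events defined in Algorithm~\ref{alg:unistore-client} (\start, \read, \updateproc, \commitcausaltx, \commitstrongtx, \fence, and \clattach), tracing for each the RPCs it issues and arguing that every \textbf{wait until} condition inside the corresponding replica handlers eventually becomes true. Since the client is connected to $\cldc(\cl) = d$ and, for every event except \clattach, issues its RPCs to a replica in $d$, \emph{issued at a correct data center} means $d \in \C$; for \clattach($j$) it means the destination $j \in \C$. Reliable delivery between correct data centers (Assumption~\ref{assumption:message}) together with fairness (Assumption~\ref{assumption:fairness}) guarantees that, once a wait condition holds, the handler runs to completion and its reply reaches the client, so the whole argument reduces to discharging the wait conditions.

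First I would dispatch the handlers that never block: \starttx, \doupdate, and the \prepare{} handler perform only metadata updates and immediate sends, so they return at once; \doread{} returns immediately on the internal-read branch (line~\ref{line:doread-return-from-buffer}), while \commitcausal{} returns a snapshot directly for read-only transactions (line~\ref{line:commitcausal-return-ro}). The only blocking points on the causal path are then the guard in \readkey{} (line~\ref{line:readkey-wait-util-knownvc}) and the clock wait in \commit{} (line~\ref{line:commit-wait-clock}); the latter is immediate from Assumption~\ref{assumption:clock}, since $\commitvc[d]$ is finite and $\clockVar^m_d$ is strictly increasing. The crucial observation for the former is that this guard constrains only the local and strong entries, $\knownVC[d] \ge \snapvc[d]$ and $\knownVC[\strongentry] \ge \snapvc[\strongentry]$, with both $\snapvc[d]$ and $\snapvc[\strongentry]$ fixed finite values; as $d$ is correct, $\knownVC^m_d[d]$ grows without bound by Lemma~\ref{lemma:knownvc-d-no-bound} and $\knownVC^m_d[\strongentry]$ grows without bound by Lemma~\ref{lemma:knownvc-strong-no-bound}, so each eventually overtakes its bound. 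This settles \read{} (hence the external-read branch of \doread), and, since \commitcausal{} returns right after sending \commit{} without awaiting acknowledgements, also \commitcausaltx{} for update transactions.

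Next I would treat the uniformity waits. \fence{} blocks only at line~\ref{line:uniformbarrier-wait-uniformvc-d} on $\uniformVC[d] \ge \pastVC[d]$; since $\pastVC[d]$ is finite and $d \in \C$, Lemma~\ref{lemma:uniformvc-c-no-bound} gives that $\uniformVC^m_d[d]$ grows without bound and discharges the wait. \commitstrongtx{} first runs \uniformbarrier($\snapVC[\tidvar]$), whose local wait is discharged in the same way because $\snapVC[\tidvar][d]$ is finite; once the barrier returns, the call to \certify{} completes by Assumption~\ref{assumption:complete-execution} (or, for a self-contained argument, by the liveness half of Theorem~\ref{thm:tcs-correctness} combined with fairness). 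Finally, \clattach($j$) is exactly Lemma~\ref{lemma:migrate-terminate}: provided the client completed the \fence{} call preceding the migration---which the migration discipline of \S\ref{sec:clientmigration} and Assumption~\ref{assumption:client-well-formed} ensure---the \attach{} wait at line~\ref{line:attach-wait-condition} on all remote entries eventually holds at the correct destination $j$.

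I expect the main obstacle to be the \clattach{} case. Unlike every other wait, the \attach{} guard constrains \emph{all} remote entries of $\uniformVC$ at a data center the client has just reached, so the fixed bounds $\pastVC[i]$ cannot be overtaken by local progress alone but must be matched by uniformity information propagating from other data centers. Discharging it requires tracking the client's $\pastVC$ across the preceding \fence{} (and any intermediate migrations), relating $\pastVC[i]$ to a $\uniformVC$ value already established at the origin via Lemma~\ref{lemma:pastvc-uniformvc-except-d}, and then transporting that bound to the destination with Lemma~\ref{lemma:uniformvc-x}. This is precisely the content of Lemma~\ref{lemma:migrate-terminate}, so the hard work is already localized there, and the present theorem reduces to assembling the case analysis above.
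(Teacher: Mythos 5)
Your proposal is correct and follows essentially the same route as the paper's own proof: a case analysis over the event types, discharging each \textbf{wait until} condition via Lemmas~\ref{lemma:knownvc-d-no-bound}, \ref{lemma:knownvc-strong-no-bound}, and \ref{lemma:uniformvc-c-no-bound}, Assumption~\ref{assumption:complete-execution} for the certification call, and Lemma~\ref{lemma:migrate-terminate} for \clattach. The extra details you supply (the clock wait in the \commit{} handler, the implicit \prepareack{} wait in \commitcausal) are harmless refinements of the same argument.
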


\begin{proof} \label{proof:termination}
  Consider any event $e$ issued by client $\cl$
  at a correct data center $c \in \C$.
  It suffices to show that each wait condition in the execution of $e$,
  if any, will eventually hold.
  In the following, we perform a case analysis
  according to the type of event $e$.
  \begin{itemize}
    \item $\textsc{Case I}$: $e \in S$.
      The theorem holds trivially.
    \item $\textsc{Case II}$: $e \in R$.
      If $e \in \intread$, the theorem holds trivially.
      Otherwise, $e \in \extread$.
      By Lemmas~\ref{lemma:knownvc-d-no-bound}
      and \ref{lemma:knownvc-strong-no-bound},
      the wait condition at
      line~\code{\ref{alg:unistore-replica}}{\ref{line:readkey-wait-util-knownvc}}
      for $e$ will eventually hold.
    \item $\textsc{Case III}$: $e \in U$.
      The theorem holds trivially.
    \item $\textsc{Case IV}$: $e \in C_{\causalentry}$.
      Since data center $c$ is correct,
      the wait condition at
      line~\code{\ref{alg:unistore-coord}}{\ref{line:commitcausal-wait-prepareack}}
      will eventually hold.
    \item $\textsc{Case V}$: $e \in C_{\strongentry}$.
      By Lemma~\ref{lemma:uniformvc-c-no-bound},
      the wait condition at
      line~\code{\ref{alg:unistore-replica}}{\ref{line:uniformbarrier-wait-uniformvc-d}}
      will eventually hold.
      Thus, line~\code{\ref{alg:unistore-strong-commit}}{\ref{line:commitstrong-call-uniformbarrier}}
      will eventually terminate.
      Then, by Assumption~\ref{assumption:complete-execution},
      the procedure $\certify$ and thus the event $e$ will eventually terminate.
    \item $\textsc{Case VI}$: $e \in \fence$.
      By Lemma~\ref{lemma:uniformvc-c-no-bound},
      the wait condition at
      line~\code{\ref{alg:unistore-replica}}{\ref{line:uniformbarrier-wait-uniformvc-d}}
      will eventually hold.
    \item $\textsc{Case VII}$: $e \in \clattach$.
      The theorem holds due to \textsc{Case VI} and Lemma~\ref{lemma:migrate-terminate}.
  \end{itemize}
\end{proof}

\fi

\end{document}